\theoremstyle{plain}
\newtheorem{thm}{Theorem}[section]
\newtheorem{exm}[thm]{Example}
\newtheorem{lem}[thm]{Lemma}
\newtheorem{prop}[thm]{Proposition}
\newtheorem{cor}[thm]{Corollary}
\newtheorem{defn}[thm]{Definition}
\newtheorem{asmn}[thm]{Assumption}
\newcommand{\bbN}{\mathbb{N}}
\newcommand{\bbR}{\mathbb{R}}
\newcommand{\bbS}{\mathbb{S}}
\newcommand{\bbT}{\mathbb{T}}
\newcommand{\bbU}{\mathbb{U}}
\newcommand{\bbV}{\mathbb{V}}
\newcommand{\bbZ}{\mathbb{Z}}
\newcommand{\clC}{\mathcal{C}}
\newcommand{\clH}{\mathcal{H}}
\newcommand{\clI}{\mathcal{I}}
\newcommand{\clJ}{\mathcal{J}}
\newcommand{\clK}{\mathcal{K}}
\newcommand{\clL}{\mathcal{L}}
\newcommand{\clP}{\mathcal{P}}
\newcommand{\clQ}{\mathcal{Q}}
\newcommand{\clS}{\mathcal{S}}
\newcommand{\clU}{\mathcal{U}}
\newcommand{\srS}{\mathscr{S}}
\newcommand{\tN}{\tilde{N}}
\newcommand{\tV}{\tilde{V}}
\newcommand{\bV}{\bar{V}}
\newcommand{\bX}{\bar{X}}
\newcommand{\bY}{\bar{Y}}
\newcommand{\tM}{\tilde{M}}
\newcommand{\tl}{\tilde{l}}
\newcommand{\tc}{\tilde{c}}
\newcommand{\tu}{\tilde{u}}
\newcommand{\tpi}{\tilde{\pi}}
\newcommand{\talpha}{\tilde{\alpha}}
\newcommand{\halpha}{\hat{\alpha}}
\newcommand{\hpi}{\hat{\pi}}
\newcommand{\balpha}{\bar{\alpha}}
\newcommand{\bpi}{\bar{\pi}}
\newcommand{\bgl}{\bar{\gl}} % \bar-\lambda
\newcommand{\dew}{w_{\delta}}
\newcommand{\Vy}{V(y)}
\newcommand{\bM}{\bar{M}}
\newcommand{\Mde}{M_{\delta}}
\newcommand{\sol}{\operatorname{Sol}}
\newcommand{\psol}{\overline{\operatorname{Sol}}}
\newcommand{\pC}{\overline{C}}
\DeclarePairedDelimiter{\ceil}{\lceil}{\rceil}
\DeclarePairedDelimiter{\floor}{\lfloor}{\rfloor}
\DeclarePairedDelimiter{\iceil}{\upharpoonright}{\upharpoonleft}
\let\oldlemma\lem
\renewcommand{\lem}{%
	\crefalias{thm}{lem}% Theorem counter now looks like Lemma
	\oldlemma}
\Crefname{lem}{Lemma}{Lemmas}
\let\olddefn\defn
\renewcommand{\defn}{%
	\crefalias{thm}{defn}% Theorem counter now looks like Definition
	\olddefn}
\Crefname{defn}{Definition}{Definitions}
\let\oldrem\rem
\renewcommand{\rem}{%
	\crefalias{thm}{rem}% Theorem counter now looks like Remark
	\oldrem}
\Crefname{rem}{Remark}{Remarks}
\let\oldcor\cor
\renewcommand{\cor}{%
	\crefalias{thm}{cor}% Theorem counter now looks like Corollary
	\oldcor}
\Crefname{cor}{Corollary}{Corollaries}
\let\oldclaim\claim
\renewcommand{\claim}{%
	\crefalias{thm}{claim}% Theorem counter now looks like Claim
	\oldclaim}
\Crefname{claim}{Claim}{Claims}
\let\oldprop\prop
\renewcommand{\prop}{%
	\crefalias{thm}{prop}% Theorem counter now looks like Prop
	\oldprop}
\Crefname{prop}{Proposition}{Propositions}
\let\oldcon\con
\renewcommand{\con}{%
	\crefalias{thm}{con}% Theorem counter now looks like Con
	\oldcon}
\Crefname{con}{Conjecture}{Conjectures}
\let\oldasmn\asmn
\renewcommand{\asmn}{%
	\crefalias{thm}{asmn}% Theorem counter now looks like Asmn
	\oldasmn}
\Crefname{asmn}{Assumption}{Assumptions}
\let\oldexm\exm
\renewcommand{\exm}{%
	\crefalias{thm}{exm}% Theorem counter now looks like exm
	\oldexm}
\Crefname{exm}{Example}{Examples}
\let\oldproper\proper
\renewcommand{\proper}{%
	\crefalias{thm}{proper}% Theorem counter now looks like proper
	\oldproper}
\Crefname{proper}{Property}{Properties}
\newcommand{\add}[1]{#1}
\newcommand{\delete}[1]{}
\newcommand{\mdelete}[1]{}
\definecolor{darkgrn}{rgb}{0, 0.8, 0}
\newcommand{\R}{ {\mathbb R} }
\newcommand{\Z}{ {\mathbb Z} }
\newcommand{\hV}{\hat{V}}
\newcommand{\gl}{\lambda} % Greek l
\DeclareMathOperator*{\Nrv}{Nrv}
\DeclareMathOperator*{\BoxF}{Box}
\DeclareMathOperator*{\eBoxF}{\epsilon-Box}
\DeclareMathOperator*{\pBoxF}{\overline{Box}}
\newcommand{\frcpt}{\operatorname{frac}}
\newcommand{\dis}{\operatorname{dis}}
\newcommand{\dGH}{\operatorname{d_{GH}}}
\newcommand{\dgm}{\operatorname{dgm}}
\newcommand{\lpc}{\mathfrak{L}} % for LP complexity
\newcommand{\dmn}{n} % dimension; change from n to d?
\title{\Huge Box Filtration}
\author[1]{Enrique Alvarado\thanks{enrique3@iastate.edu}}
\author[2]{Prashant Gupta\thanks{{\bfseries Corresponding author;}  prashant.gupta@cuanschutz.edu}}
\author[3]{Bala Krishnamoorthy\thanks{kbala@wsu.edu}}
\affil[1]{Department of Mathematics, Iowa State University, Ames, US}
\affil[2]{Department of Biomedical Informatics, University of Colorado Anschutz Medical Campus, Aurora, CO, USA}
\affil[3]{Department of Mathematics and Statistics, Washington State University}
\begin{document}

 \maketitle

 \begin{abstract}

  %Analyzing a point cloud data (PCD) using persistent homology starts with building a filtration, i.e., a sequence of nested simplicial complexes.
  %Most filtrations grow balls centered at each point, and use their intersection to define the simplicial complexes.
  %Since the balls grow symmetrically in all directions, the filtrations may have a \emph{symmetry bias} that leads to inaccurate results when the points are distributed more densely along, or there are outliers only along, a small subset of directions.
  %Another TDA approach termed the \emph{mapper} covers the space containing the PCD usually using overlapping hypercubes chosen independent of the points, and using their intersections to construct a simplicial complex.
  
  We define a new framework that unifies the filtration and the mapper approaches from topological data analysis, and present efficient algorithms to compute it.
  Termed the \emph{box filtration} of a point cloud data (PCD), we grow boxes (hyperrectangles) that are not necessarily centered at each point (in place of balls centered at each point as done by most current filtrations).
  We grow the boxes non-uniformly and asymmetrically in different dimensions based on the distribution of points.
  We present two approaches to handle the boxes: a point cover where each point is assigned its own box at start, and a pixel cover that works with a pixelization of the space of the PCD.
  Any box cover in either setting automatically gives a mapper of the PCD.
  We show that the persistence diagrams generated by the box filtration using both point and pixel covers satisfy the classical stability based on the Gromov-Hausdorff distance.
  Using boxes, rather than Euclidean balls, also implies that the box filtration is identical for pairwise or higher order intersections whereas the Vietoris-Rips (VR) and \v{C}ech filtration are not the same.
  
  Growth in each dimension is computed by solving a linear program that optimizes a cost functional balancing the cost of expansion and benefit of including more points in the box.
  The box filtration algorithm runs in $O(m|\mathcal{U}(0)|\log(m \dmn \pi) \lpc(q))$ time, where $m$ is number of steps of increments considered for growing the box, $|\mathcal{U}(0)|$ is the number of boxes in the initial cover (at most the number of points), $\pi$ is the step length by which each box dimension is incremented, each linear program is solved in $O(\lpc(q))$ time, $\dmn$ is the dimension of the PCD, and $q = n \times |X|$.
  We also present a faster algorithm that runs in $O(m|\mathcal{U}(0)|k \lpc(q))$ where $k$ is the number of steps allowed to find the optimal box.
  We demonstrate through  multiple examples that the box filtration can produce more accurate results to summarize the topology of the PCD than VR and distance-to-measure (DTM) filtrations.
  Software for our implementation is available \href{https://github.com/pragup/Box-Filteration/}{here}.
\end{abstract}

 \clearpage

 \section{Introduction}

  Persistent homology has matured into a widely used and powerful tool in topological data analysis (TDA) \cite{EdMo2013}.
  A typical TDA pipeline starts with a point cloud data (PCD) $X$ in $\bbR^\dmn$ and uses the default Euclidean metric.
  The fundamental step in such TDA frameworks is the construction of a \emph{filtration}, i.e., a nested sequence of simplicial complexes, built on the PCD.
  Most commonly, the filtration is constructed by growing Euclidean balls centered at each point in $X$, and is termed the Vietoris-Rips (VR) filtration.
  The resulting VR persistence diagram (PD) of $X$ summarizes its topology across multiple scales, and could have implications for the application generating the data \cite{Ca2009,CoSh2009}.
  Going one step further, we can compare two different PCDs $X$ and $Y$  by computing the bottleneck distance between their PDs \cite[Chap.~5]{ChdeSGlOu2016}.
  This pipeline also satisfies a standard notion of stability \cite{ChdeSOu2014}---the bottleneck distance between their PDs is bounded by twice the Gromov-Hausdorff distance between $X$ and $Y$.
  But outlier points in $X$ can cause a large change in its PD \cite{BlGaMaPa2014}.
  
  \smallskip
  Various approaches have been proposed to tackle the problem of outliers in the context of filtrations.
  They include distance to measure (DTM) class of filtrations \cite{AnChGlIkInTiUm2019, BuChOuSh2016, ChCoMe2011, GuMeMo2011}, approaches using density functions \cite{BoMuTa2017, ChGuOuSk2011, ChFrGuOuSk2013}, and kernel density estimates \cite{PhWaZh2013}.
  Intuitively, DTM filtrations grow the balls as guided by where \emph{measure} is greater, i.e., where there are more points concentrated in the PCD, and hence ignores isolated outlier points.
  More recently, 2-parameter approaches called bifiltrations \cite{CoKeLeOs2021} have been proposed that use both distance and density thresholds as parameters.
  Another bifiltration approach termed localized union of balls \cite{KeSo2023} considers a localized space where the first parameter controls the radius of the ball and the second parameter controls locality of the data.
  But most bifiltration approaches handle the growth in both parameters symmetrically by default.

  \smallskip
  The key factor common to all these approaches is that they grow \emph{balls} centered at points based on various parameters to generate the filtration.
  A ball centered at a point is the natural choice for a symmetric convex body that captures its neighborhood.
  But balls grow symmetrically and uniformly in all directions, and hence the corresponding filtrations could induce a \emph{symmetry bias}.
  This could be undesirable especially when the points may be distributed non-uniformly across a subset of dimensions.
  It is also difficult to control the growth of balls based on the distribution of other points in the PCD.
  Another approach could be to grow ellipses centered at points instead of balls \cite{KaLe2020}, but ellipses are symmetric along each direction with respect to the points just as balls are. 

  \smallskip
  Another TDA approach for characterizing the structure of PCD is the \emph{mapper} \cite{SiMeCa2007}, which starts with a cover of the ambient space containing the PCD.
  Defined as the nerve, i.e., dual complex, of a refined pullback of this cover, the mapper has found increasing use in diverse applications \cite{Lumetal2013}.
  The framework of persistent homology has been employed to prove theoretical stability of mapper constructions \cite{CaOu2017,DeMeWa2016}.
  At the same time, implementing such stable constructions, e.g., the multiscale mapper \cite{DeMeWa2016}, still remains a challenge.
  Most users work with a single mapper (as opposed to a mapper filtration) constructed on an appropriately chosen cover of the ambient space using overlapping hypercubes.
  More generally, the frameworks of mapper and filtrations for persistent homology have been used mostly independent of each other.
  An exception may be the ball mapper construction \cite{Dl2019}, which built a filtration by growing balls centered at a subset of points in the PCD.
  But the ball mapper also suffers from the isotropic nature of the balls.

  \medskip
  Motivated by the hypercube covers used as a default choice in mapper constructions, we posit that building filtrations by growing hyperrectangles, i.e., boxes, non-uniformly in different directions based on the distribution of points in $X$ could better capture its topological features.
  Since boxes are convex just as balls are, their collection satisfies the nerve lemma \cite{EdHa2009}, which states that the simplicial complex defined as the nerve of the boxes has the same homotopy type as the collection.
  Furthermore, boxes can be naturally grown at different rates in different dimensions.

  \subsection{Our Contributions}

  We define a new framework that unifies the filtration and the mapper approaches, and present efficient algorithms to compute it.
  Termed the \emph{box filtration} of a given PCD $X \in \bbR^\dmn$, it is built by growing boxes, i.e., hyperrectangles, as the convex sets covering $X$.
  The boxes are grown non-uniformly and non-symmetrically in different dimensions.
  We present two approaches to handle the boxes: a \emph{point cover} (\cref{sssec:pointcover}) where each point is assigned its own box at start, and a \emph{pixel cover} (\cref{sssec:pixelcover}) that works with a pixelization of the space of the PCD.
  Any box cover in either setting automatically gives a mapper of the PCD.
  We expand each pivot box in each dimension individually.
  This expansion is effected by the minimization of an objective function that balances the cost for growth of the box with the benefit of covering points currently not in the box.
  We build a filtration by repeating this growth step in an \emph{expansion algorithm} which ensures that the new boxes contain the boxes from the previous step.
  Using boxes rather than balls as cover elements provides the nice property that all higher order intersections of the boxes are guaranteed as soon as every pair of them intersect (\cref{lem:boxcechVR}), whereas the VR and \v{C}ech filtrations are not the same.  
  
  We prove that, under mild assumptions, the box filtrations created using both point and pixel covers satisfy the classical stability result based on the Gromov-Hausdorff distance (\cref{thm:pointstabilitytheorem,thm:pointpixelstabilitytheorem}).
  Growth in each dimension is computed by solving a linear program that optimizes a cost functional balancing the cost of expansion and benefit of including more points in the box.
  The box filtration algorithm runs in $O(m|\mathcal{U}(0)|\log(m \dmn \pi)  \lpc(q))$ time, where $m$ is number of steps of increments considered for growing the box, $|\mathcal{U}(0)|$ is the number of boxes in the initial cover (at most the number of points), $\pi$ is the step length by which each box dimension is incremented, each linear program is solved in $O(\lpc(q))$ time, $\dmn$ is the dimension of the PCD, and $q = n \times |X|$.
  We also present a faster algorithm that runs in $O(m|\mathcal{U}(0)|k\lpc(q))$ where $k$ is the number of steps allowed to find the optimal box.

  We demonstrate through multiple examples that the box filtration can produce results that are more resilient to noise and with less symmetry bias than VR and distance-to-measure (DTM) filtrations.
  \cref{fig:spotlight} highlights one of the instances (see \cref{sec:examples} for details).
  The point cloud consists of 100 points sampled from an ellipse along with 50 random points in and around it.
  Box filtration clearly identifies the ellipse for many of its parameter values ($\alpha=0.2, 0.5$--$0.9$) while DTM manages to do so for only one value of its parameter ($m=0.2$), and even in that case, the distinction of the ellipse feature from the noise is not as clear as captured by the box filtration.
  Not surprisingly, VR fails to identify the ellipse.

  \begin{figure}[hb!] 
    \centering
	\begin{subfigure}[t]{1.5in}
          \vspace*{-1.55in}
	  \centering
	  \includegraphics[width=1.5in]{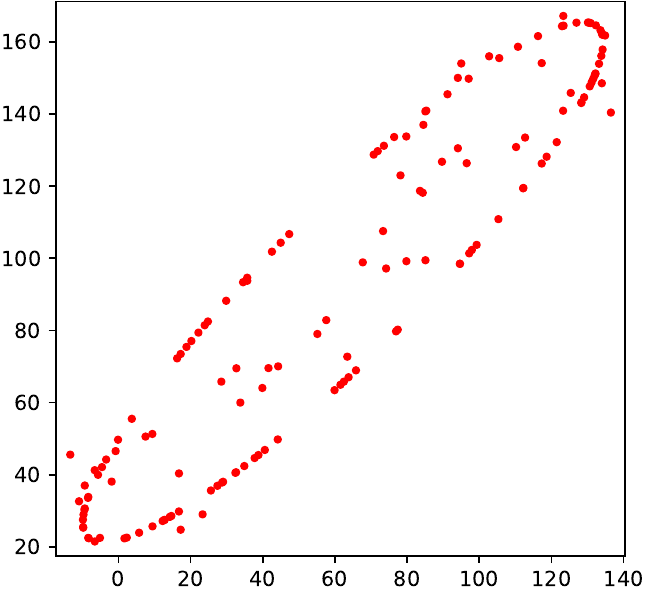}
	\caption*{\label{fig:spotlight2}}
	\end{subfigure}
    \begin{subfigure}[t]{1.5in}
      \centering
      \includegraphics[width=1.5in]{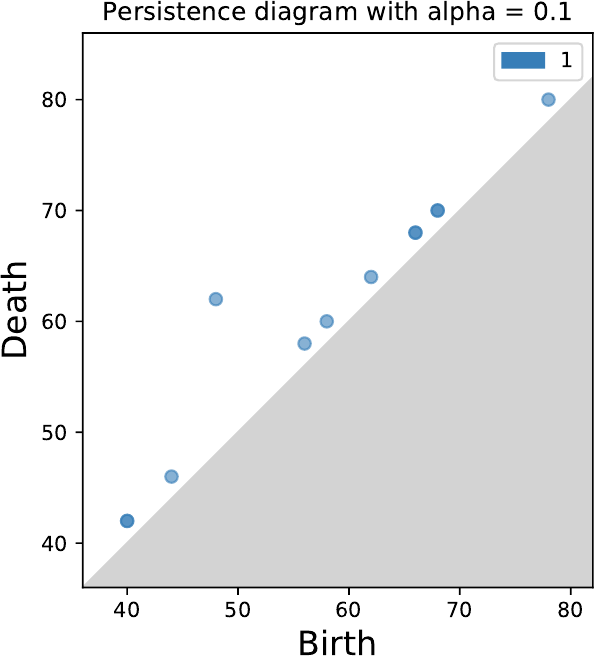}
      \caption*{\label{fig:bfpersistencespotlight2}}
    \end{subfigure}
    \quad
    \begin{subfigure}[t]{1.5in}
    	\centering
    	\includegraphics[width=1.5in]{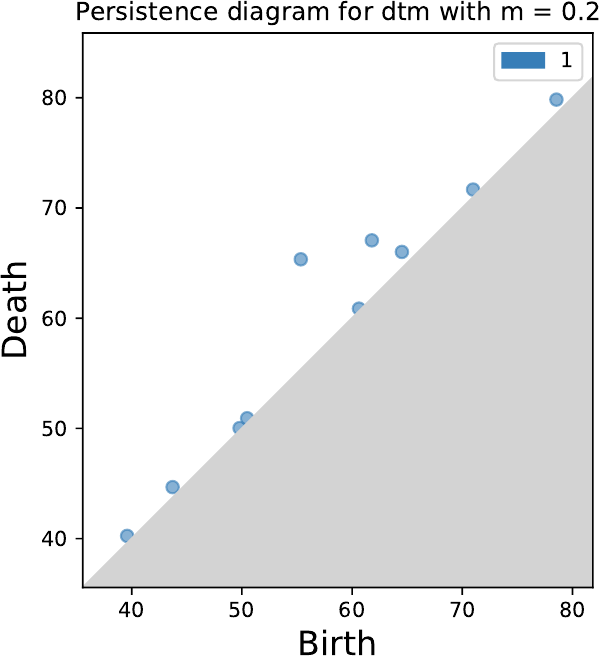}
    	\caption*{\label{fig:dtmpersistencespotlight2}}
    \end{subfigure}
	\quad
	\begin{subfigure}[t]{1.5in}
		\centering
		\includegraphics[width=1.6in]{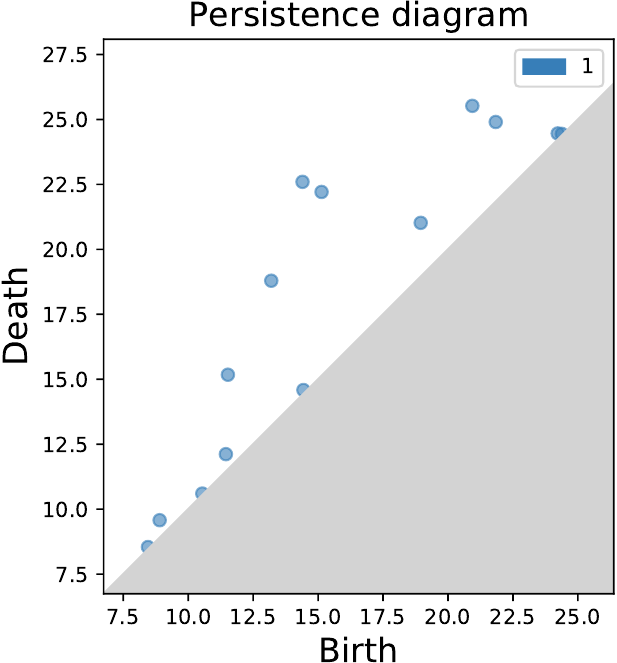}
		\caption*{\label{fig:vrpersistencespotlight2}}
	\end{subfigure}

        \caption{\label{fig:spotlight} Point cloud (Left) consisting of 100 points sampled on an ellipse along with 50 random points in and around it.
          Best persistence diagrams using the box filtration (second figure) and DTM (third figure) along with that using VR filtration (right/fourth figure) are shown.
          Box filtration identifies the ellipse clearly over several values of its parameter $\alpha$ (diagram for $\alpha=0.1$ is shown here), while results are less clear with DTM for most values of its parameter $m$ (best DTM diagram for $m=0.2$ is shown here).
          VR filtration fails to identify the ellipse.
    }
  \end{figure}	

  Since any box cover of $X$ gives a mapper, the box filtration can also function as a mapper framework with stability guarantees.
  But with applicability to large PCDs in mind, we also present an efficient box mapper algorithm (\cref{sec:mapper}) that first applies $k$-means clustering to $X$ and then runs a single box filtration step for one $\pi$ value to generate the mapper.

  We illustrate the box filtration framework on a simple PCD consisting of three points in 2D in \cref{fig:introExample}.
  The box filtration is shown for a few choices of parameters $\alpha$ and $\pi$ (see \cref{sec:construction} for details).
  \vfill
  \begin{figure}[hbp!]
    %\begin{center}
    \hspace*{-0.25in}
      \begin{tabular}{m{4.0cm} m{4.0cm} m{4.0cm} m{4.0cm}}
	\includegraphics[width=1.6in]{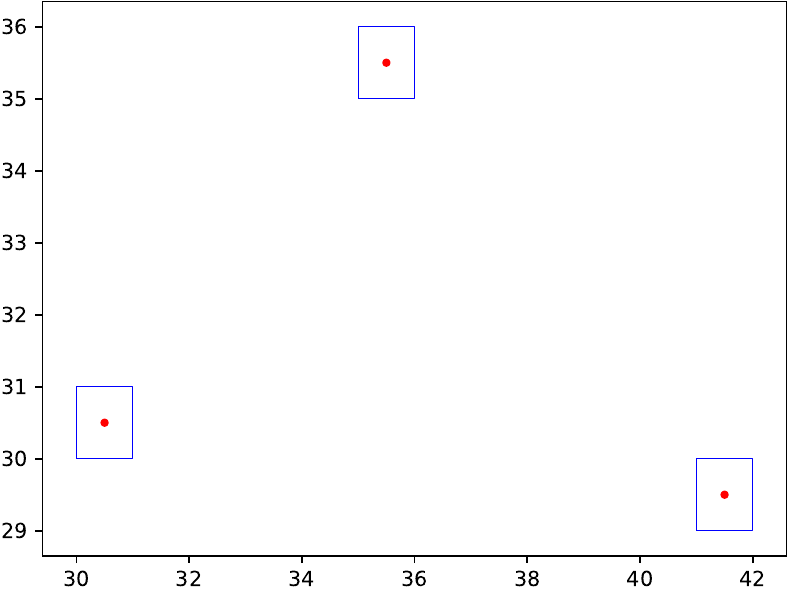}&
	\includegraphics[width=1.6in]{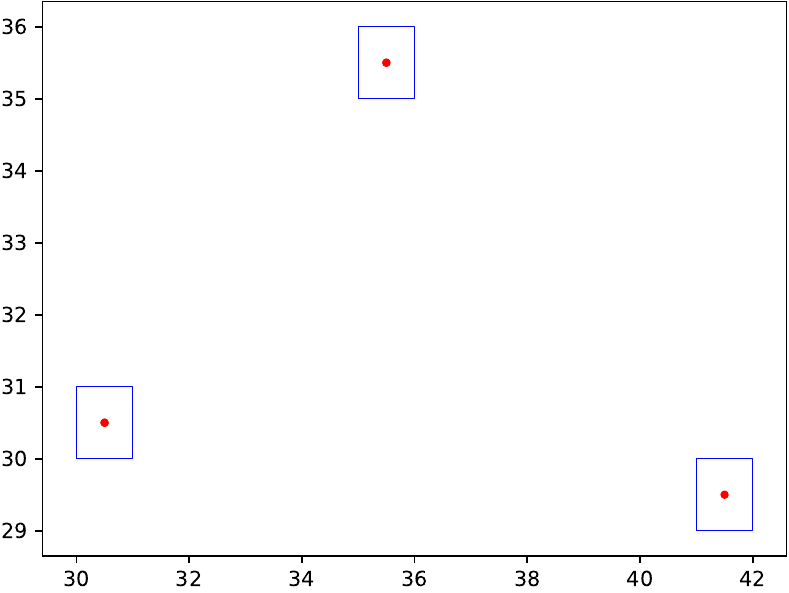}&
	\includegraphics[width=1.6in]{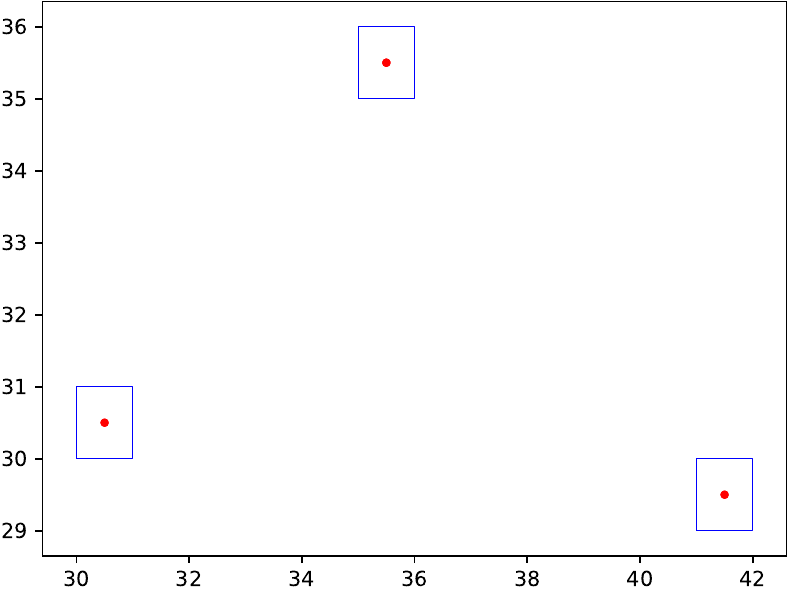}&
	\includegraphics[width=1.6in]{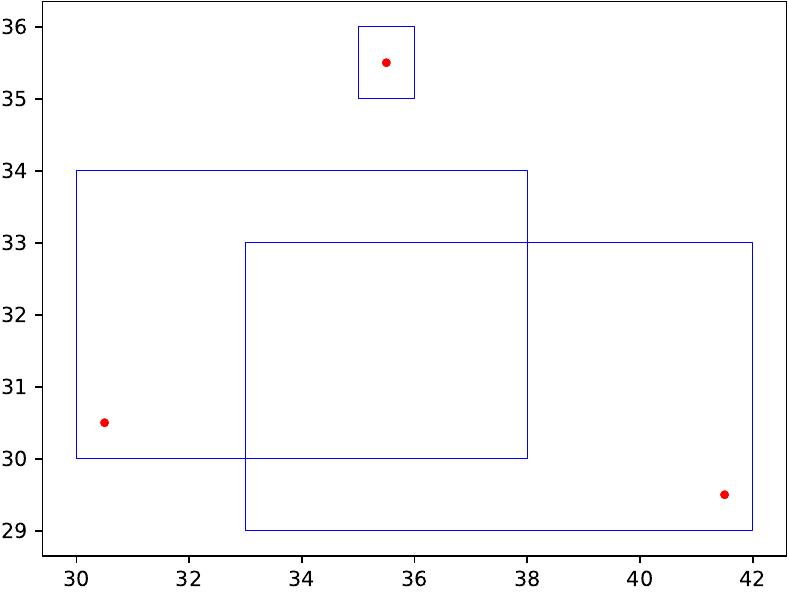}\\
	\includegraphics[width=1.6in]{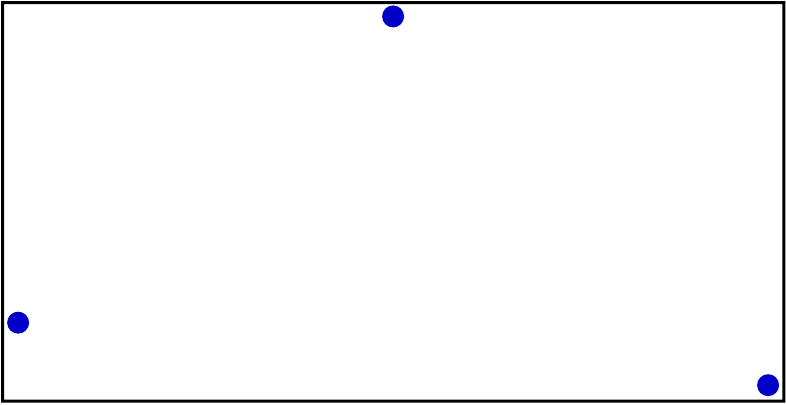}&
	\includegraphics[width=1.6in]{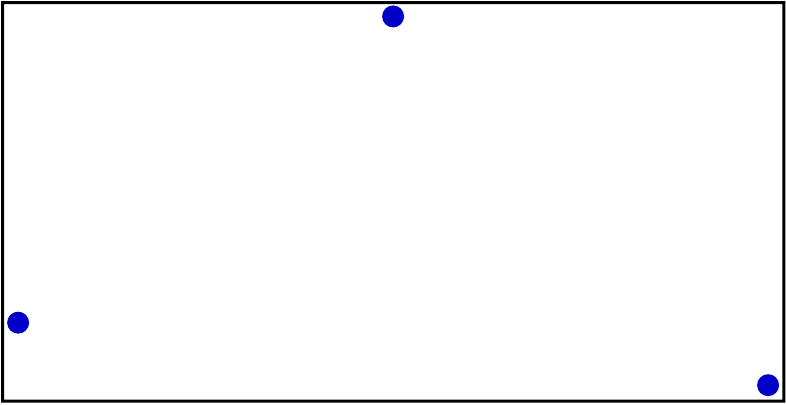}&
	\includegraphics[width=1.6in]{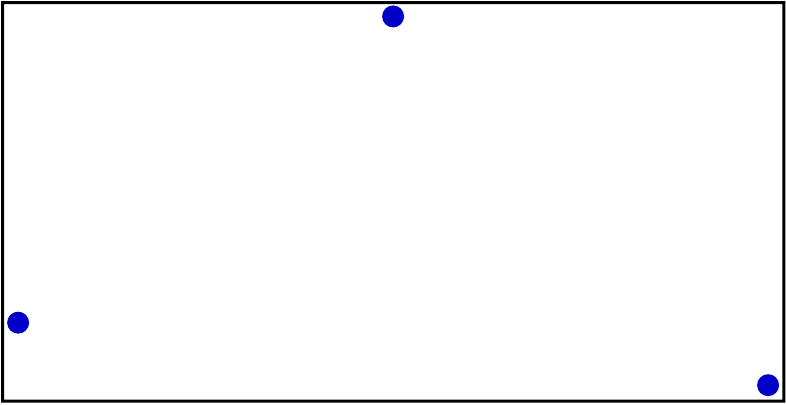}&
	\includegraphics[width=1.6in]{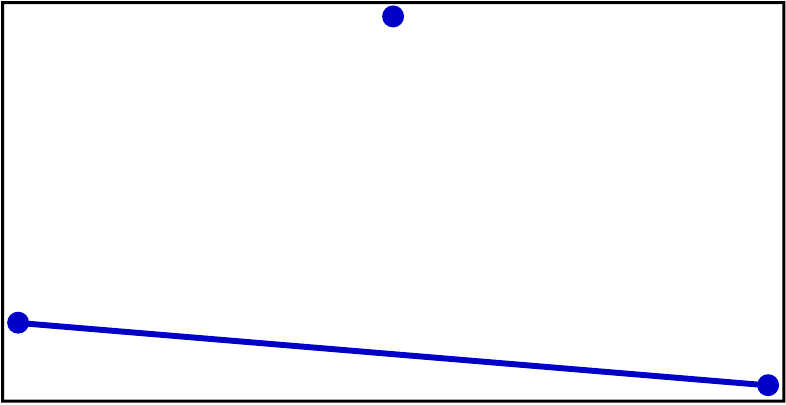}\\
        \\
	\includegraphics[width=1.6in]{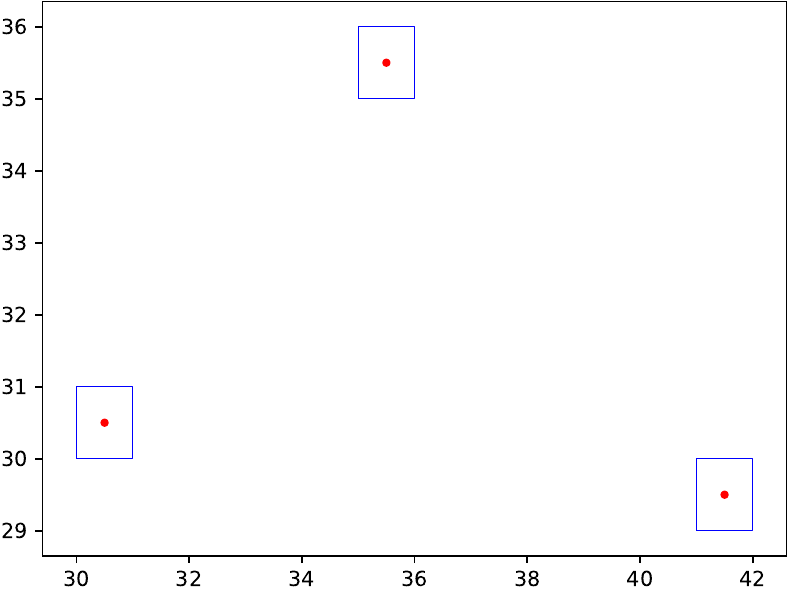}&
	\includegraphics[width=1.6in]{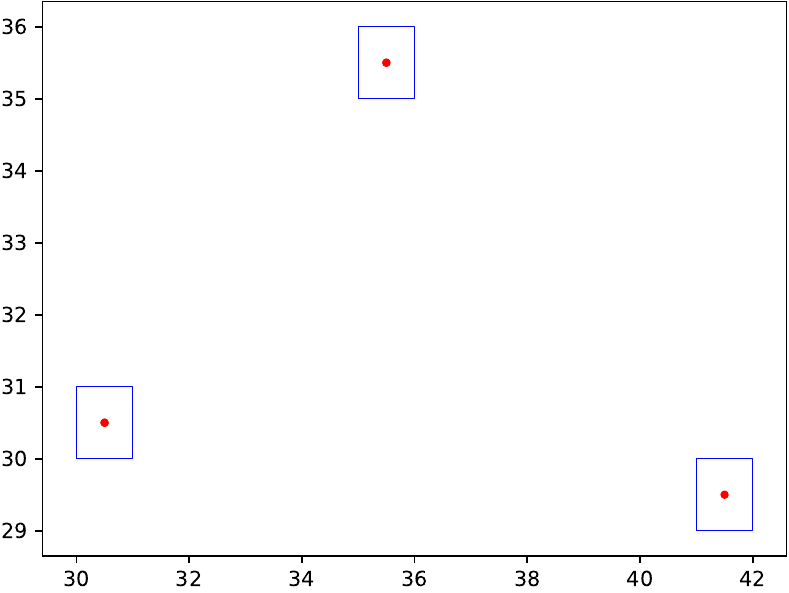}&
	\includegraphics[width=1.6in]{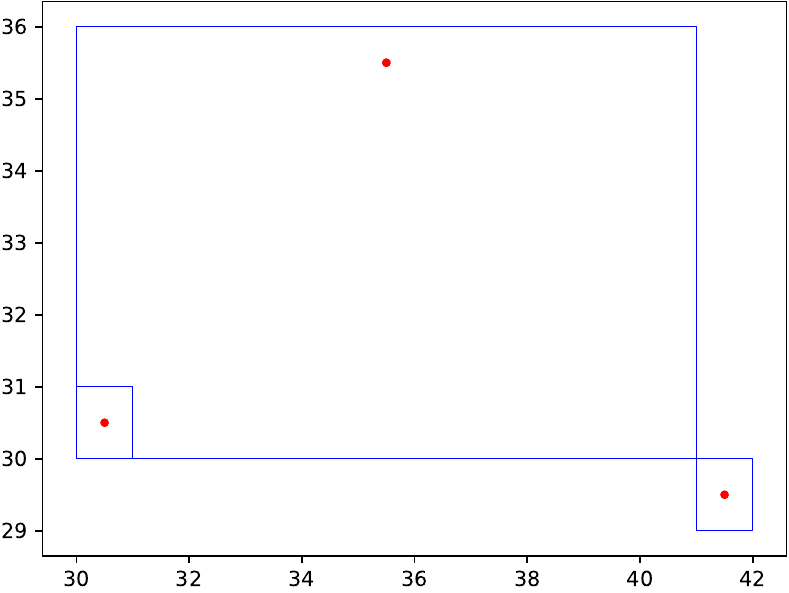}&
	\includegraphics[width=1.6in]{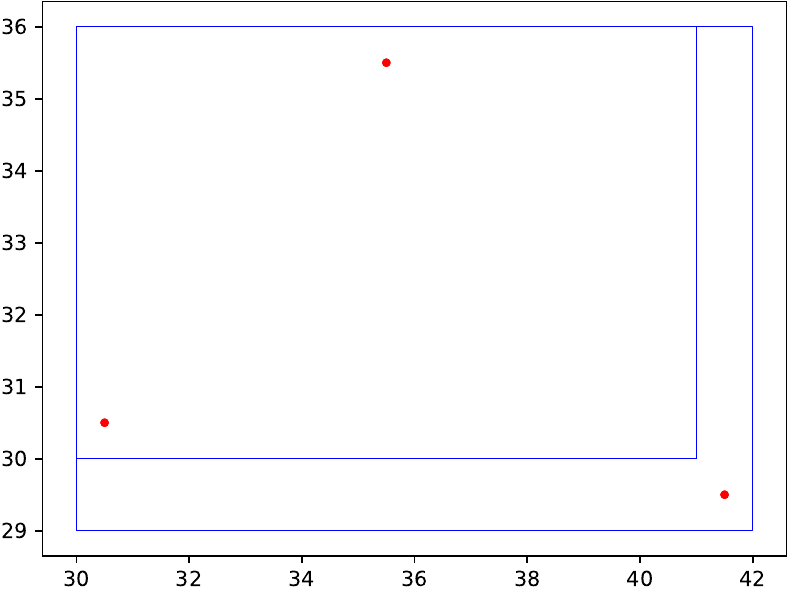}\\
	\includegraphics[width=1.6in]{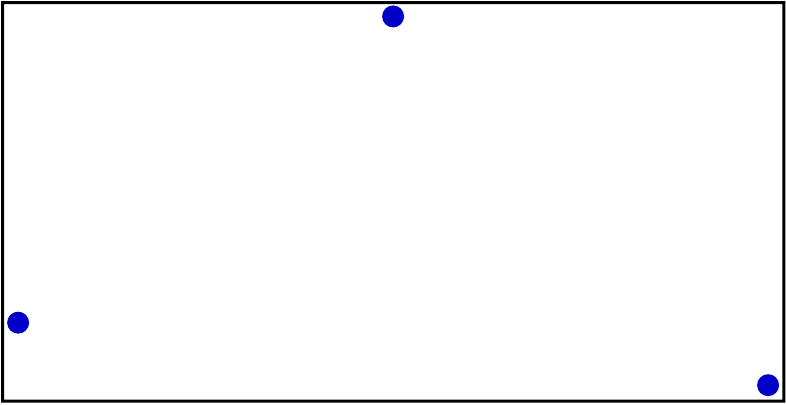}&
	\includegraphics[width=1.6in]{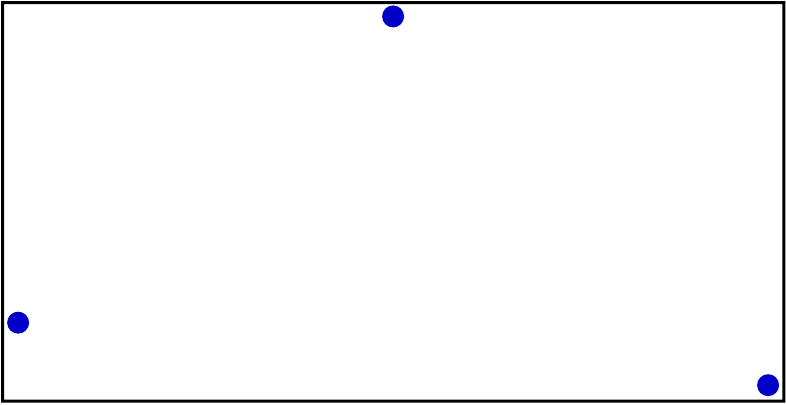}&
	\includegraphics[width=1.6in]{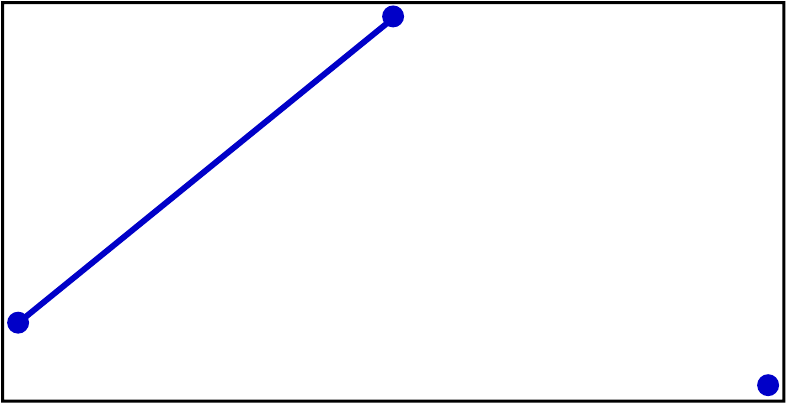}&
	\includegraphics[width=1.6in]{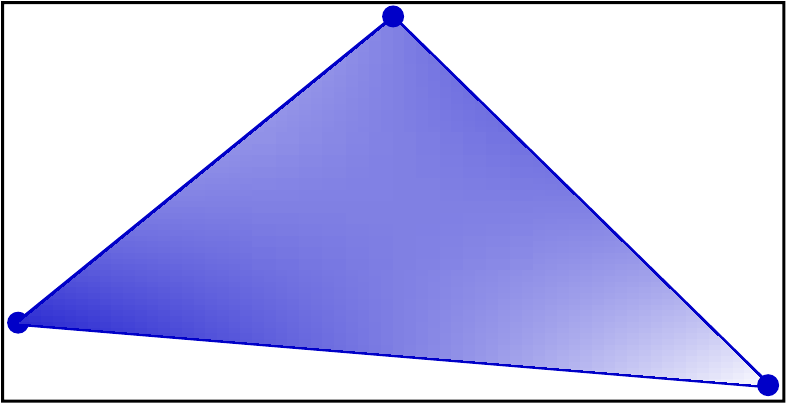}\\
        \\
	\includegraphics[width=1.6in]{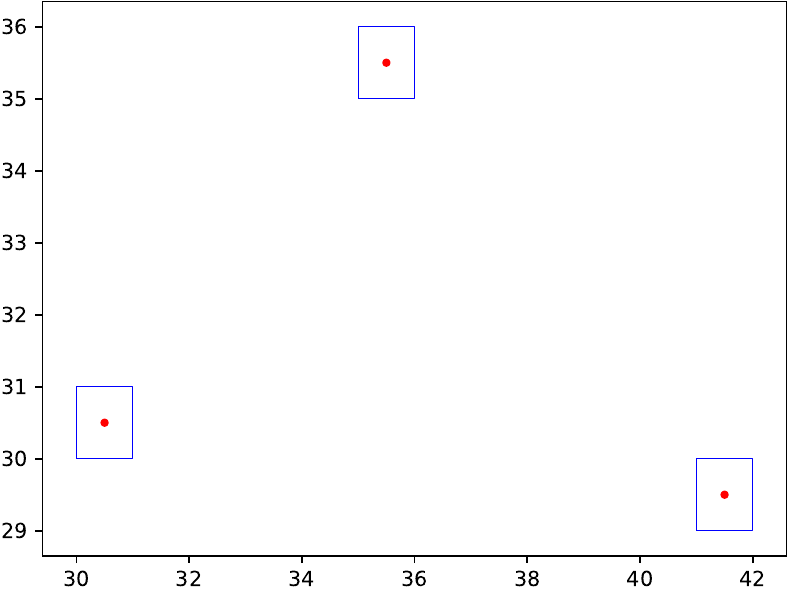}&
	\includegraphics[width=1.6in]{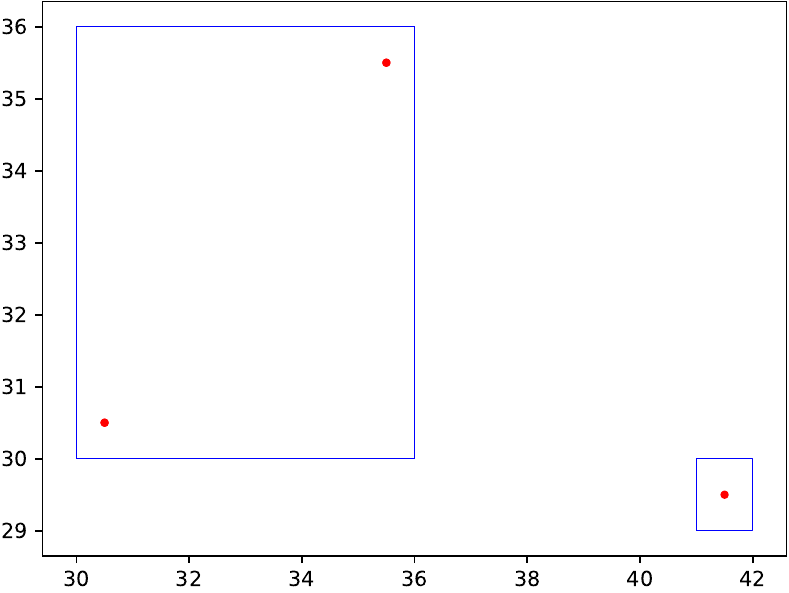}&
	\includegraphics[width=1.6in]{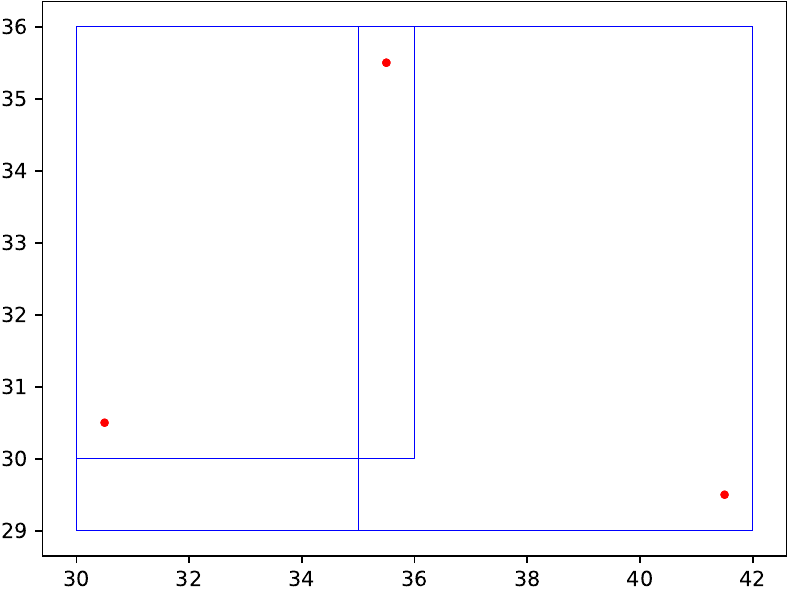}&
	\includegraphics[width=1.6in]{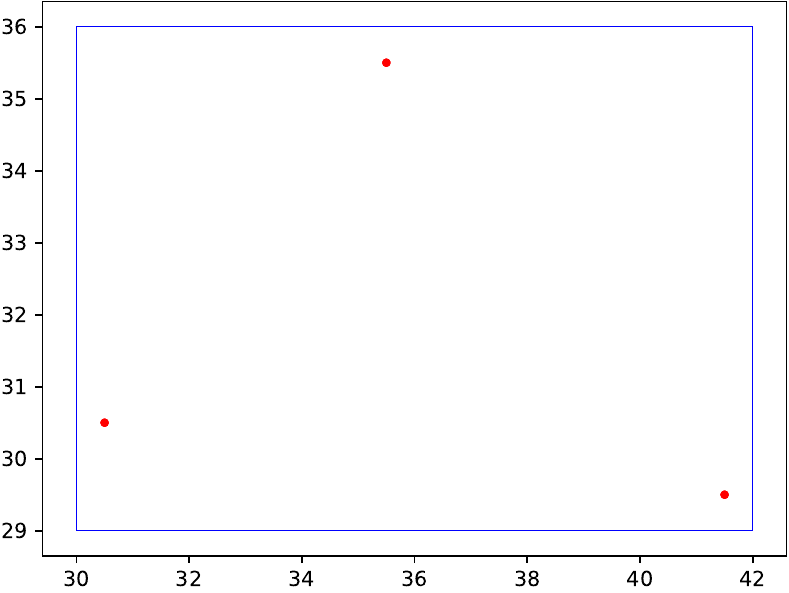}\\
	\includegraphics[width=1.6in]{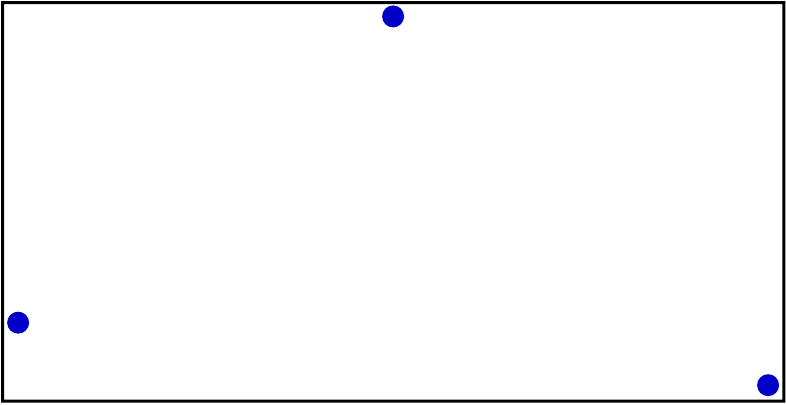}&
	\includegraphics[width=1.6in]{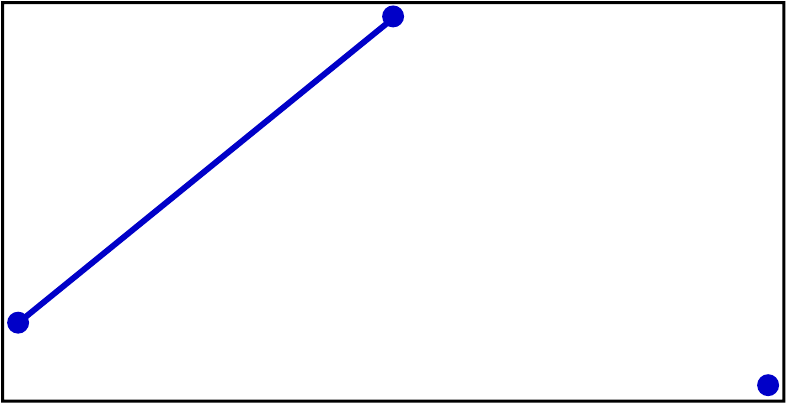}&
	\includegraphics[width=1.6in]{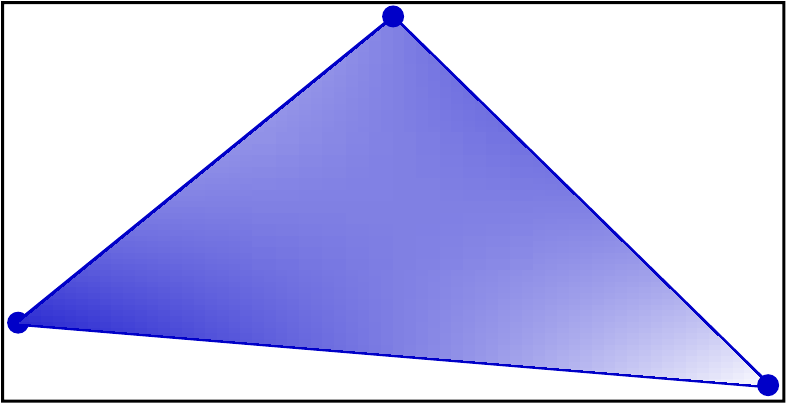}&
	\includegraphics[width=1.6in]{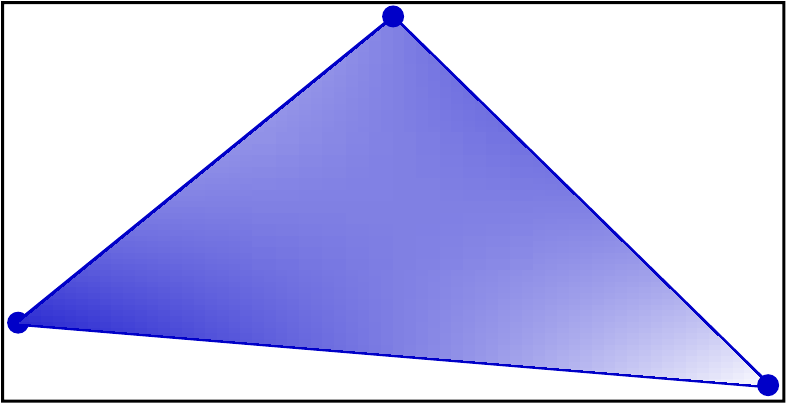}\\
      \end{tabular}
      \caption{\label{fig:introExample} Point cloud ($X$) is shown in red.
        Top row shows covers of $X$ for $\alpha = 0.5$ with $\pi = 5, 6, 11, 12$ from left to right.
        Second row shows their nerves.
        Similarly, middle two rows show covers for $\alpha = 0.6$ and their nerves,
        while bottom two rows show them for $\alpha = 0.7$, for the same set of $\pi$ values.
      }
    %\end{center}
  \end{figure}

  \noindent Finally, we list notation used in the rest of the paper along with their meanings in \cref{tab:notation}.
  
  \begin{table}[htp!]
    %\centering
    \caption{\label{tab:notation} Notation used, and their explanations.}
    \begin{tabular}{ll}
      \hline
      Notation & Definition/Explanation \\    \hline
      $\sigma$     & unit pixel\\
      $m_{\sigma}, w_{\sigma}, \theta(\sigma)$     & centroid, weight, and number of points in pixel $\sigma$\\
      $w_{x}$ & weight of point $x$  \\
      $\alpha, \pi$     & $\alpha \in [0, 1], \pi \in \bbR_{+}$ are parameters for linear optimization\\
      $\clU(0)$     & initial collection of boxes that covers $X$\\
      $V$	& $V = [l_{1}, u_{1}] \times \dots \times [l_{n}, u_{n}]$ is a box in $\clU(0)$\\
      $V[j\pi]$ & $j^{th}$ expansion of $V$\\
      $B(V, \pi)$		& $\pi$-neighborhood box of $V$: $(l_1 - \pi, u_1 + \pi) \times \dots \times (l_n - \pi, u_n + \pi)$ \\ 
      $C_{\alpha}(V, N)$& cost of $V$ in the neighborhood $N \supseteq V$\\
      $\sol_{\alpha}(V, N)$ & set of optimal solutions for input box $V$ and neighborhood $N\supseteq V$\\ 
      $\clU(j\pi)$     & $j^{th}$ cover $~\forall V \in \clU(0)$\\
      $\Theta(V)$      & Set of pixels with $m_{\sigma} \in V, \theta(\sigma)\neq 0$ \\
      $\psi_{1}(V), \psi_{2}(V), \psi_{3}(V)$ & rounded boxes for given box $V$ (see \cref{eq:psi1,eq:psi2,eq:psi3})\\
      $K(\clU)$		& filtration corresponding to cover $\clU$\\
      \hline
    \end{tabular}
  \end{table}

 %\clearpage
 \section{Construction}\label{sec:construction}

We start with a formal definition of the basic building block of our construction, namely, the \emph{box}.

\begin{defn}\label{def:box}
  \emph{\bfseries (Box)} A box in $\bbR^\dmn$ is defined as the $\dmn$-fold Cartesian product $[l_1, u_1] \times \dots \times [l_\dmn, u_\dmn]$ where $l_i \leq u_i ~\forall i \in \{1, \dots, \dmn\}$.
\end{defn}

\subsection{Box Cover}\label{ssec:cover}

We first define and study the notion of a \emph{point cover}, with the initial cover consisting of a collection of boxes that has one box centered at the each point in the point cloud and contains no other point. 
This initial setting using hypercubes, i.e., $\ell_\infty$ balls, is analogous to that of VR filtration using Euclidean balls,
but we grow these boxes non-symmetrically to build the box filtration (\cref{ssec:filtration}).
We prove most results---stability, in particular---first for the box filtration constructed on a point cover (\cref{sec:costabnoise}).

At the same time, working with an individual box for each point could be computationally expensive, especially for large PCDs.
To address this challenge, we define the notion of a \emph{pixel cover}, which works with a pixelization of the ambient space containing the PCD.
The initial boxes are now chosen as the pixels that contain point(s) from the PCD,
and the boxes are grown by pixel units based on distances (and other related values) computed at the pixel level.
Working with a coarser pixelization can result in increased computational savings.
Furthermore, we show that the results proven for point covers---stability in particular---also hold for pixel covers.

\subsubsection{Point Cover} \label{sssec:pointcover}

Given the finite PCD $X \in \bbR^\dmn$, we specify the initial cover $\clU(0)$ of its \emph{point cover} to consist of a collection of hypercubes ($\ell_\infty$-balls) or boxes such that each point is located in a single box.
Note that a box could contain more than one point.
Every box $V = [l_1, u_1] \times \dots \times [l_\dmn, u_\dmn]$ in the initial cover $\clU(0)$ is a \emph{pivot box}.
Note that a pivot box could be of lower dimension than $\dmn$ itself, i.e., we could have $l_i = u_i$ for a subset (or all) of $i \in \clI = \{1, \dots, \dmn\}$. 

Our goal is to expand each pivot box in $\clU(0)$ using linear optimization.
We specify the framework of expansion using two parameters $\pi \in \bbR_{+}$ and $\alpha \in [0, 1]$.
We find a sequence of optimal expansions of $V$ in the given $\pi$-neighborhood $N = B(V, \pi) := (l_1 - \pi, u_1 + \pi) \times \dots \times (l_n - \pi, u_n + \pi)$.
\add{To simplify notation, we use $x\in N$ to denote $x \in N \cap X$ throughout the paper.} 
We use parameter $\alpha$ to control the relative weight of two opposing terms in the objective function of the optimization model for expanding $V$ (\cref{eq:ConLPobj}).
Let $\sol_{\alpha}(V, N)$ represent the set of optimal solutions for the input box $V$ and neighborhood $N$.
When evident from context, we denote the solution set simply as $\sol$.

Let $\tV = [\tl_1, \tu_1] \times \dots \times [\tl_n, \tu_n] \supseteq V$ be an an expanded box in the neighborhood $N$.
We let $C_{\alpha}(\tV, N)$ denote the objective (cost) function of the linear program
(when the context is evident, we denote the objective function simply as $C$).
Since our goal is to ultimately cover all points, the points in $N$ that are \emph{not} covered by $\tV$ incur a cost in $C_\alpha(\tV, N)$.
This cost of non-coverage of a point is higher when the point is farther away from $\tV$ and is zero when the point is covered by, i.e., is inside of, $\tV$.
We capture this cost of non-coverage using the \emph{weight} $w_{x} \in \mathbb{R}$ for a point $x \in N$, where
\begin{equation}\label{eq:pointweight}
  w_{x} \leq \min \{\{x_{i} - \tl_i \mid i \in\mathcal{I}\} \cup \{\tu_i - x_{i} \mid i \in \mathcal{I}\}\cup \{0\}\}.
\end{equation}
Note that when $x \not\in \tV$ we have $w_x < 0$.
Furthermore, we get that $x \in \tV$ if and only if
\[\min \{\{x_{i} - \tl_i \mid i \in\mathcal{I}\} \cup \{\tu_i - x_{i} \mid i \in \mathcal{I}\}\} \geq 0,~
\text{ which implies }~w_x = 0.\]

At the same time, we would just grow the box in each dimension to the maximum extent possible if we were minimizing only this cost.
Hence, we include an opposing cost based on the size of $\tV$ as measured by the sum of the lengths of its edges.
We now specify the objective function $C_\alpha(\tV, N)$ along with the constraints of the linear program (LP):

%Let $V[j\pi] = [l'_1, u'_1]\times \dots \times [l'_1, u'_1]$ for simplicity of the notation in the formulation.

\begin{subequations} \label{eq:ConLP} 
\begin{align}
\min_{\forall \tV\supseteq V} \hspace*{0.1in} C_{\alpha}(\tV, N) = & -\alpha \sum\limits_{x \in N} \, w_{x} ~+~ (1 - \alpha)\sum\limits_{i \in \mathcal{I}}(\tu_i - \tl_i)\hspace*{0.1in}  \label{eq:ConLPobj}\\
\text{s.t.}  \hspace*{0.6in} \nonumber \\[-15pt]
\label{eq:ConLPconstraints}
\begin{split}
& \tu_i ~\geq u_i~ \forall i \in \clI, \\
& \tl_i \,~\leq  \,l_i~~ \forall i \in \clI, \\
& w_{x} \leq~x_{i} - \tl_i ~~ \forall i \in \clI, x \in N, \\
& w_{x} \leq~\tu_i - x_{i} ~~ \forall i \in \clI, x \in N, \\
& w_{x} \leq~0~~ \forall x \in N. 
\end{split}
\end{align}
\end{subequations}
%Let Minimum arguments to problem \cref{eq:ConLPobj} are $\{l_{i}^{*}, u_{i}^{*} \mid i \in \mathcal{I}\}$, then $V^{*} = [l_1^{*}, u_1^{*}] \times \dots \times [l_n^{*}, u_n^{*}] \in sol_{\alpha}(V, N)$.

\begin{exm}\label{exm:notunique}
  The optimal solution to the linear program in \cref{eq:ConLP} may not be unique.
\end{exm}
Consider the one-dimensional point cloud $X = \{a, b\}$ as shown in \cref{fig:nonuniquenessfigure1crop}.
Let the initial point cover $\clU(0)$ consist of the single pivot box $V = [a, a]$.
If $\tV \supseteq V$ is some expansion of $V$, then $\tV$ can be optimal only if it expands in the $x$-direction toward $b$.
Let $\tV = [a, x]$ with $x \leq b$, and $N = B(V, \pi = b-a+\delta)$ for small $\delta > 0$.
Then
\[ C_\alpha(\tV, N) = \alpha (b - a - x) + (1 - \alpha) (x - a), ~\text{ which gives that }\]
\[ \dfrac{\partial C_\alpha(\tV, N)}{\partial x} = 1 - 2\alpha.\]
If $\alpha = 0.5$, then $\dfrac{\partial C_\alpha(\tV, N)}{\partial x} = 0~~\forall x \in N$.
Hence the optimal solution for the LP is not unique.     

\begin{figure}[htp!] 
	\centering
	\includegraphics[scale=0.45]{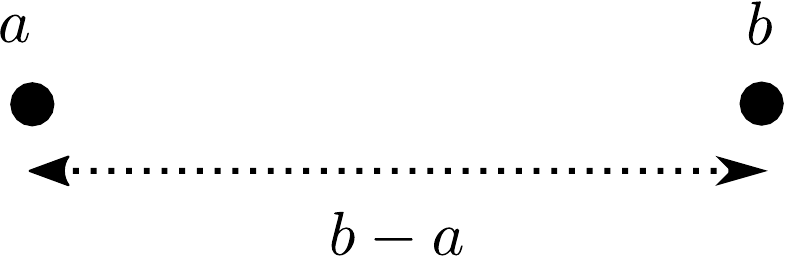}
	\caption{\label{fig:nonuniquenessfigure1crop}
        Example illustrating non-uniqueness of optimal solutions to the box expansion linear program.}
\end{figure}

\begin{defn}\label{def:boxesunion}
  \emph{\bfseries (Union of boxes)} 
  Let $V^{1} = \Pi_{i \in \clI} [l_{i}^{1}, u_{i}^{1}], V^{2} = \Pi_{i \in \clI} [l_{i}^{2}, u_{i}^{2}]$ be two boxes. % with index set $\clI$.
    Then their union is the box $V^{1} \cup V^{2} = \Pi_{i \in \clI} [\hat{l}_i, \hat{u}_i]$ where $\hat{l}_{i}= \min\{l_{i}^{1}, l_{i}^{2}\}$ and $\hat{u}_{i}= \max\{u^{1}_{i}, u^{2}_{i}\}$ for each $i \in \clI$.
\end{defn}

Note that either box in \cref{def:boxesunion} could have zero width along some dimensions, i.e., $l_i=u_i$ for some (or all) $i \in \clI$.
The default setting in which we use the union operation is when we grow a given box in a subset of directions.
By default, we consider growing a box in the standard sequence of dimensions $1, \dots, \dmn$.
We present a useful property of how the cost (objective) function of the box expansion LP in \cref{eq:ConLPobj} changes when a box is expanded in one extra dimension in the standard sequence.

Let $\clI_{i}=\{1, \dots, i\}$ and $V \subseteq \tV$.
Then $S(V, \tV)$ is the ordered sequence whose $i$-th entry is the union of $V$ and the projection of $\tV$ onto the set of directions $\clI_{i}$.
Let $\tc_i$ denote the change in the cost function resulting from the expansion of $\tV$ in the one additional $i$-th direction with respect to the previous, i.e., $(i-1)$-st, entry in the sequence as captured by the projection of $\tV$ onto $V$ in the $i$-th dimension.

\begin{prop}\label{prop:unioncosteq}
  Let $V^{l} \supseteq V,\, V^{k} \supseteq V$, and $\hV = V^{l} \cup V^{k}$ be expansions of a box $V$ such that $V = V^k \cap V^l$ for some neighborhood $N$.
  Let $S(V, V^{l}), \, S(V, V^{k})$, and $S(V, \hV)$ be the sequences with $c^{l}_i, \, c^{k}_i$, and $\hat{c}_i$ being the corresponding changes in the cost function at the $i$-th step of the sequences, respectively.
  Then we have 
  \begin{equation} \label{eq:ciklub}
    \sum\limits_{i \in \clI} \hat{c}_{i} \leq \sum\limits_{i \in \clI} (c_{i}^{k} + c_{i}^{l}).
  \end{equation}	
\end{prop}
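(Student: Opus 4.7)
My plan is to recognize this claim as a submodularity inequality for the cost functional $C_\alpha$ and to verify it separately for its size and weight components. By construction of the sequence $S(V, \tV)$, each $c_i$ is the increment of $C_\alpha$ between consecutive entries, so telescoping gives $\sum_{i\in\clI} c_i = C_\alpha(\tV, N) - C_\alpha(V, N)$ for any expansion $\tV \supseteq V$. Applying this to $V^k$, $V^l$, and $\hV$ in place of $\tV$ and rearranging, the desired inequality $\sum_i \hat{c}_i \leq \sum_i (c_i^k + c_i^l)$ is equivalent to the submodular-type bound
\[ C_\alpha(\hV, N) + C_\alpha(V, N) \leq C_\alpha(V^k, N) + C_\alpha(V^l, N). \]

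For the size contribution $(1-\alpha)\sum_i(\tu_i - \tl_i)$, I would use the hypothesis $V = V^k \cap V^l$ to conclude that in each coordinate $i$, $\min(l_i - \tl_i^k,\, l_i - \tl_i^l) = 0$ and $\min(\tu_i^k - u_i,\, \tu_i^l - u_i) = 0$. Writing $a_i^\bullet := l_i - \tl_i^\bullet$ and $b_i^\bullet := \tu_i^\bullet - u_i$ for $\bullet \in \{k,l\}$, this complementarity implies $\hV = V^k \cup V^l$ has left and right expansions $a_i^k + a_i^l$ and $b_i^k + b_i^l$ in dimension $i$, and a one-line computation shows the size term is in fact \emph{modular}: equality holds for this portion of the inequality.

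For the weight contribution $-\alpha\sum_{x\in N} w_x$, I would use that at the LP optimum $w_x = -d_\infty(x, \tV)$ when $x \notin \tV$ and $w_x = 0$ otherwise, reducing the goal to the pointwise inequality
\[ d_\infty(x, \hV) + d_\infty(x, V) \leq d_\infty(x, V^k) + d_\infty(x, V^l), \qquad x \in N. \]
Setting $\Delta_i := \max(l_i - x_i,\, x_i - u_i,\, 0)$ and letting $e_i^\bullet \geq 0$ denote the expansion of $V^\bullet$ in dimension $i$ on the side of $V$ facing $x$, the per-coordinate distances take the form $\phi_i^\bullet := (\Delta_i - e_i^\bullet)_+$, and for the union one has $\hat{\phi}_i := (\Delta_i - e_i^k - e_i^l)_+$. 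The same complementarity as above gives $\min(e_i^k, e_i^l) = 0$ in every $i$, and monotonicity of $(\cdot)_+$ yields $\hat{\phi}_i \leq \phi_i^k,\, \phi_i^l$ pointwise.

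The main obstacle will be that $d_\infty = \max_i d_i$, so the pointwise inequality cannot simply be summed over $i$. To close the gap I would fix $j^* \in \argmax_i \Delta_i$ and $i^* \in \argmax_i \hat{\phi}_i$; the condition $\min(e_{j^*}^k, e_{j^*}^l) = 0$ forces $\phi_{j^*}^k = \Delta_{j^*}$ or $\phi_{j^*}^l = \Delta_{j^*}$, so without loss of generality $\max_i \phi_i^l \geq \Delta_{j^*} = d_\infty(x, V)$, while $\max_i \phi_i^k \geq \phi_{i^*}^k \geq \hat{\phi}_{i^*} = d_\infty(x, \hV)$. Adding these two bounds yields the pointwise distance inequality; summing over $x \in N$ and combining with the size identity gives the submodular bound on $C_\alpha$, which is equivalent to \cref{eq:ciklub}.
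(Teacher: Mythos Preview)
Your proof is correct and follows essentially the same route as the paper: both arguments split the cost into the size term (which is modular under $V = V^k \cap V^l$) and the weight term, and both control the latter via the key observation that in each of the $2n$ facet directions at most one of $V^k, V^l$ expands beyond $V$ (your complementarity $\min(e_i^k, e_i^l)=0$; the paper's $P^k \cap P^l = \emptyset$). Your explicit submodularity reformulation $C_\alpha(\hV)+C_\alpha(V)\le C_\alpha(V^k)+C_\alpha(V^l)$ and the per-coordinate argument with $j^*, i^*$ make rigorous what the paper sketches informally, but the underlying idea is the same.
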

\begin{proof}
  Let $P^k$ and $P^l$ be the set of directions along which we expand from $V$ to $ V^k$ and from $V$ to $ V^l$, respectively.
  Then $P^k \cap P^l = \emptyset$ since $V = V^{k} \cap V^{l}$.
  If the $\ell_{\infty}$ distance of some $x \in X$ from $V$ is determined by the $i$-th direction in $P^k$, then there is a reduction in the $\ell_{\infty}$ distance of $x$ as we expand from $V$ to $ V^{k}$.
  Now, the $\ell_{\infty}$ distance of $x$ can be further reduced if we expand from $V^k$ to $V^l$ if it is determined by one of the directions in $P^l$.
  Hence we get that
  \[ -\alpha (w_{x} - \hat{w}_{x}) \, \leq -\alpha (w_{x} - w^k_{x}) + -\alpha (w_{x} - w^l_{x}).\] 
  Considering all $x \in N$, we get
  \[ -\alpha \sum\limits_{x \in N} (w_{x} - \hat{w}_{x}) \leq -\alpha \sum\limits_{x \in N} (w_{x} - w^k_{x}) -\alpha \sum\limits_{x \in N} (w_{x} - w^l_{x}).\]
  For the sequence $S(V, \hat{V})$, we assume without loss of generality that we expand from $V$ to $V^{k}$ to $\hV$.
  We further observe that  $|\hV| - |V| = |V^{k}| - |V| + |V^{l}| - |V|$.
  The inequality in \cref{eq:ciklub} immediately follows.
\end{proof} 
\add{In the following theorem, we show that the set of solutions $\sol_{\alpha}(V, N)$ forms a lattice.}
\begin{thm}\label{thm:unionintersectionpiforx}
  Let $\sol_{\alpha}(V, N)$ be the set of optimal solutions for the input box $V$ for a given neighborhood $N = B(V, \pi)$ with parameters $\alpha \in [0, 1], \pi \in \bbR_{+}$.
  Then the following results hold. 
  \begin{enumerate}
    \item If $V^{l}, V^{k} \in \sol_{\alpha}(V, N)$, then $V^{l} \cap V^{k} \in \sol_{\alpha}(V, N)$.
    \item If $V^{l}, V^{k} \in \sol_{\alpha}(V, N)$, then $V^{l} \cup V^{k} \in \sol_{\alpha}(V, N)$.
  \end{enumerate}
\end{thm}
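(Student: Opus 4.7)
The plan is to deduce both statements simultaneously from a single ``modular'' inequality
\[ C_\alpha(V^{l}\cap V^{k},N) + C_\alpha(V^{l}\cup V^{k},N) \;\leq\; C_\alpha(V^{l},N) + C_\alpha(V^{k},N). \]
Once this is in hand, the argument finishes in one stroke: since $V\subseteq V^{l}\cap V^{k}$ and $V\subseteq V^{l}\cup V^{k}$, both boxes are feasible for the LP in \cref{eq:ConLP}, so each of $C_\alpha(V^{l}\cap V^{k},N)$ and $C_\alpha(V^{l}\cup V^{k},N)$ is at least the common optimal value $C^{*}=C_\alpha(V^{l},N)=C_\alpha(V^{k},N)$; combined with the modular inequality this forces both to equal $C^{*}$, proving (1) and (2) at once.

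I would prove the modular inequality by treating the two halves of the cost functional separately. For the size term, applying the identities $\min(a,b)+\max(a,b)=a+b$ coordinate-wise to the endpoints (using $\tl^{\cap}_{i}=\max(l^{l}_{i},l^{k}_{i})$, $\tu^{\cap}_{i}=\min(u^{l}_{i},u^{k}_{i})$, and the reversed formulas for the union) gives the exact equality
\[ \sum_{i\in\clI}\bigl((\tu^{\cap}_{i}-\tl^{\cap}_{i})+(\tu^{\cup}_{i}-\tl^{\cup}_{i})\bigr) \;=\; \sum_{i\in\clI}\bigl((u^{l}_{i}-l^{l}_{i})+(u^{k}_{i}-l^{k}_{i})\bigr). \]
For the weight term, the key observation is that for each fixed $x\in N$, the LP-optimal $w_x$ as a function of a feasible expansion $\tV$ is the clipped minimum $w_x(\tV)=\min\bigl(0,g_x(\tV)\bigr)$, where $g_x(\tV):=\min_{i\in\clI}\min\bigl(x_i-\tl_i,\tu_i-x_i\bigr)$ is a minimum of $2\dmn$ quantities linear in the endpoints. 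In each coordinate, $x_i-\tl^{\cap}_i=\min(x_i-l^{l}_i,x_i-l^{k}_i)$ and $\tu^{\cap}_i-x_i=\min(u^{l}_i-x_i,u^{k}_i-x_i)$, while the union flips $\min\leftrightarrow\max$. Taking the outer $\min$ over $i\in\clI$ yields $g_x(V^{l}\cap V^{k})=\min\bigl(g_x(V^{l}),g_x(V^{k})\bigr)$ exactly, and the standard $\min\max\geq\max\min$ inequality gives $g_x(V^{l}\cup V^{k})\geq\max\bigl(g_x(V^{l}),g_x(V^{k})\bigr)$. Clipping at zero and using $\min(a,b)+\max(a,b)=a+b$ pointwise in $x$ will then produce
\[ \sum_{x\in N}w_x(V^{l}\cap V^{k}) + \sum_{x\in N}w_x(V^{l}\cup V^{k}) \;\geq\; \sum_{x\in N}w_x(V^{l}) + \sum_{x\in N}w_x(V^{k}), \]
which, multiplied by $-\alpha\leq 0$, is exactly the bound needed for the $\alpha$-portion of the cost.

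The main delicate point is the weight step, specifically that clipping the raw minimum at zero commutes correctly with the $\min$/$\max$ interchange on the union side: a short case analysis on the signs of $g_x(V^{l})$ and $g_x(V^{k})$ confirms the identity $\min\bigl(0,\max(a,b)\bigr)=\max\bigl(\min(0,a),\min(0,b)\bigr)$ and shows that the inequality $g_x(V^{l}\cup V^{k})\geq\max(g_x(V^{l}),g_x(V^{k}))$ survives clipping in the form $w_x(V^{l}\cup V^{k})\geq\max(w_x(V^{l}),w_x(V^{k}))$. Once this is verified, combining the edge-length identity with the weight inequality yields the modular inequality and the squeeze argument above closes the proof of both (1) and (2).
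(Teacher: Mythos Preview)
Your proof is correct and, at its core, rests on the same fact the paper uses: the submodular inequality
\[
C_\alpha(V^{l}\cap V^{k},N)+C_\alpha(V^{l}\cup V^{k},N)\;\leq\;C_\alpha(V^{l},N)+C_\alpha(V^{k},N).
\]
The paper obtains this inequality in disguise via \cref{prop:unioncosteq}, where it is phrased as a bound on the cost change along a sequence of one-coordinate expansions from the intersection to the union; the theorem is then finished by a contradiction argument. Your route is more direct: you split the cost into the size term (where you get exact modularity from $\min+\max=\text{sum}$ on the endpoints) and the weight term (where you establish the pointwise supermodularity of $x\mapsto w_x(\cdot)$ from the explicit formula $w_x(\tV)=\min(0,g_x(\tV))$), and you close with a clean squeeze rather than a contradiction. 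Both arguments are equivalent in strength; yours is self-contained and avoids the sequence machinery, while the paper's formulation of \cref{prop:unioncosteq} has the advantage of being reused verbatim in later results (\cref{thm:unionintersectionlargepiforx}, \cref{lem:alphachangeforx}, \cref{lem:pichangeforx}, \cref{thm:pointpixelcoverrela}).
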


\begin{proof}
  Suppose $V^{l}, V^{k} \in \sol_{\alpha}(V, N) $ % $V^{l}, V^{k}$ are distinct
  and $\tV = V^{l} \cap V^{k} \not\in \sol_{\alpha}(V, N)$.
  We assume $V^k \neq V^l$, else the result is trivial.
  Consider the two sequences $S(\tV, V^{k})$ and $S(\tV, V^{l})$ with cost changes $c_{i}^{k}$ and $c_{i}^{l}$ at each step of the sequences.
  The total change in the cost for the sequences are then $\sum\nolimits_{i \in \clI} c_{i}^{k} < 0$ and $\sum\nolimits_{i \in \clI} c_{i}^{l} < 0$ since $\tV$ is not an optimal solution, but both $V^k$ and $V^l$ are (and hence, both these sums are equal).
  Let $\hV = V^{l} \cup V^{k}$ (see \cref{def:boxesunion}).
  Then for its sequence $S(\tV, \hat{V})$, the total change in cost is 
  \begin{equation}\label{eq:unionintersectionpiforx1}
    \sum\limits_{i \in \clI_{n}} \hat{c}_{i} \leq \sum\limits_{i \in \clI_{n}} (c_{i}^{k} + c_{i}^{l}) 
  \end{equation}
  by \cref{prop:unioncosteq}, and we already noted that $\sum\nolimits_{i \in \clI}  c_{i}^{k} < 0$ and $\sum\nolimits_{i \in \clI} c_{i}^{l} < 0$.
  But this result contradicts the optimality of $V^{l}$ and $V^{k}$, since the total change in cost for $\hV$ is lower than those for either of these two boxes
  (in fact, \cref{eq:unionintersectionpiforx1} can hold only if $\sum\nolimits_{i \in \clI} c_{i}^{k} = \sum\nolimits_{i \in \clI} c_{i}^{l} = 0$).
  Hence by contradiction, $\tV$ is also optimal solution.
  At the same time, by \cref{eq:unionintersectionpiforx1} we get that $\sum\nolimits_{i \in \clI} \hat{c}_{i} \leq 0$ but we have shown $\tV$ is an optimal solution.
  This implies $\hV$ is also an optimal solution, since \cref{eq:unionintersectionpiforx1} will not hold otherwise. 
\end{proof}

\paragraph{Example for \cref{thm:unionintersectionpiforx}:}
We know that the optimal solution is not unique in \cref{exm:notunique}.
Suppose $[a, c]$ and $[a, d]$ are optimal solutions such that $c \leq d$ in \cref{exm:notunique}.
Then we know that both $[a, c] \cap [a, d] = [a, c]$ and $[a, c] \cup [a, d] = [a, d]$ are also optimal.
  
\begin{thm}\label{thm:unionintersectionlargepiforx}
  Let $\sol(V, N)$ and $\sol(V, \tN)$ be the set of optimal solutions for a box $V$ with $ \alpha \in [0, 1], \pi, \tpi \in \bbR_{+}$, and neighborhoods $N = B(V, \pi)$ and $\tN = B(V, \tpi)$.
  The following results hold when $\tpi \geq \pi$:
  \begin{enumerate}
    \item  $C(V^{l}, \tN) \leq C(V', \tN)$, where $V' \subseteq  V^{l}$ and $V^{l} \in \sol(V, N)$. \label{thm:unionintersectionlargepiforx:1}
    \item $C(\bigcap\nolimits_{V^{l} \in \sol(V, N)} V^{l}, \tN) < C(V', \tN)$, where $V' \subset  \bigcap\nolimits_{V^{l} \in \sol(V, N)} V^{l}$ and $V \subseteq V'$. \label{thm:unionintersectionlargepiforx:2}
    \item $\forall V^{l} \in \sol(V, N), ~~ \exists V^{k} \in \sol(V, \tN) \text{ such that } V^{l} \subseteq V^{k}$. \label{thm:unionintersectionlargepiforx:3}
    \item $\forall V^{k} \in \sol(V, \tN), ~~ \exists V^{l} \in \sol(V, N) \text{ such that } V^{l} \subseteq V^{k}$. \label{thm:unionintersectionlargepiforx:4} 
  \end{enumerate}
\end{thm}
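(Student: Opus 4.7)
The plan is to exploit a simple decomposition of the cost together with a submodularity property of the objective under box union and intersection. Since the only $N^*$-dependent part of $C(\tV, N^*)$ is the point-coverage term $-\alpha \sum_{x \in N^*} w_x(\tV)$, and the LP forces $w_x(\tV) = -d_x(\tV)$ where $d_x(\tV) \geq 0$ is the $\ell_\infty$ distance from $x$ to $\tV$ (zero when $x \in \tV$), I immediately obtain
\[ C(\tV, \tN) - C(\tV, N) \,=\, \alpha \sum_{x \in \tN \setminus N} d_x(\tV), \]
which is non-negative and non-increasing in $\tV$ under inclusion.

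Parts 1 and 2 follow directly from this identity. For part 1, I would split
\[ C(V', \tN) - C(V^l, \tN) \,=\, \bigl[C(V', N) - C(V^l, N)\bigr] \,+\, \alpha \sum_{x \in \tN \setminus N} \bigl[d_x(V') - d_x(V^l)\bigr]; \]
the first bracket is $\geq 0$ by optimality of $V^l$ in $N$ and the second by $V' \subseteq V^l \Rightarrow d_x(V') \geq d_x(V^l)$. Part 2 is the same decomposition with $V^l$ replaced by $\bigcap \sol(V, N)$ (optimal in $N$ by \cref{thm:unionintersectionpiforx}); strictness follows because any $V' \subsetneq \bigcap \sol(V, N)$ cannot itself be optimal in $N$, else it would shrink the intersection.

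For parts 3 and 4, I would first establish submodularity: $C(A \cup B, N^*) + C(A \cap B, N^*) \leq C(A, N^*) + C(B, N^*)$ for any boxes $A, B \supseteq V$ (so that $A \cap B \neq \emptyset$). The size term is actually modular by coordinate-wise 1D length bookkeeping; each per-coordinate distance $d_i(x, \cdot)$ is similarly modular by a short case check; and submodularity of the global $d_x = \max_i d_i$ then follows from the elementary inequality $\max_i u_i + \max_i v_i \geq \max_i \max(u_i, v_i) + \max_i \min(u_i, v_i)$, provable by taking WLOG $\max u \geq \max v$. This last step is the main obstacle: everything else reduces to 1D bookkeeping, but the $\ell_\infty$ distance is a maximum over coordinates, and maxima generally fail to preserve modularity, so the max inequality is the only step not purely mechanical.

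Given submodularity, part 3 follows by picking any $V^k \in \sol(V, \tN)$ and applying part 1 to $V^l \cap V^k \subseteq V^l$ (with $V \subseteq V^l \cap V^k$) in $\tN$ to obtain $C(V^l, \tN) \leq C(V^l \cap V^k, \tN)$; submodularity in $\tN$ then yields $C(V^l \cup V^k, \tN) \leq C(V^k, \tN)$, so $V^l \cup V^k \in \sol(V, \tN)$ and it contains $V^l$. For part 4, the decomposition of paragraph~1 also gives the dual monotonicity $C(A, N) - C(B, N) \leq C(A, \tN) - C(B, \tN)$ for $A \subseteq B$. Applied with $A = V^k$ and $B = V^k \cup V^m$ for any $V^m \in \sol(V, N)$, together with optimality of $V^k$ in $\tN$, this forces $C(V^k, N) \leq C(V^k \cup V^m, N)$; plugging into submodularity in $N$ then shows $V^m \cap V^k \in \sol(V, N)$, and $V^m \cap V^k \subseteq V^k$ as required.
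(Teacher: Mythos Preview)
Your proof is correct and rests on the same two ingredients as the paper's: (i) the additive decomposition $C(\tV,\tN)=C(\tV,N)+\alpha\sum_{x\in\tN\setminus N}d_x(\tV)$, and (ii) submodularity of $C$ under box union/intersection. The paper packages (ii) as \cref{prop:unioncosteq} using its ``sequence'' notation $S(\cdot,\cdot)$ and a verbal argument about how $\ell_\infty$ distances change under successive one-direction expansions; you instead prove it directly via the elementary inequality $\max_i u_i+\max_i v_i\ge\max_i\max(u_i,v_i)+\max_i\min(u_i,v_i)$ after observing that per-coordinate distances are modular and that $d_i(x,A\cup B)=\min(d_i(x,A),d_i(x,B))$, $d_i(x,A\cap B)=\max(d_i(x,A),d_i(x,B))$ whenever $A\cap B\supseteq V\neq\emptyset$. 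This is cleaner and makes the role of the $\ell_\infty$ structure explicit. Your Part~4 also differs slightly: the paper extracts $\tV=V^l\cap V^k\in\sol(V,N)$ from the equality case forced in its Part~3 argument, whereas you reach the same conclusion via the dual monotonicity $C(A,N)-C(B,N)\le C(A,\tN)-C(B,\tN)$ for $A\subseteq B$ combined with submodularity in $N$; both routes are valid and of comparable length.
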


We interpret the statements in \cref{thm:unionintersectionlargepiforx} before presenting their proofs.
Statement \ref{thm:unionintersectionlargepiforx:1} implies that the cost for a subset of an optimal solution cannot be smaller than that of the optimal solution, even when the neighborhood is expanded when considering the subset.
Statement \ref{thm:unionintersectionlargepiforx:2} says that the cost for a candidate solution that is strictly smaller than the intersection of all optimal solutions for a given input, which itself is also optimal by \cref{thm:unionintersectionpiforx}, is strictly larger (again, even when the neighborhood is expanded).
Statements \ref{thm:unionintersectionlargepiforx:3} and \ref{thm:unionintersectionlargepiforx:4} imply containments of optimal solutions when the neighborhood is changed, i.e., when the neighborhood is enlarged, we get an optimal solution that contains the original optimal solution, and vice versa.

\begin{proof}
  \hspace*{0.0in}\\
  \vspace*{-0.2in}
  \begin{enumerate}
    \item Suppose $V \subseteq V' \subseteq V^{l}$.
      For the sequence $S(V',V^{l})$, the change in cost is $\sum\nolimits_{i \in \clI} c_{i}^{l} \leq 0$ for neighborhood $N$ since $ V^{l} \in \sol(V, N)$.
      Let $A = B(V, \tpi) \setminus B(V, \pi)$ and let $e^l = -\alpha \sum\nolimits_{x \in A} w_{x}$ and $e' = -\alpha \sum\nolimits_{x \in A} w'_{x}$ be the costs of points in $A$ for the boxes $V^{l}$ and $V'$, respectively.
      Then $e^l - e' \leq  0$ since $V' \subseteq V^{l}$.
      Furthermore, the change in the cost of the sequence $S(V', V^{l})$  is $\sum\nolimits_{i \in \clI}c_{i}^{l} + e^l - e' \leq 0$ for neighborhood $\tN$.
      Hence $C(V^{l}, \tN) \leq C(V', \tN)~~\forall V^{l} \in \sol(V, N)$.
		
    \item We know that $\bigcap\nolimits_{V^{l} \in \sol(V, N)} V^{l} \in \sol(V, N)$ from \cref{thm:unionintersectionpiforx}.
      Since $\bigcap\nolimits_{V^{l} \in \sol(V, N)} V^{l}$ is the smallest box in $\sol(V, N)$, there does not exist $V' \subset \bigcap\nolimits_{V^{l} \in \sol(V, N)} V^{l}$ such that $C(\bigcap\nolimits_{V^{l} \in \sol(V, N)} V^{l}, \tN) = C(V', \tN)$.
      The result now follows from the previous result in Statement \ref{thm:unionintersectionlargepiforx:1}.
      
    \item Given a $V^{l} \in \sol(V, N)$ and any $V^{k} \in \sol(V, \tN)$, let $ \tV = V^{l} \cap V^{k}, \hV = V^{l} \cup V^{k}$.
      The total change in the cost of the sequence $S(\tV, V^{k})$ is $\sum\nolimits_{i \in \clI }c_{i}^{k} \leq 0$ for neighborhood $B(V, \tN)$ since $V^{k} \in \sol(V, \tN)$.
      Similarly, the total change in the cost of the sequence $S(\tV, V^{l})$ is $\sum\nolimits_{i \in \clI}c_{i}^{l} \leq 0$ for neighborhood $B(V, \tN)$ since $V \subseteq \tV \subseteq V^{l}$, and $V^{l} \in \sol(V, N)$ (see Statement \ref{thm:unionintersectionlargepiforx:1}).
      For $S(\tV, \hV)$, total change in cost of the sequence is
      \[ \sum\limits_{i \in \clI} \hat{c}_{i} \leq \sum\limits_{i \in \clI } c_{i}^{k}+ c_{i}^{l}\]
      following \cref{prop:unioncosteq} and $\sum\limits_{i \in \clI } c_{i}^{k} \leq 0$ as well as $\sum\limits_{i \in \clI } c_{i}^{l} \leq 0$ for neighborhood $B(V, \tN)$. 
      Given $V^{k}$ is an optimal solution, the above equation can hold only if $\sum\nolimits_{i \in \clI } \hat{c}_{i}= \sum\nolimits_{i \in \clI } c_{i}^{k}$, $\sum\nolimits_{i \in \clI } c_{i}^{l}=0$. 
      This implies $\hV$ is optimal solution over $\tN$, given $V^{k}$ is optimal solution in $\sol(V, \tN)$, and $V^l \subseteq \hV$, proving the Statement.
		
    \item Following the proof above for Statement \ref{thm:unionintersectionlargepiforx:3}, we get that $\sum\nolimits_{i \in \clI } c_{i}^{l}=0$.
      This implies that $C(\tV, \tN) = C(V^{l}, \tN)$.
      Also, $C(\tV, N) \geq C(V^{l}, N)$ since $V^{l} \in \sol(V, N)$.
      The $\ell_{\infty}$ distance of $x \in \tN \setminus N$  for $\tV$ is always greater than or equal to the same distance for $V^{l}$ since $\tV \subseteq V^{l}$.
      This implies $C(\tV, N) = C(V^{l}, N)$, and hence $\tV \in \sol(V, N)$.
  \end{enumerate}
  \vspace*{-0.3in}
\end{proof}

\paragraph{Example for \cref{thm:unionintersectionlargepiforx}}
Consider the 1D point cloud $X = \{a, b, c\}$ such that $a - c < b -a$ as shown in \cref{fig:theorem26examplefigurecrop}.
Let input box $V = [a, a] \in \clU(0)$. If $\tV \supseteq V$ is some expansion of $V$, then $\tV$ can only be optimal if it expands towards $b$ for neighborhood $N =B(V, \pi = a-c+\delta)$ for small $\delta > 0$.
Let $\tV = [x, a]$, $x \geq c$ , then for $\alpha = 0.5$, $[x, a]$ is optimal $\forall x \in [c, a]$ (shown in example \ref{exm:notunique}).
This means $C([c, a], \tN) = C([x, a], \tN) = C([c, a]) + \alpha(b - a)~\forall~x \in [a, b]$ (satisfy items $1, 2$).
 
Now, consider the same problem with neighborhood $B(V, \tpi = c-a+\delta)$.
If $\tV \supseteq V$ is some expansion of $V$, then $\tV$ can only be optimal if it expands towards $b, c$ for neighborhood $\tN = B(V, \tpi = b-a+\delta)$.
Let $\tV = [x^{-}, x^{+}], x^{-}\in [c, a], x^{+} \in [a, b]$, then

\[C(\tV, \tN) = \alpha(b - a - x^{+} + x^{-} - c) + (1 - \alpha)(x^{+} - x^{-})\]
\[\dfrac{\partial C }{\partial x^{+}} = 1 - 2\alpha\] 
\[\dfrac{\partial C }{\partial x^{-}} = -1 + 2\alpha\] 

Now for $\alpha = 0.5$, $[x^{-}, x^{+}], \forall x^{-}\in [c, a], \forall x^{+} \in [a, b]$ are optimal (satisfy items 3, 4).
\begin{figure}[htp!] 
  \centering
  \includegraphics[scale=0.45]{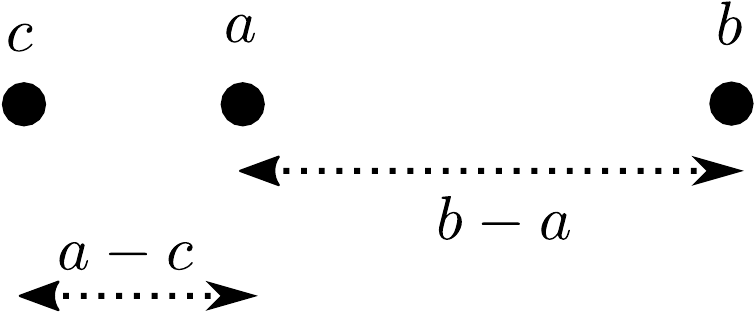}
  \caption{\label{fig:theorem26examplefigurecrop}}
\end{figure}
 
\begin{lem}\label{lem:alphachangeforx}
  Let $\sol_{\alpha}(V, N), \sol_{\talpha}(V, N)$  be sets of optimal solutions for given input box $V$, $\alpha, \talpha \in [0, 1],  \pi \in \bbR_{+}$ and given neighborhood $N = B(V, \pi)$.
  \add{Let $M$ be the largest optimal solution in $\sol_{\alpha}(V, N)$.} 
  Then $M \subseteq V^{k} \in \sol_{\talpha}(V, N)$, given $\talpha > \alpha$.
\end{lem}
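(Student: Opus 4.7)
The plan is to argue by contradiction, leveraging the lattice structure of $\sol_\alpha(V, N)$ established in \cref{thm:unionintersectionpiforx}. Since that set is closed under box unions, it has a unique largest element, which I denote by $V^l$. Suppose for contradiction that some $V^k \in \sol_{\talpha}(V, N)$ does not contain $V^l$, and set $\hV := V^l \cup V^k$ and $\tV := V^l \cap V^k$. Both boxes contain $V$, so both are feasible for the LP at either parameter. Since $V^l \not\subseteq V^k$, at least one coordinate of $\hV$ strictly exceeds that of $V^k$; hence $\hV \supsetneq V^k$, and consequently $L(\hV) > L(V^k)$, where $L(W) := \sum_{i \in \clI}(u^W_i - l^W_i)$ denotes the edge-length sum of a box $W$.

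The pivotal step is a submodularity inequality for the cost, obtained by applying \cref{prop:unioncosteq} with $\tV$ in place of the pivot (a quick inspection of its proof shows it uses only the relation ``base $=$ intersection of the two expansions'', not that the base is the LP's original pivot). This yields, for each $\gamma \in \{\alpha, \talpha\}$,
\[
C_{\gamma}(\hV, N) + C_{\gamma}(\tV, N) \;\leq\; C_{\gamma}(V^l, N) + C_{\gamma}(V^k, N).
\]
Splitting the cost as $C_\gamma(\cdot, N) = \gamma\, P(\cdot, N) + (1-\gamma)\, L(\cdot)$ with $P(W, N) := -\sum_{x \in N} w_x$ the point-penalty part, the one-dimensional identity $\max(p,q) + \min(p,q) = p + q$ applied coordinatewise gives the modularity $L(\hV) + L(\tV) = L(V^l) + L(V^k)$. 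Substituting this into the displayed inequality transfers all of the slack onto $P$, yielding $P(\hV, N) + P(\tV, N) \leq P(V^l, N) + P(V^k, N)$.

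Now define $a := P(V^k, N) - P(\hV, N) \geq 0$ and $b := L(\hV) - L(V^k) > 0$. Modularity of $L$ gives $L(V^l) - L(\tV) = b$, and submodularity of $P$ gives $P(\tV, N) - P(V^l, N) \leq a$. Optimality of $V^k$ at $\talpha$ applied to the expansion $V^k \to \hV$ yields $\talpha\, a \leq (1-\talpha)\, b$; optimality of $V^l$ at $\alpha$ applied to $\tV \to V^l$, combined with $P(\tV, N) - P(V^l, N) \leq a$, yields $(1-\alpha)\, b \leq \alpha\, a$. If $a > 0$, chaining these produces
\[
\frac{\talpha}{1-\talpha} \;\leq\; \frac{b}{a} \;\leq\; \frac{\alpha}{1-\alpha},
\]
contradicting the strict monotonicity of $\gamma \mapsto \gamma/(1-\gamma)$ together with $\talpha > \alpha$. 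If $a = 0$, then $(1-\alpha)\, b \leq 0$ with $\alpha < 1$ (automatic from $\alpha < \talpha \leq 1$) forces $b = 0$, contradicting $b > 0$. The boundary case $\alpha = 0$ is trivial since $\sol_0(V, N) = \{V\}$ gives the containment immediately.

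The main obstacle I anticipate is the submodularity step: \cref{prop:unioncosteq} is stated with $V$ as the LP pivot, and I need to invoke it with the ``fictitious'' base $\tV = V^l \cap V^k$, which requires confirming that its proof carries through for any base box equal to the intersection of the two expansions. Beyond that, the remaining work is careful sign-bookkeeping across the two opposing cost components $\gamma\, P$ and $(1-\gamma)\, L$ as one toggles between $\alpha$ and $\talpha$.
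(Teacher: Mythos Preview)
Your proof is correct and follows essentially the same route as the paper: both arguments apply \cref{prop:unioncosteq} with the base taken to be $\tV=V^l\cap V^k$ (the paper does exactly this in its proof), combine it with the optimality of $V^l$ at $\alpha$ and of $V^k$ at $\talpha$, and extract a contradiction from the strict inequality $\talpha>\alpha$. Your packaging via the decomposition $C_\gamma=\gamma P+(1-\gamma)L$, modularity of $L$, submodularity of $P$, and the ratio $\gamma/(1-\gamma)$ is a clean reformulation of the paper's monotonicity-in-$\gamma$ step together with its explicit $k_1,k_2$ computation, but the underlying mechanism is the same.
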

\begin{proof}
  Let $V^{l} \in \sol_{\alpha}(V)$, $V^{k} \in \sol_{\talpha}(V)$.
  For $S(\tV, \hV)$, where $\tV \subseteq \hV$, the change in cost at each step of the sequence for $\talpha$ is $\leq$ change in cost at each step for $\alpha$ based on the optimization problem since $\talpha > \alpha$.
  Let $\tV = V^{l} \cap V^{k}$, $\hV = V^{l} \cup V^{k}$.
  Let $c^{k}_i, c^{l}_i, \hat{c}_i$ be the changes in cost at each step of the sequences $S(\tV, V^{k}), S(\tV, V^{l})$, and $S(\tV, \hV)$ for $\talpha$.
  Then
  \[\sum\limits_{i \in \clI } c_{i}^{k}\leq 0,~
  \sum\limits_{i \in \clI } c_{i}^{l} \leq 0, \text{ and }
  \sum\limits_{i \in \clI } \hat{c}_{i} \leq \sum\limits_{i \in \clI } c_{i}^{k}+ c_{i}^{l}\]
  based on \cref{prop:unioncosteq}.
  Since $V^{k} \in \sol_{\talpha}(V)$, the only possible case is  $\sum\limits_{i \in \clI } \hat{c}_{i} = \sum\limits_{i \in \clI } c_{i}^{k}$ and $ \sum\limits_{i \in \clI } c_{i}^{l}=0$, otherwise $V^{k}$ is not an optimal solution in $\sol_{\talpha}(V)$.
  This implies $\hV \in \sol_{\talpha}(V)$ and $C_{\talpha}(\tV) = C_{\talpha}(V^{l})$.
  Suppose $V^{l} \neq \tV$, then 
  \[ C_{\alpha}(V^{l}) = C_{\alpha}(\tV) + (1- \alpha)k_2 - \alpha k_1 \]
  where $k_1, k_2 >0$ and $(1- \alpha)k_2 - \alpha k_1 \leq 0$ since $V^{l} \in \sol_{\alpha}(V)$.
  This is true based on the formulation of the optimization problem.
  $k_1, k_2 = 0$ if and only if $\tV = V^{l}$.
  Let $\talpha = \alpha + \epsilon, \epsilon > 0$.
  Then,
  \[ C_{\talpha}(V^{l}) = C_{\talpha}(\tV) + (1- \alpha)k_2 - \alpha k_1 -\epsilon(k_1+ k_2). \]
  Now, $C_{\talpha}(V^{l}) = C_{\talpha}(\tV)$ if and only if $V^{l} = \tV$ since $k_1 = k_2 =0$ when $V^{l} = \tV$.
  It has to be true for any $V^{l} \in \sol_{\alpha}(V)$.
  This implies $M \subseteq V^{k}$.
\end{proof}

\begin{lem}\label{lem:pichangeforx} 
  Suppose $V^{k} \in \sol_{\alpha}(M, \Delta{N})$.
  Then $V^{k} \in \sol_{\alpha}(V, \tN)$, where $M \in \sol_{\alpha}(V, N)$ is a largest optimal solution, $N = B(V, \pi),\tN = B(V, \tpi)$, and \add{$\Delta{N} = \tN \setminus M$}.
\end{lem}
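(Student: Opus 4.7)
My plan is to show that $V^k$ is both a feasible and an optimal candidate for the LP defining $\sol_\alpha(V,\tN)$. Feasibility follows quickly: $V^k\supseteq M\supseteq V$ and, since $M\subseteq B(V,\pi)$ gives $B(M,\tpi-\pi)\subseteq B(V,\tpi)$, we have $V^k\subseteq \Delta N\subseteq \tN$. The entire work is therefore in establishing optimality, for which I plan to prove the following: every $\hV\in\sol_\alpha(V,\tN)$ can, without loss of generality, be taken to satisfy $M\subseteq \hV\subseteq\Delta N$, after which $\hV$ becomes a feasible candidate for $\sol_\alpha(M,\Delta N)$ and can be directly compared to $V^k$.

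First I would handle the lower bound $M\subseteq \hV$. By \cref{thm:unionintersectionlargepiforx}\,\ref{thm:unionintersectionlargepiforx:3} applied to $M\in\sol_\alpha(V,N)$, some $V^*\in\sol_\alpha(V,\tN)$ contains $M$; by \cref{thm:unionintersectionpiforx} applied in $\sol_\alpha(V,\tN)$, the union $\hV\cup V^*$ is still optimal and contains $M$, so I may replace $\hV$ by this union. Next I would handle the upper bound $\hV\subseteq\Delta N$. If $\hV$ extends beyond $\Delta N$ in some direction $i$, then necessarily $M$ did not extend $V$ all the way to $\partial N$ along $i$ (otherwise $\Delta N$ already agrees with $\tN$ in that direction); but then, because $M$ is a \emph{largest} optimum in $\sol_\alpha(V,N)$, any strict extension of $V$ (and hence of $M$) further along $i$ within $N$ has nonnegative cost change. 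Using the sequence-of-single-dimension expansions of \cref{prop:unioncosteq} together with \cref{thm:unionintersectionlargepiforx}\,\ref{thm:unionintersectionlargepiforx:1}, I would show that truncating $\hV$ to $\hV\cap\Delta N$ does not increase $C_\alpha(\cdot,\tN)$, so WLOG $\hV\subseteq\Delta N$.

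With $M\subseteq \hV\subseteq\Delta N$ in hand, $\hV$ is a feasible expansion of $M$ inside $\Delta N$, hence
\[
C_\alpha(V^k,\Delta N)\;\leq\;C_\alpha(\hV,\Delta N),
\]
by optimality of $V^k\in\sol_\alpha(M,\Delta N)$. The remaining step is to bridge the two cost functionals, which differ only via the extra sum over $x\in\tN\setminus\Delta N$:
\[
C_\alpha(\hV,\tN)-C_\alpha(\hV,\Delta N)\;=\;-\alpha\sum_{x\in\tN\setminus\Delta N}w_x(\hV).
\]
For any $x\in\tN\setminus\Delta N$, the coordinate that puts $x$ outside $\Delta N$ lies in a direction $i$ where $M$ did not extend $V$ to the boundary of $N$. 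Because $\hV\subseteq\Delta N$, the $\ell_\infty$-distance from $x$ to $\hV$ is realized in exactly that same direction $i$ (the gap in direction $i$ already exceeds any gap achievable in other directions), so $w_x(\hV)$ depends on $\hV$ only through its $i$-th face, which is constrained by $\hV\subseteq\Delta N$ to lie in an interval on which extension was already shown to be non-improving. Working this through yields $-\alpha\sum_{x\in\tN\setminus\Delta N}w_x(\hV)=-\alpha\sum_{x\in\tN\setminus\Delta N}w_x(V^k)$, i.e. the ``extra'' cost is identical for $\hV$ and $V^k$, so combining with the $\Delta N$ comparison gives $C_\alpha(V^k,\tN)\leq C_\alpha(\hV,\tN)$, proving $V^k\in\sol_\alpha(V,\tN)$.

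The main obstacle I anticipate is the second half of the second paragraph together with the claim about $\tN\setminus\Delta N$ in the third: both rely on converting ``$M$ is the \emph{largest} optimum of $\sol_\alpha(V,N)$'' into a per-direction statement about non-profitability of any further extension of $V$ within $N$, and then lifting that per-direction statement to the larger neighborhood $\tN$. This is where I expect to invoke \cref{prop:unioncosteq} together with \cref{thm:unionintersectionlargepiforx}\,\ref{thm:unionintersectionlargepiforx:1},\ref{thm:unionintersectionlargepiforx:2} most carefully, and where the calculation is most delicate.
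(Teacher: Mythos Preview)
Your plan diverges substantially from the paper's argument and, in the places you yourself flag as delicate, it does not go through.

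The paper's proof is a one–paragraph application of the same union/intersection trick used for \cref{thm:unionintersectionpiforx} and \cref{thm:unionintersectionlargepiforx}. One picks any $V^{l}\in\sol_\alpha(V,\tN)$ with $M\subseteq V^{l}$ (available by \cref{thm:unionintersectionlargepiforx}\,\ref{thm:unionintersectionlargepiforx:3}), sets $\tV=V^{l}\cap V^{k}$ and $\hV=V^{l}\cup V^{k}$, and compares the cost changes of the sequences $S(\tV,V^{l})$, $S(\tV,V^{k})$, $S(\tV,\hV)$, all computed \emph{in the single neighborhood $\tN$}. The key inputs are $\sum_i c_i^{l}\le 0$ (optimality of $V^{l}$ in $\tN$, with $\tV\supseteq M\supseteq V$ feasible), $\sum_i c_i^{k}\le 0$ (optimality of $V^{k}$ in $\Delta N$ together with $\tV\subseteq V^{k}$, exactly as in \cref{thm:unionintersectionlargepiforx}\,\ref{thm:unionintersectionlargepiforx:1}), and the subadditivity of \cref{prop:unioncosteq}. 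No truncation of $V^{l}$ to $\Delta N$, no per-direction analysis, and no separate bookkeeping for $\tN\setminus\Delta N$ is needed.

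By contrast, two of your steps fail as written. First, the claim that one may take $\hV\subseteq\Delta N$ without loss of optimality is unsupported: ``$M$ is the largest optimum in $\sol_\alpha(V,N)$, hence further extension along direction $i$ within $N$ has nonnegative cost change'' is a statement about the cost functional $C_\alpha(\cdot,N)$, and it does \emph{not} transfer to $C_\alpha(\cdot,\tN)$, because $\tN$ can contain new points along direction $i$ that make extension profitable there even though it was not profitable in $N$. Neither \cref{prop:unioncosteq} nor \cref{thm:unionintersectionlargepiforx}\,\ref{thm:unionintersectionlargepiforx:1} gives you a way to lift that per-direction inequality across neighborhoods. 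Second, the bridge equality $-\alpha\sum_{x\in\tN\setminus\Delta N}w_x(\hV)=-\alpha\sum_{x\in\tN\setminus\Delta N}w_x(V^{k})$ is too strong: even granting $\hV,V^{k}\subseteq\Delta N$, there is no reason their relevant faces coincide, nor that the $\ell_\infty$-distance from each such $x$ to either box is realized in the particular direction you single out. The honest inequality one gets here is only $w_x(\tV)\le w_x(V^{k})$ for $\tV\subseteq V^{k}$, which is precisely what the paper uses (implicitly, via the argument of \cref{thm:unionintersectionlargepiforx}\,\ref{thm:unionintersectionlargepiforx:1}) and which feeds directly into the union trick rather than into a pointwise equality. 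I would recommend abandoning the truncation/bridging route and reworking the proof along the paper's lines: pick $V^{l}\supseteq M$ optimal in $\tN$, form $\tV$ and $\hV$, and argue entirely with \cref{prop:unioncosteq} in $\tN$.
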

\begin{proof}
  \add{Consider a box $V^l$ such that $M \subseteq V^l \subseteq \tN$.
    The cost of $V^l$ is the same in both $\tN$ and $\Delta N$ since any point $x \in \tN \setminus V^l$ satisfies $x \in \Delta N$ and any point $x  \in \Delta N \setminus V^l$ satisfies $x \in \tN$.
    Let $\tM$ be the largest optimal solution of $\sol_{\alpha}(V, \tN)$.
    Then $M\subseteq\tM$ by Statement \ref{thm:unionintersectionlargepiforx:3} of \cref{thm:unionintersectionlargepiforx}.
    Since $M\subseteq\tM$ and the cost of any box $V^l$ is the same for both $\tN$ and $\Delta N$ neighborhood, $\sol_{\alpha}(M, \Delta{N}) \subseteq \sol_{\alpha}(V, \tN)$.
    Hence we have proved that $V^{k} \in \sol_{\alpha}(V, \tN)$.}
\end{proof}

\begin{lem}\label{lem:pointcountalpharelaappendixmod}
  Let $M$ be a largest optimal solution in $\sol_{\alpha}(V, N)$ such that $M \neq V$.
  With $\gamma = (1/\alpha) -  1$, we get that
  \[\dfrac{\theta(N\setminus M) + \theta(\partial M)}{p} \geq \gamma \geq \dfrac{\theta(N\setminus M)}{q}\]
  where $p, q \in \{1, \dots, 2n\}$ are the numbers of facets of $M$ that do not intersect $V$ and $N$, respectively.
\end{lem}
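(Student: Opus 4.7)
The plan is to read off both inequalities from first-order optimality of $M$ under two complementary uniform perturbations of the box, using that $M$ is the \emph{largest} optimal solution in $\sol_{\alpha}(V,N)$: any strict expansion of $M$ must strictly increase $C_{\alpha}$, and any contraction of $M$ along a facet not intersecting $V$ cannot decrease $C_{\alpha}$ (else it would contradict either optimality or maximality, via \cref{thm:unionintersectionpiforx}).

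For the upper bound, I would contract $M$ by a small $\epsilon>0$ uniformly along each of its $p$ facets not intersecting $V$, producing a sub-box $M_{\epsilon}\subsetneq M$. This perturbation is feasible because each such facet strictly separates from $V$, so $M_{\epsilon}\supseteq V$. Optimality of $M$ gives $C_{\alpha}(M_{\epsilon},N)-C_{\alpha}(M,N)\geq 0$. The size term contributes exactly $-(1-\alpha)p\epsilon$. For the weight term, each $|w_{x}^{M_{\epsilon}}-w_{x}^{M}|\leq \epsilon$, and the only affected points are (i) those in $N\setminus M$ whose binding $\ell_{\infty}$-facet of $M$ is one of the $p$ contracted facets, and (ii) those in $M$ newly excluded by $M_{\epsilon}$, which as $\epsilon\to 0^{+}$ all lie in $\partial M$. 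Substituting the bound $\alpha\epsilon\bigl(\theta(N\setminus M)+\theta(\partial M)\bigr)$ on the weight-cost increase and dividing the resulting inequality by $\alpha\epsilon$ rearranges to
\[ \gamma \;\leq\; \frac{\theta(N\setminus M)+\theta(\partial M)}{p}. \]

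For the lower bound, I would dually expand $M$ by $\epsilon$ along each of the $2n-q$ facets of $M$ that intersect $N$, i.e.\ every facet \emph{except} the $q$ already sitting on $\partial N$. Since $M$ is the largest optimal, this strict expansion strictly increases $C_{\alpha}$. The size cost grows by $(1-\alpha)(2n-q)\epsilon$, and every point of $N\setminus M$ has its binding facet in this group of $2n-q$ facets (no point of the open neighborhood $N$ can lie outward of a facet already on $\partial N$), so the weight gain is exactly $\alpha\epsilon\,\theta(N\setminus M)$ to first order. The loose version $\gamma(2n-q)\geq \theta(N\setminus M)$ then follows directly. The sharper form claimed in the statement, with denominator $q$, I would derive by a dual/charging argument: the LP stationarity condition for the variables $\tu_{i},\tl_{i}$ in \cref{eq:ConLP} forces, at each free facet of $M$, the sum of the dual multipliers of the corresponding $w_{x}$-inequalities to equal $1-\alpha$. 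Summing this identity over the $q$ facets on $\partial N$ and invoking complementary slackness to charge each unit of dual mass to a point of $N\setminus M$ (the only candidates, since $\partial N$-facets cannot bind against points in $M\setminus\partial M$) converts the total $(1-\alpha)q$ into at most $\alpha\,\theta(N\setminus M)$, which rearranges to $\gamma\geq \theta(N\setminus M)/q$.

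The main obstacle is the lower bound: making the charging precise so that the denominator drops from $2n-q$ to the claimed $q$ requires careful handling of the LP's KKT conditions at facets sitting on $\partial N$, particularly in non-generic configurations (ties in $\ell_{\infty}$-closest facets, or points lying exactly on a facet of $M$). Those degeneracies I would resolve by taking one-sided limits $\epsilon\to 0^{+}$, at which point the affected-point counts stabilise to $\theta(\partial M)$ and $\theta(N\setminus M)$, and by appealing to the subgradient versions of the stationarity conditions.
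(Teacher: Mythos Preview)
Your argument for the left inequality is correct and is exactly the paper's: push the $p$ facets of $M$ not touching $V$ inward by $\delta$, use $C_\alpha(M,N)\le C_\alpha(M_\delta,N)$, and bound the total weight change by $\delta\bigl(\theta(N\setminus M)+\theta(\partial M)\bigr)$.

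For the right inequality you have misread what $q$ denotes. In the paper's proof, $q$ counts the facets of $M$ along which an outward $\delta$-offset is taken; these are the \emph{free} facets (the ones strictly interior to $N$, i.e.\ not coinciding with a facet of $\overline{N}$), and the paper's computation shows that for every $x\in N\setminus M$ the binding constraint is one of these $q$ facets, so $w_\delta(x)-w(x)=\delta$ uniformly. Optimality of $M$ then gives $\alpha\,\theta(N\setminus M)\,\delta \le (1-\alpha)\,q\,\delta$, which is the claim. This is precisely the perturbation you call ``expand along the $2n-q$ facets,'' and what you label the ``loose version'' is in fact the statement of the lemma once $q$ is read correctly.

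Your subsequent KKT/charging attempt to pass from denominator $2n-q$ to $q$ (under your reading of $q$ as the number of facets sitting on $\partial N$) is both unnecessary and unsound: for the \emph{largest} optimal $M$ no facet can lie on $\partial N$ (pulling such a facet back to the outermost data point strictly reduces size without changing any $w_x$, contradicting optimality), so under your interpretation $q=0$ generically and the target inequality would be vacuous. Drop the dual argument; your outward-perturbation step already finishes the proof.
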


\begin{proof}
  For brevity, we denote $\theta (M), \theta (\partial M)$, and $\theta(N\setminus M)$ as $\theta, \partial \theta$, and $\theta^{c}$, respectively, in this proof.
  For a small $\delta > 0$, let $\Mde$ be the inward $\delta$-offset of the $p$ facets of $M$ that do not intersect $V$ such that  $V \subseteq \Mde$ and  if $x \in \partial M$ then $x \notin \Mde$, but if $x \in M \setminus \partial M$ then $x \in \Mde$.
  Let $w(x)$ and $\dew(x)$ be the weights of $x \in B(V,\pi)$ with respect to $M$ and $\Mde$, respectively.  
  Since $M$ is an optimal solution, we get that 
  \begin{equation}
    \begin{aligned}
      C_{\alpha}(M, N) &\leq C_{\alpha}(\Mde, N),~\text{ i.e., }\\
      -\alpha \sum\limits_{x \in N} w(x) + (1 - \alpha) |M| & \leq -\alpha \sum\limits_{x \in N} \dew(x) + (1 - \alpha) |\Mde|.
    \end{aligned}
  \end{equation}
  Since $\Mde$ is the inward $\delta$-offset of $M$ we get that $-\dew(x) + w(x) = \delta$ for $x \in \partial M$ or $x \in N\setminus M$, and $\dew(x) = w(x) = 0$ otherwise.
  Then we get that
  \begin{equation}
    \begin{aligned}
      (1 - \alpha)p\delta & \leq \alpha(\theta^{c} + \partial \theta)\delta,~\text{ which gives that } \\
      \dfrac{p}{\theta^c + \partial \theta + p} &\leq \alpha,~\text{ i.e.,}\\
      \dfrac{\theta^c + \partial \theta}{p} &\geq \gamma.
    \end{aligned}
  \end{equation}
	
  Now let $\Mde$ be the outward $\delta$-offset of the $q$ facets of $M$ that do not intersect $N$ such that if $x \in N\setminus M $ then $x \notin \Mde$.
  Hence we get $\dew(x) - w(x) = \delta$ for $x \in N \setminus M$, and $\dew(x) =  w(x) = 0$ otherwise.
  Hence we get that
  \begin{equation}
    \begin{aligned}
      C_{\alpha}(M, N) &\leq C_{\alpha}(\Mde, N), \text{ i.e.,}\\
      -\alpha \sum\limits_{x \in N} w(x) + (1 - \alpha) |M| &\leq -\alpha \sum\limits_{x \in N} \dew(x) + (1 - \alpha) |\Mde|, \text{ and }\\
      \alpha(\theta^{c})\delta &\leq (1 - \alpha) q\delta, \text{ which gives } \\
      \theta^{c} &\leq \left(\frac{1}{\alpha} - 1\right) q, \text{ i.e., } \\
      \dfrac{\theta^{c}}{q} &\leq \gamma.
    \end{aligned}
  \end{equation}
\end{proof}

Assuming there are $m$ points in the neighborhood, i.e., $|N|=m$, and setting $q=nm$, the running time of the LP in \cref{eq:ConLP} is $O(q^{3}\log{q})$ \cite{Schrijver1986}.
This bound could be prohibitively large in practice.
To reduce number of variables in the constructed LP, we consider working with discretizations of the space that are coarser than the point cover.

\subsubsection{Pixel Cover}\label{sssec:pixelcover}
%We have same initial cover as point cover.
We work with any discretization of the $\dmn$-dimensional space containing $X \in \R^n$ in which each \emph{pixel} is a unit cube with integer vertices.
We assume also that $\pi \in \bbZ$ for defining the neighborhoods.
%All results will be true for any dyadic discretization.
We note that \cref{prop:unioncosteq} holds in the pixel space as well.
Hence all results proven for point covers in \cref{sssec:pointcover} hold true for pixel covers as well.

We define the \emph{integer ceiling} of a number $x \in \R$ as $\iceil{x}$ as the smallest integer strictly larger than $x$.
In other words, $\iceil{x}=\ceil{x}$ when $x\not\in \Z$ and $\iceil{x}=x+1$ when $x \in \Z$.
We start with each point $x \in X$ belonging to a unique pixel $\sigma$.
For $x=(x_{1}, \dots, x_{n}) \in X$ we let $\sigma = [\floor{x_1}, \iceil{x_1}] \times \dots \times [\floor{x_n}, \iceil{x_n}]$.
We also define $m_{\sigma} = (m_{\sigma}^1, \dots, m_{\sigma}^n)$ as the centroid of pixel $\sigma$ and $\theta({\sigma})$ as the number of points of $X$ in $\sigma$.
Given a box $\tV$, we denote by $\Theta(\tV)$ the set of pixels $\sigma$ such that $m_{\sigma} \in \tV$ and $\theta(\sigma)\neq 0$, i.e., the set of nonempty pixels whose centroids are in the box.

For a given input box $V$, let $\tV = [\tl_1, \tu_1] \times \dots \times [\tl_n, \tu_n] \supseteq V$ be some box in the neighborhood $N = B(V, \pi)$.
The total width of box $\tV$ denoted by $|\tV|$ is $\sum\limits_{i \in \clI} \tilde{u}_i - \tilde{l}_i$. 
Considering all $\sigma \in \Theta(N)$, let $\overline{C}_{\alpha}(\tV, V)$ be the objective/cost function of the linear optimization of $\tV$ for a given input box $V$.

Let $w_{\sigma}$ be a variable corresponding to pixel $\sigma \in \Theta(N)$ for a given expansion $\tV$,
\begin{equation}\label{eq:pixelweight}
w_{\sigma} \leq \min \{\{m_{\sigma}^{i} - \tl_i \mid i \in\mathcal{I}\} \cup \{\tu_i - m_{\sigma}^{i} \mid i \in \mathcal{I}\} \cup \{0.5\} \}.
\end{equation}
Furthermore, note that $m_{\sigma} \in \tV$ if and only if
\[\min \{\{m_{\sigma}^{i} - \tl_i \mid i \in\mathcal{I}\} \cup \{\tu_i - m_{\sigma}^{i} \mid i \in \mathcal{I}\} \} \geq 0.\]

Let $V = [l_1, u_1]\times \dots \times [l_n, u_n]$ for simplicity of the notation in the formulation.

\begin{equation} \label{eq:LPobj}
\begin{aligned}
  \min_{\forall \tV \supseteq V} \hspace*{0.2in} & \pC_{\alpha}(\tV, N) = -\alpha \sum\limits_{\sigma \in \Theta(N)} \, w_{\sigma} \theta({\sigma}) ~+~ (1 - \alpha)\sum\limits_{i \in \mathcal{I}}(\tu_i - \tl_i)\hspace*{0.1in}  
\end{aligned}
\end{equation}
\vspace*{-0.1in}
\begin{equation}\label{eq:LPconstraint1}
\begin{aligned}
\mbox{s.t.} 
\hspace*{0.1in}\\ & \tu_i \geq u_i~ \forall i \in \{1,.., n\}, \hspace*{0.1in}\\ & \tl_i \leq  l_i~ \forall i \in \{1,.., n\}
\end{aligned}
\end{equation}

\begin{equation}\label{eq:LPconstraint2}
\begin{aligned}
& w_{\sigma}~\leq~m_{\sigma}^{i} - \tl_i ~ \forall i \in \{1,..,n\}, \sigma \in \Theta(N)\\
& w_{\sigma}~\leq~\tu_i - m_{\sigma}^{i}~ \forall i \in \{1,..,n\}, \sigma \in \Theta(N)\\
& w_{\sigma}~\leq~0.5~ \forall i \in \{1,..,n\}, \sigma \in \Theta(N)\\
\end{aligned}
\end{equation}

We note the correspondence of the LP objective function in \cref{eq:LPobj} to that of the LP in the point cover setting in \cref{eq:ConLPobj}.
As a distinguishing notation, we call the cost functional in the pixel cover setting as $\pC$ (it is $C$ under the point cover) and similarly, the optimal solution sets are denoted as $\psol$.

In general, an optimal solution $[l_1^{*}, u_1^{*}] \times \dots \times [l_n^{*}, u_n^{*}] \in \psol_{\alpha}(V, N)$ can have $l_{i}^{*}, u_{i}^{*} \in \bbR$.
We show that, more specifically, for every optimal solution, there exists an optimal solution with $l_{i}^{*}, u_{i}^{*} \in \bbZ, \forall i \in \clI$---see \cref{thm:lowuplimint}.
Hence we can focus our discussion on optimal solutions with $l_{i}^{*}, u_{i}^{*} \in \bbZ, i \in \clI$.
At the same time, we cannot identify such integer optimal solutions by simple rounding.
In fact, we consider three different rounding functions specific to the pixel cover as we define below.
Recall that the fractional part of $x \in \R$ is defined as $\frcpt(x)=x-\floor{x} \in [0,1)$.
 
\begin{defn} \label{def:rdgs}
Given a box $V = [l_1,u_1] \times \dots \times [l_n,u_n]$ we define three rounding functions of the box using functions $\psi_{1}, \psi_{2}$, and $\psi_{3}$ defined as follows.
\begin{equation}\label{eq:psi1}
\psi_{1}(l_{i}) =\ceil{l_{i}},~~
\psi_{1}(u_{i}) = \floor{u_{i}}
\end{equation}
\begin{equation}\label{eq:psi2}
\psi_{2}(l_{i}) = 
\begin{cases}
\ceil{l_{i}}  & \text{if}~ \frcpt(l_{i}) \in (0.5, 1)\\
\floor{l_{i}} + 0.5 & \text{if}~ \frcpt(l_{i}) \in (0, 0.5]\\
\floor{l_i} & \text{if}~\frcpt(l_i) = 0
\end{cases},~~ 
\psi_{2}(u_{i}) = 
\begin{cases}
\floor{u_{i}} + 0.5 & \text{if}~ \frcpt(u_{i}) \in [0.5, 1) \\
\floor{u_{i}} & \text{if}~ \frcpt(u_{i}) \in [0 ,0.5)
\end{cases}	
\end{equation}
\begin{equation}\label{eq:psi3}
\psi_{3}(l_{i}) = 
\begin{cases}
\ceil{l_{i}} & \text{if}~ \frcpt(l_{i}) \in (0.5, 1) \\
\floor{l_{i}} & \text{if}~ \frcpt(l_{i}) \in [0, 0.5]
\end{cases},~~
\psi_{3}(u_{i}) = 
\begin{cases}
\ceil{u_{i}} & \text{if}~ \frcpt(u_{i}) \in [0.5, 1) \\
\floor{u_{i}} & \text{if}~ \frcpt(u_{i})  \in [0, 0.5)
\end{cases}	
\end{equation}
We define $\Psi_{r}(V) = [\psi_r(l_{1}), \psi_r(u_{1})] \times \dots \times [\psi_r(l_{n}), \psi_r(u_{n})]$ for $r=1,2,3$.
\end{defn}

See \cref{fig:xsifunctiondeffig} for an illustration of these concepts. We now prove that this result always holds. 

\begin{figure}[htp!] 
	\centering
	\begin{subfigure}[t]{2.7in}
		\centering
		\includegraphics[width=2.4in]{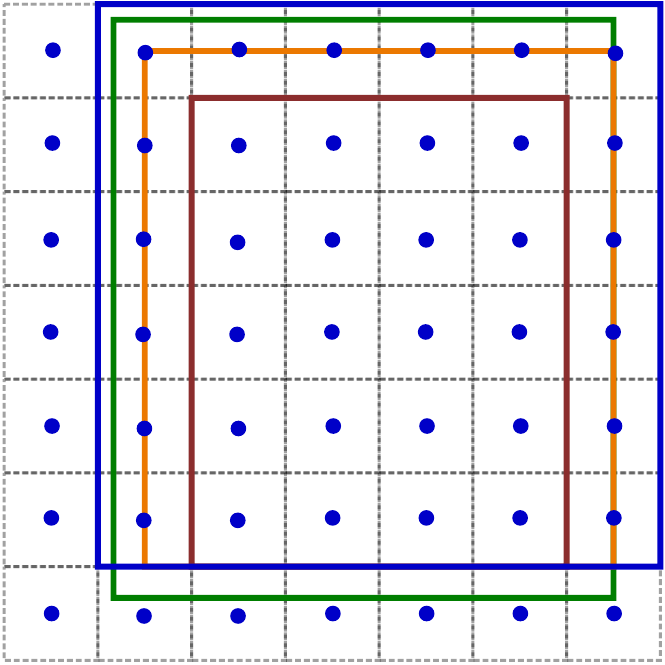}
		\caption{\label{fig:xsifunctiondeffig1}}
	\end{subfigure}
	\begin{subfigure}[t]{2.7in}
		\centering
		\includegraphics[width=2.4in]{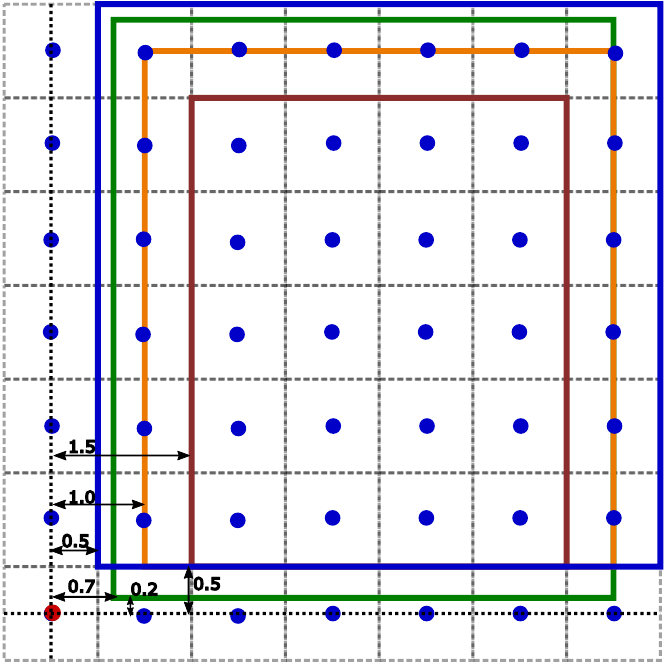}
		\caption{\label{fig:xsifunctiondeffig2}}
	\end{subfigure}	
	\caption{\label{fig:xsifunctiondeffig}
        Illustration of the construction of $\Psi_1(V), \Psi_2(V)$, and $\Psi_3(V)$ in brown, orange, and blue, respectively, for a given $V$ in green.
        Consider the centroid of the bottom-left pixel shown in red.
	This centroid will be weighted with respect to the left side of the box $V$ as shown in \cref{fig:xsifunctiondeffig2}.
        It will still be weighted with respect to the left side of each of $\Psi_{1}(V), \Psi_{2}(V)$, and $\Psi_{3}(V)$.}
\end{figure}

\begin{lem}\label{lem:refdirofsigma}
For a given $\sigma \in \Theta(N)$, let $w_{\sigma}$ and $w_{\sigma}^{r}$ be the weights for corresponding boxes $V$ and $\Psi_{r}(V)$ where $V \subseteq N$.
Then the following results hold for $r = 1, 2, 3$.
\begin{enumerate}
  \item If $w_{\sigma} = m_{\sigma}^{i} - l_{i}$, then $w_{\sigma}^{r} = m_{\sigma}^{i} - \psi_{r}(l_{i})$.
  \item If $w_{\sigma} = u_{i} - m_{\sigma}^{i}$, then $w_{\sigma}^{r} = \psi_{r}(u_{i}) - m_{\sigma}^{i}$.
\end{enumerate}
\end{lem}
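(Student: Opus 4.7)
The plan is to reduce each constraint in the definition of $w_\sigma$ to a single signed facet-distance, and then to show that each $\psi_r$ acts on these distances through a scalar non-decreasing function. Consequently the facet that attains the minimum in the definition of $w_\sigma$ is preserved under the rounding $\Psi_r$.

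First, since every centroid satisfies $m_\sigma^j = k_j + \tfrac{1}{2}$ for some $k_j \in \bbZ$, I will introduce the signed facet distances $d_j := m_\sigma^j - l_j$ and $e_j := u_j - m_\sigma^j$. The hypothesis $w_\sigma = m_\sigma^i - l_i$ then rewrites as
\[
  d_i \;=\; \min\bigl\{d_1,\dots,d_n,\; e_1,\dots,e_n,\; \tfrac{1}{2}\bigr\},
\]
which in particular forces $d_i \le \tfrac{1}{2}$. Define the rounded distances $d_j^r := m_\sigma^j - \psi_r(l_j)$ and $e_j^r := \psi_r(u_j) - m_\sigma^j$; it then suffices to show $d_i^r \le d_j^r$, $d_i^r \le e_j^r$, and $d_i^r \le \tfrac{1}{2}$ for every $j \in \clI$.

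Next, I will exploit the fact that each branch of $\psi_r$ commutes with integer shifts, so that $\psi_r\bigl(k_j + (\tfrac{1}{2} - d_j)\bigr) = k_j + \psi_r\bigl(\tfrac{1}{2} - d_j\bigr)$, and analogously on the upper side. This shows that $d_j^r$ depends on $d_j$ alone (independently of $k_j$), giving scalar functions
\[
  h_r(t) \;:=\; \tfrac{1}{2} - \psi_r\!\bigl(\tfrac{1}{2} - t\bigr), \qquad h_r^U(t) \;:=\; \psi_r\!\bigl(\tfrac{1}{2} + t\bigr) - \tfrac{1}{2},
\]
such that $d_j^r = h_r(d_j)$ and $e_j^r = h_r^U(e_j)$. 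A direct case analysis on the fractional part of the argument then yields three properties for $r = 1,2,3$: (a) \emph{symmetry}, $h_r \equiv h_r^U$, so the same scalar function governs both sides; (b) \emph{monotonicity}, $h_r$ is a non-decreasing right-continuous step function on $\bbR$; and (c) the \emph{cap}, $h_r(t) \le \tfrac{1}{2}$ for every $t \le \tfrac{1}{2}$, with equality at $t = \tfrac{1}{2}$.

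Granted (a)--(c), the conclusion is immediate: monotonicity of $h_r$ turns $d_i \le d_j$ and $d_i \le e_j$ into $d_i^r \le d_j^r$ and $d_i^r \le e_j^r$, and (c) turns $d_i \le \tfrac{1}{2}$ into $d_i^r \le \tfrac{1}{2}$. Hence $d_i^r$ attains the minimum among the constraints defining $w_\sigma^r$, so $w_\sigma^r = m_\sigma^i - \psi_r(l_i)$. The second statement, starting from $w_\sigma = u_i - m_\sigma^i$, is the mirror image and follows by swapping the roles of $d$'s and $e$'s throughout. The main obstacle I anticipate is the bookkeeping for $\psi_2$ in checking (a)--(c): its three-branch piecewise definition has asymmetric boundary behavior at the fractional parts $0$ and $\tfrac{1}{2}$ between the lower- and upper-bound variants, so the symmetry $h_2 \equiv h_2^U$ must be verified at each boundary configuration; for $\psi_1$ and $\psi_3$ each branch has only two cases and the verification reduces to a routine calculation.
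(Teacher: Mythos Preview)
Your proposal is correct and takes essentially the same approach as the paper: both arguments observe that the passage from each signed facet-distance $m_\sigma^j - l_j$ (or $u_j - m_\sigma^j$) to its rounded counterpart is governed by a single non-decreasing step function, so the minimizing facet is preserved. The paper phrases this geometrically---points on the number line lying in a half-integer interval such as $(h-\tfrac12,h+\tfrac12)$ all collapse to the same value---whereas you package it algebraically via the scalar maps $h_r$ and the symmetry/monotonicity/cap checks; your explicit separation of the symmetry condition $h_r\equiv h_r^U$ and the cap $h_r(t)\le\tfrac12$ makes the bookkeeping slightly cleaner, but the underlying mechanism is identical.
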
 
\begin{proof}
   Let $w_{\sigma} = m_{\sigma}^{i} - l_{i}$ for pixel $\sigma$.
   Consider placing the points $m_{\sigma}^{j} - l_{j}$, $u_{j} - m_{\sigma}^{j}$, $~~\forall j \in \clI$ on the number line $\R = (-\infty, \infty)$.
   Based on the optimization problem in \cref{eq:LPobj,eq:LPconstraint1,eq:LPconstraint2}, it must hold that $w_{\sigma}=m_{\sigma}^{i} - l_{i}$ is the left most, non positive point on this number line.

   We first consider the case of $\psi_1(V)$.
   Consider \add{placing} on the number line the points $m_{\sigma}^{j} - \psi_1(l_{j})$, $\psi_1(u_{j}) - m_{\sigma}^{j}$, $~~\forall j \in \clI$.
   Consider the subset of these points in the open interval $(h -0.5, h + 0.5),~ h \in \mathbb{Z}$.
   Following the definition of $\psi_1(V)$, all these points move to $h-0.5$.
   \add{Consider} a point is on \add{$0.5h'$} \add{for some $h' \in \mathbb{Z}$}, then that point will not move \add{for $\psi_1(V)$} (again, following the  definition of $\psi_1(V)$).
   Hence we get that $w^1_\sigma = m_{\sigma}^{i} - \psi_1(l_{i})$ is still the left most, non positive point on the number line.
   We can use a similar argument to show that when $w_{\sigma} = u_{i} - m_{\sigma}^{i}$, we must have $w_{\sigma}^{1} = \psi_1(u_{i}) - m_{\sigma}^{i}$.
	
   Now we consider the case of $\psi_2(V)$.
   Similar to the previous arguments, we now consider points in the open intervals $(h-0.5, h)$ and $(h, h + 0.5)$ for $h \in \mathbb{Z}$.
   Following the definition of $\psi_2(V)$, all these points move to $h-0.5$ and $h$, respectively.
   \add{Consider} \delete{If} a point is on \add{$0.5h'$} \delete{$0.5h$} or on \add{$h'$} \delete{$h$} for some \delete{$h \in \mathbb{Z}$} \add{$h' \in \mathbb{Z}$}, then this points will not move \add{for $\psi_2(V)$} (again, following the definition of $\psi_2(V)$).
   Hence we get that $w_{\sigma}^{2} = m_{\sigma}^{i} - \psi_2(l_{i})$ is still the left most, non positive point on the number line.
   Similarly, we can show that if $w_{\sigma} = u_{i} - m_{\sigma}^{i}$ then we must have $w_{\sigma}^{2} = \psi_2(u_{i}) - m_{\sigma}^{i}$.

   Finally we consider the case of $\psi_{3}(V)$.
   We now consider points in the intervals $(h, h + 0.5]$ and $(h-0.5, h)$ for $h \in \mathbb{Z}$\add{.}
   Following the definition of $\psi_3(V)$, all these points move to $h + 0.5$ and $h-0.5$, respectively.
   \add{Consider} \delete{If} a point is on \add{$h'$}\delete{$h$} for \add{some } \delete{$h \in \mathbb{Z}$} \add{$h' \in \mathbb{Z}$}, then this point will not move \add{for $\psi_3(V)$} (again, by the definition of $\psi_3(V)$).
   Hence we get that $w_{\sigma}^{3} = m_{\sigma}^{i} - \psi_3(l_{i})$ is still the left most, non positive point on the number line.
   A similar argument shows that if $w_{\sigma} = u_{i} - m_{\sigma}^{i}$ then it must be the case that $w_{\sigma}^{3} = \psi_3(u_{i}) - m_{\sigma}^{i}$.
\end{proof}

In general, an optimal box $V^*$ in the pixel cover setting could cut through a pixel, i.e., it could include parts of a pixel.
This setting could render the computation for identifying an optimal box quite expensive.
But the following theorem states that given any optimal box, each of its three roundings as specified in \cref{def:rdgs} is also an optimal solution.
In particular, the roundings $\Psi_1(V^*)$ and $\Psi_3(V^*)$ use only whole pixels.
Hence we could restrict our search for optimal boxes to those that contain only whole pixels.

\begin{thm}\label{thm:lowuplimint}
  Let $V^* = [l^*_{1}, u^*_{1}]\times \dots \times [l_{1}^{*}, u_{1}^{*}] \in \psol_{\alpha}(V, N)$ for the input box $V$ where $\alpha \in [0, 1], \pi \in \bbN$.
  Then $\Psi_{1}(V^{*}), \Psi_{2}(V^{*})$, $\Psi_{3}(V^{*})$ are also in $\psol_{\alpha}(V, N)$ where $N=B(V, \pi)$.  
\end{thm}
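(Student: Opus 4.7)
The plan is, for each $r \in \{1,2,3\}$, to show both that $\Psi_r(V^*)$ is feasible (i.e.\ $V \subseteq \Psi_r(V^*)$) and that $\pC_{\alpha}(\Psi_r(V^*), N) = \pC_{\alpha}(V^*, N)$.

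Feasibility follows quickly. Since $V$ belongs to the initial pixel cover, its endpoints $l_i^V, u_i^V$ are integers, and $V \subseteq V^*$ gives $l_i^* \le l_i^V$, $u_i^* \ge u_i^V$ for each $i \in \clI$. Applying the case definitions of $\psi_r$ in \cref{eq:psi1,eq:psi2,eq:psi3} to these inequalities, and using that $l_i^V, u_i^V$ are already integers, yields $\psi_r(l_i^*) \le l_i^V$ and $\psi_r(u_i^*) \ge u_i^V$, so $V \subseteq \Psi_r(V^*)$.

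The cost comparison rests on \cref{lem:refdirofsigma}. Writing $\Delta l_i := \psi_r(l_i^*) - l_i^*$ and $\Delta u_i := \psi_r(u_i^*) - u_i^*$, the Lemma says that if $w_\sigma$ at $V^*$ is realized on the lower (respectively upper) face of direction $i$, then at $\Psi_r(V^*)$ one has $w_\sigma^r = w_\sigma - \Delta l_i$ (respectively $w_\sigma + \Delta u_i$). For pixels whose weight equals the cap $0.5$, I will separately argue $w_\sigma^r = 0.5$: the bound $m_\sigma^i - l_i^* \ge 0.5$, together with $\psi_r(l_i^*) \le \ceil{l_i^*}$ and the half-integer value of $m_\sigma^i$, forces $m_\sigma^i - \psi_r(l_i^*) \ge 0.5$, and the upper faces are handled symmetrically. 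Summing the pixel-wise weight changes and combining with the change in total width yields
\[
 \pC_{\alpha}(\Psi_r(V^*), N) - \pC_{\alpha}(V^*, N) \;=\; \sum_{i \in \clI} \Delta l_i \bigl[\alpha N_i^l - (1-\alpha)\bigr] + \sum_{i \in \clI} \Delta u_i \bigl[(1-\alpha) - \alpha N_i^u\bigr],
\]
where $N_i^l, N_i^u$ are the total weights $\sum \theta(\sigma)$ over pixels whose argmin at $V^*$ lies on the lower or upper face of direction $i$, respectively.

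The concluding step---and the main obstacle---is to show that every summand on the right vanishes. If $l_i^*$ is already fixed by $\psi_r$ then $\Delta l_i = 0$ trivially; otherwise I will show that $l_i^*$ lies in the interior of a maximal linear piece of the one-dimensional slice $l_i \mapsto \pC_{\alpha}(V', N)$, where $V'$ is obtained from $V^*$ by replacing only its $i$-th lower coordinate with $l_i$. On such a piece, the slope of the slice is exactly $\alpha N_i^l - (1-\alpha)$, so first-order optimality of $V^*$ forces it to vanish. The kinks of this slice occur precisely when $m_\sigma^i - l_i$ becomes equal to $0.5$, to $u_i^* - m_\sigma^i$, to $m_\sigma^j - l_j^*$ for some $j \ne i$, or to $u_j^* - m_\sigma^j$; in each case, the half-integer values of the centroids $m_\sigma$ together with the case definitions of $\psi_r$ are used to verify that $\psi_r(l_i^*)$ stays in the same linear piece as $l_i^*$ whenever $\Delta l_i \neq 0$. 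A symmetric argument handles the $\Delta u_i$ terms, so every summand vanishes and $\Psi_r(V^*) \in \psol_{\alpha}(V, N)$.
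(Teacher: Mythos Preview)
Your strategy differs from the paper's, and I believe the final step contains a genuine gap. Your decomposition of $\pC_\alpha(\Psi_r(V^*),N)-\pC_\alpha(V^*,N)$ into per-face summands via \cref{lem:refdirofsigma} is fine, but the argument that each summand vanishes does not go through. You assert that whenever $\Delta l_i\neq 0$ the point $l_i^*$ lies in the \emph{interior} of a maximal linear piece of the one-dimensional slice, which would force the slope $\alpha N_i^l-(1-\alpha)$ to vanish. However, kinks of that slice occur at $l_i=m_\sigma^i+m_\sigma^j-u_j^*$ and $l_i=m_\sigma^i-m_\sigma^j+l_j^*$; when $u_j^*$ or $l_j^*$ is itself non-integral these kink positions have no special relation to the fixed points of $\psi_r$, so $l_i^*$ can sit exactly at such a kink while $\Delta l_i\neq 0$. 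At a kink, convex first-order optimality yields only $f'_-(l_i^*)\le 0\le f'_+(l_i^*)$, so the summand (which uses the slope of whichever side $\psi_r(l_i^*)$ lies on) is merely $\ge 0$. Summing, you recover only $\pC_\alpha(\Psi_r(V^*),N)\ge\pC_\alpha(V^*,N)$, which is just feasibility plus optimality of $V^*$---the reverse inequality is exactly what remains to be shown.

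The paper closes this gap with an extrapolation trick rather than a one-sided 1D argument. For $\Psi_1$ (and analogously $\Psi_2$) it builds $\tV$ on the \emph{far side} of $V^*$ along the segment from $\Psi_1(V^*)$ through $V^*$, taking $\tl_i=\psi_1(l_i^*)+p\,(l_i^*-\psi_1(l_i^*))$ for some $p>1$. \Cref{lem:refdirofsigma} (whose proof shows the argmin faces are preserved along this whole line) gives $\pC(\tV)-\pC(\Psi_1(V^*))=p\bigl(\pC(V^*)-\pC(\Psi_1(V^*))\bigr)$. Since $\tV\supseteq V^*\supseteq V$ is feasible, $\pC(\tV)\ge\pC(V^*)$, i.e.\ $(p-1)\bigl(\pC(V^*)-\pC(\Psi_1(V^*))\bigr)\ge 0$; combined with the feasibility inequality you already have, this forces equality. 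The case of $\Psi_3$ is then handled not directly but via the identity $\psi_1-\psi_3=2(\psi_1-\psi_2)$, using that $\Psi_1(V^*)$ and $\Psi_2(V^*)$ are already known to have equal cost. Your uniform treatment of all three $\psi_r$ would also need a separate argument for $\Psi_3$, since there $\Psi_3(V^*)$ can strictly contain $V^*$ and the one-sided inequalities point the other way.
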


\begin{proof}
  If $l_{i}^{*}, u_{i}^{*} \in \bbZ~~\forall i\in \clI$, then the result holds trivially since $\Psi_{1}(V^{*}) = \Psi_{2}(V^{*}) = \Psi_{3}(V^{*}) = V^{*}$.
  Let $\clJ = \{i \mid l_{i}^{*} \notin \bbZ\}, \clK = \{i \mid u_{i}^{*} \notin \bbZ\}$, and 
  let $\delta(s_{j}^{*}) = s_{j}^{*} - \psi_{1}(s_{j}^{*})$ for $s_{j}^{*} \in \{l_{j}^{*}, u_{j}^{*}\}$ and $j \in \clI$.
  Note that $\delta(s_j^*) \in (-1,1)$.
  For any box $V'$, let $\partial \Theta(V')$  be the boundary set of pixels $\sigma$ such that $m_{\sigma} \in V', \sigma \cap V' \subset \sigma, \theta(\sigma)\neq 0$, i.e., the set of non-empty pixels that are partially contained in $V'$.
  Then by \cref{lem:refdirofsigma}, we have that
  \begin{equation}\label{eq:intoptiproof1}
    \begin{aligned}
    \pC(V^{*}) - \pC(\Psi_{1}(V^{*})) & = -\alpha \sum\limits_{\sigma \in \Theta(N) \setminus \Theta(V^{*})} \theta(\sigma) (w_{\sigma}^{*} - w_{\sigma}^{1}) -\alpha \sum\limits_{\sigma \in \partial \Theta(V^{*})} \theta(\sigma)(w_{\sigma}^{*} - w_{\sigma}^{1})\\
    &~~~~ + (1 - \alpha) \sum\limits_{i \in \clI} \delta({u_{i}^{*}}) - \delta({l_{i}^{*}})\\
    &=\alpha \sum\limits_{j \in \clJ}c_{j}\delta({l_{j}^{*}}) -\alpha \sum\limits_{k \in \clK} c_{k}\delta({u_{k}^{*}})+ (1 - \alpha) \sum\limits_{i \in \clI} \delta({u_{i}^{*}}) - \delta({l_{i}^{*}})
    \end{aligned}	
  \end{equation}
  where $c_{j}, c_{k} \in \bbZ_{>0}$ are some constants determined by the set of relevant pixels $\sigma \in \Theta(N) \setminus \Theta(V^{*})$.
  Since $V^{*}$ is optimal, we have that 
  \begin{align}\label{eq:intoptiproof2}
    \pC(V^{*}) - \pC(\Psi_{1}(V^{*})) \leq 0.
  \end{align}
  To show optimality of $\Psi_1(V^*)$, we show that the expression in \cref{eq:intoptiproof2} is also $\geq 0$.
  To this end, we consider another candidate box  $\tV = [\tl_{1}, \tu_{1}] \times \dots \times [\tl_{n}, \tu_n] \subseteq N$ such that
  \begin{itemize}
    \item $\Theta(V^{*}) = \Theta(\tV)$ and also the set of unfilled pixels in both boxes $V^*$ and $\tV$ are identical;
    \item $\tl_{i} = \psi_{1}(l_{i}^{*}) + p\delta({l_{i}^{*}}) ~\forall i \in \clJ$; and 
    \item $\tu_{i} = \psi_{1}(u_{i}^{*}) + p\delta({u_{i}^{*}})~\forall i \in \clK$,
  \end{itemize} 
  where $p > 1$.
  %Since $\psi_{3}(V^{*})\supseteq \tV \supseteq \psi_{1}(V^{*})$, 
  Based on the structure of $\tV$, we get that $\Theta(N)\setminus \Theta(V^{*}) = \Theta(N)\setminus \Theta(\tV)$ and $\partial \Theta(V^{*}) = \partial \Theta(\tV)$.
  Also, the weights of all the pixels in $\Theta(N)\setminus \Theta(V^{*})$ and $\partial \Theta(V^{*})$ for $\tV$ will still be determined by the same direction as in $V^{*}$ by \cref{lem:refdirofsigma}. 

  Now, we get by \cref{lem:refdirofsigma} and \cref{eq:intoptiproof1} that
  \begin{align}\label{eq:intoptiproof3}
    \pC(\tV) - \pC(\Psi_{1}(V^{*})) = \pC(\tV) - \pC(V^*) + \pC(V^*) - \pC(\Psi_{1}(V^{*})) = p (\pC(V^{*}) - \pC(\Psi_{1}(V^{*}))).
  \end{align}	
  Combining \cref{eq:intoptiproof2,eq:intoptiproof3} and the fact that $p > 1$, we get that
  \[ \pC(\tV) - \pC(V^{*}) = (p-1)(\pC(V^{*}) - \pC(\Psi_{1}(V^{*}))) \leq 0,\]
  implying that $\tV$ is optimal as well.
  But that observation and \cref{eq:intoptiproof3} immediately imply that $\Psi_{1}(V^{*})$ is also optimal.

  We can repeat the above argument with $\tV$ defined with respect to $\Psi_2(V^*)$ to show that $\Psi_{2}(V^{*})$ is also optimal.
  Now we show $\Psi_{3}(\tV)$ is optimal. 

  Let $\delta'(s_{j}^{*})= \psi_{1}(s_{j}^{*}) - \psi_{2}(s_{j}^{*})$ and $\delta''(s_{j}^{*})= \psi_{1}(s_{j}^{*}) - \psi_{3}(s_{j}^{*})$ for $s_{j}^{*} \in \{l_{j}^{*}, u_{j}^{*}\}$.
  Then we get that $\delta''(s_{j}^{*}) = 2\delta'(s_{j}^{*})$ based on the definition of functions $\psi_{1}, \psi_{2}$, and $\psi_{3}$.
  Since both $\Psi_1(V^*)$ and $\Psi_2(V^*)$ are optimal, we have that
  \begin{align*}
    \pC(\Psi_{1}(V^{*})) - \pC(\Psi_{3}(V^{*})) = 2 (\pC(\Psi_{1}(V^{*})) - \pC(\Psi_{2}(V^{*}))) = 0,
  \end{align*}	
showing that $\Psi_{3}(V^{*})$ is optimal as well.
\end{proof}

\begin{exm}[for \cref{thm:lowuplimint}] \label{eg:1d2pt}
  Consider the 1D point cloud $X= \{a, b\}$ with $a, b$ being the centers of the pixels $\sigma(a), \sigma(b)$ as shown in \cref{fig:theorem211examplefigurecrop}.
  Further, we can observe $\theta(\sigma(a)) = 1, \theta(\sigma(b)) = 1$.
  Let input box $V = [a - 0.5, a + 0.5] \in \clU(0)$.
  If $\tV \supseteq V$ is some expansion of $V$, then $\tV$ can only be optimal if it expands in $x$-direction and towards $b$ for neighborhood $B(V, \pi = b - a)$.
  Let $\tV = [a - 0.5, x], x \leq b + 0.5$, and $N = B(V, \pi = b - a)$, then 
  \[ \pC(\tV) = -0.5\alpha + \alpha(b - x) + (1 - \alpha)(x - a - 0.5), ~\text{ and } \]
  \[\dfrac{\partial \pC(\tV)}{\partial x} = 1 - 2\alpha.\]

  If $\alpha = 0.5 $, then $\dfrac{\partial \pC(\tV)}{\partial x} = 0~~\forall x$ and $\tV = [a - 0.5, x]$ is an optimal solution $\forall x \leq b + 0.5$.
  Then $\psi_{1}([a - 0.5, x]), \psi_{2}([a - 0.5, x]), \psi_{3}([a - 0.5, x])$ are optimal for all $x \leq b + 0.5$.
\end{exm}
 
\begin{figure}[htp!] 
  \centering
  \includegraphics[scale=0.60]{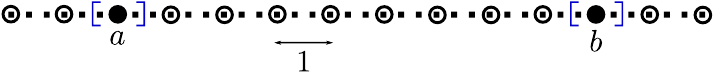}
  \caption{\label{fig:theorem211examplefigurecrop}
  1D point cloud in \cref{eg:1d2pt}.}
\end{figure}

\begin{cor}\label{lem:largestboxboundarypixel}
  Let $\bM$ be the largest optimal solution in $\psol_{\alpha}(V, N)$.
  Then $\bM$ has only pixels $\sigma$ such that either $\sigma \subseteq \bM$ or such that $m_{\sigma} \notin \bM$ and $\sigma \cap \bM \neq \emptyset$.
\end{cor}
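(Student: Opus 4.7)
The plan is to derive \cref{lem:largestboxboundarypixel} as a direct consequence of \cref{thm:lowuplimint} by exploiting the maximality of $\bM$. First I will invoke \cref{thm:lowuplimint} to conclude that $\Psi_3(\bM) \in \psol_\alpha(V, N)$, and then combine this with the pixel-cover analogue of \cref{thm:unionintersectionpiforx} (which holds since the underlying \cref{prop:unioncosteq} applies verbatim in the pixel setting) to conclude that $\bM \cup \Psi_3(\bM) \in \psol_\alpha(V, N)$. Maximality of $\bM$ then forces $\Psi_3(\bM) \subseteq \bM$.

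Writing $\bM = [l_1, u_1] \times \cdots \times [l_\dmn, u_\dmn]$, the containment $\Psi_3(\bM) \subseteq \bM$ is equivalent to $\psi_3(l_i) \geq l_i$ and $\psi_3(u_i) \leq u_i$ for every $i \in \clI$. A direct inspection of \cref{eq:psi3} shows that $\psi_3(l_i) < l_i$ precisely when $\frcpt(l_i) \in (0, 0.5]$ (in that case $\psi_3(l_i) = \floor{l_i}$), and $\psi_3(u_i) > u_i$ precisely when $\frcpt(u_i) \in [0.5, 1)$ (in that case $\psi_3(u_i) = \ceil{u_i}$). Hence maximality of $\bM$ yields the coordinate-wise restrictions $\frcpt(l_i) \in \{0\} \cup (0.5, 1)$ and $\frcpt(u_i) \in [0, 0.5)$ for each $i \in \clI$.

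I finish by contradiction: suppose there exists a pixel $\sigma = \prod_i [k_i, k_i+1]$ with $\sigma \cap \bM \neq \emptyset$, $\sigma \not\subseteq \bM$, and $m_\sigma \in \bM$. The last two conditions force, in some coordinate direction $i$, either $k_i < l_i \leq k_i + 0.5$ (left overhang, combined with $m_\sigma^i = k_i + 0.5 \geq l_i$) or $k_i + 0.5 \leq u_i < k_i + 1$ (right overhang). Translating these back to fractional parts gives $\frcpt(l_i) \in (0, 0.5]$ or $\frcpt(u_i) \in [0.5, 1)$, each of which contradicts the restriction established in the previous paragraph.

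The main subtlety I anticipate is the careful handling of the half-integer tie case $\frcpt = 0.5$ in \cref{eq:psi3}: this is exactly the configuration in which a pixel's centroid sits on the boundary of $\bM$ while the pixel itself spills across, and the asymmetric tie-breaking of $\psi_3$ (using $\floor{\cdot}$ for $l_i$ and $\ceil{\cdot}$ for $u_i$) is precisely what rules out such a pixel. This is why $\psi_3$, rather than $\psi_1$ or $\psi_2$, is the right rounding to invoke here.
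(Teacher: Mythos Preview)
Your proof is correct and follows essentially the same approach as the paper: both arguments hinge on \cref{thm:lowuplimint} giving $\Psi_3(\bM)\in\psol_\alpha(V,N)$, and then using maximality of $\bM$ to force $\Psi_3(\bM)\subseteq\bM$. The only difference is presentational: the paper argues the contrapositive directly (a pixel with $m_\sigma\in\bM$ but $\sigma\not\subseteq\bM$ is fully contained in $\Psi_3(\bM)$, so $\Psi_3(\bM)\not\subseteq\bM$, contradicting maximality), whereas you first extract the coordinate-wise fractional-part restrictions from $\Psi_3(\bM)\subseteq\bM$ and then show a bad pixel violates them---but these are the same argument.
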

\begin{proof}
  Suppose $\bM$ contains a pixel $\sigma$ such that $m_{\sigma} \in \bM, \sigma \cap \bM \subset \sigma$.
  But this implies that $\bM$ is not the largest optimal solution since $\sigma \in \Psi_{3}(\bM)$ and $\Psi_3(\bM)$ is also optimal, and hence $\Psi_3(\bM) \not\subseteq \bM$.
  Hence we get a contradiction.
\end{proof}

\begin{lem}\label{lem:pixelcountalpharelaappendixmod}
  Let $M$ be a largest optimal solution in $\psol_{\alpha}(V, N)$ such that $M \neq V$.
  With $\gamma = (1/\alpha)-1$, we get that
  \[\dfrac{\theta(N\setminus M)}{p} \geq \gamma \geq \dfrac{\theta(N\setminus M)}{q}\]
  where $p,q \in \{1, \dots, 2n\}$ are the numbers of facets of $M$ that do not intersect $V$ and $N$, respectively.   	
\end{lem}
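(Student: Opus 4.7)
I plan to mirror the argument in \cref{lem:pointcountalpharelaappendixmod} by probing the optimality of $M$ in $\psol_\alpha(V, N)$ against small inward and outward $\delta$-offsets of $M$. The key simplification in the pixel setting, which removes the $\theta(\partial M)$ term that appears in the point-cover bound, is \cref{lem:largestboxboundarypixel}: the facets of the largest optimal $M$ align with pixel faces, so no pixel centroid lies on $\partial M$ and every centroid in $M$ sits at perpendicular distance at least $1/2$ from the boundary.

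For the left inequality, I fix a small $\delta > 0$ and let $\bM_\delta$ be the inward $\delta$-offset of $M$ along the $p$ facets that do not intersect $V$. Optimality of $M$ gives $\pC(M, N) \le \pC(\bM_\delta, N)$. The edge-length portion of $\pC(\bM_\delta, N) - \pC(M, N)$ equals $-(1-\alpha) p \delta$. For the weighted-sum portion, the cap in \cref{eq:pixelweight} keeps $w_\sigma = 0.5$ for every pixel with $m_\sigma \in M$ once $\delta$ is small enough, so the pixels whose weight actually changes are exactly those with $m_\sigma \in N \setminus M$ whose defining (most-negative) direction is one of the $p$ offset facets, and each such $w_\sigma$ drops by exactly $\delta$. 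Substituting into $\pC(\bM_\delta, N) \ge \pC(M, N)$ and dividing by $\alpha\delta$ then produces $\theta(N\setminus M)/p \ge \gamma$.

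For the right inequality, I repeat the argument with the outward $\delta$-offset $\bM_\delta$ along the $q$ facets not intersecting $N$; the edge-length change contributes $+(1-\alpha) q \delta$. Interior pixels keep $w_\sigma = 0.5$ because the cap remains tight under outward motion, while every pixel with $m_\sigma \in N \setminus M$ must be separated from $M$ across one of the $q$ outward-offsetable facets (the remaining $2n-q$ facets of $M$ coincide with $\partial N$, leaving no room for a centroid of $N \setminus M$ on their side), so each such weight increases by exactly $\delta$. Rearranging $\pC(\bM_\delta, N) \ge \pC(M, N)$ yields $\gamma \ge \theta(N\setminus M)/q$.

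The hard part will be the inward-offset bookkeeping when a pixel with $m_\sigma \in M$ sits at perpendicular distance exactly $1/2$ from an offset facet: there the distance constraint in \cref{eq:pixelweight} is tight in parallel with the cap, and even an infinitesimal inward step nominally drops $w_\sigma$ from $0.5$ to $0.5-\delta$, threatening an unwanted additive term analogous to $\theta(\partial M)$ in the bound. I plan to resolve this by pairing each such interior-adjacent pixel with the corresponding exterior pixel directly across the same facet, using \cref{lem:largestboxboundarypixel} together with the roundings of \cref{thm:lowuplimint} to argue that the extra contribution is already absorbed into the $\theta(N\setminus M)$ on the right-hand side.
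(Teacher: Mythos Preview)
Your overall plan---probe optimality of $M$ with infinitesimal inward and outward $\delta$-offsets along the $p$ (resp.\ $q$) free facets, then invoke \cref{lem:largestboxboundarypixel} to kill the boundary term---is exactly the paper's argument; the paper's proof is a two-line sketch saying ``same as \cref{lem:pointcountalpharelaappendixmod} but with no pixels on $\partial M$.'' Your treatment of the outward offset (and hence of the right inequality) is fine and matches the paper.

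The difficulty you flag as the ``hard part'' is real, however, and your proposed pairing fix does not work. When a pixel $\sigma$ with $m_\sigma\in M$ sits at perpendicular distance exactly $1/2$ from one of the $p$ inward-offset facets, the cap $w_\sigma\le 0.5$ and the facet constraint are simultaneously tight, so an inward $\delta$-step genuinely drops $w_\sigma$ to $0.5-\delta$. The offset comparison then yields only $(\theta(N\setminus M)+\partial\theta')/p\ge\gamma$, where $\partial\theta'$ counts points in those interior-adjacent pixels. Your idea of pairing each such interior pixel with the exterior pixel directly across the facet and ``absorbing'' its contribution into $\theta(N\setminus M)$ cannot succeed in general: the two pixels carry independent point counts, and the exterior one may well be empty. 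Concretely, take $n=1$, $V=[0,1]$, occupied pixels $[0,1]$ and $[1,2]$ with one point each, a wide $N$, and $\alpha=0.6$; one checks that $M=[0,2]$ is the unique (hence largest) optimum, so $\theta(N\setminus M)=0$ and $p=1$, yet $\gamma=2/3$, and the left inequality $\theta(N\setminus M)/p\ge\gamma$ fails outright. The paper's sketch glosses over this same point; what the $\delta$-offset argument actually delivers in the pixel setting is the weaker bound carrying the interior-adjacent term, mirroring the shape of \cref{lem:pointcountalpharelaappendixmod} rather than the sharper inequality stated here.
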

\begin{proof}
  By \cref{lem:largestboxboundarypixel} $M$ has no pixel $\sigma$ such that $m_{\sigma} \in M$ and $\sigma \cap M \subset \sigma$.
  Now we take infinitesimal $\delta$ inward and outward offsets of $M$ just as we did in the proof of \cref{lem:pointcountalpharelaappendixmod}.
  We get two  similar inequalities, but now with no pixels on the boundary of $M$:
  \[\dfrac{\theta(N\setminus M)}{p} \geq \gamma \geq \dfrac{\theta(N\setminus M)}{q}.   \vspace*{-0.2in}
  \]
\end{proof}

We end this Subsection by noting that all the results for pixel covers are valid for any \emph{discretization}, i.e., the corner points of pixels in the pixel cover do not need to have integer coordinates.
Furthermore, all results from \cref{sssec:pointcover} except \cref{lem:pointcountalpharelaappendixmod} also hold for any discretization.

\subsubsection{Relations of the Point and Pixel Cover Optimal Solutions} \label{sssec:pntpxlcovers}

While the point cover setting may be more accurate, it is also computationally expensive.
An appropriate pixel cover promises computational efficiency.
At the same time, we show that, under certain assumptions, there is no loss of accuracy when working with a pixel cover.
For a given input box, we can always find an optimal box in the pixel cover setting that contains its corresponding optimal solution in the point cover setting.
Complementarily, we can find an optimal solution in the point cover setting that contains its corresponding optimal solution in the pixel cover setting.
In this sense, the point and pixel cover optimal solutions are \emph{interleaved}.
And hence, we can work with the largest optimal solution in the pixel cover setting, since that is guaranteed to also contain the optimal point cover solution.

We \delete{make use of the following property of} \add{work under the following assumption on} the distribution of points in $X$ about how many of them fall inside an open box and how many lie on a facet of the box.
\add{All subsequent results are presented under this assumption.}
\begin{asmn}\label{proper:ptdsbtnsos}
The number of points in any open box is $\leq O(\epsilon)$ where $\epsilon$ is the total width of the box. 
The number of points on any $(\dmn-1)$-dimensional facet of a box is $\leq c$ for some constant $c$.
\end{asmn}

Based on the above \delete{property} \add{assumption} about the distribution of $X$, we get that
\begin{align*}
\theta(\hV \setminus \tV) &\leq \clL (|\hV| - |\tV|) + 2cn~\text{ and }\\
|\Theta(\hV \setminus \tV)| &\leq \clL (|\hV| - |\tV|) + 2cn
\end{align*}
where $\hV \supset \tV$ are open boxes and $\clL>0$ is a constant.
We get the result on the pixels from the observation that there can be at most the same number of pixels with $\theta(\sigma)\neq 0$ as the number of points in $X$.
Any point distribution can be described to have this property for \emph{some} $\clL$ and $c$. 
But it may be desirable for the results we present below to hold for \emph{small} values of $\clL$ and $c$.
In the context of this observation, we note that the number of points in any facet being bounded is not a restrictive one---it is similar to the standard assumption that the points are in general position \cite{EdHa2009}.

%Based on the above assumption, if $\epsilon$ is small then $ \frac{O(\epsilon)}{2n}\approx 0$ i.e. small ball contains small number of points.

We now formalize the \emph{interleaving} of optimal solutions under point and pixel covers.
In \cref{thm:pointpixelcoverrela}, we show that the optimal point cover solution is contained in an optimal pixel cover solution.
Complementarily, in \cref{thm:pixelpointcoverrela}, we show that the optimal pixel cover solution is contained in an optimal point cover solution.
We present a corresponding lemma for each theorem, which in turn specifies the conditions that some of the parameters must satisfy for the results to hold.
Let $\clH$ be the width of a pixel in the pixel cover.

\begin{lem}\label{lem:pixelcoverstab1} 
  Let $\hV \in \sol_{\alpha}(V, N)$, $\bV \in \psol_{\bar{\alpha}}(V, N)$, $\tV = \hV \cap \bV$, and let $\hV \not\subset \bV$.
  Also let $0\leq \alpha < 1 $, $0\leq \bar{\alpha} \leq 1$, $\bar{\alpha} = \alpha + \varepsilon$ for $\varepsilon \geq 0$, and $0 < \clK \leq \Delta$ where $\Delta = |\hV| - |\tV|$.
  Given \cref{proper:ptdsbtnsos} for the distribution of points in $X$, if
  \begin{align}
  & 0 \leq \clL < \dfrac{2(\gamma+1)}{\clH}, \label{eqn:Llmt}  \\
  & \clK > \dfrac{\clH (cn + n\gamma)}{(\gamma+1)-0.5\clL\clH},  \hspace*{0.5in}\text{ and } \label{eqn:Klmt} \\
  & \varepsilon= \min \left\{ \dfrac{n\gamma\clH\alpha + 0.5\clH\alpha\clL\clK + \clH \alpha cn}{((\gamma + 1) - 0.5\clH\clL)\clK-\clH cn -n\gamma\clH}, 1-\alpha\right\}, \label{eqn:epsval}
  \end{align}
  then, $\overline{C}_{\bar{\alpha}}(\hV, \mdelete{V} \add{N}) - \overline{C}_{\bar{\alpha}}(\tV, \mdelete{V} \add{N}) \leq 0$.
\end{lem}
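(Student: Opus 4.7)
\medskip
\noindent\textbf{Proof proposal.}
The plan is to expand the target inequality directly using the definition of $\pC_{\bar\alpha}$ and reduce it to bounding a pixel-weighted coverage gain from below. Writing out $\pC_{\bar\alpha}(\hV,V)-\pC_{\bar\alpha}(\tV,V)\leq 0$ and using $\hV\supseteq\tV$, this is equivalent to showing
\[
\bar\alpha\,\bar S_{\sigma} \;\geq\; (1-\bar\alpha)\,\Delta,
\qquad \bar S_{\sigma} \;:=\; \sum_{\sigma\in\Theta(N)} \theta(\sigma)\bigl(w^{\hV}_\sigma - w^{\tV}_\sigma\bigr).
\]
The only information I can exploit about $\hV$ is that it is optimal in the \emph{point} cover with parameter $\alpha$. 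Writing the point-cover analogue $C_\alpha(\hV,V)\leq C_\alpha(\tV,V)$ and rearranging gives $S_x := \sum_{x\in N}(w^{\hV}_x - w^{\tV}_x) \geq \gamma\,\Delta$. So the task reduces to transferring this point-cover coverage gain to the pixel-cover one, up to error terms that $\varepsilon$ will absorb.

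\medskip
For each nonempty pixel $\sigma$, any $x\in\sigma$ satisfies $\|x-m_\sigma\|_\infty\leq \clH/2$, so for a fixed box $\tV$ the per-coordinate distances appearing in \cref{eq:pointweight} and \cref{eq:pixelweight} differ by at most $\clH/2$. This lets me compare the per-point change $w^{\hV}_x-w^{\tV}_x$ with the per-pixel change $w^{\hV}_\sigma-w^{\tV}_\sigma$ modulo a uniform $O(\clH)$ shift, after accounting for the differing caps ($0$ for points, $0.5$ for pixel centroids). I would split $\Theta(N)$ into three classes depending on whether $m_\sigma\in\tV$, $m_\sigma\in\hV\setminus\tV$, or $m_\sigma\notin\hV$, and bound the error only on the last two classes; the first class contributes identically zero. \Cref{proper:ptdsbtnsos} then controls the cardinality of these boundary classes: $\theta(\hV\setminus\tV)\leq \clL\Delta + 2cn$ and $|\Theta(\hV\setminus\tV)|\leq \clL\Delta + 2cn$, with the $2cn$ term coming from the at most $2n$ facets of $\hV$ and the constant $c$ on each.

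\medskip
Putting these pieces together gives a lower bound of the shape
\[
\bar S_{\sigma} \;\geq\; \gamma\,\Delta \;-\; \tfrac{1}{2}\clH\clL\,\Delta \;-\; \clH\,cn \;-\; n\gamma\,\clH,
\]
where the $n\gamma\clH$ accounts for the $\clH/2$ slack on each of the $2n$ coordinate sides after combining with \cref{lem:pointcountalpharelaappendixmod}. Substituting $\Delta\geq\clK$ and plugging into $\bar\alpha\,\bar S_\sigma\geq (1-\bar\alpha)\,\Delta$ with $\bar\alpha=\alpha+\varepsilon$, I solve algebraically for $\varepsilon$. The inequality \cref{eqn:Llmt} is exactly what makes the coefficient $(\gamma+1)-\tfrac12\clH\clL$ in the denominator positive, and \cref{eqn:Klmt} is exactly what makes the denominator $((\gamma+1)-\tfrac12\clH\clL)\clK - \clH cn - n\gamma\clH$ of the resulting expression for $\varepsilon$ strictly positive; the $\min$ with $1-\alpha$ in \cref{eqn:epsval} enforces $\bar\alpha\leq 1$. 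The value of $\varepsilon$ in the hypothesis is then precisely the smallest increase that makes the required inequality hold, proving the lemma.

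\medskip
\noindent\textbf{Main obstacle.} The genuinely delicate step is the transfer from $S_x$ to $\bar S_\sigma$. One has to distinguish carefully between the uniform $\clH/2$ shift induced by moving from $x$ to $m_\sigma$ (which scales with $\clL\Delta$ via \cref{proper:ptdsbtnsos}) and the at-most-$cn$ boundary contribution per facet (which produces the $n\gamma\clH+\clH cn$ constants), while simultaneously handling the mismatch between the caps $0$ and $0.5$ without double counting. Getting this accounting tight enough to match the exact formula for $\varepsilon$ in \cref{eqn:epsval} is the place where the proof can easily lose a constant factor, and is where I would spend most of the effort.
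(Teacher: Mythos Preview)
Your approach is essentially the same as the paper's. Both proofs start from point-cover optimality of $\hV$ to obtain $S_x \geq \gamma\Delta$, bound the discrepancy between the point and pixel coverage gains by $0.5\clH$ per point using the $\ell_\infty$ centroid shift, control the number of affected points via \cref{proper:ptdsbtnsos} (giving the $0.5\clH\clL\Delta + \clH cn$ term) and \cref{lem:pointcountalpharelaappendixmod} (giving $\theta(N\setminus\hV)\leq 2n\gamma$, hence the $n\gamma\clH$ term), and then solve algebraically for $\varepsilon$. Your three-way split $m_\sigma\in\tV$, $m_\sigma\in\hV\setminus\tV$, $m_\sigma\notin\hV$ is a slightly more explicit bookkeeping than the paper's, but leads to the same error terms.

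One genuine step you omit: after solving for the required $\varepsilon$ as a function of $\Delta$, you say ``Substituting $\Delta\geq\clK$'' and take $\Delta=\clK$. This is only valid if the lower bound on $\varepsilon$ is a \emph{decreasing} function of $\Delta$, so that the worst case occurs at the smallest admissible $\Delta$. The paper verifies this separately (\cref{prop:epsilonincdecfunc}) by differentiating the expression in $\Delta$; you should add this monotonicity check, since a priori the $0.5\clH\alpha\clL\Delta$ in the numerator grows with $\Delta$ as well. It is a short computation, but without it the reduction to $\Delta=\clK$ is not justified.
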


\add{Before presenting the proof, we note that the  upper bound on $\mathcal{L}$ in \cref{eqn:Llmt} tells us about the condition that the point distribution $X$ needs to satisfy for the statement to hold.
  The lower bound on $\mathcal{K}$ in \cref{eqn:Klmt} tells us how  small the difference between $\hat{V}, \tilde{V}$ should be for the statement to hold.
  And $\varepsilon$ in \cref{eqn:epsval} is what we need to add to $\alpha$ to get $\bar{\alpha}$.
  }

\begin{proof}
  First, we note that $C_{\alpha}(\hV, V) - C_{\alpha}(\tV, V) \leq 0$ since $\hV \in \sol_{\alpha}(V, N)$.
  Hence we get that 
  \begin{equation}\label{eq:pixelcoverstab1eq1}
    \begin{aligned}
      -\alpha \sum\limits_{x \in N}(\hat{w}(x) - \tilde{w}(x)) + (1 - \alpha)\Delta\leq 0\\
	\implies~ \gamma\Delta \leq \sum\limits_{x \in N}\hat{w}(x) - \tilde{w}(x),~\text{ with } \gamma = (1/\alpha)-1.
    \end{aligned}
  \end{equation}
	
  The $\ell_{\infty}$-distance of $x \in N\setminus\hV$ to $\hV$ can be increased at most by $0.5\clH$ as we go from point space to pixel space.
  Also, an increase in the $\ell_{\infty}$-distance of $x$ from $\hV$ implies that its $\ell_{\infty}$-distance from $\tV$ cannot decrease, since $\hV \supseteq \tV$.
  Furthermore, for any $x \in \hV$, we have that $\hat{w}(x)$ is increased by $0.5\clH$.
  Similarly, for any $x \in \tV$, we get that $\tilde{w}(x)$ is increased by $0.5\clH$. %
  By \cref{lem:pointcountalpharelaappendixmod}, we get that
  \[\theta(N\setminus\hV) \leq 2n\gamma.\]

  Hence we get that
  \begin{equation*}
    \begin{aligned}
    &\overline{C}_{\bar{\alpha}}(\hV, V) -\overline{C}_{\bar{\alpha}}(\tV, V)~~~\\
    & \leq C_{\bar{\alpha}}(\hV, V) - C_{\bar{\alpha}}(\tV, V) + 0.5\clH\bar{\alpha}\theta(\hV \setminus \tV) + 0.5\clH\bar{\alpha}\theta(N \setminus \hV)\\
    & \leq C_{\bar{\alpha}}(\hV, V) - C_{\bar{\alpha}}(\tV, V) + 0.5\clH\bar{\alpha}\clL \Delta + \clH\bar{\alpha}cn  + \clH\bar{\alpha}n\gamma\\
    &\leq -\bar{\alpha} \sum\limits_{x \in N}(\hat{w}(x) - \tilde{w}(x)) + (1 - \bar{\alpha})\Delta  + 0.5\clH\bar{\alpha}\clL \Delta + \clH\bar{\alpha}cn + \clH\bar{\alpha}n\gamma,\\
    \end{aligned}
  \end{equation*}
  where the third and fourth terms in the second inequality ($0.5\clH\bar{\alpha}\clL \Delta, \clH\bar{\alpha}cn$) come from \cref{proper:ptdsbtnsos}.
Now, using the result in \cref{eq:pixelcoverstab1eq1}, we get that this expression is
  \begin{equation}\label{eq:pixelcoverstab1eq2}
    \begin{aligned}
    \leq &~C_{\alpha}(\hV, V) - C_{\alpha}(\tV, V)-\varepsilon (\sum\limits_{x\in N}(\hat{w}(x) - \tilde{w}(x)) + \Delta) \\
	 &~ + \alpha(0.5\clH\clL \Delta + \clH cn + \clH n\gamma) 
	 + \varepsilon(0.5\clH\clL \Delta + \clH cn +\clH n\gamma)\\
    \leq &~\alpha(0.5\clH\clL \Delta+ \clH cn+ \clH n\gamma) + \varepsilon(0.5\clH\clL \Delta + \clH cn + \clH n\gamma)	-\varepsilon (\gamma + 1)\Delta
    \end{aligned}        
  \end{equation}  
  If the above expression is $\leq 0$ then $\overline{C}_{\bar{\alpha}}(\hV, V) -\overline{C}_{\bar{\alpha}}(\tV, V) \leq 0$.
  Hence we want that
  \begin{equation}\label{eq:pixelcoverstab1eq3}
    \begin{aligned}
    \alpha(0.5\clH\clL \Delta+ \clH cn+ \clH n\gamma)
	&\leq \varepsilon (\gamma + 1)\Delta - \varepsilon(0.5\clH\clL \Delta + \clH cn + \clH n\gamma)\\
    \dfrac{\alpha(0.5\clH\clL \Delta + \clH cn + \clH n\gamma)}{((\gamma + 1) - 0.5\clH\clL) \Delta -\clH cn-\clH n\gamma}&\leq \varepsilon			
    \end{aligned}
   \end{equation}
   Since $\varepsilon \geq 0$, it must hold that $((\gamma + 1) - 0.5\clH\clL) \Delta - \clH cn - \clH n\gamma) >0$.
   Since $\clH cn \geq 0$ and $\clH n\gamma \geq 0$, we should have that $(\gamma + 1) - 0.5\clH\clL > 0$, which holds because of the assumption in the Lemma in \cref{eqn:Llmt}.
   We should also get that
   
   %% \begin{equation}\label{eq:pixelcoverstab1eq4}
   %%    \begin{aligned}
   %%      \dfrac{2(\gamma + 1)}{\clL}  &> \clH, ~\text{ and }\\		
   %%    \end{aligned}
   %% \end{equation}
   \begin{equation}\label{eq:pixelcoverstab1eq5}
      \begin{aligned}
        & \Delta > \dfrac{\clH (cn + n\gamma)}{(\gamma+1)-0.5\clL \clH}.
      \end{aligned}
   \end{equation}
   But $\clK$ is the minimum value of $\Delta$, and since $\clK$ satisfies the above bound as per the Lemma's assumption in \cref{eqn:Klmt}, $\Delta$ satisfies the bound in \cref{eq:pixelcoverstab1eq5} as well.
	
We show that the lower bound for $\varepsilon$ in \cref{eq:pixelcoverstab1eq3} is a decreasing function of $\Delta$ (\cref{prop:epsilonincdecfunc}).
Hence we get the smallest value for $\varepsilon$ (in \cref{eqn:epsval}) for $\Delta= \clK$.
Finally, noting that $\bar{\alpha} = \alpha + \varepsilon \leq 1$ gives the expression for $\varepsilon$ in \cref{eqn:epsval}.
\end{proof}

\begin{prop}\label{prop:epsilonincdecfunc}
  Given $\clH, \Delta$ as defined in \cref{lem:pixelcoverstab1}, $f(\Delta) = \dfrac{n\gamma\clH\alpha + 0.5\clH\alpha\clL\Delta+ \clH \alpha cn}{((\gamma + 1) - 0.5\clH\clL)\Delta-\clH cn -n\gamma\clH}$ is a decreasing function of $\Delta$. 
\end{prop}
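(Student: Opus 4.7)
The plan is to verify monotonicity directly by the quotient rule, after recognizing that $f$ is a ratio of two affine functions of $\Delta$. First I would group the constants by writing $f(\Delta) = (A + B\Delta)/(C\Delta - E)$, where
\[ A = n\gamma\clH\alpha + \clH\alpha cn,\quad B = 0.5\clH\alpha\clL,\quad C = (\gamma+1) - 0.5\clH\clL,\quad E = \clH cn + n\gamma\clH, \]
all viewed as constants not depending on $\Delta$. With this rewriting, the algebra becomes essentially mechanical.

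Next I would apply the quotient rule and observe the telescoping that occurs because both $N(\Delta) = A + B\Delta$ and $D(\Delta) = C\Delta - E$ are affine. Explicitly,
\[ f'(\Delta) \;=\; \frac{B(C\Delta - E) - C(A + B\Delta)}{(C\Delta - E)^2} \;=\; \frac{-\,(BE + CA)}{(C\Delta - E)^2}. \]
The $BC\Delta$ terms cancel, so the sign of $f'(\Delta)$ is governed entirely by the sign of $-(BE + CA)$.

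Finally, I would check signs using the hypotheses inherited from \cref{lem:pixelcoverstab1}. Since $\alpha, \gamma, c, n, \clH > 0$ we have $A, E > 0$; since $\clL \geq 0$ we have $B \geq 0$; and the bound \eqref{eqn:Llmt} gives exactly $C = (\gamma+1) - 0.5\clH\clL > 0$. Hence $BE + CA \geq CA > 0$, so $f'(\Delta) < 0$ wherever the denominator is nonzero. The only remaining point is that $(C\Delta - E)^2$ should not vanish in the regime of interest, but the bound \eqref{eqn:Klmt} guarantees $C\clK - E > 0$, so $C\Delta - E > 0$ for all $\Delta \geq \clK$. This shows $f$ is strictly decreasing on the admissible range of $\Delta$, as claimed.

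There is no substantive obstacle here; the only thing to be careful about is to assemble the constants correctly so that the cancellation in the numerator of $f'$ is transparent, and to cite the correct hypotheses from \cref{lem:pixelcoverstab1} to rule out $C \leq 0$ and a vanishing denominator.
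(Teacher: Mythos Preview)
Your proof is correct and follows essentially the same route as the paper: both compute $f'(\Delta)$ via the quotient rule, observe that the $\Delta$-terms cancel in the numerator, and conclude negativity from the signs of the remaining constants (your $-(BE+CA)$ simplifies exactly to the paper's $-\clH\alpha n(\gamma+c)(\gamma+1)$). Your grouping of constants makes the cancellation more transparent, and you are slightly more careful than the paper about the domain (citing \eqref{eqn:Klmt} to ensure the denominator is positive); the only point you omit is the degenerate case $\alpha=0$, which the paper handles separately and which gives $f\equiv 0$.
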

\begin{proof}
   We can rewrite $f$ to isolate $\clH$.
   \[f(\Delta) = \dfrac{(n\gamma\alpha + 0.5\alpha\clL\Delta + \alpha cn)\clH}{(\gamma + 1)\Delta - (0.5\clL\Delta+cn +n\gamma)\clH}.\]
   If $\alpha > 0$                
   \begin{equation*}
   \begin{aligned}
   \dfrac{\partial f}{\partial \Delta} &= \dfrac{0.5\clH\alpha\clL (((\gamma + 1) - 0.5\clH\clL)\Delta-\clH cn -n\gamma\clH) - (n\gamma\clH\alpha + 0.5\clH\alpha\clL\Delta + \clH \alpha cn) ((\gamma + 1) - 0.5\clH\clL)}{(((\gamma + 1) - 0.5\clH\clL)\Delta-\clH cn -n\gamma\clH)^2}\\
   &= \dfrac{-0.5\clH\alpha\clL n\gamma\clH -0.5\clH\alpha\clL \clH cn - n\gamma\clH\alpha ((\gamma + 1) - 0.5\clH\clL) - \clH\alpha cn ((\gamma + 1) - 0.5\clH\clL)}{((\gamma + 1)\Delta - \clH(0.5\clL\Delta+ cn +n\gamma))^2}\\
   &= \dfrac{- n\gamma\clH\alpha (\gamma + 1) - \clH\alpha cn (\gamma + 1)}{((\gamma + 1)\Delta - \clH(0.5\clL\Delta+ cn +n\gamma))^2}\\
   & < 0.			  
   \end{aligned}
   \end{equation*}
   And if $\alpha = 0$, then $\dfrac{\partial f}{\partial \Delta} = 0$.
\end{proof}

The next theorem shows that when the same initial box is chosen in a point and a pixel cover for optimization over the same $\pi$-neighborhood, if the parameters are chosen to satisfy \cref{lem:pixelcoverstab1} then every optimal solution in the point cover setting is contained in some optimal solution in the pixel cover.
 
\begin{thm}\label{thm:pointpixelcoverrela}
  Given $\hV \in \sol_{\alpha}(V, N)$ and $\bV \in \psol_{\bar{\alpha}}(V, N)$, if $\clH$ and $\varepsilon$ are chosen as specified in \cref{lem:pixelcoverstab1} then $\hV\cup\bV \in \psol_{\bar{\alpha}}(V, N)$.
\end{thm}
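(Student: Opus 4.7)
The plan is to combine \cref{lem:pixelcoverstab1}, which compares the pixel-cover costs of $\hV$ and $\tV = \hV \cap \bV$, with the union-intersection cost inequality of \cref{prop:unioncosteq} (which, as noted in the paper just after the pixel-cover LP, carries over verbatim to the pixel-cover cost $\pC$).

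First I would handle the trivial case: if $\hV \subseteq \bV$, then $\hV \cup \bV = \bV \in \psol_{\bar{\alpha}}(V, N)$ and there is nothing to prove. So assume $\hV \not\subset \bV$ and set $\tV = \hV \cap \bV$. Since both $\hV$ and $\bV$ contain $V$, we have $V \subseteq \tV$; and since both $\hV, \bV$ lie in $N$, the union $\hV \cup \bV$ is a feasible candidate for $\psol_{\bar{\alpha}}(V, N)$.

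Next, I would invoke \cref{prop:unioncosteq} in its pixel-cover form with $\tV$ playing the role of the base box --- legitimate because $\tV = \hV \cap \bV$ by construction, which is exactly the hypothesis of that proposition. This yields the union-intersection inequality
\[
\pC_{\bar{\alpha}}(\hV \cup \bV, V) + \pC_{\bar{\alpha}}(\tV, V) \;\leq\; \pC_{\bar{\alpha}}(\hV, V) + \pC_{\bar{\alpha}}(\bV, V).
\]
The hypothesis that $\clH$ and $\varepsilon$ satisfy the conditions of \cref{lem:pixelcoverstab1}, together with $\hV \not\subset \bV$, gives $\pC_{\bar{\alpha}}(\hV, V) \leq \pC_{\bar{\alpha}}(\tV, V)$. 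Substituting into the previous display and cancelling $\pC_{\bar{\alpha}}(\tV, V)$ yields $\pC_{\bar{\alpha}}(\hV \cup \bV, V) \leq \pC_{\bar{\alpha}}(\bV, V)$. Since $\bV \in \psol_{\bar{\alpha}}(V, N)$ realizes the minimum cost among feasible expansions of $V$ in $N$, this forces equality, whence $\hV \cup \bV \in \psol_{\bar{\alpha}}(V, N)$.

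The main obstacle I foresee is notational rather than mathematical: verifying that \cref{prop:unioncosteq} may be applied with base $\tV$ rather than the original pivot box $V$. Inspecting that proposition's proof, the only role of the base is to equal the intersection $V^{k} \cap V^{l}$ of the two boxes being combined (so that the expansion-direction sets $P^{k}$ and $P^{l}$ are disjoint), and $\tV$ satisfies this by construction; moreover, the proof decomposes pixel-by-pixel via the $\ell_{\infty}$ distance, so it carries through identically under $\pC$. Once this is in hand, feasibility of $\hV \cup \bV$ inside $N$ and the final cost comparison follow immediately from the displayed chain of inequalities.
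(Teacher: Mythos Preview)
Your proposal is correct and follows essentially the same approach as the paper: both reduce to \cref{lem:pixelcoverstab1} for $\pC_{\bar\alpha}(\hV,V)\le\pC_{\bar\alpha}(\tV,V)$ and to the pixel-cover version of \cref{prop:unioncosteq} applied with base box $\tV=\hV\cap\bV$, then compare against the optimality of $\bV$. The only cosmetic difference is that the paper phrases \cref{prop:unioncosteq} via the cost-change sums $\sum_i c_i\le\sum_i\hat c_i+\sum_i\bar c_i$ and argues by contradiction, whereas you write the equivalent submodular-style inequality $\pC(\hV\cup\bV)+\pC(\tV)\le\pC(\hV)+\pC(\bV)$ and conclude directly.
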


\begin{proof}
  With $\tV = \hV \cap \bV$, let $S(\tV, \hV), S(\tV, \bV)$, and $S(\tV, \hV\cup\bV)$ be the cost changes at each step of the sequence $\hat{c}_i, \bar{c}_i$, and $c_i$, respectively, in the pixel space with parameter $\bar{\alpha}$,
  where $c_i$ represents the sequence for going from $\tV$ to $\hV\cup\bV$.
  Since the parameters  $\clH$ and $\varepsilon$ are chosen to satisfy \cref{lem:pixelcoverstab1}, we get that
  \[\overline{C}_{\bar{\alpha}}(\hV, V) - \overline{C}_{\bar{\alpha}}(\tV, V) \leq 0\]
  since $\hV \in \sol_{\alpha}(V, N)$.
  Hence $\sum\limits_{i \in \clI_{n} } \hat{c}_{i} \leq 0$.
  We also get that $\sum\limits_{i \in \clI_{n} } \bar{c}_{i} \leq 0$ since $\bV \in \psol_{\bar{\alpha}}(V, N)$.
  By \cref{prop:unioncosteq} we get that $\sum\limits_{i \in \clI_{n} } c_{i} \leq \sum\limits_{i \in \clI_{n} } \hat{c}_{i} + \sum\limits_{i \in \clI_{n} } \bar{c}_{i}$.
  If $\sum\limits_{i \in \clI_{n} } \hat{c}_{i} < 0$ then $\sum\limits_{i \in \clI_{n} } c_{i} < \sum\limits_{i \in \clI_{n} } \bar{c}_{i}$.
  This implies $\bV$ is not an optimal solution, contradicting $\bV \in  \psol_{\bar{\alpha}}(V, N)$.
  Hence we must have $\sum\limits_{i \in \clI_{n} } \hat{c}_{i}=0$, giving $\sum\limits_{i \in \clI_{n} } c_{i} = \sum\limits_{i \in \clI_{n} } \bar{c}_{i}$ since $\bV \in \psol_{\bar{\alpha}}(V, N)$.
  But this implies $\bV \cup \hV \in \psol_{\bar{\alpha}}(V, N)$. 
\end{proof}

We simplify the expressions in \cref{lem:pixelcoverstab1} further when $\clH$ is small.        
These simplifications make the presentation of stability results in \cref{sec:costabnoise} easier to comprehend.

\begin{cor}\label{cor:pixelcoverstab1}
  If $\clH$ is small enough so that $0.5\clH\clL \ll 1$,  then \cref{lem:pixelcoverstab1} holds for $\clK \approx \clH \alpha (cn + n\gamma) + \sqrt{\clH}$ and $\varepsilon \approx \min\{ \alpha^2 \dmn \sqrt{\clH} (\gamma + c + \, 0.5 \clL \sqrt{\clH}/\dmn \,), \,1-\alpha\}$.
\end{cor}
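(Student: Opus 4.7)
The plan is to treat \cref{cor:pixelcoverstab1} as a first-order asymptotic expansion in the small parameter $\clH$ of the exact bounds provided by \cref{lem:pixelcoverstab1}, so the proof is essentially careful bookkeeping of leading orders, together with one algebraic cancellation that the hypothesis $0.5\clH\clL\ll 1$ is designed to enable.

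First I would simplify the lower bound on $\clK$ from \cref{eqn:Klmt}. Because $0.5\clH\clL\ll 1$, the denominator $(\gamma+1)-0.5\clL\clH$ is essentially $\gamma+1=1/\alpha$, so the exact lower bound
\[\clK > \dfrac{\clH(c\dmn+\dmn\gamma)}{(\gamma+1)-0.5\clL\clH}\]
reduces to $\clK > \clH\alpha(c\dmn+\dmn\gamma)$ up to a multiplicative correction of size $1+O(\clH\clL)$. Choosing $\clK=\clH\alpha(c\dmn+\dmn\gamma)+\sqrt{\clH}$ then strictly satisfies \cref{eqn:Klmt} for $\clH$ sufficiently small, because the additive $\sqrt{\clH}$ dominates the $O(\clH^2\clL)$ error discarded in the approximation. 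This gives the first claim of the corollary.

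Next I would substitute this choice of $\clK$ into the formula \cref{eqn:epsval} for $\varepsilon$ and simplify numerator and denominator separately. The numerator
\[\dmn\gamma\clH\alpha + 0.5\clH\alpha\clL\clK + \clH\alpha c\dmn
 = \clH\alpha(c\dmn+\dmn\gamma) + 0.5\clH\alpha\clL\bigl(\clH\alpha(c\dmn+\dmn\gamma)+\sqrt{\clH}\bigr)\]
reduces, on discarding the $O(\clH^2)$ subterm, to $\clH\alpha(c\dmn+\dmn\gamma)+0.5\alpha\clL\clH^{3/2}$. For the denominator, the key step is the algebraic cancellation
\[\bigl((\gamma+1)-0.5\clH\clL\bigr)\clK - \clH c\dmn - \dmn\gamma\clH
 = (\gamma+1)\sqrt{\clH} + \clH(c\dmn+\dmn\gamma)\bigl((\gamma+1)\alpha - 1\bigr) - 0.5\clH\clL\,\clK,\]
where the identity $(\gamma+1)\alpha=1$ kills the $O(\clH)$ term exactly, leaving $(\gamma+1)\sqrt{\clH}$ plus an $O(\clH^{3/2})$ residue that is dominated by it under the smallness hypothesis. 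Dividing numerator by denominator and using $\alpha/(\gamma+1)=\alpha^2$ yields
\[\dfrac{\alpha\clH\dmn(c+\gamma)+0.5\alpha\clL\clH^{3/2}}{(\gamma+1)\sqrt{\clH}}
 = \alpha^2\dmn\sqrt{\clH}\bigl(\gamma+c+0.5\clL\sqrt{\clH}/\dmn\bigr),\]
which, after taking the minimum with $1-\alpha$ as in \cref{eqn:epsval}, matches the claimed expression.

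The main obstacle, and really the only nontrivial point, is the leading-order cancellation in the denominator: without the identity $(\gamma+1)\alpha=1$ one would be left with an $O(\clH)$ term there, and the approximation $\varepsilon \approx \alpha^2\dmn\sqrt{\clH}(\cdots)$ would collapse. Everything else --- choosing $\clK$ just above its lower bound with the $\sqrt{\clH}$ buffer and discarding subleading powers of $\clH$ in numerator and denominator --- is routine asymptotic algebra, and the smallness hypothesis $0.5\clH\clL\ll 1$ is exactly what is needed to make all discarded terms genuinely lower order than the ones kept.
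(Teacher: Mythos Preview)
Your proposal is correct and follows essentially the same asymptotic bookkeeping as the paper: choose $\clK$ as the exact lower bound from \cref{eqn:Klmt} plus a $\sqrt{\clH}$ buffer, then substitute into \cref{eqn:epsval} and discard subleading powers of $\clH$. The only cosmetic difference is that the paper carries the \emph{exact} lower bound $\clH(c\dmn+\dmn\gamma)/((\gamma+1)-0.5\clL\clH)$ into the denominator so that the cancellation there is tautological, whereas you substitute the already-approximated $\clK$ and recover the same cancellation via the identity $(\gamma+1)\alpha=1$; both routes are equivalent.
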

\begin{proof}
  Since $0.5\clH\clL \ll 1$, we get that
  \begin{align}\label{eq:corpixelcoverstab1}
    \dfrac{1}{(\gamma + 1) - 0.5\clL\clH} &\approx \alpha.
  \end{align}
  To satisfy \cref{eqn:Klmt} in \cref{lem:pixelcoverstab1}, we can choose
  \begin{align}\label{eq:corpixelcoverstab2}
     \clK &= \dfrac{\clH(cn +n\gamma)}{(\gamma + 1) - 0.5\clL\clH} + \sqrt{\clH}.
  \end{align}
  Hence we get 
  \begin{align}\label{eq:corpixelcoverstab3}
     \clK & \approx \clH\alpha(cn + n\gamma) + \sqrt{\clH}.
  \end{align}

  Using this value of $\clK$, the expression for $\varepsilon$ in \cref{eqn:epsval} becomes
  \begin{align*}
    \dfrac{n\gamma\clH\alpha + 0.5\clH\alpha\clL\clK+\clH \alpha cn}{((\gamma + 1) - 0.5\clH\clL)\clK-\clH cn -n\gamma\clH}&= \dfrac{n\gamma\clH\alpha+\clH \alpha cn + 0.5\clH\alpha\clL\clK}{((\gamma + 1) - 0.5\clH\clL)\sqrt{\clH}}\\
     & \approx \dfrac{n\gamma\clH\alpha+\clH \alpha cn + 0.5\clH\alpha\clL(\clH\alpha(cn + n\gamma) + \sqrt{\clH})}{((\gamma + 1) - 0.5\clH\clL)\sqrt{\clH}}\\
     & = \dfrac{(n\gamma\clH\alpha +\clH \alpha cn)(1 + 0.5\clH\sqrt{\clH}\alpha\clL)+ 0.5\clH\alpha\clL}{((\gamma + 1) - 0.5\clH\clL)\sqrt{\clH}}\\
     & \approx \dfrac{\alpha(n\gamma\clH\alpha +\clH \alpha cn + 0.5\clH\sqrt{\clH}\alpha\clL)}{\sqrt{\clH}} \\
     & = \alpha^2 \dmn \sqrt{\clH} (\gamma + c + \, 0.5 \clL \sqrt{\clH}/\dmn \,).
  \end{align*}
  Hence we can choose $\varepsilon \approx \min\{ \alpha^2 \dmn \sqrt{\clH} (\gamma + c + \, 0.5 \clL \sqrt{\clH}/\dmn \,), \,1-\alpha\}$ to satisfy \cref{lem:pixelcoverstab1}.
\end{proof}
We note that \cref{thm:pointpixelcoverrela} also holds for the simplified choices for $\clK$ and $\varepsilon$ given in \cref{cor:pixelcoverstab1} (since they satisfy \cref{lem:pixelcoverstab1}).

We present the analogous result for containment of optimal solutions in reverse using similar arguments.

\begin{lem}\label{lem:pixelcoverstab2} 
  Let $\hV \in \sol_{\alpha}(V, N)$, $\bV \in \psol_{\bar{\alpha}}(V, N)$, $\tV = \hV \cap \bV$, and let $\bV \not\subset \hV$.
  Also let $0 \leq \bar{\alpha} < 1 $, $0 \leq \alpha \leq 1$, $\alpha = \bar{\alpha} + \varepsilon$ for $\varepsilon \geq 0$ and $0 < \clK \leq \Delta$ where $\Delta = |\bV| - |\tV|$.
  Given \cref{proper:ptdsbtnsos} for the distribution of points in $X$, if
  \begin{align}
  & 0 \leq \clL < \dfrac{2(\bar{\gamma}+1)}{\clH}, \label{eqn:Llmtrev}  \\
  & \clK > \dfrac{\clH (cn + n\bar{\gamma})}{(\bar{\gamma}+1)-0.5\clL\clH},  \hspace*{0.5in}\text{ and } \label{eqn:Klmtrev} \\
  & \varepsilon= \min \left\{ \dfrac{n\bar{\gamma}\clH\bar{\alpha} + 0.5\clH\bar{\alpha}\clL\clK + \clH \bar{\alpha} cn}{((\bar{\gamma} + 1) - 0.5\clH\clL)\clK-\clH cn -n\bar{\gamma}\clH}, 1-\bar{\alpha}\right\}, \label{eqn:epsvalrev}
  \end{align}
  then, $C_{\alpha}(\hV, \delete{V} \add{N}) - C_{\alpha}(\tV, \delete{V} \add{N}) \leq 0$.
\end{lem}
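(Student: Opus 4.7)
Read literally, the conclusion $C_{\alpha}(\hV, V) - C_{\alpha}(\tV, V) \leq 0$ follows immediately from the single hypothesis $\hV \in \sol_{\alpha}(V, N)$ and does not in fact require the quantitative bounds on $\clL$, $\clK$, or $\varepsilon$. My plan is to verify that $\tV$ is a feasible point of the point-cover LP defining $\sol_{\alpha}(V, N)$ and then to invoke optimality of $\hV$ against $\tV$.

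First I would observe that $V \subseteq \tV$. Every feasible solution $W = [W^-_1, W^+_1] \times \dots \times [W^-_n, W^+_n]$ of the point-cover LP in \cref{eq:ConLP} with input box $V = [l_1,u_1] \times \dots \times [l_n,u_n]$ satisfies $W^-_i \leq l_i$ and $W^+_i \geq u_i$ by the constraints in \cref{eq:ConLPconstraints}, and analogously every feasible $W$ of the pixel-cover LP satisfies \cref{eq:LPconstraint1}. Since $\hV \in \sol_{\alpha}(V, N)$ and $\bV \in \psol_{\bar{\alpha}}(V, N)$, both boxes contain $V$, whence $V \subseteq \hV \cap \bV = \tV$.

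Second I would note that the intersection of two axis-aligned boxes is itself an axis-aligned box, computed coordinatewise as $\tV = \prod_{i \in \clI} [\max\{\hat{l}_i, \bar{l}_i\}, \min\{\hat{u}_i, \bar{u}_i\}]$, with every interval nonempty by the containment above. Hence $\tV$ satisfies the feasibility constraints \cref{eq:ConLPconstraints}, i.e., $\tV$ is a feasible point of the very LP that $\hV$ solves to optimality. Third, applying the definition of optimality of $\hV$ to the particular feasible box $\tV$ gives $C_{\alpha}(\hV, V) \leq C_{\alpha}(\tV, V)$, which rearranges to the stated inequality.

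The only obstacle is to recognize that no quantitative estimate is needed for the literal statement; the hypotheses \cref{eqn:Llmtrev,eqn:Klmtrev,eqn:epsvalrev} are inactive in this derivation. I suspect the intended companion to \cref{lem:pixelcoverstab1} is instead the inequality $\overline{C}_{\bar{\alpha}}(\bV, V) - \overline{C}_{\bar{\alpha}}(\tV, V) \leq 0$, which would genuinely require those bounds; that version would be proved by mirroring \cref{lem:pixelcoverstab1} with $\hV$ and $\bV$ swapped in their respective roles: use \cref{lem:pixelcountalpharelaappendixmod} to bound $\theta(N \setminus \bV)$, use \cref{proper:ptdsbtnsos} to bound $\theta(\bV \setminus \tV)$, translate between pixel-cover and point-cover costs by $0.5\clH$-perturbations in weights, and close the estimate using the stated choices of $\clL$, $\clK$, and $\varepsilon$.
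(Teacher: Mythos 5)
Your literal proof is correct: since $\hV \in \sol_{\alpha}(V,N)$ and $\tV = \hV \cap \bV$ is a box with $V \subseteq \tV \subseteq N$ satisfying the constraints \cref{eq:ConLPconstraints}, optimality of $\hV$ gives $C_{\alpha}(\hV, V) \leq C_{\alpha}(\tV, V)$ with no use of \cref{eqn:Llmtrev,eqn:Klmtrev,eqn:epsvalrev}. You are also right that this means the lemma as printed is degenerate; the paper's own ``proof'' is only the sentence that it follows by arguments similar to \cref{lem:pixelcoverstab1}, which does not resolve the discrepancy, but the statement is evidently a transcription slip in mirroring \cref{lem:pixelcoverstab1}.

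Where your proposal goes astray is in identifying the \emph{intended} conclusion. You suggest it should be $\overline{C}_{\bar{\alpha}}(\bV, V) - \overline{C}_{\bar{\alpha}}(\tV, V) \leq 0$, but that inequality is trivial for exactly the same reason as the printed one: $\bV \in \psol_{\bar{\alpha}}(V,N)$ and $\tV$ is feasible for the pixel-cover LP, so pixel-optimality of $\bV$ at parameter $\bar{\alpha}$ already yields it. The genuinely nontrivial mirror of \cref{lem:pixelcoverstab1} --- and the inequality actually consumed in the proof of \cref{thm:pixelpointcoverrela}, where one needs the cost change along $S(\tV,\bV)$ to be nonpositive \emph{in the point-cover cost at parameter $\alpha = \bar{\alpha} + \varepsilon$} --- is
\[
C_{\alpha}(\bV, V) - C_{\alpha}(\tV, V) \leq 0,
\]
i.e., the pixel-optimal box $\bV$ must be compared against $\tV$ in the \emph{point}-cover objective, where its optimality is not given. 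That cross-setting comparison is what requires \cref{proper:ptdsbtnsos}, the bound on $\theta(N\setminus\bV)$ from \cref{lem:pixelcountalpharelaappendixmod}, the $0.5\clH$ control on how weights change between the two settings, and the stated choices of $\clL$, $\clK$, and $\varepsilon$. Your concluding sketch of how to run that estimate (mirroring \cref{lem:pixelcoverstab1} with the roles of the two covers swapped) is the right strategy for this corrected statement and matches what the paper intends; you only need to aim it at the right inequality.
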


\begin{proof}
  Follows from arguments similar to those used in the proof of \cref{lem:pixelcoverstab1}.
\end{proof}

\begin{thm}\label{thm:pixelpointcoverrela}
  Given $\hV \in \sol_{\alpha}(V, N)$ and $\bV \in \psol_{\bar{\alpha}}(V, N)$, if $\clH$ and $\varepsilon$ are chosen as specified in \cref{lem:pixelcoverstab2} then $\bV \cup \hV \in \sol_{\alpha}(V, N)$.
\end{thm}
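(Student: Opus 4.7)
The plan is to mirror the structure of the proof of \cref{thm:pointpixelcoverrela}, swapping the roles of the point and pixel cost functionals. Set $\tV = \hV \cap \bV$ and consider the three sequences $S(\tV, \hV)$, $S(\tV, \bV)$, and $S(\tV, \hV \cup \bV)$, but this time track the per-step cost changes $\hat{c}_i$, $\bar{c}_i$, and $c_i$ in the \emph{point} cover setting with parameter $\alpha$ (rather than the pixel setting with $\bar{\alpha}$ as was done in \cref{thm:pointpixelcoverrela}).

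First, since $\hV \in \sol_\alpha(V, N)$ by hypothesis, the total cost change along $S(\tV, \hV)$ satisfies $\sum_{i \in \clI} \hat{c}_i \leq 0$ directly from the definition of optimality. Next, the key analytical input is \cref{lem:pixelcoverstab2}: given that $\clH$ and $\varepsilon$ are chosen as in that lemma, it guarantees $C_\alpha(\bV, V) - C_\alpha(\tV, V) \leq 0$, which is exactly the statement that $\sum_{i \in \clI} \bar{c}_i \leq 0$ in the point setting with parameter $\alpha$. Applying \cref{prop:unioncosteq} to the sequence $S(\tV, \hV \cup \bV)$ (valid since $\tV = \hV \cap \bV$ and this proposition is stated for the point cover objective), we obtain
\begin{equation*}
    \sum_{i \in \clI} c_i \;\leq\; \sum_{i \in \clI} \hat{c}_i \,+\, \sum_{i \in \clI} \bar{c}_i.
\end{equation*}

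The concluding argument is the standard optimality/equality chase. If $\sum_{i \in \clI} \bar{c}_i < 0$ strictly, then $\sum_{i \in \clI} c_i < \sum_{i \in \clI} \hat{c}_i$, which would contradict the optimality of $\hV$ in $\sol_\alpha(V, N)$. Hence we must have $\sum_{i \in \clI} \bar{c}_i = 0$, and the inequality above then forces $\sum_{i \in \clI} c_i \leq \sum_{i \in \clI} \hat{c}_i$; combined with the optimality of $\hV$ this gives equality, so $\bV \cup \hV$ attains the same cost as $\hV$ and therefore lies in $\sol_\alpha(V, N)$.

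I do not anticipate a real obstacle in this proof: all the genuine analytical work (comparing point-cover and pixel-cover costs under the parameter shift $\bar{\alpha} = \alpha + \varepsilon$, using \cref{proper:ptdsbtnsos} together with \cref{lem:pointcountalpharelaappendixmod}) has already been carried out in \cref{lem:pixelcoverstab2}, which plays exactly the role here that \cref{lem:pixelcoverstab1} played in \cref{thm:pointpixelcoverrela}. The only mild subtlety worth double-checking is that \cref{prop:unioncosteq} applies symmetrically in both the point and pixel settings; this is immediate because its proof only uses the coordinatewise structure of $\ell_\infty$ distances and the fact that $V = V^k \cap V^l$ forces $P^k \cap P^l = \emptyset$, neither of which depends on whether weights come from individual points $x$ or pixel centroids $m_\sigma$.
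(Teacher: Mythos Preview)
Your proposal is correct and is exactly the argument the paper has in mind: the paper's own proof simply reads ``follows from a similar argument to that used in proving \cref{thm:pointpixelcoverrela},'' and you have spelled out that mirrored argument, swapping to the point-cover cost $C_\alpha$ and invoking \cref{lem:pixelcoverstab2} in place of \cref{lem:pixelcoverstab1}. One small remark: in your closing parenthetical you wrote the shift as $\bar{\alpha} = \alpha + \varepsilon$, but in \cref{lem:pixelcoverstab2} it is $\alpha = \bar{\alpha} + \varepsilon$; this does not affect your argument, which only uses the lemma's conclusion that $C_\alpha(\bV,V) - C_\alpha(\tV,V) \le 0$.
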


\begin{proof}
  Follows from a similar argument to that used in proving \cref{thm:pointpixelcoverrela}.
\end{proof}

\begin{cor}\label{cor:pixelcoverstab2}
  If $\clH$ is small enough so that $0.5\clH\clL \ll 1$,  then \cref{lem:pixelcoverstab2} holds for $\clK \approx \clH \bar{\alpha} (cn + n\bar{\gamma}) + \sqrt{\clH}$ and $\varepsilon \approx \min\{ \bar{\alpha}^2n\sqrt{\clH}(\bar{\gamma} + c + \,0.5\clL\sqrt{\clH}/n\,), \,1-\bar{\alpha}\}$.
\end{cor}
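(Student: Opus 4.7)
The plan is to mirror the argument used for \cref{cor:pixelcoverstab1} almost verbatim, because \cref{lem:pixelcoverstab2} is structurally identical to \cref{lem:pixelcoverstab1} with the roles of $\alpha, \gamma$ replaced by $\bar{\alpha}, \bar{\gamma}$, and with the new $\Delta$ measured against $\bV$ rather than $\hV$. So I would simply verify that the proposed choices of $\clK$ and $\varepsilon$ satisfy the bounds in \cref{eqn:Klmtrev,eqn:epsvalrev} up to the stated $\approx$ approximations under the assumption $0.5\clH\clL \ll 1$.

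First I would exploit $0.5\clH\clL \ll 1$ together with the identity $\bar{\alpha} = 1/(\bar{\gamma}+1)$ to obtain
\[
\frac{1}{(\bar{\gamma}+1)-0.5\clL\clH} \approx \bar{\alpha}.
\]
Using this, the natural candidate
\[
\clK \;=\; \frac{\clH(cn + n\bar{\gamma})}{(\bar{\gamma}+1) - 0.5\clL\clH} + \sqrt{\clH} \;\approx\; \clH\,\bar{\alpha}(cn + n\bar{\gamma}) + \sqrt{\clH}
\]
evidently satisfies \cref{eqn:Klmtrev} with room to spare (the extra $\sqrt{\clH}$ summand is what creates slack in the denominator used later). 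The constraint \cref{eqn:Llmtrev} follows immediately from $0.5\clH\clL \ll 1$ since $\bar{\gamma} + 1 \ge 1$.

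Next I would substitute this $\clK$ into the upper bound for $\varepsilon$ in \cref{eqn:epsvalrev}. The denominator $((\bar{\gamma}+1) - 0.5\clH\clL)\clK - \clH cn - n\bar{\gamma}\clH$ collapses to $((\bar{\gamma}+1) - 0.5\clH\clL)\sqrt{\clH}$ by design, and the numerator becomes $n\bar{\gamma}\clH\bar{\alpha} + 0.5\clH\bar{\alpha}\clL\clK + \clH\bar{\alpha} cn$. Expanding $\clK$, discarding the higher-order $\clH\sqrt{\clH}$ cross terms (justified by $0.5\clH\clL \ll 1$), and applying the approximation from the first step once more, the ratio simplifies to
\[
\bar{\alpha}^2 n \sqrt{\clH}\bigl(\bar{\gamma} + c + 0.5\clL\sqrt{\clH}/n\bigr),
\]
which matches the claimed expression. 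Taking the minimum with $1-\bar{\alpha}$ to enforce $\alpha = \bar{\alpha} + \varepsilon \leq 1$ completes the verification.

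The only step requiring any care is the algebraic simplification of the $\varepsilon$-bound, and in particular deciding which $O(\clH\sqrt{\clH})$ terms to drop; however, since the statement is already phrased with an $\approx$, this is routine. No new conceptual obstacle arises — the corollary is a direct transcription of \cref{cor:pixelcoverstab1} under the $\alpha \leftrightarrow \bar{\alpha}$, $\gamma \leftrightarrow \bar{\gamma}$ symmetry between \cref{lem:pixelcoverstab1,lem:pixelcoverstab2}, so a one-line appeal to that proof would in principle suffice.
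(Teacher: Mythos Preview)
Your proposal is correct and matches the paper's approach: the paper gives no explicit proof of \cref{cor:pixelcoverstab2}, relying implicitly on the symmetry with \cref{cor:pixelcoverstab1} under the substitution $\alpha \leftrightarrow \bar{\alpha}$, $\gamma \leftrightarrow \bar{\gamma}$, which is exactly what you carry out in detail.
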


\subsection{Box Filtration}\label{ssec:filtration}

We describe the box filtration algorithm in a unified manner for both the point cover and pixel cover settings.
The only minor modification in the pixel cover setting is that we can apply any of the rounding functions specified in \cref{def:rdgs} to the optimal box without loss of optimality.
%Given $\alpha \in [0, 1], \pi \in \bbR_{+} $, $V = [l_1, u_1] \times \dots \times [l_n, u_n] \in \clU(0)$.
Let $m$ be the minimum multiple of $\pi$ to cover $X$ with $B(V, m\pi)$ for any $V \in \clU(0)$, the initial cover.
Let $\clS(V) = \{V[0], V[\pi], \dots, V[m\pi]\}$ be a sequence of expansions of $V$, where $V[0] = V$ and $V[j\pi] \in \sol(V[(j-1)\pi], N[j])$ for $N[j]=B(V,j\pi)$ for $j=1,\dots,m$.
Then the cover of $X$ at the $j$th expansion is $\clU(j\pi) = \{ V[j\pi] \,|\,V \in \clU(0)\}$.
Finally, let $\srS(\clU) = \{\clU(0), \dots,$ $ \clU(m\pi)\}$ be a sequence of covers of $X$ and $K(\clU) =\{\Nrv(\clU(0)), \dots, \Nrv(\clU(m\pi))\}$ be the corresponding sequence of simplicial complexes specified as the nerves of the covers.
\add{The filtration value of a simplex is $j\pi$, when the boxes involved in the intersection defining the simplex first intersect as the neighborhood is grown by $j\pi$ starting from their respective initial boxes.
  While $j$ can be any nonnegative real value in theory, we assume it takes nonnegative integer values with the ease of computation in mind.
}%

Using boxes as cover elements provides the following result that has implications on the efficient computation of nerves.

\begin{lem} \label{lem:boxcechVR}
  Given $V_p$ is subset of $\bbR^{n}$ and $\{V_p\}_{p \in \clP}$ be a finite collection of boxes such that $V_i \cap V_j \neq \emptyset$ for every $i, j \in \clP$ with $i \neq j$.
  Then $\cap_{q \in \clQ} V_q \neq \emptyset$ for all $\clQ \subseteq \clP$ with $|\clQ| \geq 2$.
\end{lem}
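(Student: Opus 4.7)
The plan is to reduce the multi-dimensional statement to the trivial Helly-type property for intervals on the real line by exploiting the product structure of axis-aligned boxes. First I would write each box explicitly as $V_p = [l_1^p, u_1^p] \times \cdots \times [l_n^p, u_n^p]$ and note that for any index subset $\mathcal{Q} \subseteq \mathcal{P}$,
\[
\bigcap_{q \in \mathcal{Q}} V_q \;=\; \prod_{k=1}^{n} \bigcap_{q \in \mathcal{Q}} [l_k^q, u_k^q] \;=\; \prod_{k=1}^{n} \bigl[\,\max_{q \in \mathcal{Q}} l_k^q,\; \min_{q \in \mathcal{Q}} u_k^q\,\bigr].
\]
Hence it suffices to show that, for each coordinate $k \in \{1, \ldots, n\}$, one has $\max_{q \in \mathcal{Q}} l_k^q \leq \min_{q \in \mathcal{Q}} u_k^q$.

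Second, I would translate the pairwise hypothesis on boxes into a pairwise hypothesis on intervals. Observe that $V_i \cap V_j \neq \emptyset$ forces the one-dimensional projections to intersect in every coordinate, i.e., $[l_k^i, u_k^i] \cap [l_k^j, u_k^j] \neq \emptyset$ for every $k$, which in turn is equivalent to the pair of inequalities $l_k^i \leq u_k^j$ and $l_k^j \leq u_k^i$.

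Third, fix an arbitrary coordinate $k$ and an arbitrary pair $q_1, q_2 \in \mathcal{Q}$ (the case $q_1 = q_2$ is immediate from $l_k^{q_1} \leq u_k^{q_1}$). The inequalities from the preceding paragraph give $l_k^{q_1} \leq u_k^{q_2}$, and since $q_1, q_2$ were arbitrary in the finite set $\mathcal{Q}$, taking the maximum over $q_1$ and the minimum over $q_2$ yields $\max_{q_1 \in \mathcal{Q}} l_k^{q_1} \leq \min_{q_2 \in \mathcal{Q}} u_k^{q_2}$. Running this over all $k$ shows each factor interval is non-empty, so $\bigcap_{q \in \mathcal{Q}} V_q \neq \emptyset$.

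I do not anticipate any real obstacle here: the statement is a coordinate-wise packaging of the elementary one-dimensional fact that a finite family of intervals with pairwise non-empty intersections has a common point, and the only care needed is to keep the product decomposition and the $\max$/$\min$ bookkeeping straight. Note in particular that convexity in the usual Helly sense is not being invoked in its full strength; the product structure of boxes lets us bypass the dimension-dependent Helly number entirely, which is precisely why boxes give the clean $\check{\text{C}}$ech/Vietoris--Rips equivalence alluded to earlier.
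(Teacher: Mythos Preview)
Your proof is correct and follows essentially the same route as the paper: reduce to the one-dimensional Helly property for closed intervals via the product structure of boxes. The only difference is that you spell out the $\max/\min$ argument explicitly, whereas the paper simply invokes ``properties of intersection of closed intervals in $\mathbb{R}$'' at that step.
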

\begin{proof}
  Since every pair of the boxes intersect, we should have that the intervals defining the boxes in each dimension, also intersect in pairs.
  Hence we have that
  \[ [l^p_i,u^p_i] \cap [l^{p'}_i,u^{p'}_i] \neq \emptyset~~\forall p, p' \in \clP.\]
  Hence, by properties of intersection of closed intervals in $\R$, it must be the case that
  \[
     \bigcap\limits_{ p\in \clP}[l^p_i,u^p_i]  \neq \emptyset
  ~~\forall i \in [1, n].\]
  This implies that $\cap_{q \in \clQ} V_q \neq \emptyset$. 
\end{proof}

Hence we get all higher order intersections of the boxes as soon as every pair of them intersect.
In this sense, the box filtration defined using pairwise intersections of the boxes is identical to the one defined using intersections of all, i.e., higher, orders. 
Recall that in the default case using Euclidean balls, we get only the inclusion of the VR complex as a subcomplex of the \v{C}ech complex at a larger radius \cite{EdHa2009}.

We specify two algorithms to build this sequence of simplicial complexes.
Our goal is to find the largest optimal solution at each step, which is what we do in the {\bfseries largest optimal expansion} algorithm.
At the same time, we may want to minimize the computational cost for finding the largest optimal box at each step.
With this goal in mind, we limit the search for the largest box to $k$ steps in the {\bfseries $\boldsymbol{k}$-optimal expansion} algorithm.

\paragraph{Largest optimal expansion algorithm:}
We find the largest optimal box at each step $j$ using a two-step approach.
In the first step, we find $V^* \in \sol(V[(j-1)\pi], N[j])$ by solving the linear program (\cref{eq:ConLP}) for input box $V[(j-1)\pi]$ and set $V[j\pi]=V^*$.
In the second step, we find a new optimal solution $V' \in \sol(V[(j-1)\pi], N[j])$ such that $V'$ strictly contains the current $V[j\pi]$ by solving the linear program with the additional constraint
\begin{equation} \label{eq:LPconstraint3}
  \sum\limits_{i \in \mathcal{I}}(u'_i - l'_i) \geq \sum\limits_{i \in \mathcal{I}}(u_i^{*} - l_i^{*}) + P
\end{equation}
where $P$ is the largest value such that $C(V^{*}, N) = C(V', N)$.
We identify $P$ using a binary search.
We specify the range of values $P \in [l_P, u_P]$ and start with $l_P=0$ in the point cover setting and $l_P=\clH$ in the pixel cover setting, and $u_P = m \dmn \pi$ (unless we have a smaller estimate).
At each step of the binary search, we solve the linear program with additional constraint in \cref{eq:LPconstraint3} with  $P= (l_P+u_P)/2$.
If $C(V^{*}, N) = C(V', N)$ then we set $V[j\pi] = V'$ and update $l_P = P$.
Else, we update $u_P = P$.
We repeat the binary search  while $|u_P - l_P| > \varepsilon$ for $\varepsilon=0.1$ or another suitably small tolerance.
This algorithm terminates in at most $\log(m \dmn \pi)$ steps.
%before it terminate, since solution to linear optimization problem is not unique.    

\paragraph{$\boldsymbol{k}$-optimal expansion algorithm:}
This algorithm is the same as largest optimal expansion solution except that we stop the second step of binary search after $k$ iterations.
Note that while we are not guaranteed to find the optimal $P$, especially when $k$ is chosen small, we always find a valid $V[j\pi]$ using this algorithm as well.

\begin{lem}\label{lem:filtration}
  $K(\clU)$ is a filtration.
\end{lem}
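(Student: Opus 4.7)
The plan is to verify the defining property of a filtration, namely that $\Nrv(\clU(j\pi))$ is a subcomplex of $\Nrv(\clU((j+1)\pi))$ for every $j = 0, 1, \dots, m-1$. I would begin by fixing the natural identification of vertices across scales: each initial box $V \in \clU(0)$ supplies a canonical vertex $V[j\pi]$ at every scale $j$ through its expansion sequence $\clS(V)$, so the vertex sets of the nerves at consecutive scales can be identified one-to-one. Under this identification, it suffices to show that every simplex at scale $j$ is also a simplex at scale $j+1$.

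The key observation, and essentially the only thing to check, is the pointwise containment $V[j\pi] \subseteq V[(j+1)\pi]$ for every $V \in \clU(0)$ and every $j$. This is immediate from the construction in \cref{ssec:filtration}: by definition $V[(j+1)\pi] \in \sol(V[j\pi], N[j+1])$, so $V[(j+1)\pi]$ is feasible for the linear program with input box $V[j\pi]$, and the feasibility constraints in \cref{eq:ConLPconstraints} read $\tl_i \leq l_i$ and $\tu_i \geq u_i$ for every $i \in \clI$, which is precisely the coordinatewise statement that $V[(j+1)\pi] \supseteq V[j\pi]$. The same argument works verbatim in the pixel cover setting using the LP in \cref{eq:LPobj}--\cref{eq:LPconstraint2}; the rounding applied by the algorithm does not disturb containment because, by \cref{thm:lowuplimint}, if $V^*$ is optimal then so is any of its roundings, and we can always take the union with $V[j\pi]$, which is again optimal by the pixel-space analogue of \cref{thm:unionintersectionpiforx}.

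Once containment at the vertex level is established, the simplex inclusion is a one-line consequence. For any simplex $\{V_{i_0}[j\pi], \dots, V_{i_k}[j\pi]\}$ of $\Nrv(\clU(j\pi))$, nonemptiness of
\[
\bigcap_{\ell=0}^{k} V_{i_\ell}[j\pi] ~\subseteq~ \bigcap_{\ell=0}^{k} V_{i_\ell}[(j+1)\pi]
\]
carries over to the right-hand side, so the corresponding simplex appears in $\Nrv(\clU((j+1)\pi))$ as well. Iterating over $j$ yields $\Nrv(\clU(0)) \subseteq \Nrv(\clU(\pi)) \subseteq \cdots \subseteq \Nrv(\clU(m\pi))$, which is exactly the filtration property.

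There is essentially no hard step here; the only point that needs care is checking that the chosen box $V[(j+1)\pi]$ genuinely contains $V[j\pi]$ under both the largest-optimal and the $k$-optimal variants of the expansion algorithm, but this is guaranteed because both variants output elements of $\sol(V[j\pi], N[j+1])$, whose feasibility already encodes that containment.
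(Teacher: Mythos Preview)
Your proposal is correct and follows essentially the same route as the paper: both argue that each expansion sequence $\clS(V)$ is nondecreasing (you justify this via the LP feasibility constraints, the paper just asserts it for both expansion algorithms), and then conclude that nonempty intersections of boxes persist from scale $j$ to scale $j+1$, giving the nerve inclusion. The only difference is that you spell out the LP-constraint reason for $V[(j+1)\pi]\supseteq V[j\pi]$ and add a remark about rounding in the pixel setting, both of which are harmless elaborations on the paper's shorter proof.
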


\begin{proof}
  Suppose the $p$-simplex $\sigma \in \Nrv(\clU(j\pi))$ be determined by $\cap_{i=0}^p V_i[j\pi] \neq \emptyset$ where $V_i[j\pi] \in \clU(j\pi)$ for corresponding initial boxes $V_i \in \clU(0)$ for $i=0,\dots,p$.
  The sequences of expansions $\clS(V_i)$ of each box $V_i$ are all non-decreasing as determined by either the largest optimal or the $k$-optimal expansion algorithms, i.e., $V_i[(j+1)\pi] \supseteq V_i[j\pi]$ for each $V_i \in \clU(0)$ and for all $j$.
  Hence $\cap_{i=0}^p V_i[j\pi] \neq \emptyset$ implies also that $\cap_{i=0}^p V_i[(j+1)\pi] \neq \emptyset$, showing that $\sigma \in \Nrv(\clU((j+1)\pi))$.
  Hence $K(U)$ is a filtration.
\end{proof}

\paragraph{Correctness of expansion algorithms:}
By \cref{lem:pichangeforx}, if $V[(j+1)\pi] \in \sol(V[j], N[(j+1)\pi])$ then $V[(j+1)\pi] \in \sol(V, N[(j+1)\pi])$ where $V \in \clU(0)$ is the corresponding initial box.
Further, largest optimal and $k$-optimal expansion algorithms return optimal solutions in $\sol(V[j\pi], N[(j+1)\pi]) \forall j$.
Hence we are guaranteed to obtain a global optimal box for the input box $V$ using either approach, even if the $k$-optimal expansion algorithm may not identify a largest optimal box.
We also know from \cref{lem:filtration} that $K(\clU)$ is a filtration.

\paragraph{Complexity of expansion algorithms}
The total number of variables in the linear optimization problem is $O(q)$ where $q= n \times |X|$ with $|X|$ being the usual cardinality of point cloud $X$ in the case of point cover, and the number of filled pixels in $X$ in the case of pixel cover.
Further, under \cref{proper:ptdsbtnsos} for the distribution of points in $X$ and based on \cref{lem:pointcountalpharelaappendixmod}, we get that $q = O(2n\gamma \times n) = O(\gamma n^2)$.
With the complexity of solving the linear optimization problem given as $O(\lpc(q))$, we get the complexity of the largest optimal expansion algorithm for one expansion step as $O(\log(mn\pi) \lpc(q))$ since the largest optimal solution will be found in at most $O(\log(mn\pi))$ steps.
Hence the overall complexity of the largest optimal expansion algorithm is $O(m|\clU(0)|\log(mn\pi)\lpc(q))$, where $m$ is length of the expansion sequence $S(V)~\forall V \in \clU(0)$.
Similarly, the complexity of the $k$-optimal expansion algorithm is $O(m|\clU(0)|k\lpc(q))$.
Note that $|\clU(0)| \leq |X|$, with the highest cardinality realized when each point in $X$ is assigned a unique box in the initial cover $\clU(0)$.
The fastest algorithm for linear optimization gives $O(\lpc(q)) = O(q^{\omega} \log^2(q) \log(q/\delta))$ where $\omega \approx 2.38$ specifies the complexity of matrix multiplication and $0 < \delta \leq 1$ is an accuracy term \cite{Br2020}.
Furthermore, given \cref{lem:boxcechVR}, we get the box filtration with higher order intersection at the complexity of building the box filtration with pairwise intersection.
Furthermore, the direct dependence of time complexity on $|X|$ is linear, and hence the box filtration algorithms will scale better than classical VR approaches when the dimension is fixed but the data set size increases significantly.

 \section{Stability}\label{sec:costabnoise}

The notion of stability implies that ``small'' changes in the input data set $X$ should lead to only ``small'' changes in its filtration.
We first show such a stability result for point covers---if the input data set $X$ is changed by a small amount to $Y$, then the point cover filtrations of $X$ and \emph{a slightly enlarged} point cover filtration of $Y$ are close.
We need the slight enlargement since the growth of boxes in any box filtration is always guided by the points in the PCD.
Since $Y$ is obtained as a perturbation including the addition/deletion of points of $X$, the points that guide the growth of specific boxes in the box filtration of $X$ may not be available to act in the same manner in the box filtration of $Y$, even if we were to enlarge the neighborhoods for the expansion steps.
But we need only a constant enlargement by a factor determined by the Gromov-Hausdorff distance between $X$ and $Y$.
We then show a similar result for pixel covers using this result for point covers by showing that the pixel cover filtration of $\bar{X}$, the pixel discretization of $X$, is close to its point cover filtration and a similar result holds for $\bar{Y}$ and $Y$.

We assume that the input data set $X$ and its changed version $Y$ are finite.
We use certain standard definitions and results used in the context of persistence stability for geometric complexes \cite{ChdeSOu2014}.

\subsection{Stability of Point Cover Filtration}\label{ssec:costabnoise}

 \begin{defn}
   A multivalued map $\clC: X \rightrightarrows Y$ from set $X$ to set Y is a subset of $X \times Y$ that projects surjectively onto $X$ using $\pi_X: X \times Y \to X$.
   The image $\clC(\sigma)$ of a subset $\sigma$ of $X$ is the projection onto $Y$. 
 \end{defn}
 \noindent We denote by $\clC^T$ the transpose of $\clC$.
 Note that $\clC^T$ is subset of $Y \times X$, but not always a multivalued map. 

 \begin{defn}
   A multivalued map $\clC: X \rightrightarrows Y$ is a correspondence if the canonical projection $\clC \to Y$ is surjective, or equivalently, if $\clC^T$ is also a multivalued map. 
\end{defn}

 \begin{defn}[$\epsilon$-net]
   Let $(X, d_X)$ be a metric space and $\epsilon > 0$. 
   A set $S \subset X$ is called an {\it $\epsilon$-net} of $X$ if for every $x \in X$ there exists an $s \in S$ such that $d_X(x, s) \leq \epsilon$. 
 \end{defn}

 When $(X, d_{X})$ and $(Y, d_{Y})$ are metric spaces, the \emph{distortion} of a correspondence $\clC: X \rightrightarrows Y$ is defined as $\dis(\clC) = \sup\{|d_{X}(x, x') - d_{Y}(y, y')| : (x, y), (x', y') \in \clC\}$.
 Then $\dGH(X, Y) = \frac{1}{2}\inf\{\dis(\clC): \clC~~\text{is a correspondence}~~X \rightrightarrows Y\}$ is the Gromov-Hausdorff distance between $(X, d_{X})$ and $(Y, d_{Y})$.

 We present a definition of $\epsilon$-simplicial maps specific to our context of box filtrations.

 \begin{defn}[$\epsilon$-simplicial map]
   Let $\bbS, \bbT$ be two box filtered simplicial complexes with vertex sets $X$ and $Y$, respectively.
   A multivalued map $C: X \rightrightarrows Y$ is $\epsilon$-simplicial from $\bbS$ to $\bbT$ if for any $\pi \in \R_{\geq 0}, 0 \leq \alpha \leq 1$ and any simplex $\sigma \in \bbS_{\pi, \alpha}$, every finite subset of $\clC(\sigma)$ is a simplex of $\bbT_{\hpi, \halpha}$ for $\hpi \leq \pi + \delta$ and $\halpha \leq \alpha + \delta'$ and $\epsilon = \max\{\delta,\delta'\}$.
 \end{defn}
 
 \begin{prop}\label{prop:stability1}
   Let $2\dGH(X, Y) < \epsilon$ and let the correspondence $\clC : X \rightrightarrows Y$ be such that $\dis(\clC) = 2\dGH(X, Y)$.
   Then points in the $\epsilon$-neighborhood of $X$ can be mapped to a single point in $Y$ by $\clC$.
 \end{prop}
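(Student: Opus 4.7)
My plan is first to clarify the statement, which I read as: for any finite cluster $U\subseteq X$ of diameter less than $\epsilon$ (a set of points all lying in a common $\epsilon$-neighborhood inside $X$), one can augment or reselect $\clC$ so that every $x\in U$ is paired with a single common $y\in Y$, with only a controlled loss in distortion. This collapsing property is what later stability arguments will need in order to define a vertex map between nerves of box covers that is ``constant on $\epsilon$-clusters''.

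First I would pick a representative $x_0\in U$ and select some $y_0\in\clC(x_0)$, which exists since $\clC$ projects surjectively onto $X$. Then I would form
\[
\clC' \;:=\; \clC \,\cup\, \{(x,y_0) : x\in U\},
\]
which is still a correspondence because surjectivity onto both factors is automatic once $\clC$ has it. The whole point of this construction is that $\clC'$ realizes the desired collapse: every $x\in U$ is paired with the single point $y_0\in Y$.

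The key step is to control the distortion of the augmented correspondence. For any new pair $(x,y_0)$ with $x\in U$ and any existing pair $(x',y')\in\clC$, the triangle inequality combined with $\dis(\clC) = 2\dGH(X,Y)<\epsilon$ and $d_X(x,x_0)<\epsilon$ gives
\begin{align*}
|d_X(x,x') - d_Y(y_0,y')|
&\le d_X(x,x_0) + |d_X(x_0,x') - d_Y(y_0,y')| \\
&< \epsilon + \epsilon = 2\epsilon.
\end{align*}
For two new pairs $(x,y_0),(\tx,y_0)$ the bound reduces to $d_X(x,\tx)<2\epsilon = |d_X(x,\tx) - d_Y(y_0,y_0)|+d_X(x,\tx)$, which is trivially consistent. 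Hence $\dis(\clC') < 2\epsilon$ and the collapse is achieved.

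The chief obstacle is the terse phrasing of the proposition: one must first pin down the intended reading (``points in a single $\epsilon$-neighborhood inside $X$'') before anything can be written. Once that is settled, the argument is a routine two-step triangle inequality. A secondary subtlety is that the enlargement doubles the distortion bound from $\epsilon$ to $2\epsilon$; if any downstream use demands the tight bound $\dis(\clC)<\epsilon$ to be preserved, the plan must be adjusted—e.g., by a greedy construction on the finite sets $X,Y$ that selects a distortion-minimizing correspondence already exhibiting the collapsing property, rather than enlarging an arbitrary minimizer.
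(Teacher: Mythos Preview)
You have misread the (admittedly terse) statement. The paper's intended meaning, made clear by its two-line proof and by the way the proposition is invoked in the next result, is the \emph{converse} of what you prove: if two points $x,x'\in X$ are related by $\clC$ to the \emph{same} $y\in Y$, then $d_X(x,x')<\epsilon$. In other words, each fiber $\clC^{-1}(y)\subseteq X$ lies inside a single $\epsilon$-ball. The paper's argument is just: if $(x,y),(x',y)\in\clC$ then
\[
|d_X(x,x')-d_Y(y,y)| \;=\; d_X(x,x') \;\le\; \dis(\clC) \;=\; 2\dGH(X,Y) \;<\; \epsilon,
\]
so $x'$ must lie in the $\epsilon$-neighborhood of $x$. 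No modification of $\clC$ is involved, and no factor of $2$ is lost.

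Your construction establishes a different (and in this context unneeded) fact: that any $\epsilon$-cluster in $X$ \emph{can} be collapsed to a single image by enlarging $\clC$, at the cost of doubling the distortion bound. That is not what the downstream argument requires. Proposition~\ref{prop:stability2} uses Proposition~\ref{prop:stability1} to bound $|Y|$ from below by the size of a minimal $\epsilon$-net of $X$: since $\clC$ surjects onto $Y$ and each $y\in Y$ pulls back to a set of diameter $<\epsilon$, the number of $y$'s must be at least the number of disjoint $\epsilon$-balls needed to cover $X$. Your collapsing statement does not yield this counting bound. So the gap is not in your technical steps, which are fine for what they show, but in the initial interpretation of what is being asserted.
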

 \begin{proof}
   If the point $x'$ does not belong to the $\epsilon$-neighborhood of $x\in X$, then $d_X(x', x) \geq \epsilon$.
   For any $(x', y)$, $(x, y) \in \clC$, we get that $|d(x', x) - d(y, y)| \geq \epsilon$.
   But this is a contradiction since $2\dGH(X, Y) < \epsilon$. 
 \end{proof}

 \begin{prop}\label{prop:stability2}
   If $S$ is a minimal $\epsilon$-net of $X$ and $2\dGH(X, Y) < \epsilon$, then $|Y| \geq |S|$.
 \end{prop}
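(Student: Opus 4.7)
The plan is to turn the correspondence furnished by the Gromov--Hausdorff bound into an $\epsilon$-net of $X$ whose size is at most $|Y|$, and then to invoke the minimality of $S$. As in \cref{prop:stability1}, since $X$ and $Y$ are finite and $2\dGH(X,Y)<\epsilon$, I can fix a correspondence $\clC : X \rightrightarrows Y$ with $\dis(\clC)=2\dGH(X,Y)<\epsilon$.

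For each $y\in Y$ the fiber $\clC^{-1}(y)=\{x\in X:(x,y)\in\clC\}$ is non-empty, since $\clC^{T}$ is also a multivalued map (the defining property of a correspondence). I would choose one representative $x_{y}\in\clC^{-1}(y)$ for every $y\in Y$ and set $R=\{x_{y}:y\in Y\}\subseteq X$; then $|R|\leq|Y|$ by construction. To verify that $R$ is an $\epsilon$-net of $X$, pick any $x\in X$ and any $y\in\clC(x)$ (non-empty since $\clC$ is a multivalued map). Both $(x,y)$ and $(x_{y},y)$ lie in $\clC$, and the distortion bound gives
\[d_{X}(x,x_{y})=|d_{X}(x,x_{y})-d_{Y}(y,y)|\leq\dis(\clC)<\epsilon,\]
so every point of $X$ lies within $\epsilon$ of some point of $R$.

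The conclusion then follows: interpreting ``minimal'' as minimum cardinality, every $\epsilon$-net of $X$ has at least $|S|$ elements, so $|S|\leq|R|\leq|Y|$. The only step requiring care is the non-emptiness of each fiber $\clC^{-1}(y)$, which is precisely where the correspondence hypothesis (rather than just a multivalued map) is used; the rest is a direct adaptation of the computation in \cref{prop:stability1}, and there is no technical obstacle beyond that bookkeeping.
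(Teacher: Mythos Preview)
Your proof is correct and follows essentially the same idea as the paper: use a correspondence $\clC$ of distortion $<\epsilon$, observe that each fiber $\clC^{-1}(y)$ has diameter $<\epsilon$ (this is the content of \cref{prop:stability1}), and conclude that choosing one representative per fiber yields an $\epsilon$-net of $X$ of size at most $|Y|$, whence $|S|\le|Y|$ by minimality. The paper's proof compresses this into two sentences, while you spell out the construction of $R$ and the verification that it is an $\epsilon$-net; the underlying argument is the same.
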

 \begin{proof}
   By the definition of an $\epsilon$-net, we know that the $\epsilon$-neighborhood of all $x \in S$ will cover $X$.
   And by \cref{prop:stability1}, up to all points in the $\epsilon$-neighborhood of $X$ can be mapped to a single point in $Y$ if $2\dGH(X, Y) < \epsilon$.
   Hence we get that $|Y| \geq |S|$. 
 \end{proof}
 Similarly, we can show $|X| \geq |S|$ if $S$ is a minimal $\epsilon$-net of $Y$.

 \add{
   In many of the subsequent results presented in this Section, we denote constants of proportionality by $t$.
   The exact values of these constants are not crucial for the arguments.
   Hence, we use the generic notation $t$ in multiple places for the sake of brevity.
 }
 
 \begin{prop}\label{prop:stability3}
   If $2\dGH(X, Y) < \epsilon$ then \add{$||Y|-|X||\leq t\epsilon$ for some constant $t$}.
 \end{prop}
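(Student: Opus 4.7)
The plan is to combine \cref{prop:stability2} with \cref{proper:ptdsbtnsos} to transfer the $\epsilon$-net bound into the desired cardinality estimate. First I will pick a minimal $\epsilon$-net $S \subseteq X$; then \cref{prop:stability2} yields $|Y| \geq |S|$, so the claim reduces to bounding $|X| - |S|$ by $O(\epsilon)$.

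To carry this out, I will observe that every $x \in X \setminus S$ lies within distance $\epsilon$ of some $s \in S$ by definition of an $\epsilon$-net, so $X \setminus S \subseteq \bigcup_{s \in S} B(s, \epsilon)$. Each such ball sits inside a box of total width $2\dmn\epsilon$, and \cref{proper:ptdsbtnsos} bounds the number of points of $X$ it contains by $\clL \cdot 2\dmn\epsilon + 2c\dmn = O(\epsilon)$, where the dimension-dependent constants are absorbed into the big-$O$. I will then exploit the minimality of $S$---each $s \in S$ must witness some $x \in X$ not covered by any other $s' \in S$---to argue that the balls $\{B(s,\epsilon)\}_{s \in S}$ cannot overlap too redundantly, and combine this observation with the constant facet bound $c$ from \cref{proper:ptdsbtnsos} to conclude $|X \setminus S| \leq O(\epsilon)$.

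The hardest part will be the final count. A naive union bound only gives $|X \setminus S| \leq |S| \cdot O(\epsilon)$, which is too loose to deliver the claim directly. Sharpening it requires a careful packing argument that separates the contributions of well-separated net points (where only the constant facet term of \cref{proper:ptdsbtnsos} contributes) from those of clustered balls (where the $\clL\epsilon$ term dominates, but the minimality of $S$ limits the number of such clusters). Making this quantitative, with constants depending only on $\dmn$, $\clL$, and $c$, is the crux of the proof.
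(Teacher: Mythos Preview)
Your outline follows exactly the paper's approach: take a minimal $\epsilon$-net $S \subseteq X$, invoke \cref{prop:stability2} to get $|Y| \geq |S|$, and then reduce to showing $|X| - |S| = O(\epsilon)$ via \cref{proper:ptdsbtnsos} after noting that each $\epsilon$-ball sits inside a box of total width $2\epsilon n$.

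The difference is in how much is claimed at the last step. The paper's proof is a single line here: it simply asserts ``$|X| - |S| = O(\epsilon)$'' directly from \cref{proper:ptdsbtnsos}, with no packing or summation argument at all. So the ``hardest part'' you flag---tightening the naive union bound $|X \setminus S| \leq |S|\cdot O(\epsilon)$ down to a global $O(\epsilon)$---is not carried out in the paper either; it is treated as immediate. Your caution here is legitimate (the step is not obvious as stated, and the packing argument you sketch is not obviously completable to the claimed bound), but there is no additional idea in the paper's proof for you to recover: your proposal already contains strictly more detail than the paper supplies.
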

 \begin{proof}
   An $\epsilon$-ball is contained in a box of total width $2\epsilon\times n=2\epsilon n$.
   Then based on \cref{proper:ptdsbtnsos}, we get that \add{$|X|\geq|S|\geq |X|-t\epsilon$ for some constant $t$} for a minimal $\epsilon$-net $S$ of $X$.
   \cref{prop:stability2} gives that $|Y| \geq |S|$.
   Hence \add{$|Y|\geq |X| - t\epsilon$}. \add{Similarly we can show $|X|\geq |Y| - t\epsilon$}. 
 \end{proof}

 For the next set of results, we consider the setting where $2 \dGH(X,Y) < \epsilon$.
 We will also consider a correspondence \add{$\clC$} between $X$ and $Y$ that realizes this distance (as specified in \cref{prop:stability1}, for instance) and \add{$(x, y) \in \clC$}.
 \add{Given $V_1 \in X$ and $V_2 \in Y$, consider a transformation $T:(x, V_1) \to (y, V_2)$ such that the distance to faces of $V_1$ from $x$ under $d_X$ is the same as the distance to the corresponding faces of $V_2$ from $y$ under $d_Y$.
   Such a transformation will always exist under translation and scaling.
 }

 \add{
   \begin{prop}\label{prop:newaddstability3}
     Let $V_1 \in X, V_2 \in Y$, and let $(y, V_2)$ be the image of $(x, V_1)$ under $T$.
     Then $|\theta(V_1) - \theta(V_2)| \leq t\epsilon$ for some constant $t$.
   \end{prop}
   \begin{proof}
     Let $S_1, S_2$ be the subsets of points in $V_1$ and $V_2$, respectively.
     For each $x \in S_1~\exists y\in S_2$ such that $(x, y) \in \clC$ and for each $y \in S_2 ~\exists x\in S_1$ such that $(x, y) \in \clC$.
     This implies that $2\dGH(S_1, S_2) \leq \epsilon$.
     Then, similar to \cref{prop:stability3}, we get
     \[||S_1|-|S_2||\leq t'\epsilon\]
     where $t'$ is some constant.
     The set of points in $V_1$ that may not be mapped to points in $V_2$ by the correspondence $\clC$ is $\epsilon$ distance away from the boundary of $V_1$.
     This implies
     \[\theta(V_1) - |S_1|\leq t'\epsilon.\]
     Similarly for $V_2$, we get
     \[\theta(V_2) - |S_2|\leq t'\epsilon.\]
     Then we get
     \begin{align*}
       |\theta(V_1) - \theta(V_2)| &= |\theta(V_1) - |S_1| + |S_1|-|S_2| +|S_2|  -\theta(V_2)|\\
       &\leq |\theta(V_1) - |S_1|| + ||S_1|-|S_2|| +||S_2|  -\theta(V_2)|\\
       &\leq 3t'\epsilon = t\epsilon ~\text{ for some constant }t. \qedhere
     \end{align*}
   \end{proof}
 }
 
 For $x \in X$, let $V(x) \in \clU(0)$ be its box in the initial cover of $X$.
 Let $M$ be the largest optimal solution in $\sol_{\alpha}(V(x), N)$ where $N = B(V(x), \pi)$.
 \add{Let $M_{\epsilon} = \{z \in M : \mathrm{dist}(z, N\setminus M) \geq \epsilon\}$, i.e., the $\epsilon$-inward offset of $M$.}
 Let $(y, \tM)$ \add{in $Y$ be} the image of \add{($x, M_{\epsilon}$)} \add{ in $X$ under $T$}, and let $\hat{M} = V(y) \cup \tM$.
 Here, $V(y)$ is the box of $y$ in the initial cover of $Y$.
 We assume $|M| \geq 2\epsilon$. 

 \begin{lem}\label{lem:coverstability1}
   Let $N_1 = B(V(x),\pi), N_2 = B(V(y),\pi)$, and $N_3 = B(V(y),\pi + \epsilon)$.
   Then the following bounds hold.
   \begin{equation}
     \begin{aligned}
       |\theta(\hat{M}) - \theta(M)|&\leq \add{t\epsilon } \mdelete{O(\epsilon)} + 4cn\\
       |\theta(N_3 \setminus \hat{M}) - \theta(N_1 \setminus M)|&\leq \add{t\epsilon } \mdelete{O(\epsilon)}\\
       \partial \theta(\hat{M}) &\leq 2cn\\
     \end{aligned}
   \end{equation}
   \add{where $t$ is some constant.} 
 \end{lem}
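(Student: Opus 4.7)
My plan is to decompose each of the three discrepancies into thin boundary shells and invoke \cref{proper:ptdsbtnsos} on those shells, transferring $X$-counts to $Y$-counts via the correspondence $\clC$ and the translation $y-x$ that realizes the Gromov--Hausdorff distance. The inward $\epsilon$-offset built into $\tM$ and the extra $\epsilon$ radius of $N_3$ are precisely what absorb the correspondence's $\epsilon$-slack.

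For the first bound I would pass through the intermediate count $\theta(\tM)$ and apply the triangle inequality
\[
|\theta(\hat{M}) - \theta(M)| \leq |\theta(\hat{M}) - \theta(\tM)| + |\theta(\tM) - \theta(M_\epsilon)| + |\theta(M_\epsilon) - \theta(M)|,
\]
where $M_\epsilon$ denotes the $\epsilon$-inward offset of $M$. The nested pair $(M, M_\epsilon)$ has total-width difference $2n\epsilon$, so \cref{proper:ptdsbtnsos} gives $\theta(M) - \theta(M_\epsilon) \leq 2n\epsilon\clL + 2cn = O(\epsilon) + 2cn$. Since $\tM$ is the translate of $M_\epsilon$ by $y-x$, the correspondence sends $X \cap M_\epsilon$ into $\tM$ up to the same shell error, giving $|\theta(\tM) - \theta(M_\epsilon)| = O(\epsilon)$. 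Finally, because $x$ lies within $\epsilon$ of $M_\epsilon$, the translated image $y$ lies within $\epsilon$ of $\tM$ in every coordinate, so the box $\hat{M} = V(y) \cup \tM$ extends $\tM$ by at most $\epsilon$ per facet; another application of \cref{proper:ptdsbtnsos} yields $\theta(\hat{M}) - \theta(\tM) \leq O(\epsilon) + 2cn$. Summing the three pieces gives the required $O(\epsilon) + 4cn$.

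For the second bound I would apply the analogous decomposition to the complements. For $x' \in X \cap (N_1 \setminus M)$ with correspondent $y' \in Y$, the distortion bound gives $d_Y(y,y') < d_X(x,x') + \epsilon \leq \pi + \epsilon$, placing $y' \in N_3$; and if $x'$ sits at least $\epsilon$ outside $M$, the image $y'$ lies outside $\hat{M}$. The symmetric difference between the translated $N_1 \setminus M$ and $N_3 \setminus \hat{M}$ is a union of shells of total width $O(\epsilon)$, so \cref{proper:ptdsbtnsos} bounds the discrepancy by $O(\epsilon)$; no $cn$ term survives because the comparison is itself between two symmetric-difference counts and the facet contributions cancel. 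The third bound is immediate: $\hat{M}$ is a box with $2n$ facets, each $(n-1)$-dimensional, carrying at most $c$ points of $Y$ by the second clause of \cref{proper:ptdsbtnsos}, so $\partial\theta(\hat{M}) \leq 2cn$.

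The delicate step is transferring point counts through $\clC$ in bounds 1 and 2, since Gromov--Hausdorff closeness in the $\ell_\infty$ metric only bounds pairwise distance distortion, not coordinate-wise displacement. I would lean on \cref{prop:stability1} to bundle $\epsilon$-clusters of $X$-points and route them to common $Y$-images, so that points of $X \cap M_\epsilon$ land inside $\tM$, while the "strayers" from the thin shell $M \setminus M_\epsilon$ are confined to a region of total width $O(\epsilon)$ around $\partial \tM$ whose contribution is already absorbed by \cref{proper:ptdsbtnsos}.
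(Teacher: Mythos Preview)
Your proposal is correct and follows essentially the same route as the paper: both argue via a triangle-inequality decomposition through $\tM$, apply \cref{proper:ptdsbtnsos} to the thin shells, and invoke the correspondence (the paper via \cref{prop:stability3}) for the $X$-to-$Y$ transfer. Your explicit intermediate $M_\epsilon$ makes the first bound slightly more transparent than the paper's two-term version, and for the second bound the paper drops the $cn$ term by observing that the neighborhoods are \emph{open} boxes rather than by a cancellation argument, but the substance is the same.
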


 \begin{proof}
 	
   Since \add{$(y, \tM)$} is the image of the \add{$(x, M_\epsilon)$} \add{under $T$, where $M_\epsilon$ is the} negative $\epsilon$-offset of $M$, using the upper bound on boundary points of a box based on \cref{proper:ptdsbtnsos}, we get that
   \[\theta(\tM) \geq \theta(\hat{M}) - \add{t'\epsilon} \mdelete{O(\epsilon)} -2cn\]
   \add{where $t'$ is some constant.}  
   \cref{proper:ptdsbtnsos} also gives that
   \[\theta(\tM) \leq \theta(\hat{M}) + \add{t'\epsilon}\mdelete{O(\epsilon)}+2cn.\]
   \add{Based on \cref{prop:newaddstability3} we get that
     \[|\theta(\tM) - \theta(M_{\epsilon})| \leq \add{t'\epsilon.}\mdelete{O(\epsilon)}\]}
   \add{Then by \cref{proper:ptdsbtnsos}} we get that
   \[|\theta(\tM) - \theta(M)|\leq  |\theta(\tM) - \theta(M_{\epsilon})| + |\theta(M_{\epsilon}) - \theta(M)|\leq \add{t''\epsilon}\mdelete{O(\epsilon)} + 2cn.\]
   \add{where $t''$ is some constant.}
   Hence,
   \[|\theta(\hat{M}) - \theta(M)|\leq |\theta(\hat{M})-\theta(\tM)| + |\theta(\tM) - \theta(M)|\leq \add{t\epsilon}\mdelete{O(\epsilon)}+ 4cn \add{\text{ for some constant } t}.\]
   Since neighborhoods are open boxes, similar to above set of results we get
   \[|\theta(N_3 \setminus \hat{M}) - \theta(N_1 \setminus M)| \leq |\theta(N_3 \setminus \hat{M}) - \theta(N_2 \setminus M)| + |\theta(N_2 \setminus M) - \theta(N_1 \setminus M)|\leq \add{t\epsilon}\mdelete{O(\epsilon)}. \]
   Also, \cref{proper:ptdsbtnsos} gives
   \[\partial \theta(\hat{M}) \leq 2cn. \qedhere \] 
 \end{proof}

 \begin{lem}\label{lem:coverstability2}
   $\hat{M}$ is contained in a largest optimal solution of $\,\sol_{\hat{\alpha}}(V(y), N_{\epsilon})$ where
   \[ \hat{\alpha} =\min\left\{\alpha \big/ \left(1 - \alpha \left(\dfrac{\add{t\epsilon}\mdelete{O(\epsilon)}}{2n}\right)\right), 1\right\} \text{ and } N_{\epsilon} = B(V(y), \pi+\epsilon).
   \]
   \add{where $t$ is some constant.}
 \end{lem}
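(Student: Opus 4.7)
The plan is to show that $\hat{M}$ is itself an optimal solution in $\sol_{\hat{\alpha}}(V(y), N_\epsilon)$; then by the union property established in \cref{thm:unionintersectionpiforx}, $\hat{M}$ is contained in the largest optimal solution (the union of all optimal solutions). To verify optimality of $\hat{M}$, I will check that for every candidate box $M'$ with $V(y) \subseteq M' \subseteq N_\epsilon$, we have $C_{\hat{\alpha}}(\hat{M}, N_\epsilon) \leq C_{\hat{\alpha}}(M', N_\epsilon)$. I would handle the inward case ($M' \subseteq \hat{M}$) and the outward case ($M' \supseteq \hat{M}$) separately, then combine them for a general $M'$ via the union-cost bound of \cref{prop:unioncosteq}.

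For each case, the plan is to import the corresponding cost inequality from the $X$-side problem where $M$ is the largest optimal in $\sol_{\alpha}(V(x), N)$. By \cref{lem:pointcountalpharelaappendixmod}, per free facet of $M$, the count of affected points lies within the band determined by $\gamma = 1/\alpha - 1$ (up to the boundary term $\theta(\partial M)$). I would translate a candidate $M'$ through the correspondence $\clC$ to an analogous box on the $X$-side, invoke the $X$-side cost inequality, and then translate the relevant point counts back using \cref{lem:coverstability1}. This incurs an error of $O(\epsilon) + 4cn$ in $|\theta(\hat{M}) - \theta(M)|$, an $O(\epsilon)$ error in $|\theta(N_3 \setminus \hat{M}) - \theta(N_1 \setminus M)|$, and a boundary contribution $\partial \theta(\hat{M}) \leq 2cn$ that is already small.

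The specific form $\hat{\alpha} = \alpha/(1 - \alpha \cdot O(\epsilon)/(2n))$ arises from absorbing these $O(\epsilon)$ perturbations into the budget per facet. Rearranging gives $\hat{\gamma} = 1/\hat{\alpha} - 1 = \gamma - O(\epsilon)/(2n)$, i.e., the per-facet expansion budget has been tightened from $\gamma$ by spreading the $O(\epsilon)$ error evenly over the (at most) $2n$ facets of $\hat{M}$. With this choice, the inequality $C_{\hat{\alpha}}(\hat{M}, N_\epsilon) \leq C_{\hat{\alpha}}(M', N_\epsilon)$ should hold because the $X$-side slack per facet minus the extra perturbation burden equals the $Y$-side budget $\hat{\gamma}$. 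The $\min\{\cdot,1\}$ in the definition of $\hat{\alpha}$ handles the degenerate regime where the correction would push $\hat{\alpha}$ past $1$; there the cost function simply penalizes any uncovered point infinitely relative to the size term, and $\hat{M}$ is trivially contained in any covering box.

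The main obstacle will be the detailed cost bookkeeping: the objective $C_\alpha$ sums per-point weights $w_x$ that depend simultaneously on the distances to all $2n$ facets of the box, so shifting coordinates through $\clC$ and contracting by an $\epsilon$-inward offset can perturb many weights at once. \cref{lem:coverstability1} controls the aggregate change in point counts globally, but matching up $\alpha$- and $\hat{\alpha}$-weighted terms cleanly, so that signs line up and the $O(\epsilon)$ errors absorb into $\hat{\alpha}-\alpha$, is the technical crux. A secondary subtlety is checking that the hypothesis $|M| \geq 2\epsilon$ (so that the $\epsilon$-inward offset of $M$ is non-empty and $\tilde{M}$ is well-defined) suffices to make the constants implicit in the $O(\epsilon)$ notation independent of the particular box $M$.
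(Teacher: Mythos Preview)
Your proposal follows essentially the paper's approach: both rely on the facet-count bounds of \cref{lem:pointcountalpharelaappendixmod} and the count transfers of \cref{lem:coverstability1} to obtain $\hat\gamma = \gamma - O(\epsilon)/(2n)$ and hence the stated $\hat\alpha$. The paper packages this slightly differently---it assumes $\hat M$ is the largest optimal for \emph{some} $\alpha'$, reads off the necessary conditions on $\alpha'$ from \cref{lem:pointcountalpharelaappendixmod}, and then implicitly invokes the monotonicity of \cref{lem:alphachangeforx} to pass to $\hat\alpha\ge\alpha'$---but the substance is the same; note only that your ``outward'' case need not hold ($\hat M$ may sit strictly inside the largest optimal without being optimal itself), though the inward inequality alone, combined with \cref{prop:unioncosteq}, already forces $\hat M\cup M^*\in\sol_{\hat\alpha}$ and hence $\hat M\subseteq M^*$.
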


 \begin{proof}
   Since $M$ is the largest optimal solution in $\sol_{\alpha}(V(x), N)$ with $N=B(V(x), \pi)$, \cref{lem:pointcountalpharelaappendixmod} gives 
   \begin{align}\label{eq:coverstability1}
     \dfrac{\theta(N\setminus M) + \theta(\partial M)}{p}&\geq \gamma \geq \dfrac{\theta(N\setminus M)}{q}
   \end{align}
   where $\gamma = 1/\alpha - 1$.
   These inequalities give bounds for $\theta(N\setminus M)$ as 
   \begin{align}\label{eq:coverstability2}
     q\gamma &\geq \theta(N\setminus M) \geq p\gamma - \theta(\partial M).
   \end{align}
	
   Let $\hat{M}$ be the largest optimal solution in $\sol_{\alpha'}(V(y), N_{\epsilon})$ for some $\alpha'$ and $N_{\epsilon}=B(V(y), \pi + \epsilon)$.
   Then similar to \cref{eq:coverstability1} we get
   \begin{align}\label{eq:coverstability3}
     \dfrac{\theta(N_{\epsilon}\setminus  \hat{M}) + \theta(\partial \hat{M})}{\hat{p}} & \geq \gamma' \geq \dfrac{\theta(N_{\epsilon}\setminus  \hat{M})}{\hat{q}}
   \end{align}
   where $\gamma' = 1/\alpha' - 1$.
   Based on \cref{lem:coverstability1} \add{we get $\theta(N_{\epsilon}\setminus  \hat{M}) \geq \theta(N\setminus M)-t\epsilon$}. \add{Based on} \cref{eq:coverstability2} \add{$\theta(N\setminus M) \geq p\gamma - \theta(\partial M)$}. \add{Then based on \cref{eq:coverstability3}} we get
   \begin{equation}\label{eq:coverstability4}
     \begin{aligned}
       \gamma' &\geq \dfrac{\theta(N\setminus M)-\add{t\epsilon}\mdelete{O(\epsilon)}}{\hat{q}}\\
       \gamma' &\geq \dfrac{p\gamma - \theta(\partial M)-\add{t\epsilon}\mdelete{O(\epsilon)}}{\hat{q}}\\
       \gamma' &\geq \dfrac{p\gamma}{\hat{q}} - \dfrac{\theta(\partial M) + \add{t\epsilon}\mdelete{O(\epsilon)}}{\hat{q}} \\
       \implies &\dfrac{\theta(\partial M) + \add{t\epsilon}\mdelete{O(\epsilon)}}{\hat{q}} \geq \dfrac{p\gamma}{\hat{q}} -  \gamma'.
     \end{aligned}
   \end{equation}
   Since we have a difference of at least $\epsilon$ between the upper and lower limits of the $\hat{M}$ and $N_{\epsilon}$, we must have $\hat{q} = 2n$.
   We can apply an inward offset of $\epsilon$ on $M$, and since the bounds hold for all valid values of $p$, we take $p=2n$.
   And for \cref{eq:coverstability4} to hold, we need $\theta(\partial M)\geq 0$.
   Then we get that
   \begin{align}\label{eq:coverstability6}
     \dfrac{\add{t\epsilon}\mdelete{O(\epsilon)}}{2n} &\geq \gamma - \gamma'.
   \end{align}
	
   Rewriting \cref{eq:coverstability6} terms of $\alpha, \alpha'$ gives
   \begin{equation}\label{eq:coverstability5}
     \begin{aligned}
       &\alpha \geq \alpha' - \alpha \alpha'\left(\dfrac{\add{t\epsilon}\mdelete{O(\epsilon)}}{2n}\right)\\
      \implies &\alpha \alpha'\left(\dfrac{\add{t\epsilon}\mdelete{O(\epsilon)}}{2n}\right)\geq \alpha' - \alpha\\
      \implies &\alpha (\alpha + \Delta)\left(\dfrac{\add{t\epsilon}\mdelete{O(\epsilon)}}{2n}\right)\geq \Delta\\
      \implies &\dfrac{\alpha^2\left(\dfrac{\add{t\epsilon}\mdelete{O(\epsilon)}}{2n}\right)}{1 - \alpha \left(\dfrac{\add{t\epsilon}\mdelete{O(\epsilon)}}{2n}\right)} \geq \Delta. 		
     \end{aligned}
   \end{equation}
   Hence the largest possible value of $\hat{\alpha} = \alpha + \Delta = \alpha \big/ \left(1 - \alpha \left(\dfrac{\add{t\epsilon}\mdelete{O(\epsilon)}}{2n}\right)\right)$.
   We also require, by definition, that $\hat{\alpha} \leq 1$.
\end{proof}

 We denote the box filtration of $X$ with parameters $\alpha, \pi$ (see \cref{lem:filtration}) as $\BoxF(X,\pi,\alpha)$.
 And we denote the $\epsilon$-enlarged version of $\BoxF(X,\pi,\alpha)$ as $\eBoxF(X,\pi,\alpha)$, where we start with the entire $\BoxF(X,\pi,\alpha)$ and enlarge the largest optimal box constructed at each step by $2\epsilon$ in both directions in each dimension.
 Note that such $\epsilon$-enlargements for small $\epsilon$ values are considered only for theoretical purposes to prove the stability results.
 We use the following standard result on $\epsilon$-interleaving of filtrations.

 \begin{prop}(\cite{ChdeSOu2014})\label{prop:chazinterleaveprop}
   Let $\bbS, \bbT$ be filtered complexes with vertex sets $X, Y$, respectively.
   If $\mathcal{C}:X \rightrightarrows Y$ is a correspondence such that $\mathcal{C}$ and $\mathcal{C}^{T}$ are both $\epsilon$-simplicial, then together they induce a canonical $\epsilon$-interleaving between the persistence modules $H(\bbS)$ and $H(\bbT)$, the interleaving homomorphisms being $H(\mathcal{C})$ and $H(\mathcal{C}^{T})$.
 \end{prop}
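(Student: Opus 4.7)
The plan is to adapt the standard construction of Chazal--de Silva--Oudot to the bifiltered setting indexed by $(\pi,\alpha)$. Starting from the correspondence $\mathcal{C}: X \rightrightarrows Y$, I would first pick any set-theoretic section $f: X \to Y$ by choosing, for each $x \in X$, some $y_x \in \mathcal{C}(x)$; similarly a section $g: Y \to X$ of $\mathcal{C}^T$. Because $\mathcal{C}$ is $\epsilon$-simplicial, any simplex $\sigma = \{x_0,\dots,x_p\} \in \bbS_{\pi,\alpha}$ has $\{f(x_0),\dots,f(x_p)\} \subseteq \mathcal{C}(\sigma)$, which lies in a simplex of $\bbT_{\pi+\delta,\alpha+\delta'}$ with $\max\{\delta,\delta'\}\leq\epsilon$; hence $f$ extends to a simplicial map $\bbS_{\pi,\alpha} \to \bbT_{\pi+\epsilon,\alpha+\epsilon}$. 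Since $f$ is defined once on vertices, these maps commute strictly with the interior inclusions $\bbS_{\pi,\alpha} \hookrightarrow \bbS_{\pi',\alpha'}$ for $\pi\leq\pi'$, $\alpha\leq\alpha'$, so they assemble into a morphism of persistence modules; the same construction on $\mathcal{C}^T$ gives the reverse morphism.

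Next I would verify that the induced homology map $H(f)$ is independent of the choice of section, which is what legitimately defines $H(\mathcal{C})$. The usual acyclic-carrier (prism) argument does this: for two sections $f,f'$, the combined vertex set $\{f(x_0),f'(x_0),\dots,f(x_p),f'(x_p)\}$ still lies in $\mathcal{C}(\sigma)$ and hence in a simplex of $\bbT_{\pi+\epsilon,\alpha+\epsilon}$; the standard prism subdivision then produces an explicit chain homotopy between the two chain maps at every level of the bifiltration.

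The heart of the argument is checking the two interleaving triangles. On the $\bbS$-side I need $H(g)\circ H(f)$ to agree with the inclusion-induced map $H(\bbS_{\pi,\alpha}) \to H(\bbS_{\pi+2\epsilon,\alpha+2\epsilon})$. Given $\sigma = \{x_0,\dots,x_p\} \in \bbS_{\pi,\alpha}$, for each $i$ we have $(x_i,f(x_i)) \in \mathcal{C}$ and $(g(f(x_i)),f(x_i)) \in \mathcal{C}$, so applying the $\epsilon$-simplicial hypothesis for $\mathcal{C}$ to $\sigma$ and then for $\mathcal{C}^T$ to $f(\sigma)$ shows that the union $\{x_0,\dots,x_p,g(f(x_0)),\dots,g(f(x_p))\}$ is a simplex of $\bbS_{\pi+2\epsilon,\alpha+2\epsilon}$. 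The prism over this combined simplex then delivers the required chain homotopy between $g\circ f$ and the inclusion; swapping the roles of $X$ and $Y$ gives the other triangle, and these two homotopies are exactly the data of a canonical $\epsilon$-interleaving between $H(\bbS)$ and $H(\bbT)$.

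The main obstacle will be bookkeeping around the bifiltration: each chain map and each homotopy has to be constructed uniformly in $(\pi,\alpha)$ so that it is compatible with both directions of the 2-parameter inclusions, and the single parameter $\epsilon = \max\{\delta,\delta'\}$ must control the shift along both axes simultaneously. Because the sections $f, g$ are chosen once on vertices, independently of $(\pi,\alpha)$, the required compatibilities hold on the nose, reducing the task to the classical one-parameter argument applied slice-by-slice to the bifiltration. One minor subtlety worth explicitly noting is that the $\epsilon$-simplicial hypothesis controls only \emph{finite} subsets of $\mathcal{C}(\sigma)$, which is exactly enough for the finite vertex sets encountered in $\sigma \cup g(f(\sigma))$ and in the prism construction, so no additional compactness or finiteness assumption beyond the finiteness of $X,Y$ already in force is needed.
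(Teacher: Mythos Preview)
The paper does not prove this proposition; it is quoted directly from \cite{ChdeSOu2014} and used as a black box. Your proposal correctly reproduces the standard Chazal--de~Silva--Oudot argument (choose single-valued sections of $\mathcal{C}$ and $\mathcal{C}^T$, use the $\epsilon$-simplicial hypothesis to get simplicial maps, and verify the interleaving triangles via contiguity of $g\circ f$ with the inclusion), and your adaptation to the bifiltered index $(\pi,\alpha)$ is sound since the sections are chosen once on vertices and the filtration is monotone in both parameters, so $\bbT_{\pi+\delta,\alpha+\delta'}\subseteq\bbT_{\pi+\epsilon,\alpha+\epsilon}$ absorbs the possibly different shifts along the two axes.
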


 \begin{thm}\label{lem:2epsinterleave}
    Let $X$ and $Y$ be finite subsets of $\R^n$ endowed with the induced metric of $\R^n$. 
    If $2\dGH(X, Y)$ $<\epsilon$, then the persistence modules $H(\BoxF(X,\pi,\alpha))$ and $H(\BoxF(Y,\hpi,\halpha))$ are $\lambda-$interleaved where 
	$\lambda = \max \{ \alpha^2\left(\add{t\epsilon}\mdelete{O(\epsilon)}/(2n)\right) \, \big/ \, [ 1 - \alpha \left(\add{t\epsilon}\mdelete{O(\epsilon)}/(2n)\right)], \, 2\epsilon \}$ \add{ for some constant $t$} and $\hpi \leq \pi + \lambda$ and $\halpha \leq \min\{\alpha + \lambda, 1\}$.
 \end{thm}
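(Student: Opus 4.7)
The plan is to apply \cref{prop:chazinterleaveprop} and therefore to reduce the theorem to showing that the correspondence $\clC : X \rightrightarrows Y$ realizing $2\dGH(X,Y) < \epsilon$, together with its transpose $\clC^{T}$, is $\lambda$-simplicial between $\BoxF(X,\pi,\alpha)$ and $\BoxF(Y,\hpi,\halpha)$. Using \cref{prop:stability1}, I would first pick $\clC$ so that each $x \in X$ maps to a single $y = \clC(x) \in Y$, thereby pairing each initial pivot $V(x) \in \clU_X(0)$ with a pivot $V(y) \in \clU_Y(0)$. The target is to show that for every simplex $\sigma = \{V_1[j\pi],\dots,V_p[j\pi]\} \in \Nrv(\clUX(j\pi))$, the corresponding boxes $\hat{M}_1,\dots,\hat{M}_p$ built over $Y$ form a simplex of $\Nrv(\clUY(j\hpi))$ at a slightly enlarged parameter $\hpi = \pi + \lambda$ and $\halpha = \min\{\alpha+\lambda,1\}$.

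Next I would carry out the expansion step by step. For each $V_i[j\pi]$, let $M_i$ be the largest optimal solution at the $j$-th expansion step in $X$. Applying \cref{lem:coverstability2} to each $V_i$ produces a box $\hat{M}_i$ contained in a largest optimal solution of $\sol_{\halpha}(V(y_i), B(V(y_i),\pi + \epsilon))$, where $\hat{M}_i \supseteq \tM_i$ and $\tM_i$ is the translate under $\clC$ of the $\epsilon$-inward offset of $M_i$. Iterating \cref{lem:coverstability2} along the expansion sequence $\clS(V_i)$ and invoking \cref{lem:pichangeforx} to promote per-step optimality to optimality over the cumulative neighborhood, I obtain the full expansion sequence in $Y$ starting from $V(y_i)$, with parameters controlled uniformly by $\lambda$.

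The intersection condition is then verified as follows. If $M_1,\dots,M_p$ intersect pairwise (which is all that is required by \cref{lem:boxcechVR}), then because the $\epsilon$-inward offset of $M_i$, translated by at most $\epsilon$ via $\clC$, sits inside $\hat{M}_i$, the correspondence sends pairwise-overlapping regions to pairwise-overlapping regions up to a $2\epsilon$ positional error — absorbed by the $2\epsilon$ inflation of $\pi$ in $\hpi$. Hence the $\hat{M}_i$'s intersect pairwise, and \cref{lem:boxcechVR} again forces a nonempty common intersection, yielding the required simplex in $\Nrv(\clUY(j\hpi))$. The symmetric argument for $\clC^{T}$ is identical with $X$ and $Y$ swapped; \cref{prop:chazinterleaveprop} then delivers the claimed $\lambda$-interleaving at the homology level.

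The main obstacle will be the bookkeeping of the two distinct sources of slack and their consistent aggregation into the single parameter $\lambda$: the additive $2\epsilon$ slack in the radius $\pi$, inherited from the $\epsilon$-inward offset required by \cref{lem:coverstability2}, and the multiplicative slack $\alpha^{2}(O(\epsilon)/(2n))\big/[1 - \alpha(O(\epsilon)/(2n))]$ in $\alpha$, which arises when transferring the point-count inequality of \cref{lem:pointcountalpharelaappendixmod} across the correspondence. I must also ensure that the standing hypothesis $|M|\geq 2\epsilon$ in \cref{lem:coverstability2} is maintained throughout the filtration (or is handled trivially when it fails early in the sequence), and that iterating \cref{lem:coverstability2} does not cause $\halpha$ to drift unboundedly across the $m$ expansion steps — this is precisely the reason the bound appears as a single $\max$ rather than as a per-step accumulation.
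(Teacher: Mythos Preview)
Your overall strategy matches the paper's: fix a correspondence $\clC$ with distortion $<\epsilon$, show $\clC$ and $\clC^{T}$ are $\lambda$-simplicial via \cref{lem:coverstability2}, and invoke \cref{prop:chazinterleaveprop}. However, two points in your plan diverge from the paper in ways that matter.

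First, and most importantly, your claim that the $2\epsilon$ positional error between $\hat{M}_i$ and the corresponding $Y$-boxes is ``absorbed by the $2\epsilon$ inflation of $\pi$ in $\hpi$'' is not justified. Increasing $\pi$ enlarges the \emph{neighborhood} over which the LP is solved, but it does not force the optimal box to grow by $2\epsilon$: if no new points enter $B(V(y),\pi+2\epsilon)\setminus B(V(y),\pi)$, the largest optimal solution can stay exactly the same. The paper does not rely on $\pi$-inflation for this step. Instead it introduces the auxiliary $\epsilon$-enlarged filtration $\eBoxF(Y,\hpi,\halpha)$, in which every optimal box is \emph{explicitly} padded by $2\epsilon$ in each direction, and shows that $\tau\in\eBoxF(Y,\hpi,\halpha)$ by checking $B(\hat{M}(y),2\epsilon)\cap B(\hat{M}(y'),2\epsilon)\neq\emptyset$. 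The $2\epsilon$ then enters $\lambda$ through this enlargement, not through $\hpi$. Without this device (or an equivalent argument that optimal boxes actually grow when $\pi$ does), the intersection step of your argument does not go through.

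Second, your plan to iterate \cref{lem:coverstability2} along the expansion sequence $\clS(V_i)$ and then use \cref{lem:pichangeforx} is unnecessary and is precisely what creates the drift worry you flag at the end. The paper applies \cref{lem:coverstability2} \emph{once}, directly at the filtration value $\pi$ under consideration, to the largest optimal box $M(x)\in\sol_{\alpha}(V(x),B(V(x),\pi))$; there is no induction over the $m$ expansion steps and hence no accumulation of error in $\halpha$. The standing hypothesis $|M|\geq 2\epsilon$ is handled at that single step, not maintained inductively.
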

 
 \begin{proof}
   Let $\mathcal{C}:X \rightrightarrows Y$ be a correspondence with distortion at most $\epsilon$. 
   To show that $H(\BoxF(X,\pi,\alpha))$ and $H(\BoxF(Y,\hpi,\halpha))$ are $\lambda$-interleaved, we will show that $\mathcal{C}$ and $\mathcal{C}^T$ are $\lambda$-simplicial. 

   Let $\sigma \in \BoxF(X, \pi, \alpha)$ and let $\tau$ be any finite subset of $\clC(\sigma)$. 
   We show that $\tau \in \eBoxF(Y,\hpi, \halpha)$. 
   For $y, y' \in \tau$ pick corresponding vertices $x, x' \in \sigma$ such that $(x, y)$ and $(x', y')$ are in $\clC$. 
   Let $M(x)$ and $M(x')$ be the largest optimal solutions in $\sol_{\alpha}(V(x), N_1)$ and $\sol_{\alpha}(V(x'), N_2)$ where $N_1 = B(V(x), \pi), N_2 = B(V(x'), \pi)$, respectively.

   Let $\hat{M}(y)$ be the box that is equal to $V(y)$ union with the image of the box $\{z \in M(x) : \mathrm{dist}(z, N_1\setminus M(x)) \geq \epsilon\}$, under the unique translation of $\bbR^n$ that takes $x$ to $y$, and let $\hat{M}(y')$ be similarly defined. 
   Then based on \cref{lem:coverstability2},
   \begin{align*}
     \hat{M}(y)  & \subseteq \text{largest optimal solution } M(y)  \in \sol_{\halpha}(\Vy, N_3)   \text{ for } N_3 = B(\Vy,  \pi+\epsilon), \text{ and }\\
     \hat{M}(y') & \subseteq \text{largest optimal solution } M(y') \in \sol_{\halpha}(V(y'), N_4) \text{ for } N_4 = B(V(y'),\pi+\epsilon),
   \end{align*}
   where $\halpha = \alpha^2\left(\add{t\epsilon}\mdelete{O(\epsilon)}/(2n)\right) \, \big/ \, [ 1 - \alpha \left(\add{t\epsilon}\mdelete{O(\epsilon)}/(2n)\right)]$.	
   Then $B(\hat{M}(y), 2\epsilon) \, \bigcap \, B(\hat{M}(y'), 2\epsilon) \neq \emptyset$ since we are given that $2\dGH(X, Y) < \epsilon$.
   Therefore $\tau \in \eBoxF(Y,\hpi, \halpha)$.
   Since $\lambda = \max \{ \alpha^2\left(\add{t\epsilon}\mdelete{O(\epsilon)}/(2n)\right) \, \big/ \, [ 1 - \alpha \left(\add{t\epsilon}\mdelete{O(\epsilon)}/(2n)\right)], \, 2\epsilon \}$,  
   $\mathcal{C}$ is $\lambda$-simplicial from $\BoxF(X, \pi, \alpha)$ to $\BoxF(Y,\hpi, \halpha)$.
   Similarly $\mathcal{C}^T$ is $\lambda$-simplicial from $\BoxF(Y, \pi, \alpha)$ to $\BoxF(X,\hpi, \halpha)$.
   The result now follows from \cref{prop:chazinterleaveprop}. 
 \end{proof} 

 \add{The persistence module $H(\BoxF(X,\pi,\alpha))$ is $q$-tame since $(X, d_{X})$ is a finite metric space.}

 Finally, we use the following standard result on the bottleneck distance of persistence diagrams of interleaved modules being bounded to obtain the stability result for box filtration in the point cover setting.
 
 \begin{thm}\label{thm:orgstabilitytheorem}(\cite{ChdeSGlOu2016})
   If $\bbU$ is a $q$-tame module then it has a well-defined persistence diagram $\dgm(\bbU)$.
   If $\bbU, \bbV$ are $q$-tame persistence modules that are $\epsilon$-interleaved then there exists an $\epsilon$-matching between the multisets $\dgm(\bbU), \dgm(\bbV)$.
   Thus, the bottleneck distance between the diagrams satisfies the bound $\mathrm{d_b}(\dgm(\bbU), \dgm(\bbV)) \leq \epsilon$. 
 \end{thm}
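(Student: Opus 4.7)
The plan is to acknowledge up front that this is the classical algebraic stability theorem of Chazal--de Silva--Glisse--Oudot, and so the natural approach in this paper is to cite it directly; however, if I were to outline how one would prove it from scratch, I would proceed via the persistence measure framework rather than any interval-decomposition argument (since $q$-tame modules need not be pointwise finite dimensional).

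First I would establish the existence of $\dgm(\bbU)$. For $a \leq b$, $q$-tameness gives that the structure map $\bbU(a) \to \bbU(b)$ has finite rank $r^{\bbU}_{a,b}$. Using these ranks I would define, for each axis-aligned rectangle $R = [a_1, a_2] \times [b_1, b_2]$ with $a_2 \leq b_1$ in the extended plane, a non-negative integer $\mu_{\bbU}(R)$ by the inclusion--exclusion combination of the four corner ranks. The key technical step is verifying that $\mu_{\bbU}$ extends to a locally finite Borel measure on $\{(a,b) : a < b\}$, whose atoms (together with the diagonal) constitute $\dgm(\bbU)$; the proof of additivity reduces to the rank--nullity identity applied to the composition of structure maps on nested rectangles.

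Second, to obtain the $\epsilon$-matching, I would rely on the ``box lemma,'' which states that if $\bbU$ and $\bbV$ are $\epsilon$-interleaved then for every rectangle $R$ one has $\mu_{\bbV}(R) \leq \mu_{\bbU}(R^{\epsilon})$ and symmetrically, where $R^\epsilon$ is the $\ell_\infty$-enlargement of $R$ by $\epsilon$. This inequality follows from chasing the interleaving diagrams and extracting rank inequalities between the eight corner maps involved. Given the box lemma, the matching is constructed by an interpolation argument: form a one-parameter family of $q$-tame modules $\bbW_t$ between $\bbU$ and $\bbV$ and track how off-diagonal mass of the persistence measure moves continuously, producing a transport of atoms of $\dgm(\bbU)$ to atoms of $\dgm(\bbV)$ with $\ell_\infty$-displacement at most $\epsilon$. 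The bottleneck bound $\mathrm{d_b}(\dgm(\bbU), \dgm(\bbV)) \leq \epsilon$ is immediate.

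The main obstacle, and the reason this theorem is nontrivial, is the measure-theoretic regularity needed at step one: proving that the rectangle assignment $\mu_{\bbU}$ really is a measure on a non-compact subset of the extended plane, and that its atoms off the diagonal form a locally finite multiset. All of this is carried out in \cite{ChdeSGlOu2016}, so in the paper I would simply invoke the statement as written, using \cref{lem:2epsinterleave} (the $\lambda$-interleaving) and \cref{lem:qtame} ($q$-tameness) to meet its hypotheses in the stability proofs that follow.
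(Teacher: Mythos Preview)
Your reading is correct: the paper does not prove this theorem at all but simply quotes it from \cite{ChdeSGlOu2016} and immediately applies it (via \cref{lem:qtame} and \cref{lem:2epsinterleave}) to obtain \cref{thm:pointstabilitytheorem}. Your proposal to cite it directly is exactly what the paper does, and the sketch you add of the persistence-measure/box-lemma route is faithful to the source but goes well beyond anything the present paper supplies.
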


 \begin{thm}\label{thm:pointstabilitytheorem}
   If $\bbU, \bbV$ are persistence modules using box filtrations for $X, Y$ such that $2\dGH(X, Y) < \epsilon$, then $\mathrm{d_b}(\dgm(\bbU), \dgm(\bbV))\leq \lambda$.
 \end{thm}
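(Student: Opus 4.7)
The proof should be short and essentially assembles three results already in place: the $\lambda$-interleaving of \cref{lem:2epsinterleave}, the $q$-tameness of \cref{lem:qtame}, and the abstract stability statement of \cref{thm:orgstabilitytheorem}. The plan is to simply verify that the hypotheses of \cref{thm:orgstabilitytheorem} are satisfied by $\bbU$ and $\bbV$, and then read off the bottleneck bound with the same $\lambda$ produced by the interleaving lemma.

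More concretely, first I would observe that since $X$ and $Y$ are finite subsets of $\bbR^n$ (as assumed throughout \cref{sec:costabnoise}), \cref{lem:qtame} applies to each of them separately, so both $\bbU = H(\BoxF(X,\pi,\alpha))$ and $\bbV = H(\BoxF(Y,\hpi,\halpha))$ are $q$-tame persistence modules. By \cref{thm:orgstabilitytheorem}, each therefore admits a well-defined persistence diagram $\dgm(\bbU)$ and $\dgm(\bbV)$.

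Next, I would invoke \cref{lem:2epsinterleave}: the hypothesis $2\dGH(X,Y) < \epsilon$ is precisely what that theorem requires, so it provides a $\lambda$-interleaving between $\bbU$ and $\bbV$ with
\[
\lambda = \max \left\{ \frac{\alpha^2 (O(\epsilon)/(2n))}{1 - \alpha (O(\epsilon)/(2n))}, \; 2\epsilon \right\},
\]
together with the accompanying constraints $\hpi \leq \pi + \lambda$ and $\halpha \leq \min\{\alpha + \lambda, 1\}$ on the parameters of $\bbV$. Applying the second half of \cref{thm:orgstabilitytheorem} to this pair of $q$-tame, $\lambda$-interleaved modules then yields an $\lambda$-matching between $\dgm(\bbU)$ and $\dgm(\bbV)$, and hence the bottleneck bound $\mathrm{d_b}(\dgm(\bbU), \dgm(\bbV)) \leq \lambda$, as required.

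There is essentially no obstacle in this final step, since all of the substantive geometric work has been absorbed into \cref{lem:coverstability1,lem:coverstability2} and the resulting interleaving in \cref{lem:2epsinterleave}. The only minor care needed is to make sure the $q$-tameness of both modules uses the same conventions on $(\pi,\alpha)$ and $(\hpi,\halpha)$ that appear in the interleaving, and that the $\lambda$ quoted in the conclusion is the same $\lambda$ produced by \cref{lem:2epsinterleave} (rather than a potentially smaller quantity). With these identifications in place, the result follows by direct citation.
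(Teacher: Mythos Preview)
Your proposal is correct and matches the paper's proof essentially line for line: the paper also cites \cref{lem:qtame} for $q$-tameness, \cref{lem:2epsinterleave} for the $\lambda$-interleaving, and then concludes via \cref{thm:orgstabilitytheorem}. The only difference is that the paper's proof is three sentences long, omitting the explicit restatement of $\lambda$ and the parameter constraints that you spell out.
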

 \begin{proof}
   $\bbU, \bbV$ are $q$-tame.
   By \cref{lem:2epsinterleave} $\bbU, \bbV$ are $\lambda-$interleaved.
   Hence by \cref{thm:orgstabilitytheorem}, we get that
   $\mathrm{d_b}(\dgm(\bbU), \dgm(\bbV))\leq \lambda$.   
 \end{proof}

 \subsection{Stability of Pixel Cover Filtration}\label{ssec:pixcostabnoise}

 Let $\bX$ be a pixel discretization of the PCD $X$ with pixels of side length $\clH$.
 Recall that the distance between points $x,x' \in \bX$ is measured as $\bar{d}(x,x')=d(m_{\sigma}, m_{\sigma'})$ for pixels $\sigma, \sigma'$ such that  $x \in \sigma, x' \in \sigma'$.
 The correspondence $\clC: X \rightrightarrows \bX$ is defined by the condition that $\sigma \in \bX$ if $\theta(\sigma)\neq 0$ $\iff$ $\exists x \in X$ such that $x \in \sigma$.
 We denote the pixel cover filtration of $\bX$ as $\pBoxF(\bX,\bpi, \balpha)$.
 As a main step towards showing the stability of the pixel cover filtration, we show that the point cover filtration of $X$ and the pixel cover filtration of $\bX$ are close.

 \begin{prop}\label{prop:pixelcoverstab1}
   $2\dGH(X, \bX) < \clH\sqrt{n}$.
 \end{prop}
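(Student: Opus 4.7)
The plan is to exhibit a specific correspondence $\clC : X \rightrightarrows \bX$ with distortion at most $\clH\sqrt{n}$, which by the definition $\dGH(X,\bX) = \tfrac{1}{2} \inf_\clC \dis(\clC)$ will immediately yield the bound $2\dGH(X, \bX) \leq \clH\sqrt{n}$. The natural choice is the correspondence already used to define $\bX$: pair each $x \in X$ with the centroid $m_\sigma$ of the unique pixel $\sigma$ containing it. Surjectivity both ways is built into the definition of $\bX$ (each non-empty pixel has at least one $x \in X$ mapping to it, and each $x$ picks out a unique $\sigma$).

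Next, I would bound how far a point in $X$ can be from the centroid of its pixel. Since a pixel is a cube of side length $\clH$ centered at $m_\sigma$, every $x \in \sigma$ satisfies $|x_i - m_\sigma^i| \leq \clH/2$ for each coordinate $i$, hence
\[
\|x - m_\sigma\|_2 \;\leq\; \frac{\clH \sqrt{n}}{2}.
\]
Applying this to two pairs $(x, m_\sigma), (x', m_{\sigma'}) \in \clC$ and using the reverse triangle inequality gives
\[
\bigl|\,\|x - x'\|_2 - \|m_\sigma - m_{\sigma'}\|_2\,\bigr|
\;\leq\; \|(x - m_\sigma) - (x' - m_{\sigma'})\|_2
\;\leq\; \|x - m_\sigma\|_2 + \|x' - m_{\sigma'}\|_2
\;\leq\; \clH\sqrt{n}.
\]
Taking the supremum over all such pairs bounds $\dis(\clC) \leq \clH \sqrt{n}$, and dividing by two yields the required bound on $\dGH(X, \bX)$.

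Finally, I would address the strict inequality. Equality in the chain above would require simultaneously: (i) each of $x, x'$ sits at a corner of its respective pixel (equality in $\|x - m_\sigma\|_2 \leq \clH\sqrt{n}/2$), and (ii) the vectors $x - m_\sigma$ and $x' - m_{\sigma'}$ are anti-parallel (equality in the triangle inequality). This is a non-generic configuration that can be avoided by any standard tie-breaking convention for pixel assignment of points on pixel boundaries (such as the half-open convention used via $\floor{\cdot}$ and $\iceil{\cdot}$ for assigning each $x \in X$ to a unique $\sigma$), which forces $\|x - m_\sigma\|_2 < \clH\sqrt{n}/2$ whenever $x$ is not at the designated corner. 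The only mild subtlety is making sure the convention rules out all simultaneous equalities; this is the step I expect to need the most care, but it is combinatorial rather than analytical and does not affect the main $\clH\sqrt{n}/2$ bound on the per-point displacement.
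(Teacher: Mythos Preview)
Your proof is correct and takes essentially the same approach as the paper: both bound the distortion of the natural correspondence $x \mapsto m_\sigma$ by $\clH\sqrt{n}$, the paper via a per-coordinate bound of $\clH$ combined with Minkowski's inequality, and you via the per-point displacement $\|x - m_\sigma\|_2 \leq \clH\sqrt{n}/2$ followed by the (reverse) triangle inequality. In fact you are more careful than the paper about the strict inequality, which the paper's own argument only establishes as $\leq$.
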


 \begin{proof}
   When measuring in the pixel cover setting, the distance between $x, x' \in X$ in any one dimension (as measured in the point cover setting) can increase by at most $\clH$.
   Hence by Minkowski's inequality, we get that
   \[|\bar{d}(x,x') - d(x,x')| = |d(m_{\sigma}, m_{\sigma'}) - d(x, x')| \leq \clH \sqrt{n}.\]
   
   \vspace*{-0.2in}
 \end{proof}

 Based on this Proposition, we fix $\clH = \epsilon/\sqrt{\dmn}$ so that we have $\dGH(X,\bX)=\epsilon$.
 We assume the settings specified in \cref{cor:pixelcoverstab1}.
 This choice of $\clH$ gives $\clK \approx \alpha\epsilon\sqrt{n}(c+ \gamma) + \epsilon^{1/2}/n^{1/4}$ and $\varepsilon \approx \min\{\alpha^2n^{3/4}\epsilon^{1/2}(\gamma +c + (0.5\clL\epsilon^{1/2})/n^{5/4}), 1-\alpha\}$.
 Thus \cref{thm:pointpixelcoverrela} holds for these choices of parameters, and we get the following result.

 \begin{lem}\label{lem:pixelcoverstab5}
   Let $ \clH = \epsilon/\sqrt{\dmn}, \varepsilon \approx \min\{\alpha^2n^{3/4}\epsilon^{1/2}(\gamma +c + (0.5\clL\epsilon^{1/2})/n^{5/4}), 1-\alpha\}, \clK \approx \alpha\epsilon\sqrt{n}(c+ \gamma) + \epsilon^{1/2}/n^{1/4}$, and $\balpha = \alpha + \varepsilon$.
   Suppose $M, \bM$ are the largest optimal solutions in $\sol_{\alpha}(V, N), \psol_{\balpha}(V, N)$, respectively, for a given input box $V$.
   Then	$|M| -|M\cap\bM| < \clK$.
 \end{lem}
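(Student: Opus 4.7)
The plan is to argue by contradiction, leveraging \cref{lem:pixelcoverstab1} together with the argument underlying \cref{thm:pointpixelcoverrela} to force $M$ into $\bM$, and then use the fact that $\bM$ is the \emph{largest} optimal solution over the pixel cover to get a contradiction with $|M|-|M\cap\bM|\ge \clK$.

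First, I would set $\tV=M\cap\bM$ and $\Delta=|M|-|\tV|$, and assume for contradiction that $\Delta\ge \clK$. Since \cref{prop:pixelcoverstab1} lets us fix $\clH=\epsilon/\sqrt{\dmn}$ so that $0.5\clH\clL\ll 1$ whenever $\epsilon$ is small, the approximations worked out in \cref{cor:pixelcoverstab1} produce the values of $\clK$ and $\varepsilon$ stated in the Lemma and, more importantly, verify the hypotheses \cref{eqn:Llmt}--\cref{eqn:epsval} of \cref{lem:pixelcoverstab1}. I would spell out this verification in one line each: \cref{eqn:Llmt} follows from $\clH$ being small; \cref{eqn:Klmt} is exactly the form of $\clK$ from \cref{cor:pixelcoverstab1}; and $\varepsilon$ as chosen is precisely the minimum in \cref{eqn:epsval}. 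Checking this verification carefully is where I expect the main bookkeeping to lie, but there is no real obstacle since \cref{cor:pixelcoverstab1} does the heavy lifting.

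Given the hypotheses are satisfied and we are in the nontrivial case $M\not\subset\bM$ (otherwise $|M|-|M\cap\bM|=0<\clK$ immediately), \cref{lem:pixelcoverstab1} yields
\[ \pC_{\balpha}(M,V)-\pC_{\balpha}(\tV,V)\le 0.\]
This is precisely the inequality needed to run the cost-sequence argument of \cref{thm:pointpixelcoverrela}: along the sequence $S(\tV,M)$ in pixel space with parameter $\balpha$ the cost change is non-positive, along $S(\tV,\bM)$ the cost change is non-positive by optimality of $\bM$, and combining these via \cref{prop:unioncosteq} shows $M\cup\bM\in\psol_{\balpha}(V,N)$.

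Finally, since $\bM$ is a \emph{largest} optimal solution of $\psol_{\balpha}(V,N)$ and $M\cup\bM\supseteq\bM$ is also optimal, maximality forces $M\cup\bM=\bM$, i.e.\ $M\subseteq\bM$. But then $\tV=M\cap\bM=M$, giving $\Delta=0$, which contradicts $\Delta\ge\clK>0$. Hence $|M|-|M\cap\bM|<\clK$, as claimed. The only subtle point worth flagging in the writeup is that $\clK>0$ holds under the stated parameter choice (since $\epsilon^{1/2}/\dmn^{1/4}>0$), so the contradiction is genuine.
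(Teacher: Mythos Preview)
Your proposal is correct and is precisely the unpacking of the paper's one-line proof (``Direct consequence of \cref{thm:pointpixelcoverrela}''): assume $\Delta\ge\clK$, invoke \cref{lem:pixelcoverstab1} via \cref{cor:pixelcoverstab1}, rerun the union argument of \cref{thm:pointpixelcoverrela} to get $M\cup\bM\in\psol_{\balpha}(V,N)$, and then use maximality of $\bM$ to force $M\subseteq\bM$, a contradiction. The approach and the key ingredients match the paper exactly.
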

 \begin{proof}
   Direct consequence of \cref{thm:pointpixelcoverrela}.
 \end{proof}
  
 \begin{thm}\label{lem:pixel2epsinterleave}
   Let $X$ be a finite subset of $\R^n$ endowed with the induced metric of $\R^n$ and let $\bX$ be a pixel discretization of $X$ such that $\dGH(X,\bX) < \epsilon$ and $0.5\clH\clL \ll 1$. 
   Then the persistence modules $H(\BoxF(X,\pi,\alpha))$ and $H(\pBoxF(\bX,\bpi,\balpha))$ are $O(\bgl)$-interleaved where  $\bgl = \max\{\epsilon^{1/2}, \epsilon\}$.
 \end{thm}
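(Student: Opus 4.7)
The plan is to combine the point-cover stability result (\cref{thm:pointstabilitytheorem}) with the point-to-pixel comparison results (\cref{thm:pointpixelcoverrela} and its simplified form in \cref{lem:pixelcoverstab5}) via a two-step interleaving argument, using the same style of correspondence / $\epsilon$-simplicial map machinery of \cref{prop:chazinterleaveprop} that was used to prove \cref{lem:2epsinterleave}.

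First I would introduce the intermediate filtration $\BoxF(\bX,\pi,\alpha)$, i.e.\ the point-cover box filtration built on the pixelized data set $\bX$ viewed as a finite metric space with distance $\bar d$. Since $\dGH(X,\bX)<\epsilon$ by \cref{prop:pixelcoverstab1} (with $\clH=\epsilon/\sqrt{\dmn}$), applying \cref{thm:pointstabilitytheorem} directly to the pair $(X,\bX)$ gives that $H(\BoxF(X,\pi,\alpha))$ and $H(\BoxF(\bX,\pi',\alpha'))$ are $\lambda$-interleaved with $\lambda=\max\{\alpha^2 O(\epsilon)/(2\dmn)/[1-\alpha O(\epsilon)/(2\dmn)],\,2\epsilon\}=O(\epsilon)$. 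This accounts for the pixelization of the vertex set but not yet for the change in the optimization model.

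Next I would interleave the point-cover filtration $\BoxF(\bX,\pi',\alpha')$ with the pixel-cover filtration $\pBoxF(\bX,\bpi,\balpha)$, now on the \emph{same} vertex set $\bX$. The natural correspondence here is the identity $\clC_{\mathrm{id}}:\bX\rightrightarrows\bX$, and I must show $\clC_{\mathrm{id}}$ and $\clC_{\mathrm{id}}^T$ are simplicial up to a shift. The key tool is \cref{thm:pointpixelcoverrela} (and symmetrically \cref{thm:pixelpointcoverrela}), which says that for the same input box and same $\pi$-neighborhood, any optimal point-cover box is contained in an optimal pixel-cover box after bumping $\alpha$ by $\varepsilon$, and vice versa. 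Under the choice $\clH=\epsilon/\sqrt{\dmn}$ and the simplifications in \cref{cor:pixelcoverstab1,cor:pixelcoverstab2}, this shift is $\varepsilon=O(\epsilon^{1/2})$, and \cref{lem:pixelcoverstab5} gives that the two largest optimal boxes at each step agree up to total width $\clK=O(\epsilon^{1/2})$. Iterating this step by step along the expansion sequence (using the fact, from \cref{lem:pichangeforx}, that point- and pixel-cover expansions can be tracked consistently) shows that if a pairwise intersection exists in one filtration at parameters $(\pi',\alpha')$, then by \cref{lem:boxcechVR} the corresponding pairwise intersection exists in the other filtration at parameters $(\pi'+O(\epsilon^{1/2}),\alpha'+O(\epsilon^{1/2}))$, and symmetrically in the reverse direction; hence $\clC_{\mathrm{id}}$ and $\clC_{\mathrm{id}}^T$ are $O(\epsilon^{1/2})$-simplicial. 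Applying \cref{prop:chazinterleaveprop} gives an $O(\epsilon^{1/2})$-interleaving between $H(\BoxF(\bX,\pi',\alpha'))$ and $H(\pBoxF(\bX,\bpi,\balpha))$.

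Finally I would compose the two interleavings: an $O(\epsilon)$-interleaving between $H(\BoxF(X,\pi,\alpha))$ and $H(\BoxF(\bX,\pi',\alpha'))$ followed by an $O(\epsilon^{1/2})$-interleaving between $H(\BoxF(\bX,\pi',\alpha'))$ and $H(\pBoxF(\bX,\bpi,\balpha))$ yields an $O(\max\{\epsilon^{1/2},\epsilon\})=O(\bgl)$-interleaving between $H(\BoxF(X,\pi,\alpha))$ and $H(\pBoxF(\bX,\bpi,\balpha))$, after reading off $\bpi\leq\pi+O(\bgl)$ and $\balpha\leq\min\{\alpha+O(\bgl),1\}$. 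The $q$-tameness needed to invoke \cref{thm:orgstabilitytheorem} downstream follows from \cref{lem:qtame} for the point-cover piece, and from the finiteness of $\bX$ combined with the monotonicity properties in \cref{thm:unionintersectionlargepiforx,lem:alphachangeforx} (which were proved to hold in the pixel setting as well) for the pixel-cover piece.

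The main obstacle I anticipate is the second step: showing that \cref{thm:pointpixelcoverrela,thm:pixelpointcoverrela}, which only compare optimal solutions for a \emph{single} expansion with the \emph{same} initial box, still yield a clean interleaving when iterated along the entire box-filtration sequence. Concretely, after one expansion step, the point-cover pivot and pixel-cover pivot on a given point of $\bX$ are no longer identical (they are only nested up to $\clK$), so when applying \cref{thm:pointpixelcoverrela} at the next step the hypothesis ``same $V$'' fails. I expect this to be handled by appealing to \cref{lem:pichangeforx} to reduce each intermediate optimum back to an optimum with respect to the original initial box, and by absorbing the $\clK=O(\epsilon^{1/2})$ discrepancy into the $\pi$-shift, so that the cumulative error across the $m$ steps of the filtration remains $O(\epsilon^{1/2})$ rather than compounding.
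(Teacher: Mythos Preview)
Your proposal takes a genuinely different route from the paper.  The paper does \emph{not} introduce an intermediate point-cover filtration on $\bX$ and then compose two interleavings.  Instead it argues in a single step: it takes the natural correspondence $\clC:X\rightrightarrows\bX$ sending each point to its pixel, assumes the initial pivot boxes agree, $V(m_\sigma)=V(x)$, and then for each pair of vertices compares the largest optimal point-cover box $M(x)\in\sol_\alpha(V(x),N)$ directly with the largest optimal pixel-cover box $\bM(m_\sigma)\in\psol_{\balpha}(V(m_\sigma),N)$ via \cref{lem:pixelcoverstab5}.  That lemma gives $|M(x)|-|M(x)\cap\bM(m_\sigma)|\leq\clK$, so the $\clK$-enlargement of $\bM(m_\sigma)$ contains $M(x)$, and hence pairwise intersections in $\BoxF(X,\pi,\alpha)$ persist in $\clK\text{-}\pBoxF(\bX,\bpi,\balpha)$.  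With $\clK,\varepsilon=O(\epsilon^{1/2})$ from \cref{cor:pixelcoverstab1}, this makes $\clC$ and $\clC^T$ $O(\bgl)$-simplicial, and \cref{prop:chazinterleaveprop} finishes.

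Your two-step decomposition is not wrong in spirit, but it has a real gap in Step~1.  Applying \cref{thm:pointstabilitytheorem} to the pair $(X,\bX)$ requires \cref{proper:ptdsbtnsos} to hold for $\bX$ with comparable constants, and this is exactly where the pixelization bites: if you read $\bX$ as the multiset of centroids (so that Step~2 has a chance of matching the weighted pixel LP), then many points of $\bX$ coincide and the facet bound $c$ in \cref{proper:ptdsbtnsos} blows up; if instead you read $\bX$ as the set of distinct centroids, then your Step~2 compares the unweighted point-cover LP on centroids to the $\theta(\sigma)$-weighted pixel LP, and the containment results of \cref{thm:pointpixelcoverrela,thm:pixelpointcoverrela} no longer apply as stated.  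The paper sidesteps this entirely by never forming $\BoxF(\bX,\cdot,\cdot)$ and instead invoking \cref{lem:pixelcoverstab5}, which compares $\sol_\alpha$ and $\psol_{\balpha}$ for the \emph{same} data $X$ and the \emph{same} initial box.

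Your final worry about iterating the one-step comparison along the expansion sequence is legitimate, and the paper handles it exactly as you suggest: by \cref{lem:pichangeforx} the largest-optimal expansion at scale $j\pi$ lies in $\sol_\alpha(V,B(V,j\pi))$ (global optimum for the original pivot), so the comparison in \cref{lem:pixelcoverstab5} can be applied at every scale with the fixed initial box $V$, and no compounding of $\clK$ across steps occurs.
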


 \begin{proof}
   Let $\mathcal{C}:X \rightrightarrows \bX$ be a correspondence with distortion at most $\epsilon$. 
   To show that $H(\BoxF(X,\pi,\alpha))$ and $H(\pBoxF(\bX,\bpi,\balpha))$ are $O(\bgl)$-interleaved, we show that $\mathcal{C}$ and $\mathcal{C}^T$ are $O(\bgl)$-simplicial. 
   Let $\varrho \in \BoxF(X, \pi, \alpha)$ and let $\tau$ be any finite subset of $\mathcal{C}(\varrho)$. 
   We show that $\tau \in \pBoxF(\bX, \bpi, \balpha)$. 

   For $m_{\sigma}, m_{\sigma'} \in \tau$ pick corresponding vertices $x \in \sigma$ and $x' \in \sigma'$ such that $(x, m_{\sigma})$ and $(x', m_{\sigma'})$ are in $\mathcal{C}(\varrho)$. 
   Let $M(x)$ and $M(x')$ be the largest optimal solutions in $\sol_{\alpha}(V(x), N)$ and $\sol_{\alpha}(V(x'), N')$ where $N = B(V(x), \pi)$ and $N' = B(V(x'), \pi)$, respectively.

   Let $\bM(m_{\sigma}), \bM(m_{\sigma'})$ be the largest optimal solutions in  $\psol_{\balpha}(V(m_{\sigma}), N)$ and $\psol_{\balpha}(V(m_{\sigma'}), N')$, respectively.
   We assume that $V(m_{\sigma}) = V(x)$ and $V(m_{\sigma'}) = V(x')$, i.e., the initial boxes are same in the point and pixel cover settings.
   Then by \cref{lem:pixelcoverstab5}, we get that $|M(x)| - |\bM(m_{\sigma}) \cap M(x)| \leq \clK$ and  $|M(x')| - |\bM(m_{\sigma'}) \cap M(x')| \leq \clK$ for the values of $\balpha, \varepsilon$, and $\clK$ specified in \cref{lem:pixelcoverstab5}.
   And $O(\bgl) =\max\{\varepsilon, \clK\}$.

   Then the versions of $\bM(m_{\sigma})$ and $\bM(m_{\sigma'})$ in $\clK$-$\pBoxF(\bX,\bpi,\balpha)$, the $\clK$-enlargement of $\pBoxF(\bX,\bpi,\balpha)$, contain $M(x)$ and $M(x')$, respectively.
   Hence $\mathcal{C}$ is $O(\bgl)$-simplicial from $\BoxF(X, \pi, \alpha)$ to $\pBoxF(\bX,\bpi, \balpha)$.
   Similarly $\mathcal{C}^T$ is $O(\bgl)$-simplicial from $\pBoxF(\bX, \pi, \alpha)$ to $\BoxF(X,\bpi, \balpha)$.
   The result now follows from \cref{prop:chazinterleaveprop}. 
 \end{proof}

 \add{The persistence module $H(\pBoxF(\bX,\pi,\alpha))$ is $q$-tame since $(\bX, d_{\bX})$ is a finite metric space.}

 \begin{thm}\label{thm:pointpixelstab}
   If $\bbU, \bar{\bbU}$ are persistence modules of the box filtrations for $X, \bX$ such that $2\dGH(X, \bX) < \epsilon$ then, $\mathrm{d_b}(\dgm(\bbU), \dgm(\bar{\bbU}))\leq \mdelete{O(\bgl)}\add{t\bgl}$ \add{for some constant $t$}.
 \end{thm}
 \begin{proof}
   $\bbU, \bar{\bbU}$ are $q$-tame.
   By \cref{lem:pixel2epsinterleave} $\bbU, \bar{\bbU}$ are $O(\bgl)-$interleaved.
   Hence by \cref{thm:orgstabilitytheorem}, we get that
   $\mathrm{d_b}(\dgm(\bbU), \dgm(\bar{\bbU}))\leq \mdelete{O(\bgl)}\add{t\bgl}$ \add{for some constant $t$}.   
 \end{proof}

 \begin{thm}\label{thm:pointpixelstabilitytheorem}
   Let $X, \bX, Y, \bY$ are finite metric spaces where  $\bX$ and $\bY$ are pixel discretization of $X$ and $Y$, respectively.
   Also, let $\bbU, \bar{\bbU}, \bbV, \bar{\bbV}$ be the persistence modules of $X, \bX, Y, \bY$, respectively.
   Then we have 
   $\mathrm{d_b}(\dgm(\bar{\bbU}), \dgm(\bbV)) \leq \mdelete{O(\bgl)}\add{t\bgl} + \lambda$,
   $\mathrm{d_b}(\dgm(\bbU), \dgm(\bar{\bbV})) \leq \mdelete{O(\bgl)}\add{t\bgl} + \lambda$, and 
   $\mathrm{d_b}(\dgm(\bar{\bbU}), \dgm(\bar{\bbV})) \leq \mdelete{O(\bgl)}\add{t\bgl} + \lambda$ \add{ for some constant $t$}.	
 \end{thm}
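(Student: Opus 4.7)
The plan is to derive all three bounds via straightforward triangle-inequality arguments on the bottleneck distance, combining the two stability results already established. The ingredients are: \cref{thm:pointstabilitytheorem}, which gives $\mathrm{d_b}(\dgm(\bbU), \dgm(\bbV))\leq \lambda$ when $2\dGH(X,Y)<\epsilon$; \cref{thm:pointpixelstab}, which gives $\mathrm{d_b}(\dgm(\bbU), \dgm(\bar{\bbU}))\leq O(\bgl)$ and analogously $\mathrm{d_b}(\dgm(\bbV), \dgm(\bar{\bbV}))\leq O(\bgl)$; and the fact that $\mathrm{d_b}$ is a bona fide (pseudo)metric on persistence diagrams, which follows from the interleaving inequality established in \cite{ChdeSGlOu2016} and used in \cref{thm:orgstabilitytheorem}.

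First I would verify that the hypotheses of \cref{thm:pointpixelstab} apply to both pairs $(X,\bX)$ and $(Y,\bY)$. The pixelizations are constructed with the same pixel side length $\clH=\epsilon/\sqrt{n}$ as in \cref{prop:pixelcoverstab1}, so $2\dGH(X,\bX)<\epsilon$ and $2\dGH(Y,\bY)<\epsilon$ are automatic, and therefore each of $\mathrm{d_b}(\dgm(\bbU),\dgm(\bar{\bbU}))$ and $\mathrm{d_b}(\dgm(\bbV),\dgm(\bar{\bbV}))$ is bounded by $O(\bgl)$. The hypothesis $2\dGH(X,Y)<\epsilon$ then feeds into \cref{thm:pointstabilitytheorem} to give $\mathrm{d_b}(\dgm(\bbU), \dgm(\bbV))\leq \lambda$.

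Next I would apply the triangle inequality to each of the three desired estimates in turn. For the first,
\begin{equation*}
\mathrm{d_b}(\dgm(\bar{\bbU}), \dgm(\bbV)) \leq \mathrm{d_b}(\dgm(\bar{\bbU}), \dgm(\bbU)) + \mathrm{d_b}(\dgm(\bbU), \dgm(\bbV)) \leq O(\bgl) + \lambda.
\end{equation*}
For the second, symmetrically route through $\dgm(\bbV)$:
\begin{equation*}
\mathrm{d_b}(\dgm(\bbU), \dgm(\bar{\bbV})) \leq \mathrm{d_b}(\dgm(\bbU), \dgm(\bbV)) + \mathrm{d_b}(\dgm(\bbV), \dgm(\bar{\bbV})) \leq \lambda + O(\bgl).
\end{equation*}
For the third, insert two intermediate diagrams $\dgm(\bbU)$ and $\dgm(\bbV)$:
\begin{equation*}
\mathrm{d_b}(\dgm(\bar{\bbU}), \dgm(\bar{\bbV})) \leq \mathrm{d_b}(\dgm(\bar{\bbU}), \dgm(\bbU)) + \mathrm{d_b}(\dgm(\bbU), \dgm(\bbV)) + \mathrm{d_b}(\dgm(\bbV), \dgm(\bar{\bbV})) \leq 2\,O(\bgl) + \lambda,
\end{equation*}
which is still $O(\bgl)+\lambda$ under the $O(\cdot)$ convention.

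There is no real obstacle to overcome here; the heavy lifting was already done in \cref{lem:2epsinterleave,lem:pixel2epsinterleave} and the $q$-tameness checks in \cref{lem:qtame,lem:pixelqtame}. The only minor subtlety is ensuring that the constants absorbed into $O(\bgl)$ in the third inequality are uniform in the two pairs $(X,\bX)$ and $(Y,\bY)$, which follows because $\clH$ and $\varepsilon$ depend only on $\epsilon$ (and on the fixed ambient constants $n, c, \clL$) and not on the particular point cloud being discretized.
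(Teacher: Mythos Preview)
Your proposal is correct and matches the paper's own proof essentially line for line: the paper also invokes \cref{thm:pointstabilitytheorem} and \cref{thm:pointpixelstab} to obtain the three base bounds and then applies the metric axioms of the bottleneck distance (triangle inequality) to conclude. Your only additions are the explicit intermediate chains and the remark that $2\,O(\bgl)$ is still $O(\bgl)$ in the third estimate, both of which are fine.
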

 \begin{proof}
   By \cref{thm:pointstabilitytheorem,thm:pointpixelstab} we know $\mathrm{d_b}(\dgm(\bbU), \dgm(\bbV))\leq \lambda$, $\mathrm{d_b}(\dgm(\bbU), \dgm(\bar{\bbU}))\leq \mdelete{O(\bgl)}\add{t\bgl}$, and $\mathrm{d_b}(\dgm(\bbV), \dgm(\bar{\bbV}))\leq \mdelete{O(\bgl)}\add{t\bgl}$.
   Since the bottleneck distance satisfies metric axioms, we get that \\
   $\mathrm{d_b}(\dgm(\bar{\bbU}), \dgm(\bbV)) \leq \mdelete{O(\bgl)}\add{t\bgl} + \lambda$,
   $\mathrm{d_b}(\dgm(\bbU), \dgm(\bar{\bbV})) \leq \mdelete{O(\bgl)}\add{t\bgl} + \lambda$, and 
   $\mathrm{d_b}(\dgm(\bar{\bbU}), \dgm(\bar{\bbV})) \leq \mdelete{O(\bgl)}\add{t\bgl} + \lambda$.
 \end{proof}

\subsection{Sensitivity to Parameter Choices}\label{ssec:snstparam}

\paragraph{Parameter $\alpha$:} Based on \cref{lem:pointcountalpharelaappendixmod}, a small change in $\alpha$ implies small changes in the lower and upper bounds of $\theta(N\setminus M)$.
And this implies small changes in the lower and upper bounds of $\theta(M)$ as well.
Furthermore, when $\alpha$ is increased by a small value, the new \delete{largest} optimal \delete{solution} \add{box } will contain the previous \add{largest optimal box by \cref{lem:alphachangeforx}} \delete{one}.
Similarly,  by \cref{lem:alphachangeforx}, a small decrease in $\alpha$ will result in a new largest optimal \delete{solution} \add{box } that is contained in the previous \add{optimal box} \delete{one}.
We get these results for pixel cover as well, since \cref{prop:unioncosteq} is true also for pixel discretization.   

\paragraph{Parameter $\pi$:} 
A small change in $\pi$ \add{implies a small number of new points} \delete{added  a small change in $\theta(N\setminus M)$} based on \cref{proper:ptdsbtnsos}.
This implies that the change in $\theta(M)$ is small.
Furthermore, when $\pi$ increases by a small value, the new \delete{largest} optimal \add{box} \delete{solution} contains the previous \add{largest optimal box} \delete{one} by \cref{lem:pichangeforx}.
Similarly, a small decrease in $\pi$ results in a new largest optimal \add{box} \delete{solution} that is contained in the previous \add{optimal box} \delete{one} by \cref{lem:pichangeforx}.
Again, we get similar results for pixel cover as well, since \cref{prop:unioncosteq} is true also for pixel discretization.

While we get these inclusions of largest optimal solutions, the sizes of the largest boxes could change significantly.
For instance, consider \cref{exm:notunique} with $\alpha < 0.5$.
Then optimal solution is $[a, a]$ for the input box $V = [a, a]$ and neighborhood $B(a, b-a+\delta)$ for $\delta > 0, \delta \approx 0$, since $\dfrac{\partial C(\tV, \alpha)}{\partial x} > 0$ for $\tV \supset V$.
If $\alpha = 0.5$, then $[a, b]$ is the largest optimal solution in the neighborhood $B(a, b-a+\delta)$.
Now, if $b \gg a$ then the difference in total width of the optimal solutions is $(b - a) \gg 0$ even for a small change in $\alpha$ from $0.5-\delta$ to $0.5$.

Now, consider \cref{exm:notunique} with $\alpha = 0.5$.
Then the optimal solution is $[a, a]$ for input box $V = [a, a]$ and neighborhood $B(a, b-a)$.
And the largest optimal solution is $[a, b]$ for input box $V = [a, a]$ with neighborhood $B(a, b-a + \delta)$ for $\delta > 0, \delta \approx 0$.
Now, if $b \gg a$ then difference in total width of the optimal solutions is $(b - a) \gg 0$ even for a $\delta \approx 0$ change in the parameter $\pi$.
Thus, for a small change in the $\pi$ parameter we can have a large change in the optimal solution.

 \section{Examples}\label{sec:examples}

We present several 2D examples on which we compare the performances of Vietoris-Rips (VR) and distance-to-measure (DTM) filtrations with that of the box filtration.

\subsection{Point Clouds and Comparison of Filtrations} \label{ssec:pcds}

The four point clouds are shown in \cref{fig:entropyandpmaxexampleh1}, and details highlighted below.

\begin{figure}[hb!] 
  \centering
  \begin{subfigure}[t]{1.7in}
    \centering
    \includegraphics[width=1.7in]{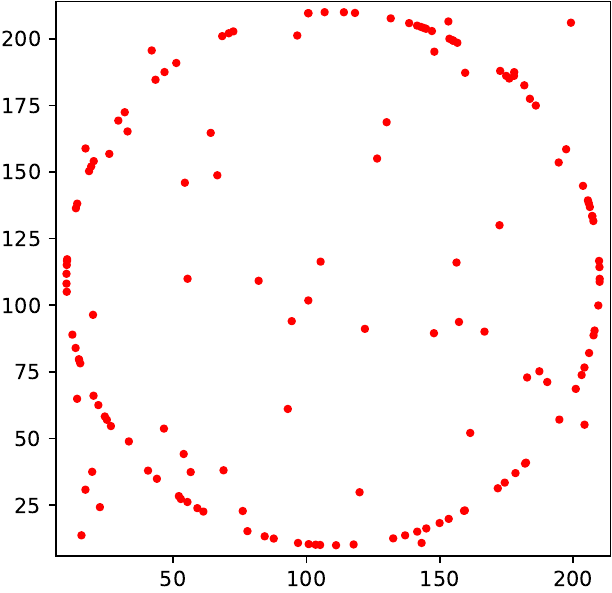}
    \caption{\label{fig:testexample2}}
  \end{subfigure}
  \quad
  \begin{subfigure}[t]{1.7in}
    \centering
    \includegraphics[width=1.7in]{noisyThinCircleResults/dataPoints_crop}
    \caption{\label{fig:testexample1}}
  \end{subfigure}\\
  \smallskip
  \begin{subfigure}[t]{1.7in}
    \centering
    \includegraphics[width=1.7in]{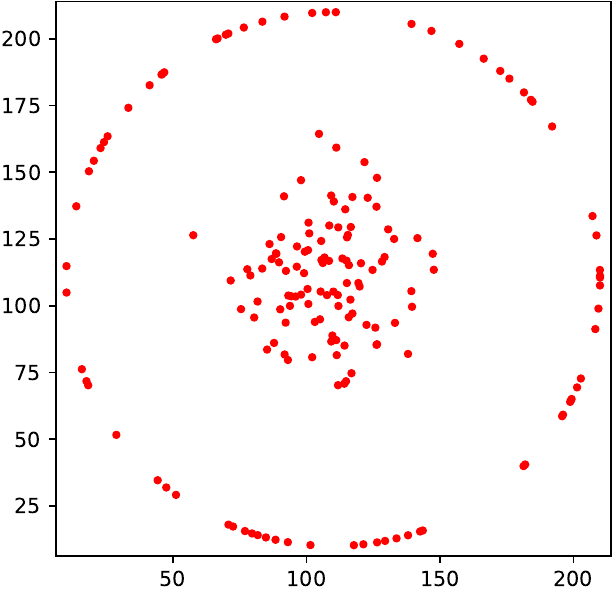}
    \caption{\label{fig:testexample3}}
  \end{subfigure}
  \quad
  \begin{subfigure}[t]{1.7in}
    \centering
    \includegraphics[width=1.7in]{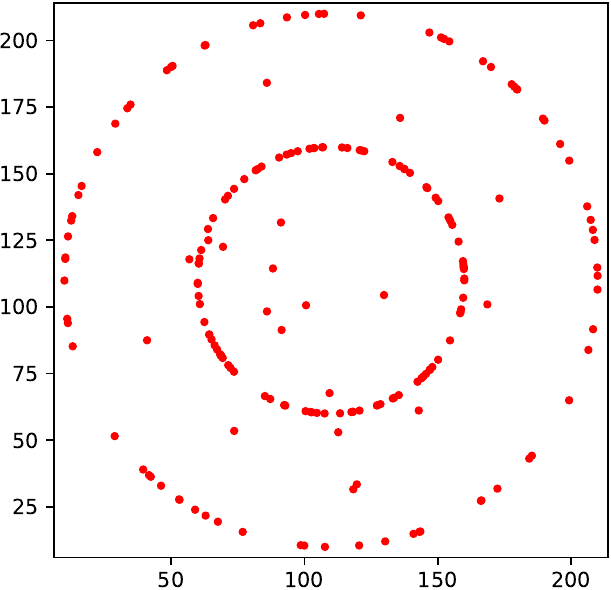}
    \caption{\label{fig:testexample4}}
  \end{subfigure}

  \caption{\label{fig:entropyandpmaxexampleh1} The four PCDs considered in our study (see \cref{ssec:pcds} for details).}
\end{figure}

\begin{enumerate}[leftmargin=*,itemsep=0ex,label=(\alph*)]
  \item {\bfseries Noisy circle:}
    The PCD consists of 100 points sampled from the uniform distribution on the unit circle centered at $(1, 1)$ and 50 points sampled from $[0, 2]^2$.
    We then scaled the PCD by $100$, 
    as shown in \cref{fig:testexample2}.
    The $1$-homology ($H_1$) persistence diagrams for VR, DTM, and BF are shown in \cref{fig:testexample2vr}, \cref{tab:noisycircleDtm}, and \cref{tab:noisycircleBf}.

    In the VR filtration PD, the $H_1$ feature of the circle shows only a small separation from the noise (see \cref{fig:testexample2vr}) due to the the presence of outliers.
    The DTM filtration PDs show more separation of the $H_1$ feature of circle from noise compared to VR for the values $\{0.1, 0.2, 0.3\}$ of its parameter $m$ (see \cref{tab:noisycircleDtm}).
    On the other hand, the BF PDs separate the $H_1$ feature of the circle from the noise for most values of its parameter $\alpha= \{0.2, 0.3, 0.4, 0.5, 0.6, 0.7, 0.8, 0.9\}$ (see \cref{tab:noisycircleBf}). 

  \item {\bfseries Noisy ellipse:}
    The PCD consists of 100 points sampled from a uniform distribution on the unit circle centered at $(1, 1)$ and 50 points sampled from $[0, 2]^2$.
    Then the $y$-coordinates of the points are scaled up by $20$ and the $x$-coordinate by $100$.
    Finally, the scaled PCD is rotated by $45\deg$, as shown in \cref{fig:testexample1}.
    The $H_1$-persistence diagrams of VR, DTM, and BF are shown in \cref{fig:testexample1vr}, \cref{tab:noisethincircleDtm}, and \cref{tab:noisethincircleBf}, respectively.

    In VR filtration persistence diagram (PD), the $H_1$ feature of the ellipse shows only a small separation from the rest of the points (see \cref{fig:testexample1vr}).
    Among all the values considered for the parameter $m \in [0, 1)$ for the DTM filtration, only $m = 0.2$ shows a clear separation of the $H_1$ feature of the ellipse from the rest of the points (see \cref{tab:noisethincircleDtm}).
      On the other hand, for the BF PDs, we observe the $H_1$ feature of the ellipse clearly separated from the rest of the points for a majority of choices $\{0.1, 0.5, 0.6, 0.7, 0.8, 0.9\}$ of its parameter $\alpha$ (see \cref{tab:noisethincircleBf}).

      The ellipse has more variation in one direction (its major axis) than the other.
      Since using Euclidean balls leads to symmetry bias, going from the noisy circle (\cref{fig:testexample2}) to the noisy ellipse (\cref{fig:testexample1}) decreases the separation of the $H_1$ feature from the noise in the case of DTM and VR.
      At the same time, in the case of BF we can still see clear separation of the $H_1$ feature and noise for many choices of its parameter.

    \item {\bfseries Circle with central cluster:} %
      The PCD consists of 75 points sampled from a uniform distribution on the unit circle centered at $(1, 1)$ and 100 points sampled from a cluster centered at $(1, 1)$ with a noise of $0.2$.
      The PCD is then scaled by $100$, as shown in \cref{fig:testexample3}.
      The $H_1$ PDs of VR, DTM, and BF are shown in \cref{fig:testexample3vr}, \cref{tab:noisecirclewithcentralclusterDtm}, and \cref{tab:noisecirclewithcentralclusterBf}, respectively.
      VR filtration PD was able to identify $H_1$ feature of the circle but shows small separation from the noise (see \cref{fig:testexample3vr}).
      DTM fails to identify the circle for any value of its parameter $m$ (see \cref{tab:noisecirclewithcentralclusterDtm}).
      On the other hand, we observe the $H_1$ feature of circle showing more separation from the noise in the BF PDs for the choices of $\alpha$ in $ \{0.2, 0.6, 0.7, 0.8, 0.9\}$ (see \cref{tab:noisecirclewithcentralclusterBf}). 

      This dataset is similar to the noisy circle of same radius, but random noise is replaced by a dense cluster in the middle.
      For DTM, balls in the middle grow faster than the balls centered at the points on the circle because it uses density as a measure to grow balls. 
      This behavior along with symmetry bias due to the use of balls significantly reduces the life of the hole.
      For BF too, the boxes belonging to the central cluster points grow faster than those belonging to points on the circle.
      But the boxes belonging to the cluster should cover points in the cluster before covering points in the circle based on the definition of BF and the non-symmetric nature of boxes.
      The VR filtration was able to identify the $H_1$ feature of the circle but also shows several other $H_1$ features corresponding to noise (see \cref{fig:testexample3vr}) whereas BF persistence diagram has significantly less $H_1$ features corresponding to noise.
      But more generally, VR is sensitive to outliers as shown in \cref{fig:testexample1vr}.   
      
    \item {\bfseries Concentric circles with noise:}
      The PCD is sampled from two concentric circles with 100 points from the inner circle and 75 points from the outer circle, along with 20 random points sampled from $[0, 2]^2$ as shown in \cref{fig:testexample4}.
      The $H_1$-Persistence diagrams of VR, DTM, and BF are shown in \cref{fig:testexample4vr}, \cref{tab:noisecirclewithcentralclusterwithsquareDtm}, and \cref{tab:noisecirclewithcentralclusterwithsquareBf}, respectively.

      VR was not able to identify the two circles (see \cref{fig:testexample4vr}) since it is sensitive to noise.
      For DTM, the balls corresponding to the inner circle grow relatively faster than those corresponding to the outer circle since inner circle has a higher density of points.
      The balls also induce symmetry bias.
      While the DTM filtrations show the inner circle as a $H_1$ feature with significant life, the outer circle feature is not clearly visible for all values of its parameter (see \cref{tab:noisecirclewithcentralclusterwithsquareDtm}).
      On the other hand, the BF PDs shows two significant $H_1$ features for $\alpha = \{0.5, 0.6, 0.7, 0.8, 0.9\}$.
      While the boxes corresponding to the inner circle grow faster than those corresponding to the outer circle, they grow non-symmetrically to better capture both circles (see \cref{tab:noisecirclewithcentralclusterwithsquareBf}).

\end{enumerate}

\subsection{Point Clouds with Noise} \label{ssec:pcdswnoise}
\vspace*{-0.02in}	

We further explore how the filtrations perform on the data sets under increasing levels of noise.
For each class, we create three versions of the PCD by adding Gaussian noise with a mean of $0$ and standard deviations of $0, 2, 5$.
The first version with standard deviation of $0$ is just the original PCD.
Each version of the PCD is represented by its respective $H_1$-persistence diagram and we use the bottleneck distance between the PDs to cluster the 12 data sets (3 versions each of the 4 point clouds).
Ideally, we should see four clusters with 3 points each corresponding to the respective noisy versions of each class.  
	
We first compute the bottleneck distance between each pair of the $H_1$ PDs of the 12 PCDs.
To visualize how well these distances separate the PCDs, we then use multidimensional scaling (MDS) \cite{CoCo2000} to obtain a 2D representation of the data set.
Finally, we apply $k$-means clustering with $k=4$ to the data set.
We repeat the steps for BF and DTM for the respective parameters $\alpha, m$ set as $\{0.1, 0.2, 0.3, 0.4, 0.5, 0.6, 0.7, 0.8, 0.9\}$.
As a measure of effectiveness of the clustering, we compute the Rand score \cite{HuAr1985} for each choice of $\alpha$ and $m$ (see \cref{fig:clusterexamplebfdtm}).
The Rand score between a pair of clusterings of a given data set is defined as the fraction (number of agreeing pairs)/(number of pairs).
A score of $1$ indicates that all pairs are agreeable, i.e., they are either in the same cluster in both clusterings or are in different clusters in both clusterings
(a score of $0$ means no pair is agreeable).
As shown in \cref{fig:clusterexample}, the Rand score for BF is higher that or equal to that for DTM for all choices of parameters.
For $\alpha=0.6$, BF separates all $4$ classes of data into unique clusters as shown in \cref{fig:clusterexamplebf}.
On the other hand, DTM was not able to fully separate the classes of data for any choice of $m$
(clustering for $m=0.6$ is shown in \cref{fig:clusterexampledtm} as an example).
To avoid the dependence on the choice of $\alpha$ or $m$, we repeat the analysis using the pairwise distance set as the \emph{sum} of distances for all choices of the parameters (see \cref{fig:clusterconcatexamplebf} and \cref{fig:clusterconcatexampledtm}).
BF is able to separate all classes into unique clusters except for one PCD of class 3 (circle with central clusters) whereas DTM was not able to separate classes 2 and 3 (noisy ellipse and circle with central cluster).   
%\vspace*{-0.03in}	

\begin{figure}[htp!] 
	\centering
	\begin{subfigure}[t]{2.2in}
		\centering
		\includegraphics[width=2.2in]{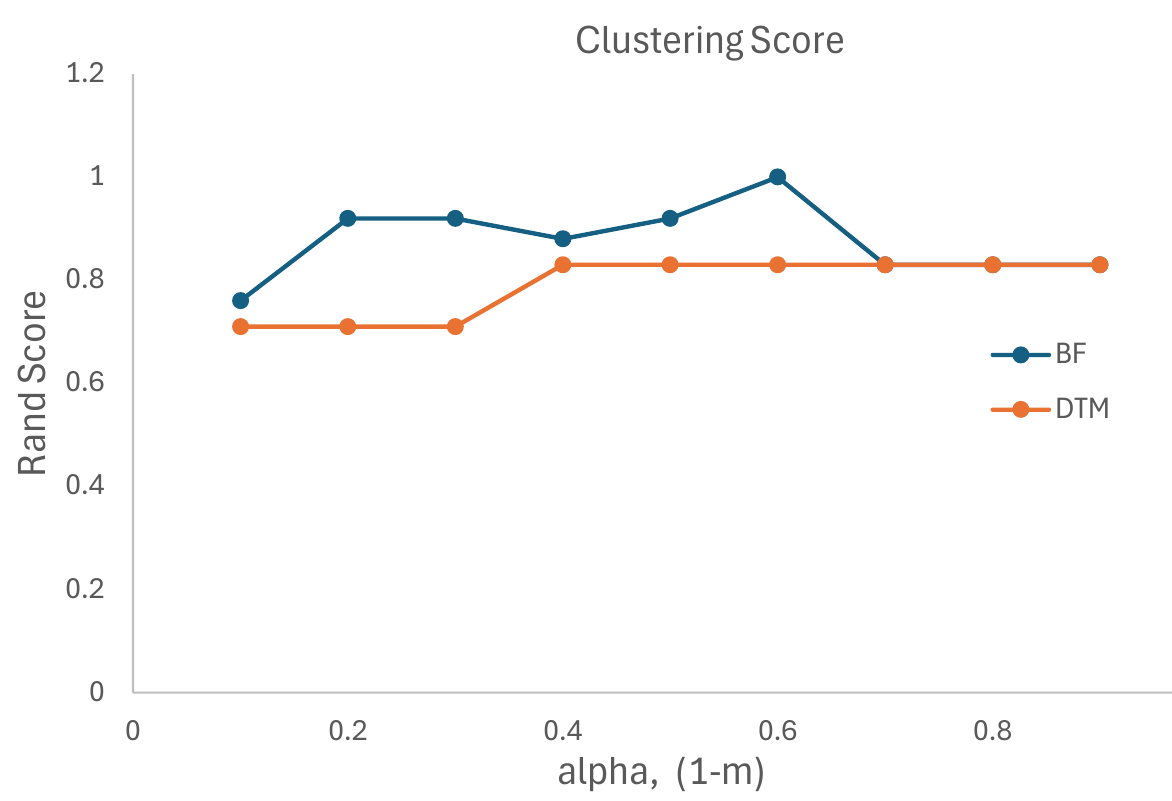}
		\caption{\label{fig:clusterexample}}
	\end{subfigure}
	\begin{subfigure}[t]{2.1in}
		\centering
		\includegraphics[width=2.1in]{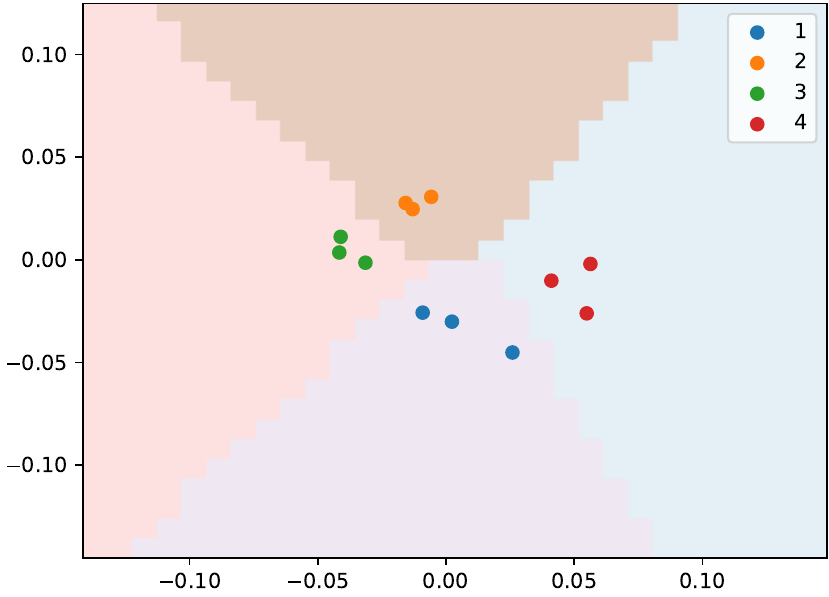}
		\caption{\label{fig:clusterexamplebf}}
	\end{subfigure}
	\begin{subfigure}[t]{2.1in}
		\centering
		\includegraphics[width=2.1in]{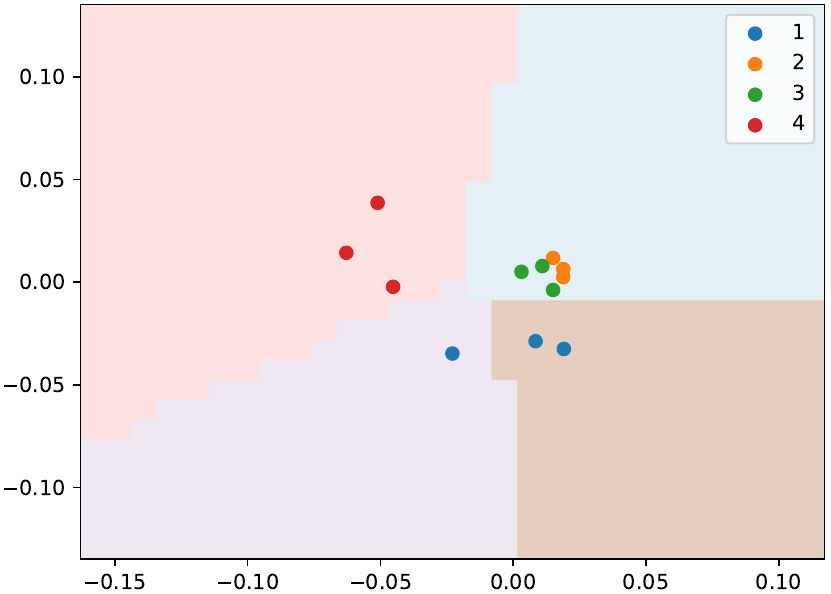}
		\caption{\label{fig:clusterexampledtm}}
	\end{subfigure}
	\begin{subfigure}[t]{2.1in}
		\centering
		\includegraphics[width=2.1in]{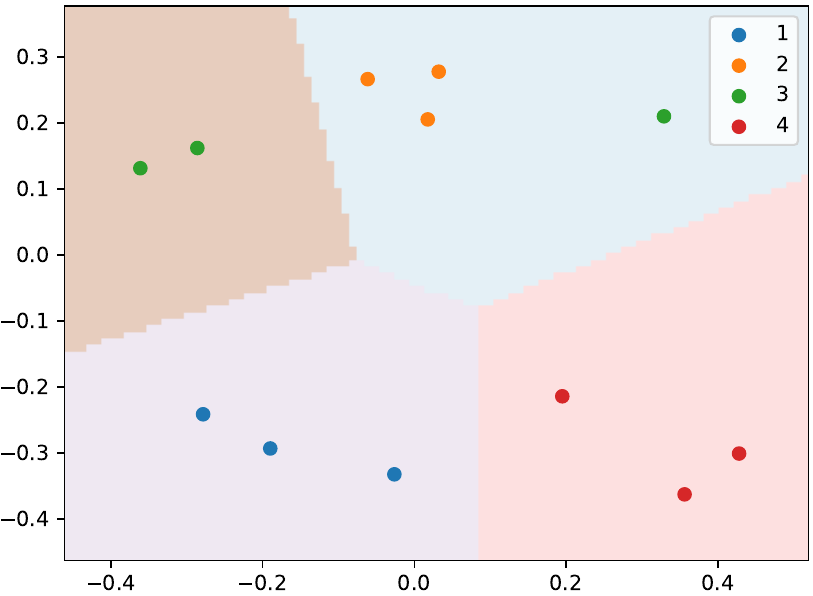}
		\caption{\label{fig:clusterconcatexamplebf}}
	\end{subfigure}
	\begin{subfigure}[t]{2.1in}
		\centering
		\includegraphics[width=2.1in]{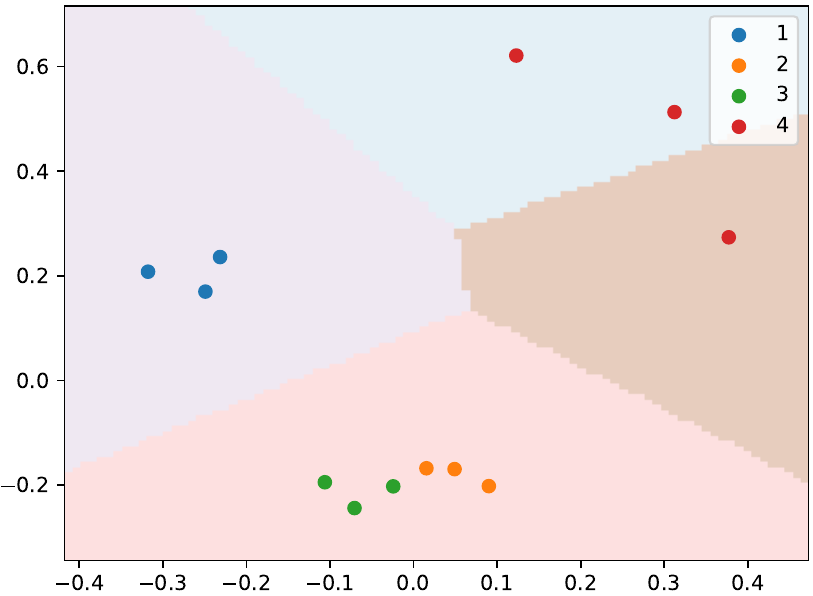}
		\caption{\label{fig:clusterconcatexampledtm}}
	\end{subfigure}
	
	\caption{\label{fig:clusterexamplebfdtm}
          \cref{fig:clusterexample} shows Rand score for choices of $\alpha$ for BF and $1-m$ for DTM.
          \cref{fig:clusterexamplebf} and \cref{fig:clusterexampledtm} show the MDS representation of BF and DTM for $\alpha=0.6$, $m=0.6$, with legends $1, 2, 3, 4$ representing noisy circle, noisy ellipse, circle with central cluster, and concentric circles with noise.
          \cref{fig:clusterconcatexamplebf} and \cref{fig:clusterconcatexampledtm} show the MDS representations using total distance from all parameters for BF and DTM.}
\end{figure}

 %\clearpage
 \section{Box Mapper: A Mapper Algorithm Using Box Filtration} \label{sec:mapper}

The mapper provides a compact summarization of a PCD $X$.
Let $f : X \to Z$ be a continuous \emph{filter function} and $\clU=\{U_{\alpha}\}$ a finite cover of $Z$.
Then the connected components of $f^{-1}(U_{\alpha}) = \bigcup\nolimits_{i = 1}^{k_{\alpha}} V_{\alpha, i}$ form a cover $f^{*}(\clU)$ of $X$.
The \emph{mapper} of $X$ is then defined \cite{SiMeCa2007} as its nerve: 
$M(\clU, f) := \Nrv(f^{*}(\clU))$.
By default, $Z$ is taken as the range space containing $X$, i.e., the bounded $\dmn$-dimensional box containing $X$.
And the cover $\clU$ is usually comprised of hypercubes that overlap.
The length of the hypercubes (resolution) and their overlap percentage (gain) as well as the the clustering algorithm to find the connected components of the pullback cover elements are chosen by the user.
While the framework of persistent homology has been used to derive theoretical stability results for mapper constructions \cite{CaOu2017,DeMeWa2016}, implementations of such constructions are not known.
In practice, users work with a single mapper constructed for specific choices of parameters \cite{CaMiOu2018}.

We note that any simplicial complex in a box filtration of $X$ built using a pixel cover automatically gives a mapper of $X$.
Our framework naturally avoids any pixels $\sigma$ that do not contain points from $X$ ($\theta(\sigma)=0$) from further consideration, thus providing computational savings.
The same observation could be made when using a point cover as well, under the modification that the union of boxes may not form a cover of $Z$ at all stages of growth.
At the same time, all relevant portions of $Z$, i.e., regions that have points from $X$ within them, are always covered by the union of boxes.
Furthermore, our stability results for box filtration (\cref{thm:pointstabilitytheorem,thm:pointpixelstabilitytheorem}) also imply a stability for the associated mapper constructions.

With quick applicability on large PCDs in mind, we present a mapper algorithm that uses a single growth step per box.
We start by applying $k$-means clustering \cite{HaWo1979} to $X$.
We then find the minimal box enclosing each cluster.
Using these $k$ boxes as the pivot boxes, we apply the pixel cover box filtration framework for a single value of $\pi$ and output the nerve of the enlarged boxes as the box mapper of $X$.
Unlike the conventional mapper, we let these optimizations determine the sizes of individual boxes as well as their overlaps.
\cref{fig:shapeFilter} shows two instances of the box mapper constructed on PCDs of \emph{elephant} and \emph{flamingo} with 42321 and 26907 points, respectively \cite{SuPo2004}.
We used $k=80$ for elephant and $k=25$ for flamingo, with $\pi=3$ and $\alpha=0.1$ for both cases.
\begin{figure}[hb!]
  \centering
  \includegraphics[scale=0.57]{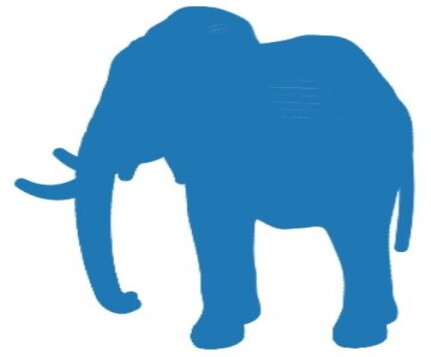} \hspace*{0.6in}
  \includegraphics[height=1.83in, width=1.83in]{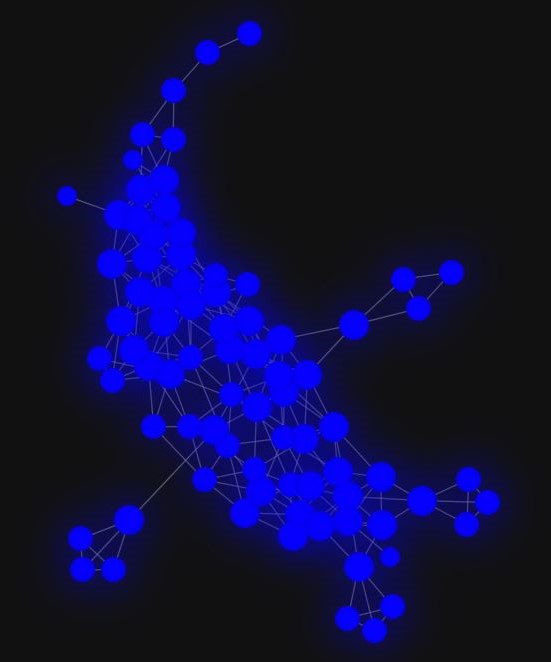} \\
  \quad\quad\quad~ \includegraphics[scale=0.50,angle=90]{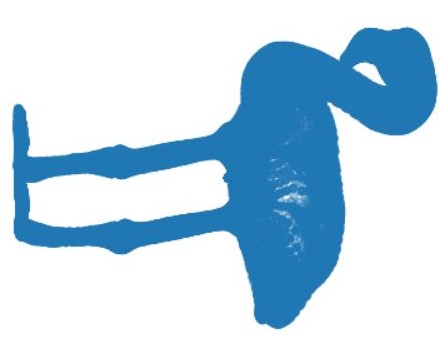} \hspace*{0.63in}
  \includegraphics[height=1.83in, width=1.83in]{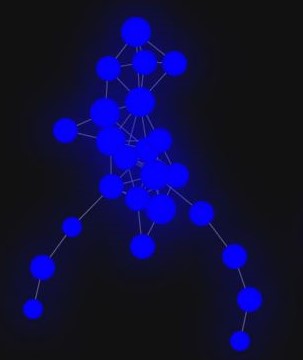}
  \caption{\label{fig:shapeFilter}
    Point cloud (left) and box mapper (right) for the elephant and flamingo PCDs.}
\end{figure}

 %\clearpage
 \section{Discussion} \label{Sec:disc}

The main bottleneck for computing box filtrations is the solution of linear programs that determine the extent of growth of each box at each growth step of the filtration.
While we currently model this growth step as a linear program, more efficient approaches could be developed for the same.
\add{Since the cost function is additive in nature, we have shown stability with respect to Gromov-Hausdorff distance.
  We would like to explore stability with respect to Gromov-Wasserstein distance in future work.}

While pixel covers offer computational efficiency over point covers, using too large of a pixel size could result in smaller scale features being missed.
It could be useful to identify guidelines for choosing the pixel size based on properties of the PCD.
While we presented the box filtration for PCDs, it can be naturally adapted to build sublevel set filtrations.
It would be interesting to consider extending the box filtration approach to multiparameter persistence.

 \clearpage
 %\bibliographystyle{plain}
 %\bibliography{homology,flatnorm,Tool_Path_AM,DimRedn}
 \input{main_Box_Filtration.bbltex}
 
 %\clearpage
 \renewcommand\appendixpagename{Appendix}
\appendixpage

\begin{appendices}

\section*{Persistence Diagrams for PCDs} \label{appsec:PDs}

\begin{figure}[htp!] 
  \centering
  \begin{subfigure}[t]{1.4in}
  \centering
  \includegraphics[width=1.4in]{noisyThinCircleResults/vrPersistence_crop}
  \caption{\label{fig:testexample1vr}}
  \end{subfigure}
  \begin{subfigure}[t]{1.4in}
  \centering
  \includegraphics[width=1.4in]{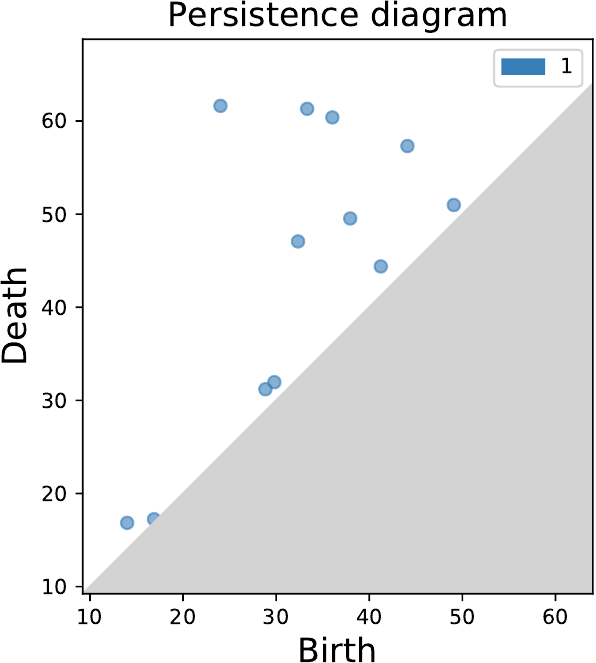}
  \caption{\label{fig:testexample2vr}}
  \end{subfigure}
  \begin{subfigure}[t]{1.4in}
  \centering
  \includegraphics[width=1.4in]{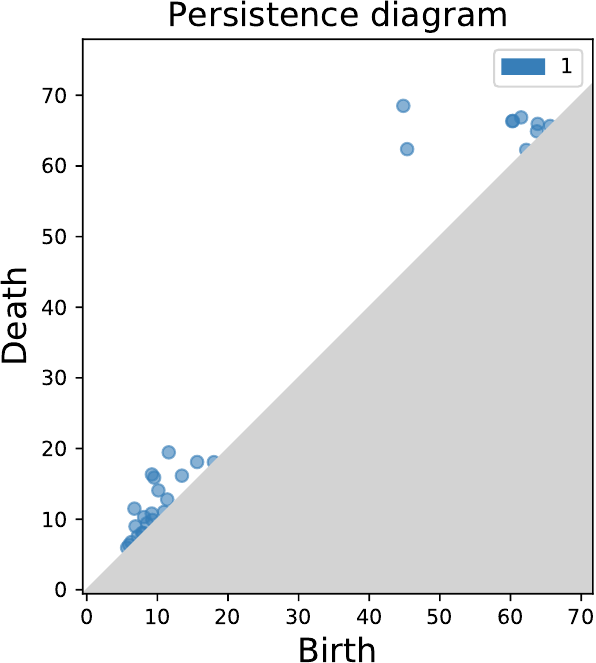}
  \caption{\label{fig:testexample3vr}}
  \end{subfigure}
  \begin{subfigure}[t]{1.4in}
  \centering
  \includegraphics[width=1.4in]{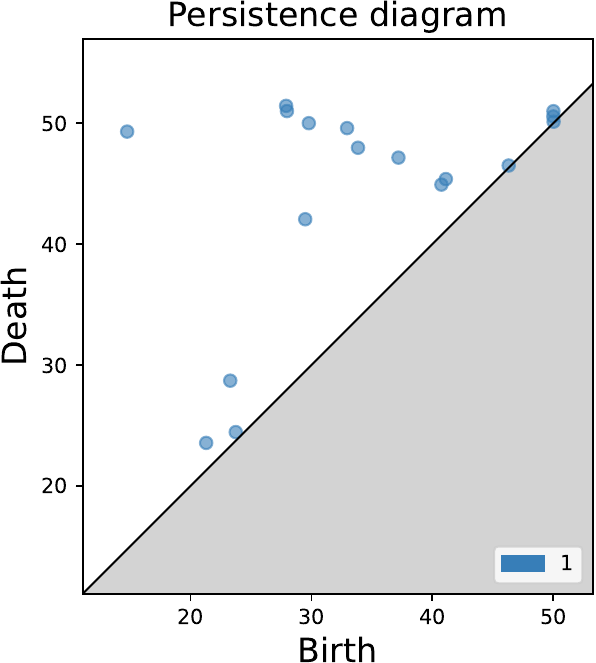}
  \caption{\label{fig:testexample4vr}}
  \end{subfigure}
  \caption{\label{fig:testexamplevr}
  Vietoris-Rips PDs for the four PCDs in \cref{fig:entropyandpmaxexampleh1}.
  }
\end{figure}

\begin{figure}[htp!]
  \begin{center}
  \begin{tabular}{m{4.5cm} m{4.5cm} m{4.5cm} m{4.5cm}}
  \includegraphics[width=1.8in]{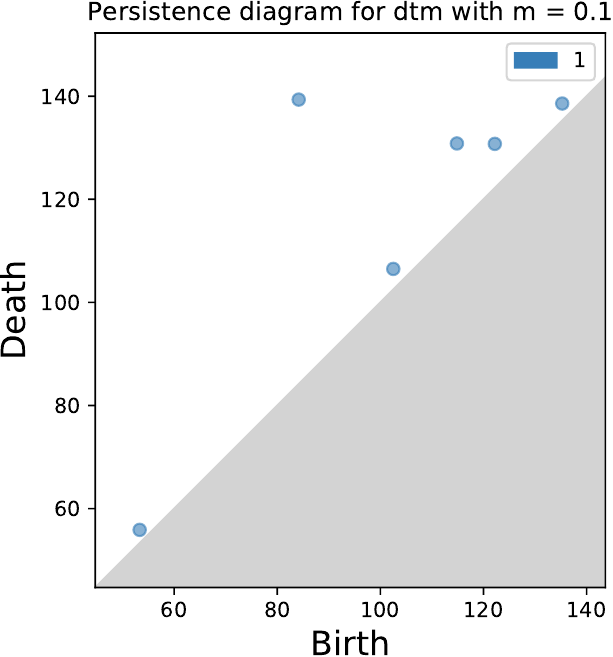}&
  \includegraphics[width=1.8in]{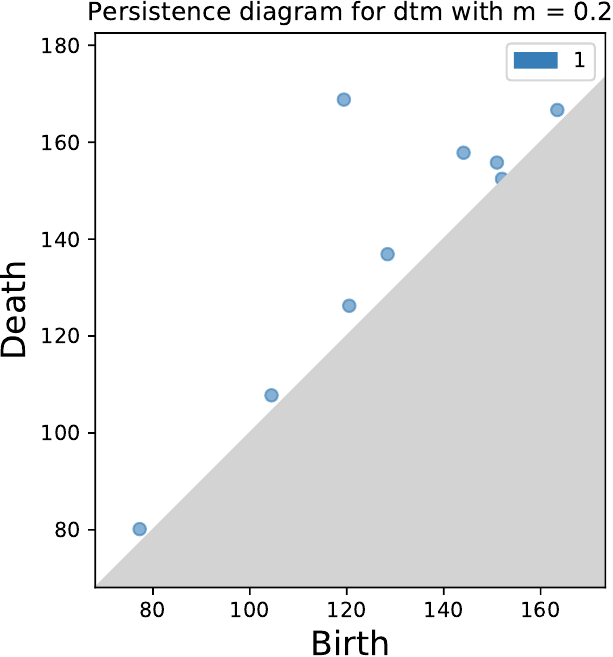}&
  \includegraphics[width=1.8in]{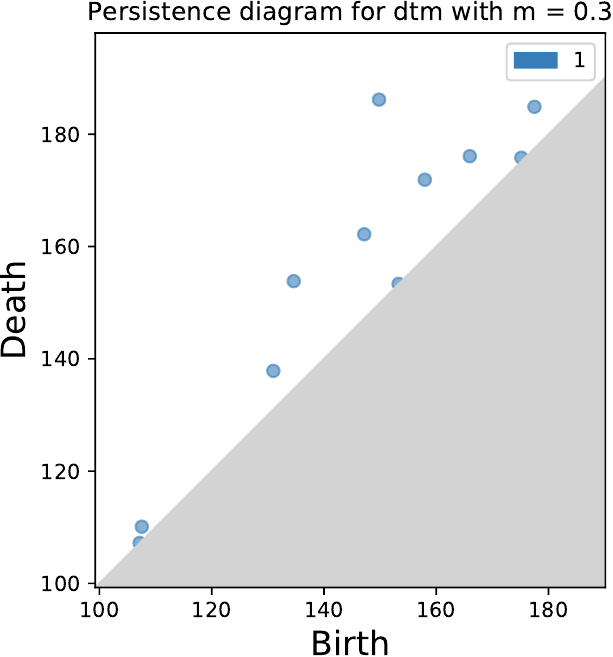}\\
  \includegraphics[width=1.8in]{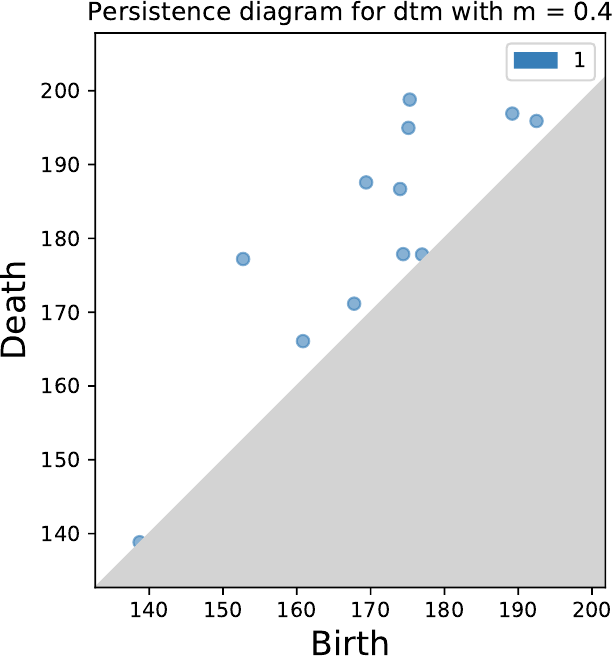}&
  \includegraphics[width=1.8in]{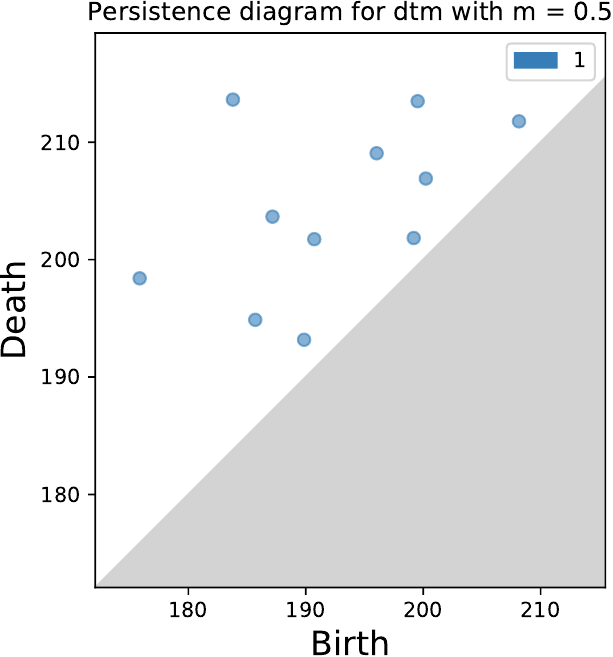}&
  \includegraphics[width=1.8in]{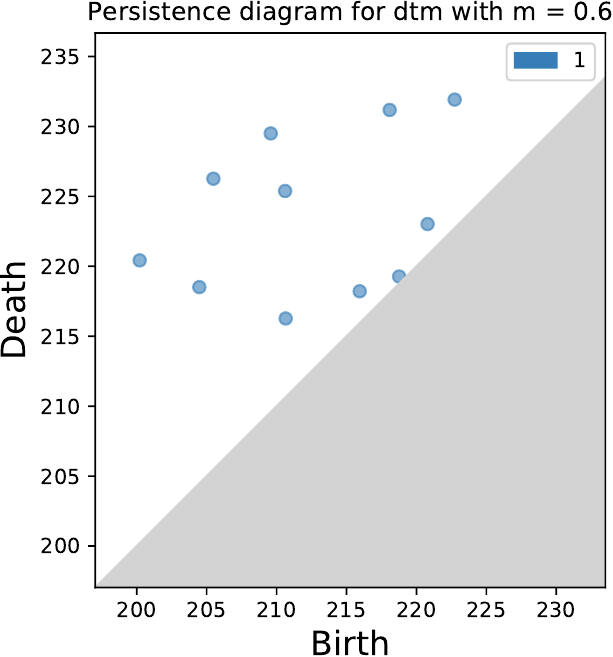}\\
  \includegraphics[width=1.8in]{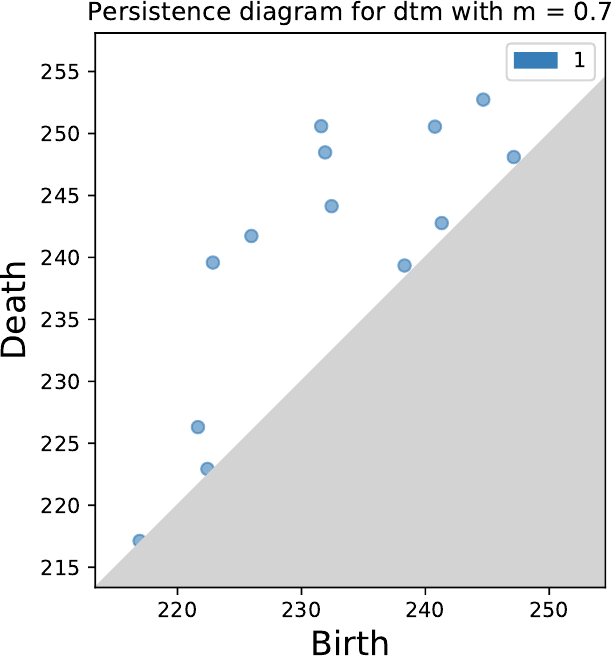}&
  \includegraphics[width=1.8in]{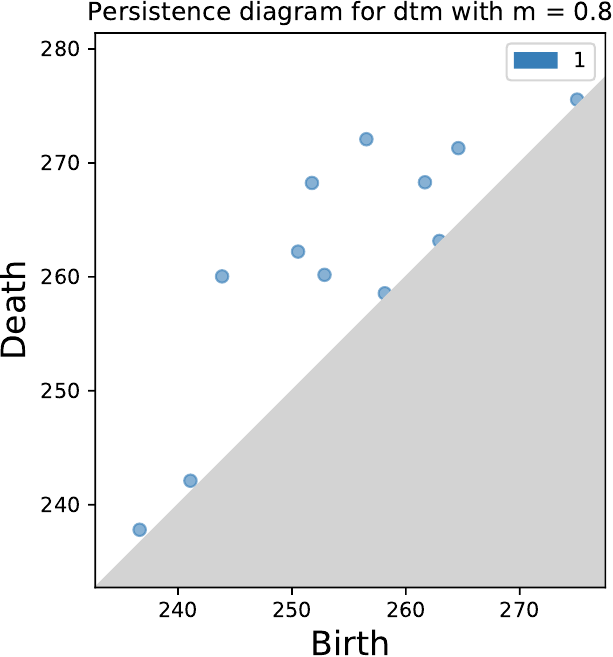}&
  \includegraphics[width=1.8in]{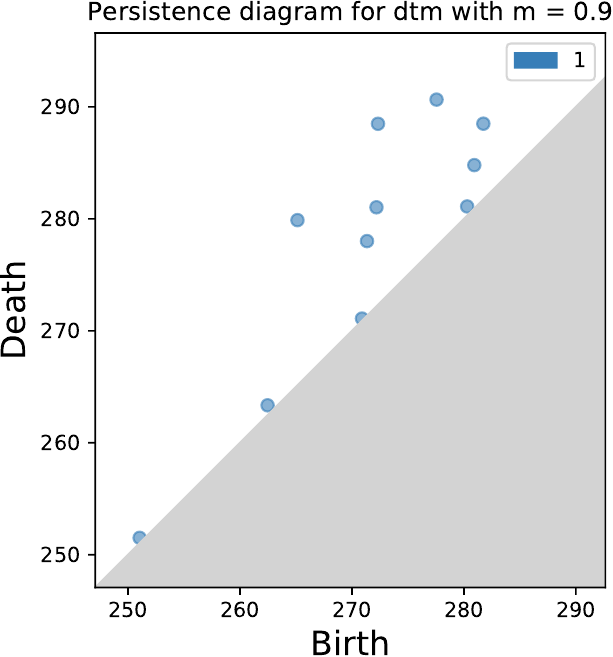}
  \end{tabular}
  \caption{\label{tab:noisycircleDtm} PDs of DTM applied to the noisy circle PCD for $m=0.1$--$0.9$.}
  \end{center}
\end{figure}

\begin{figure}[htp!]
  \begin{center}
  \begin{tabular}{m{4.5cm} m{4.5cm} m{4.5cm} m{4.5cm}}
  \includegraphics[width=1.8in]{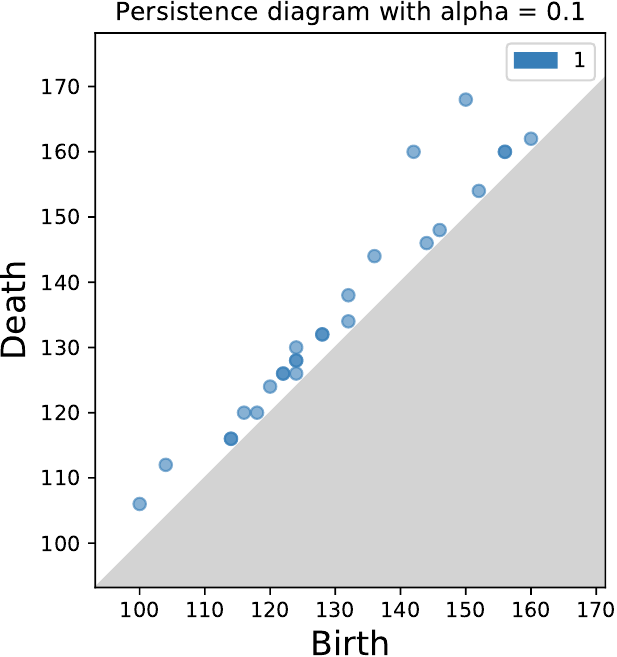}&
  \includegraphics[width=1.8in]{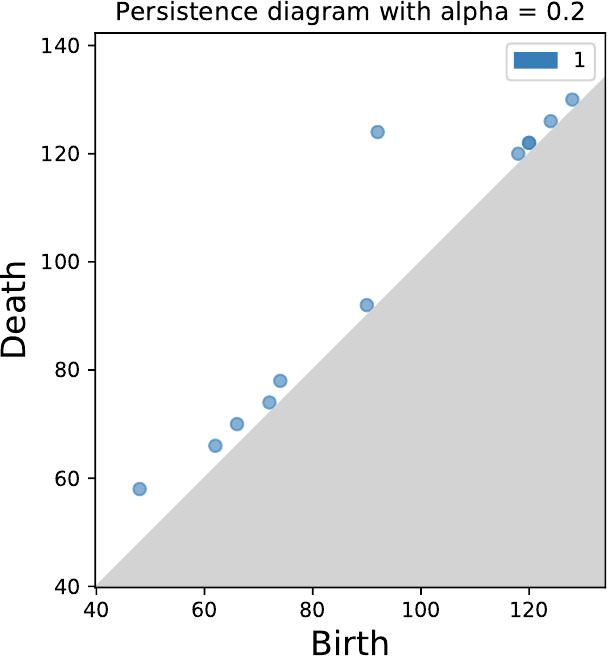}&
  \includegraphics[width=1.8in]{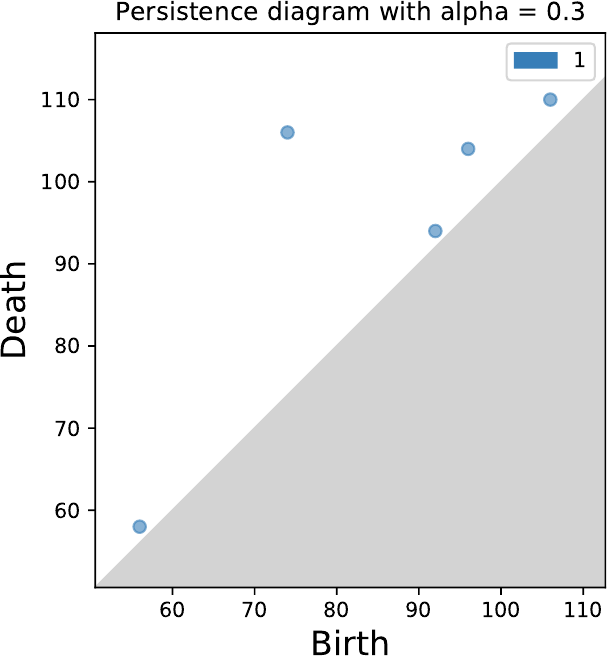}\\
  \includegraphics[width=1.8in]{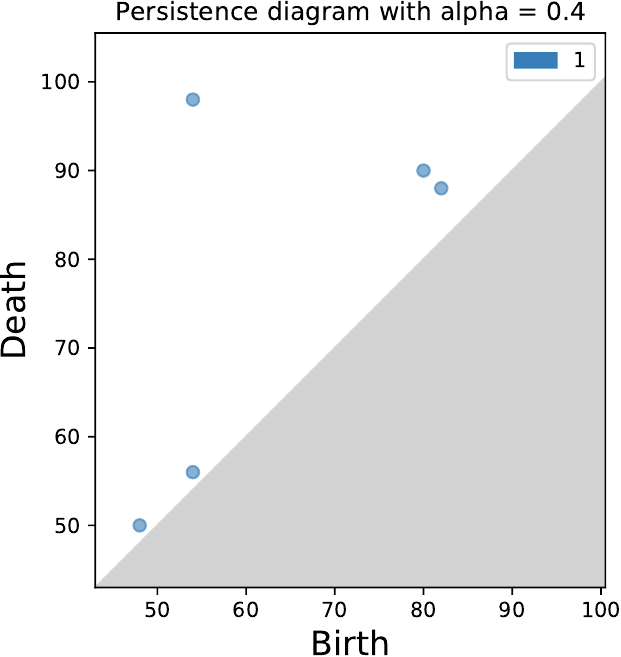}&
  \includegraphics[width=1.8in]{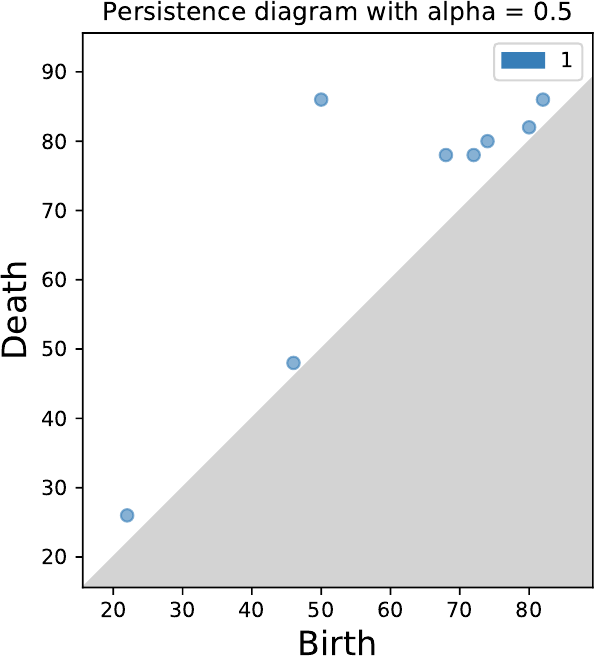}&
  \includegraphics[width=1.8in]{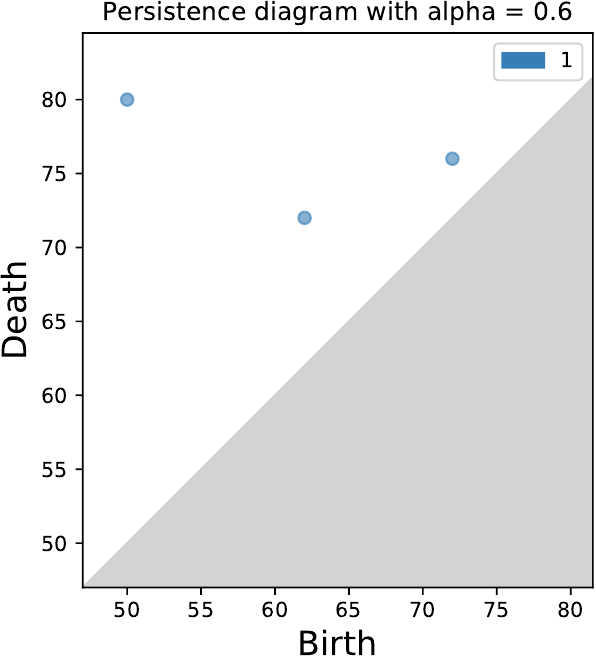}\\
  \includegraphics[width=1.8in]{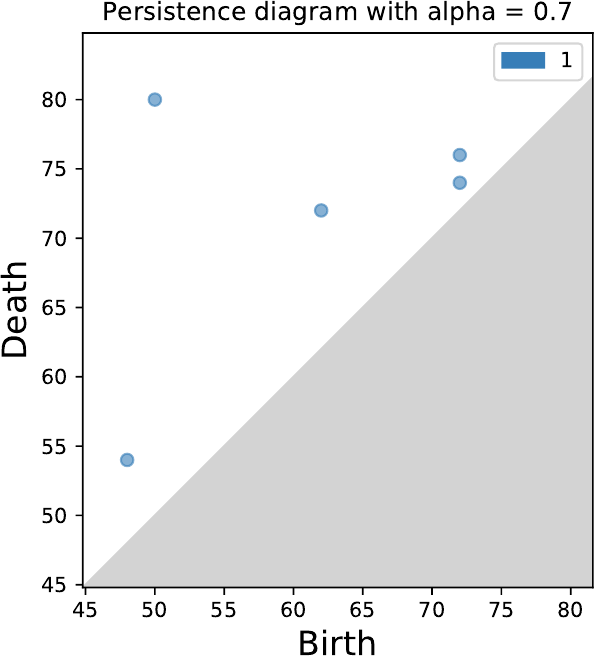}&
  \includegraphics[width=1.8in]{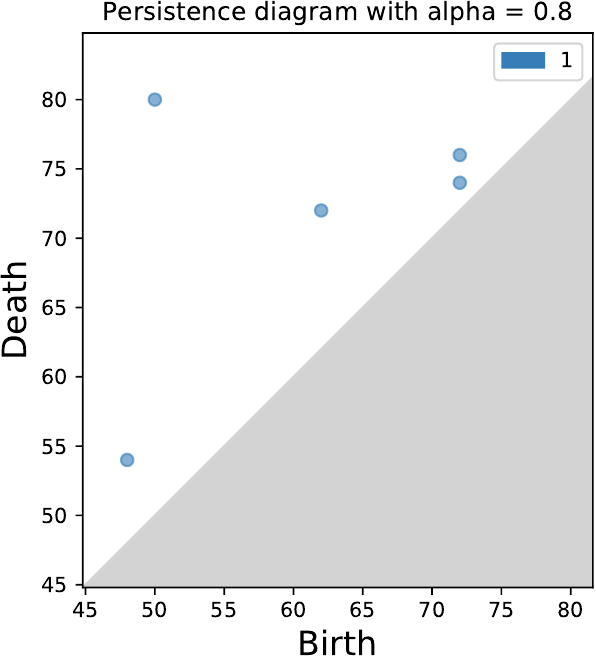}&
  \includegraphics[width=1.8in]{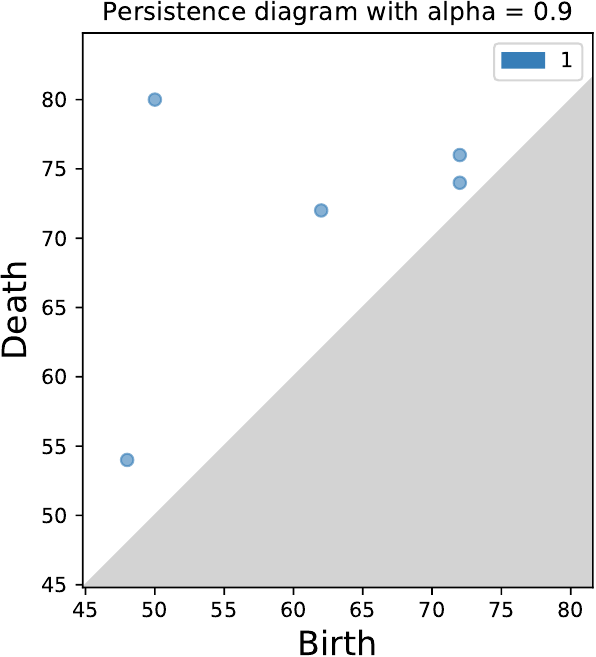}
  \end{tabular}
  \caption{\label{tab:noisycircleBf} PDs of BF applied to noisy circle PCD for $\alpha=0.1$--$0.9$.}
  \end{center}
\end{figure}

\begin{figure}[htp!]
  \begin{center}
  \begin{tabular}{m{4.5cm} m{4.5cm} m{4.5cm} m{4.5cm}}
  \includegraphics[width=1.8in]{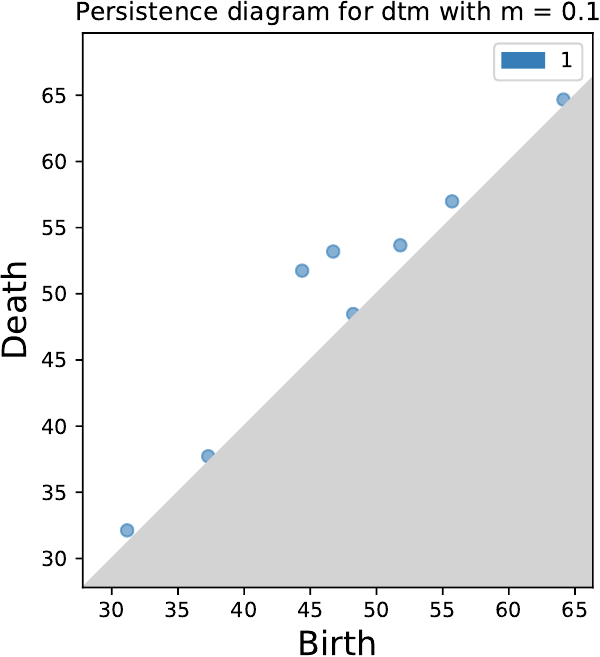}&
  \includegraphics[width=1.8in]{{noisyThinCircleResults/dtmPersistence_with_m_0.2_crop}.pdf}&
  \includegraphics[width=1.8in]{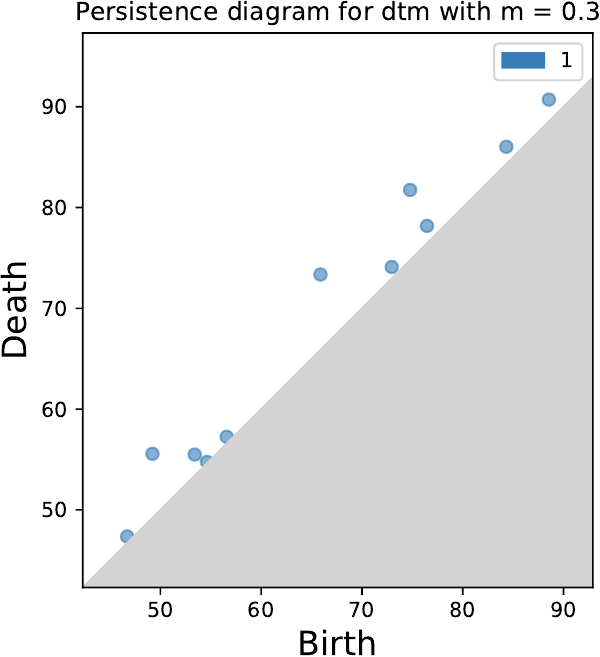}\\
  \includegraphics[width=1.8in]{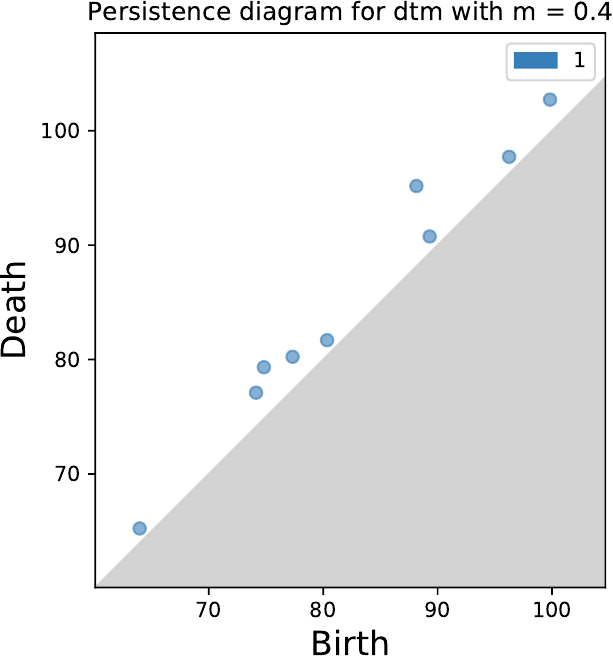}&
  \includegraphics[width=1.8in]{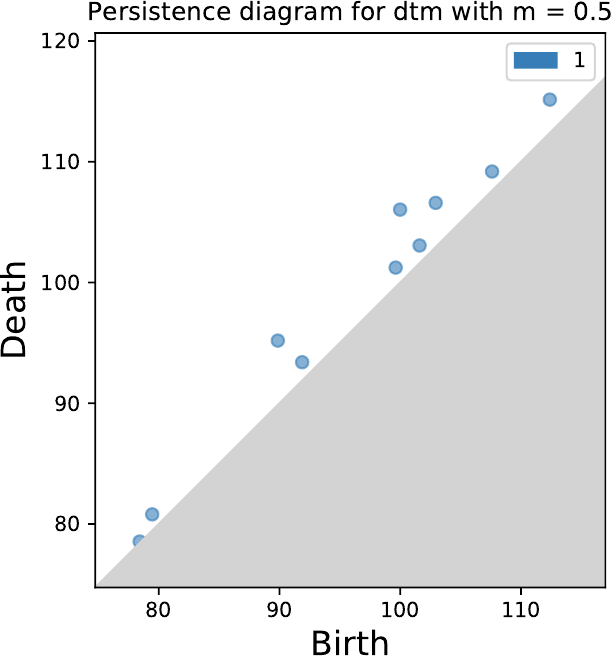}&
  \includegraphics[width=1.8in]{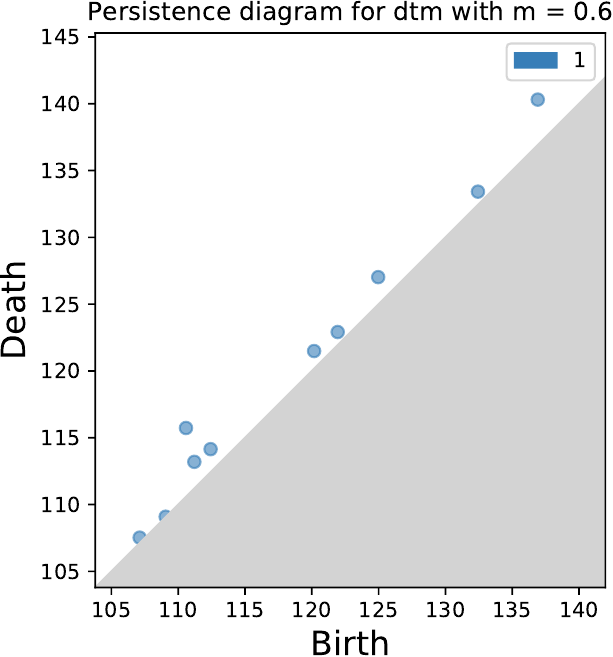}\\
  \includegraphics[width=1.8in]{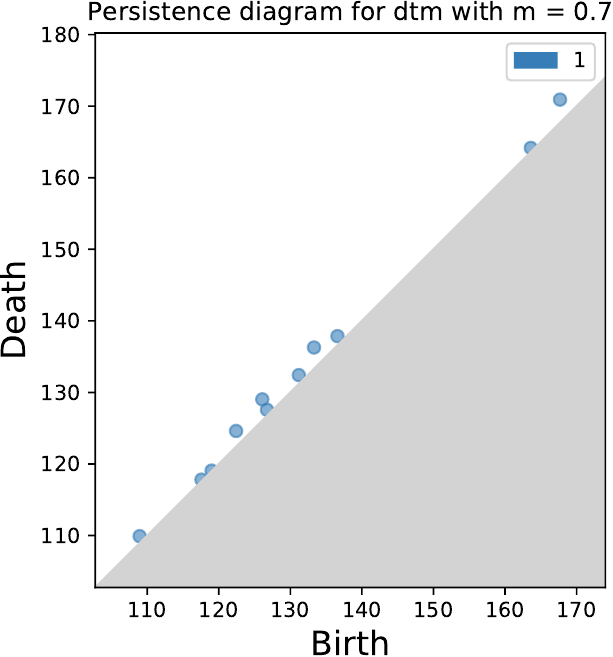}&
  \includegraphics[width=1.8in]{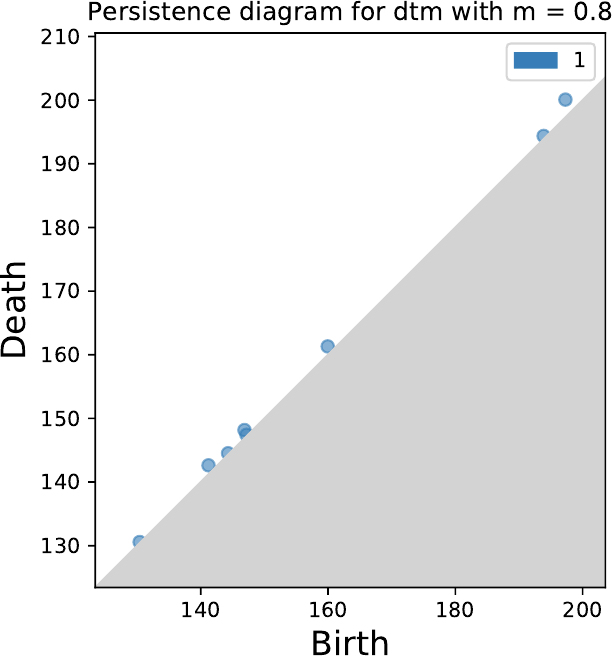}&
  \includegraphics[width=1.8in]{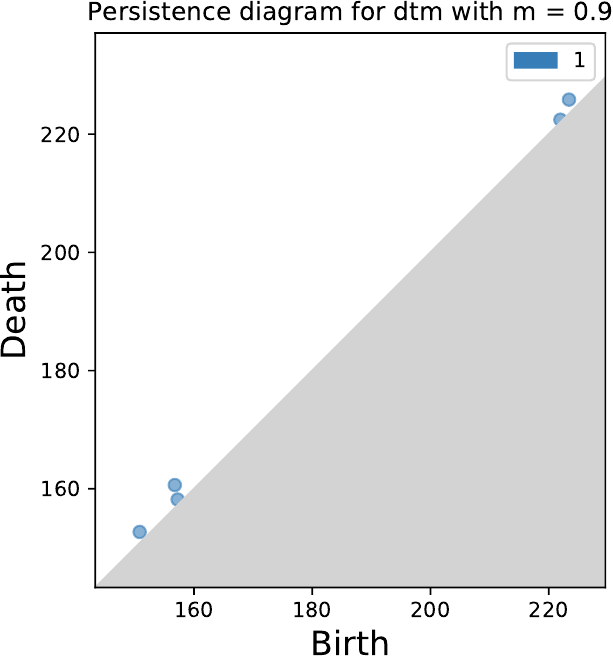}
  \end{tabular}
  \caption{\label{tab:noisethincircleDtm} PDs of DTM applied to the noisy ellipse PCD for $m=0.1$--$0.9$.
  }
  \end{center}
\end{figure}

\begin{figure}[htp!]
  \begin{center}
  \begin{tabular}{m{4.5cm} m{4.5cm} m{4.5cm} m{4.5cm}}
  \includegraphics[width=1.8in]{{noisyThinCircleResults/bfPersistencewith_alpha_0.1_crop}.pdf}&
  \includegraphics[width=1.8in]{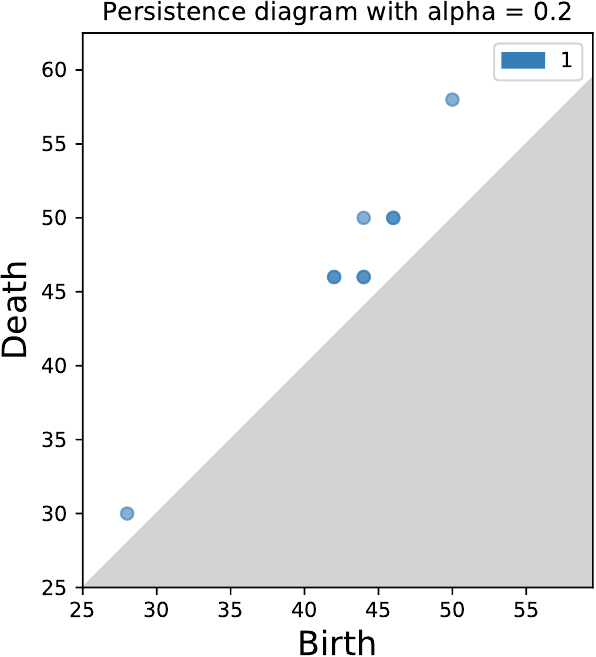}&
  \includegraphics[width=1.8in]{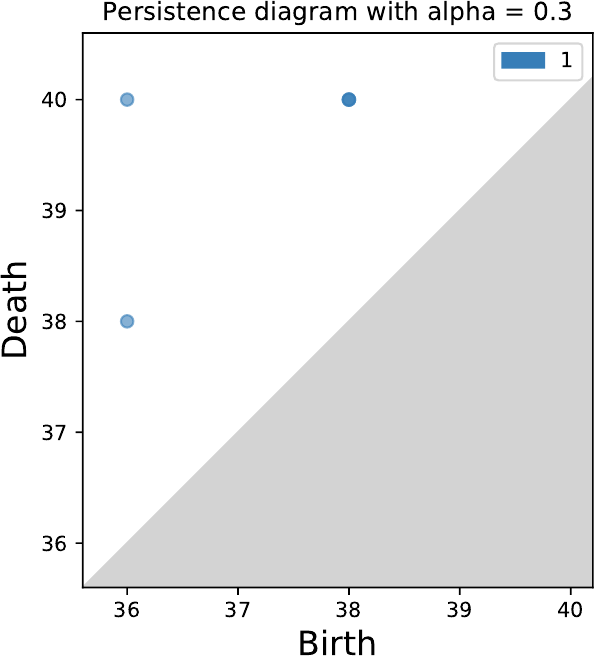}\\
  \includegraphics[width=1.8in]{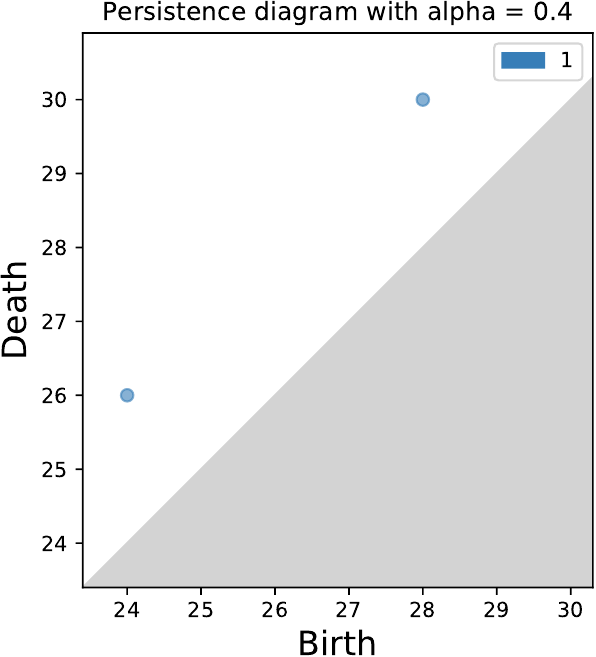}&
  \includegraphics[width=1.8in]{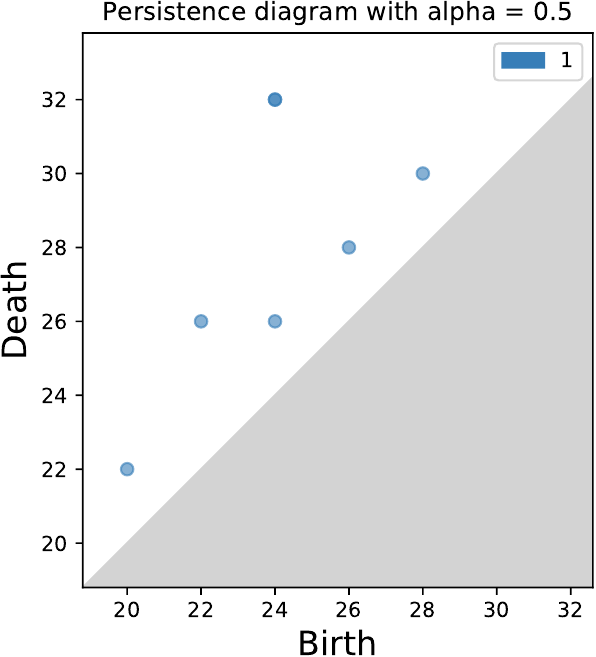}&
  \includegraphics[width=1.8in]{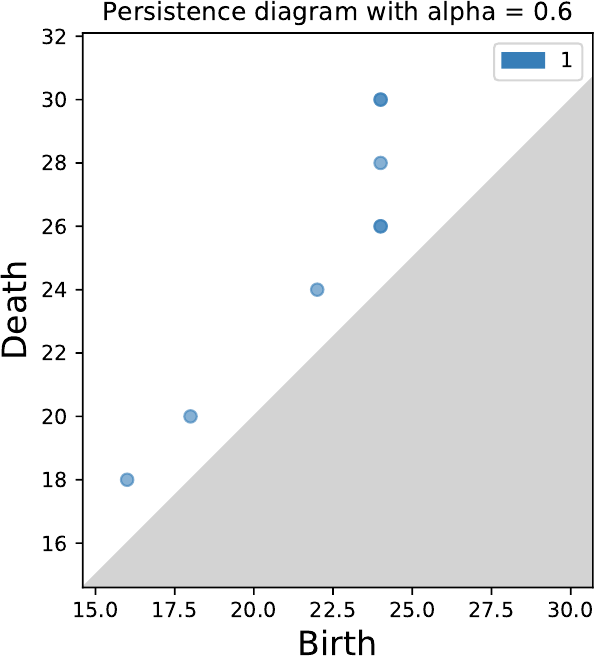}\\
  \includegraphics[width=1.8in]{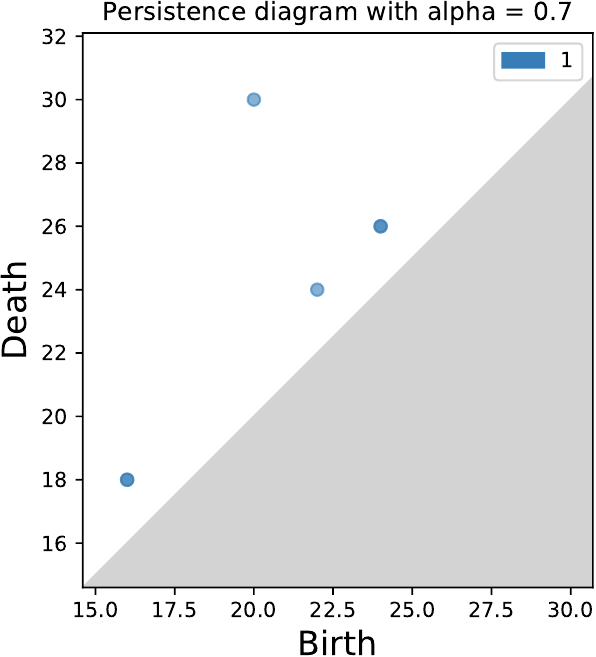}&
  \includegraphics[width=1.8in]{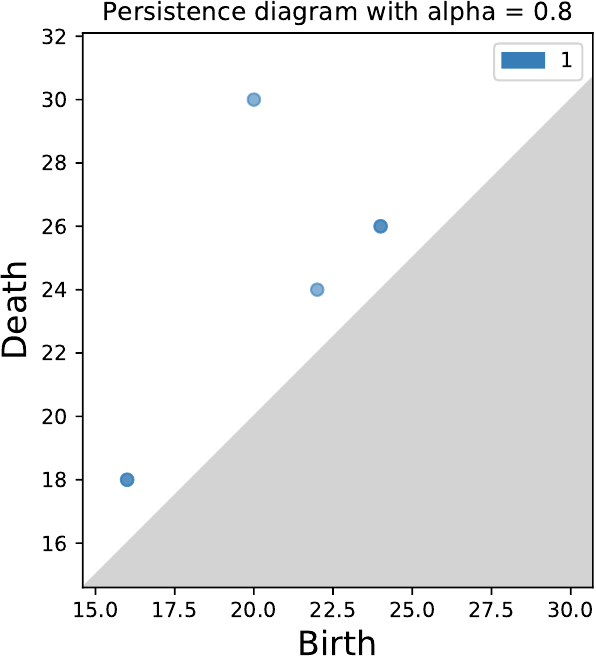}&
  \includegraphics[width=1.8in]{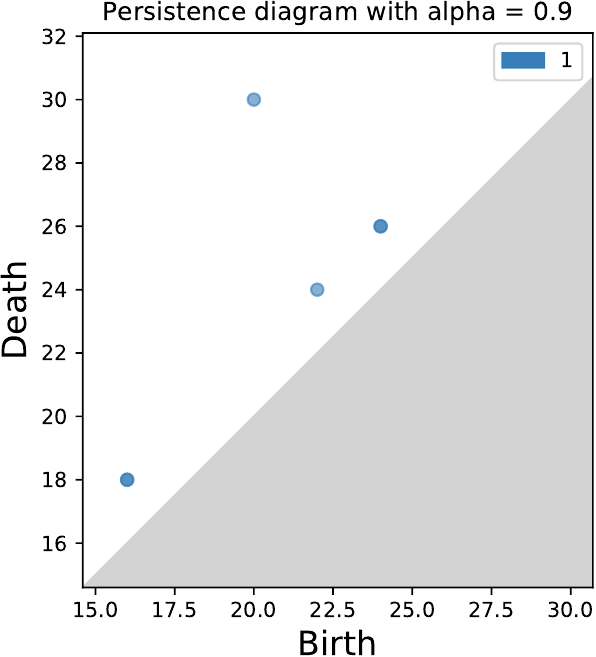}
  \end{tabular}
  \caption{\label{tab:noisethincircleBf} PDs of BF applied to noisy ellipse PCD for $\alpha=0.1$--$0.9$.}
  \end{center}
\end{figure}

\begin{figure}[htp!]
  \begin{center}
  \begin{tabular}{m{4.5cm} m{4.5cm} m{4.5cm} m{4.5cm}}
  \includegraphics[width=1.8in]{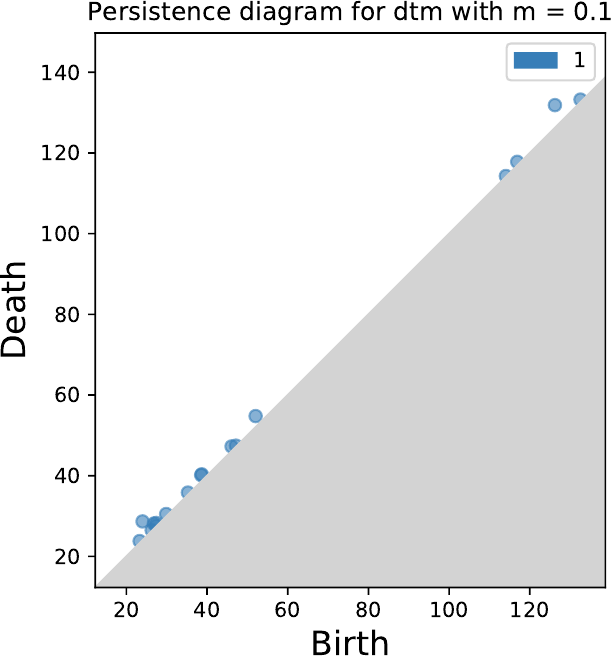}&
  \includegraphics[width=1.8in]{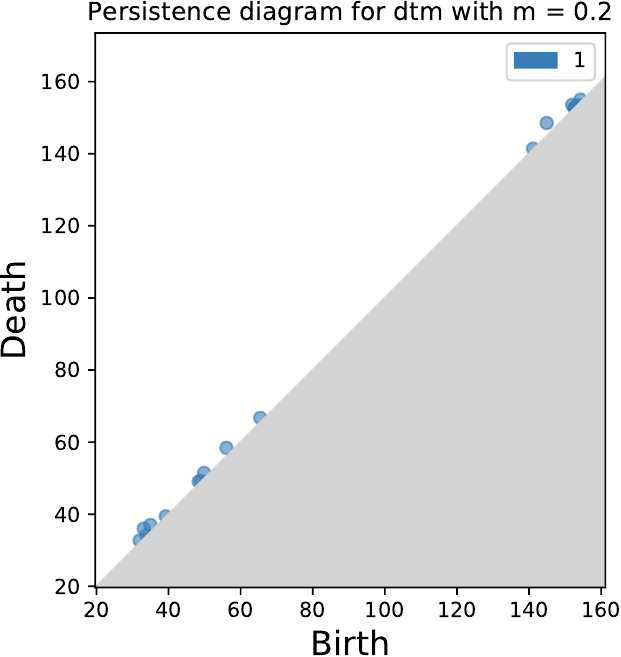}&
  \includegraphics[width=1.8in]{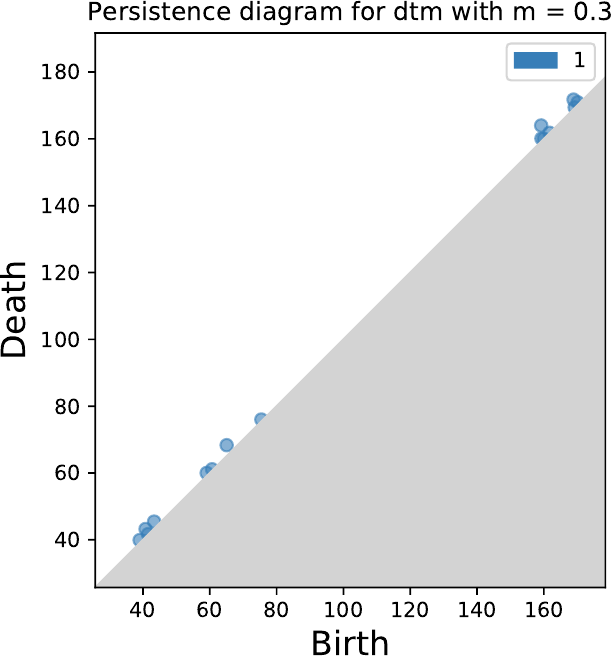}\\
  \includegraphics[width=1.8in]{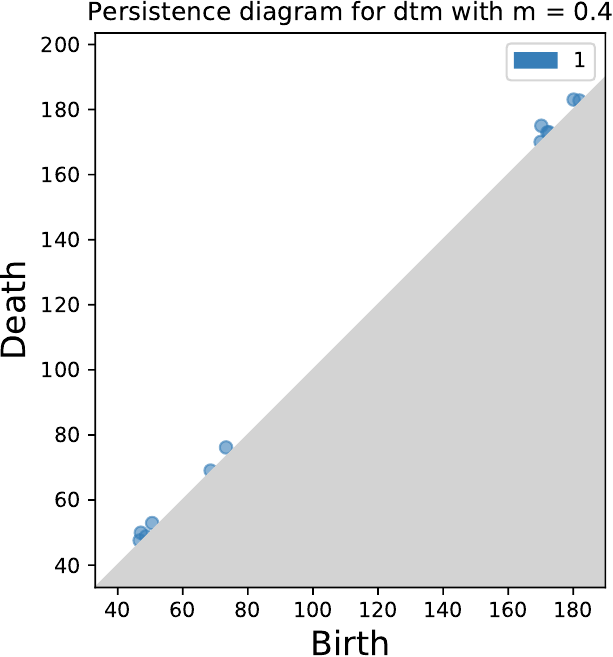}&
  \includegraphics[width=1.8in]{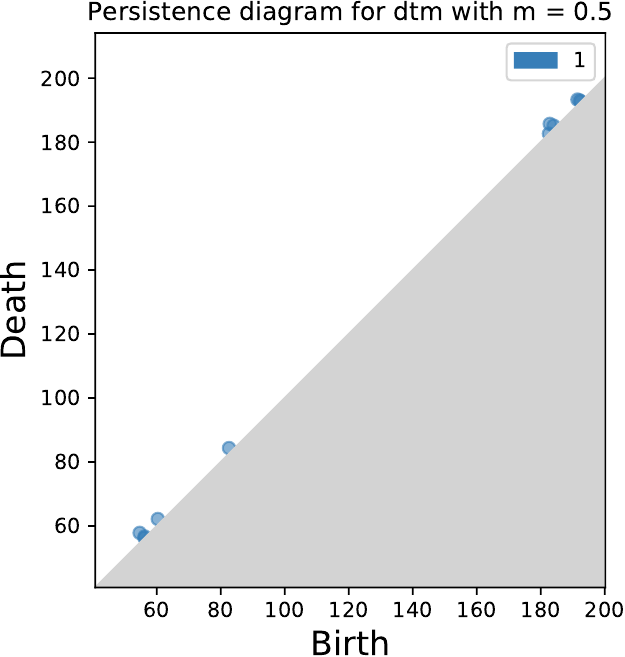}&
  \includegraphics[width=1.8in]{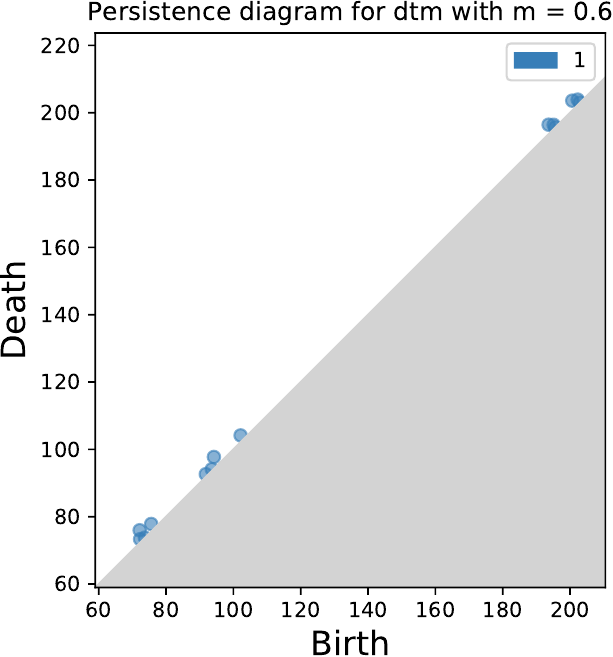}\\
  \includegraphics[width=1.8in]{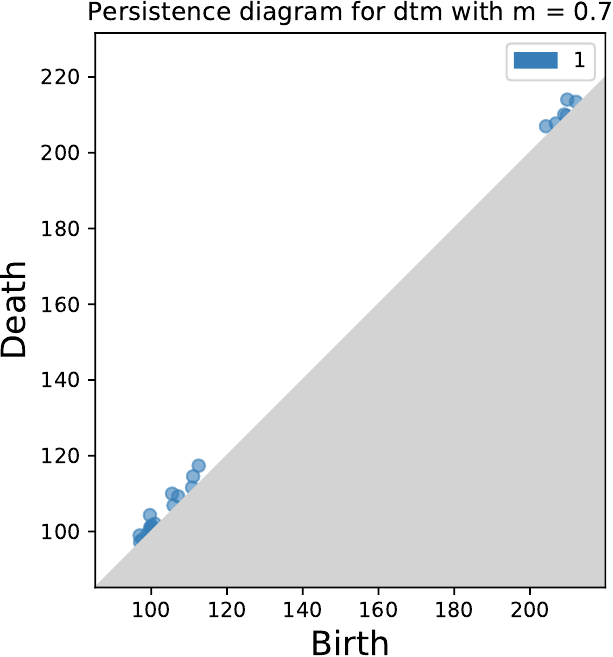}&
  \includegraphics[width=1.8in]{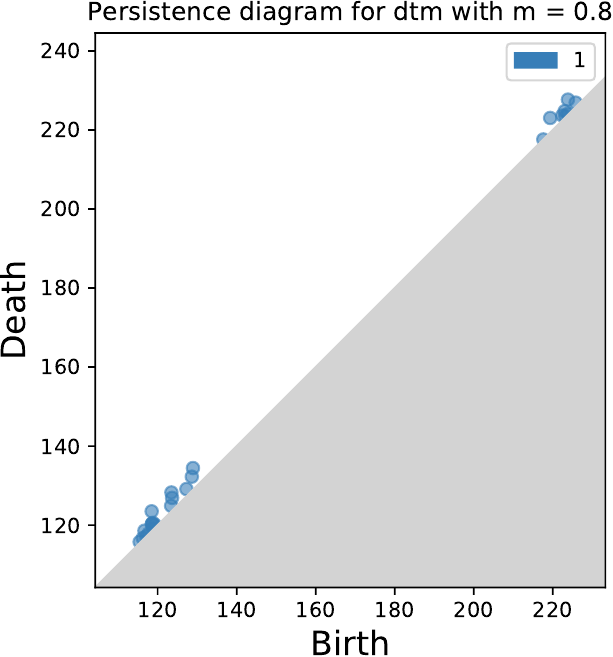}&
  \includegraphics[width=1.8in]{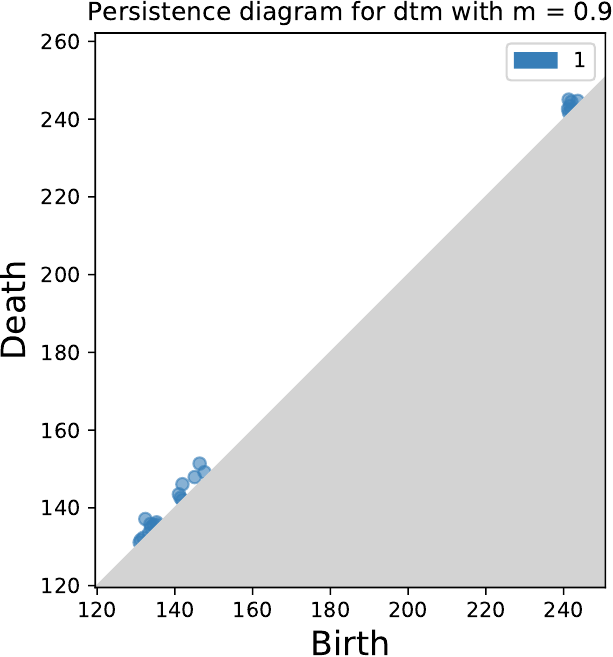}
  \end{tabular}
  \caption{\label{tab:noisecirclewithcentralclusterDtm} PDs of DTM applied to the circle with central cluster PCD for $m=0.1$--$0.9$.}
  \end{center}
\end{figure}

\begin{figure}[htp!]
  \begin{center}
  \begin{tabular}{m{4.5cm} m{4.5cm} m{4.5cm} m{4.5cm}}
  \includegraphics[width=1.8in]{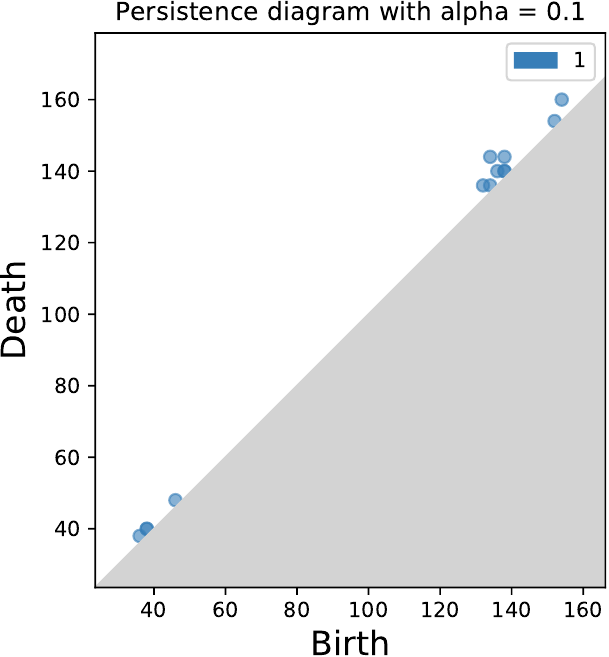}&
  \includegraphics[width=1.8in]{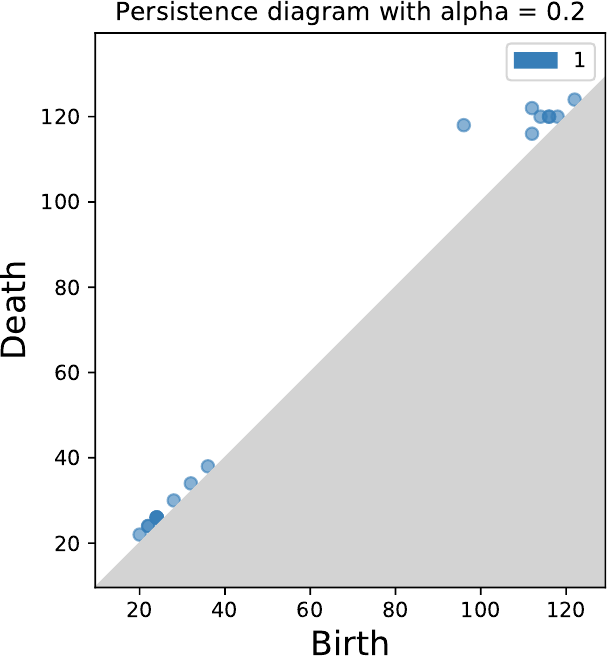}&
  \includegraphics[width=1.8in]{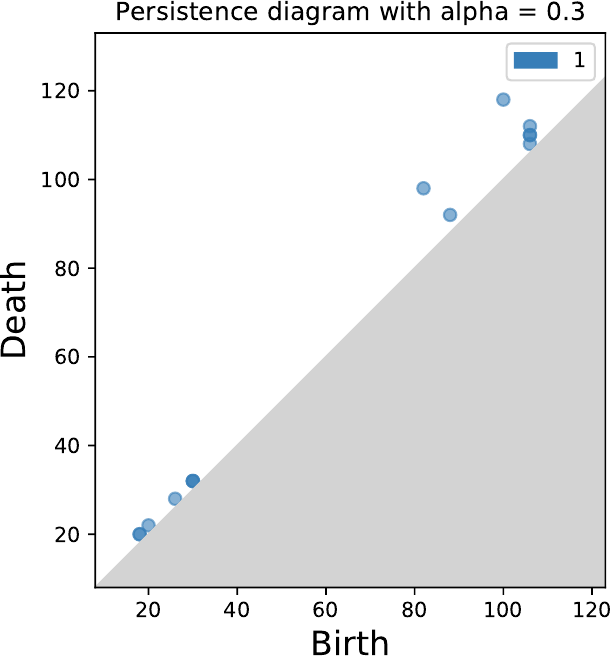}\\
  \includegraphics[width=1.8in]{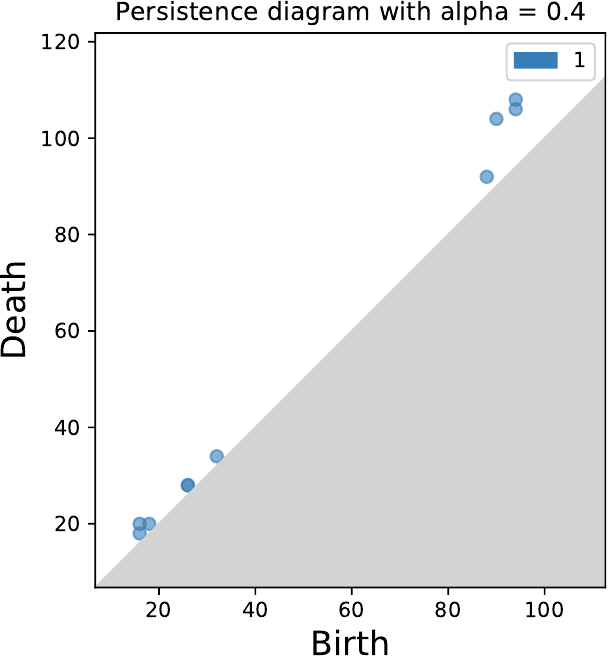}&
  \includegraphics[width=1.8in]{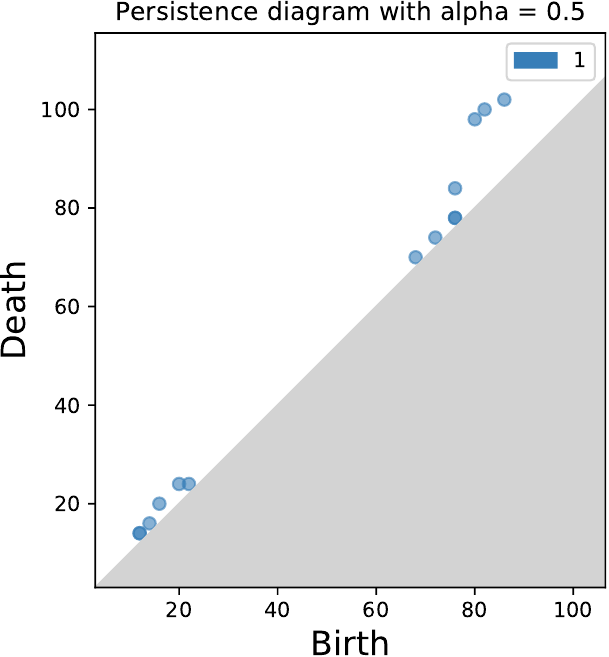}&
  \includegraphics[width=1.8in]{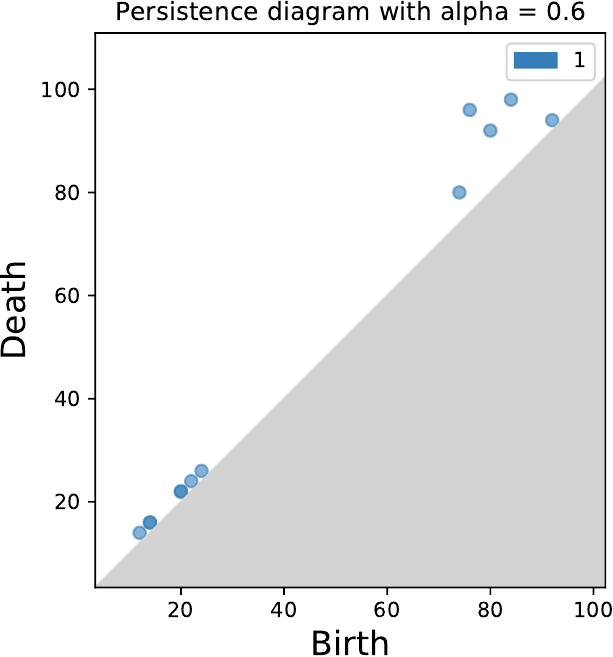}\\
  \includegraphics[width=1.8in]{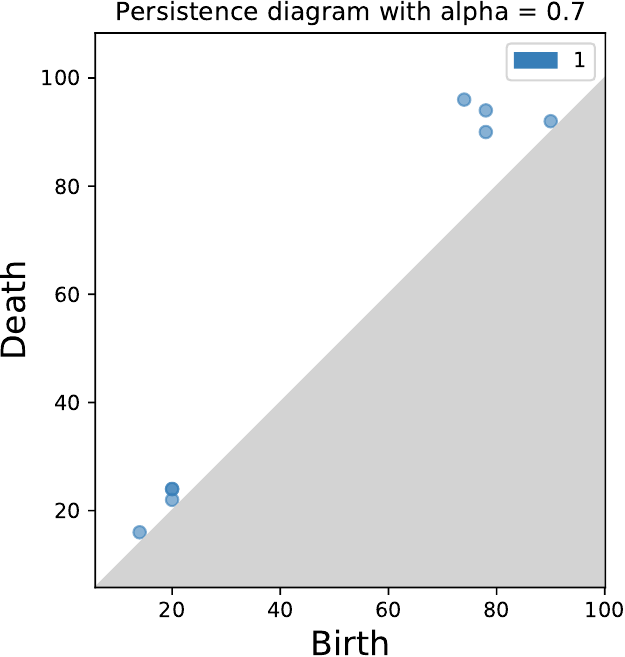}&
  \includegraphics[width=1.8in]{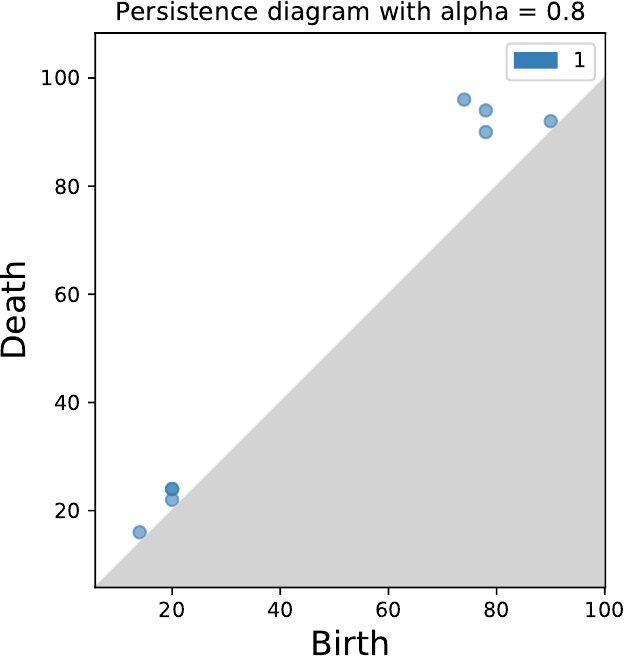}&
  \includegraphics[width=1.8in]{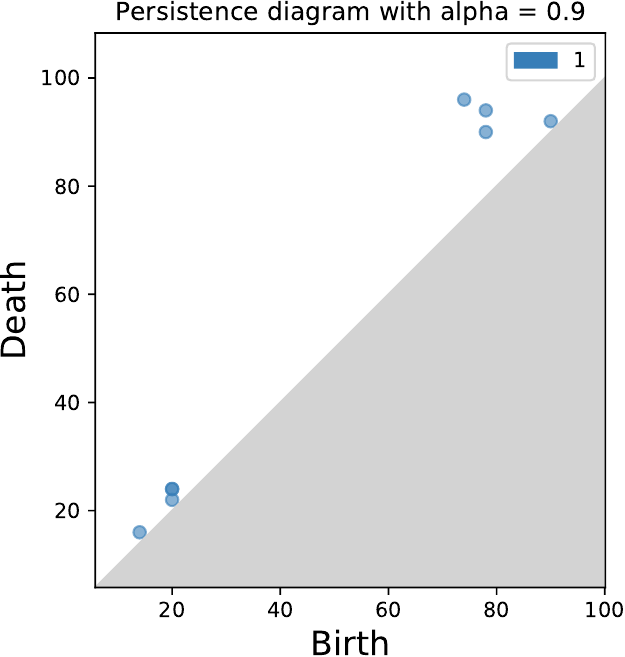}
  \end{tabular}
  \caption{\label{tab:noisecirclewithcentralclusterBf} PDs of BF applied to the circle with central cluster PCD for $\alpha=0.1$--$0.9$.}
  \end{center}
\end{figure}

\begin{figure}[htp!]
  \begin{center}
  \begin{tabular}{m{4.5cm} m{4.5cm} m{4.5cm} m{4.5cm}}
  \includegraphics[width=1.8in]{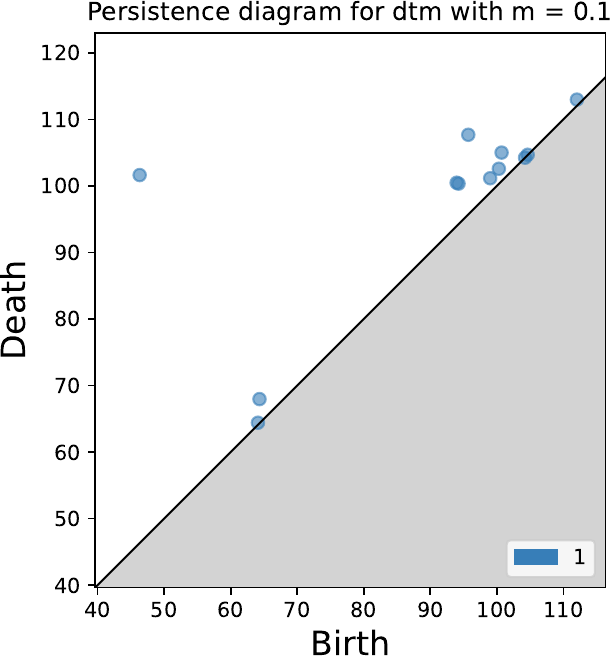}&
  \includegraphics[width=1.8in]{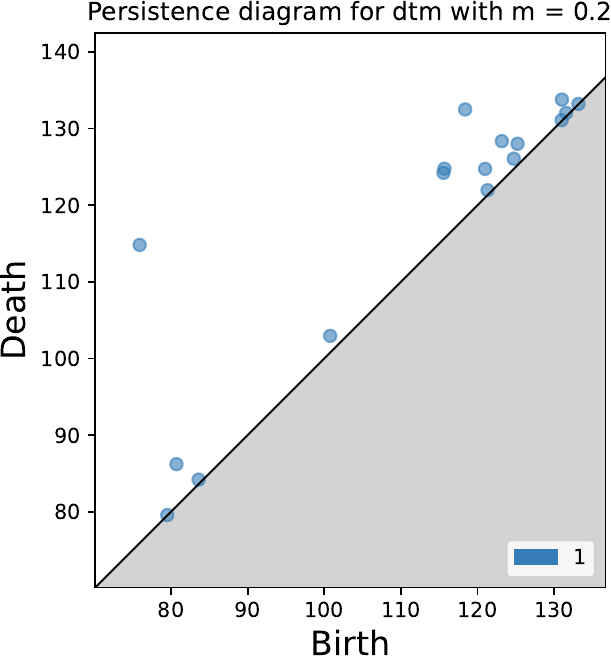}&
  \includegraphics[width=1.8in]{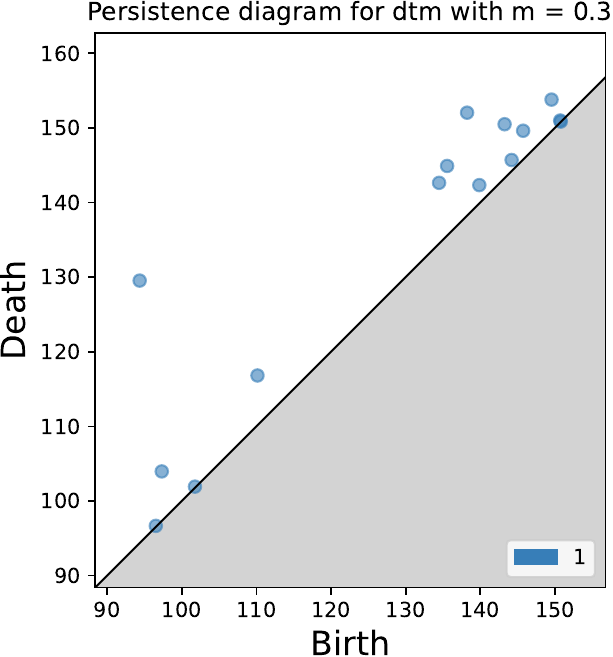}\\
  \includegraphics[width=1.8in]{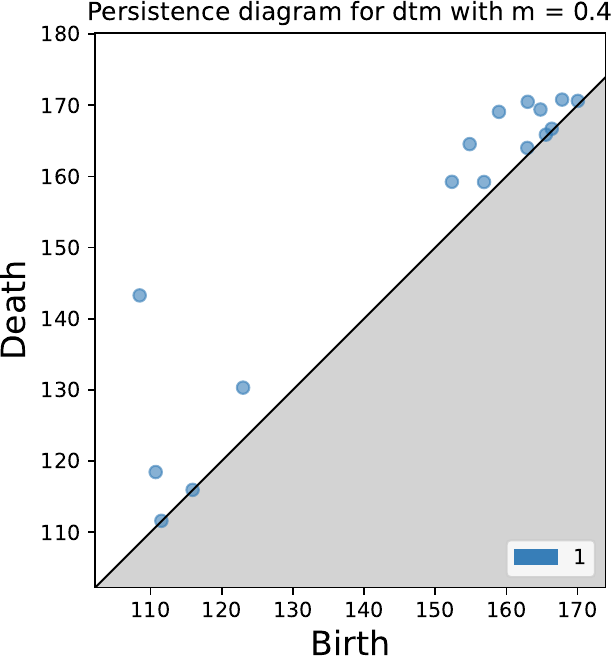}&
  \includegraphics[width=1.8in]{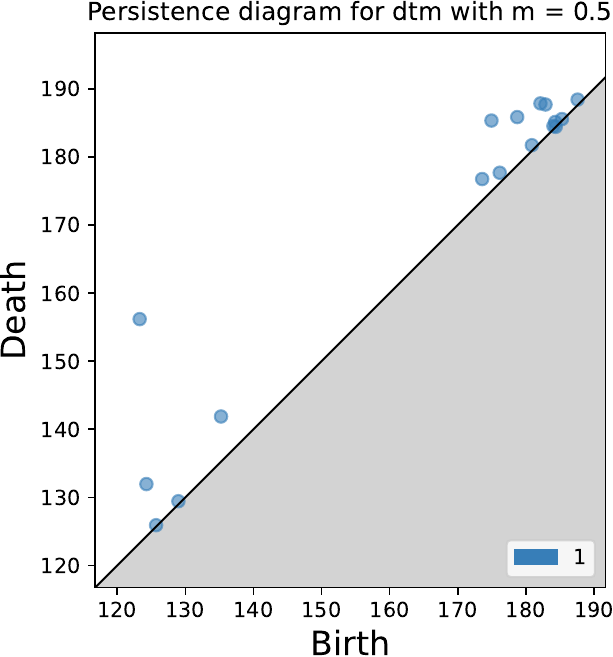}&
  \includegraphics[width=1.8in]{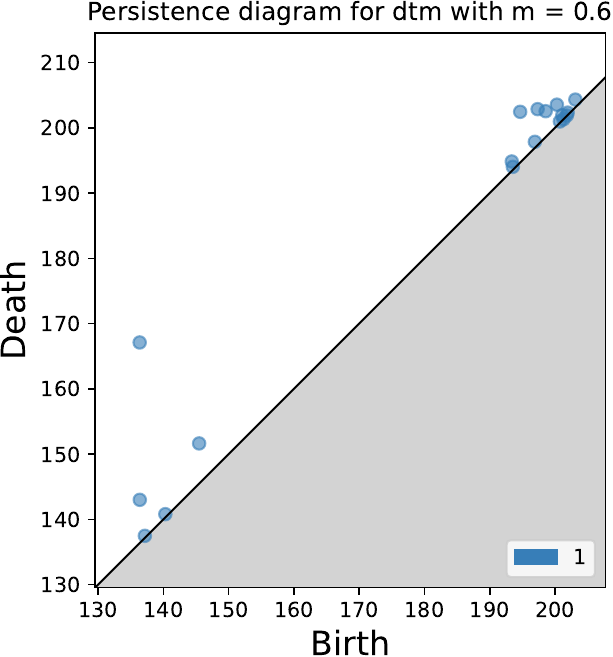}\\
  \includegraphics[width=1.8in]{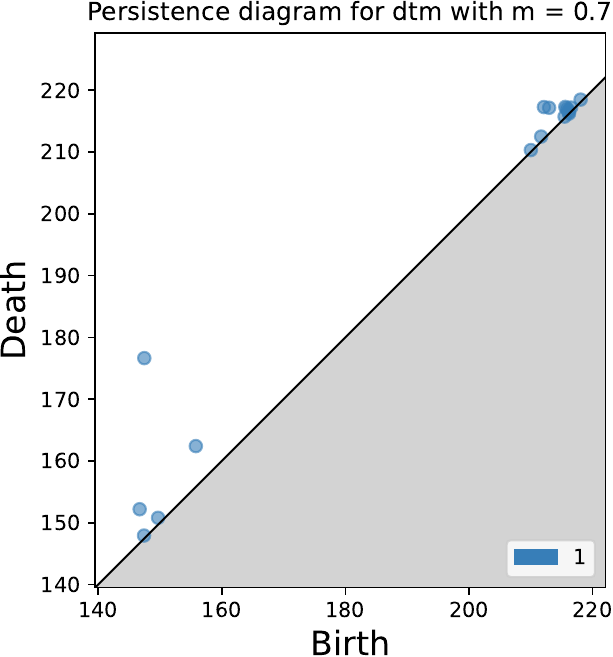}&
  \includegraphics[width=1.8in]{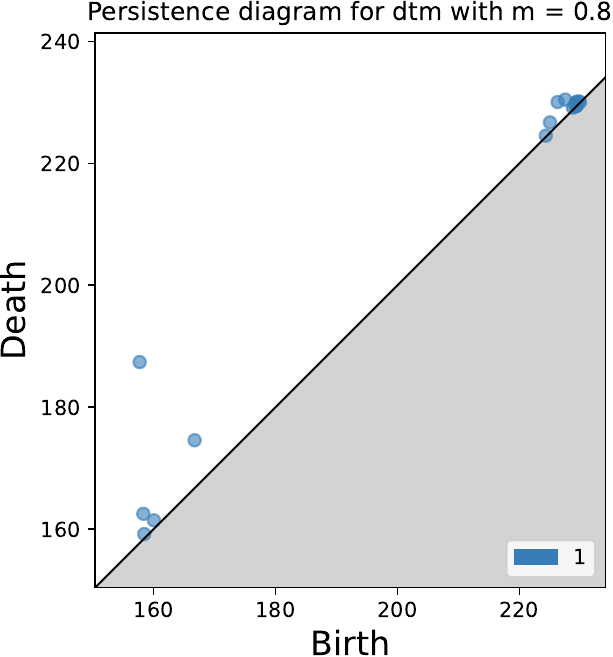}&
  \includegraphics[width=1.8in]{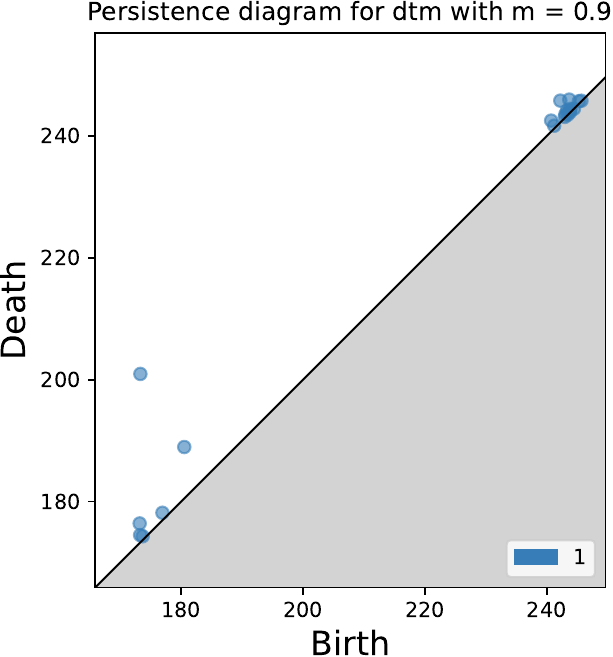}
  \end{tabular}
  \caption{\label{tab:noisecirclewithcentralclusterwithsquareDtm} PDs of DTM applied to the concentric circles with noise PCD for $m=0.1$--$0.9$.}
  \end{center}
\end{figure}

\begin{figure}[htp!]
  \begin{center}
  \begin{tabular}{m{4.5cm} m{4.5cm} m{4.5cm} m{4.5cm}}
  \includegraphics[width=1.8in]{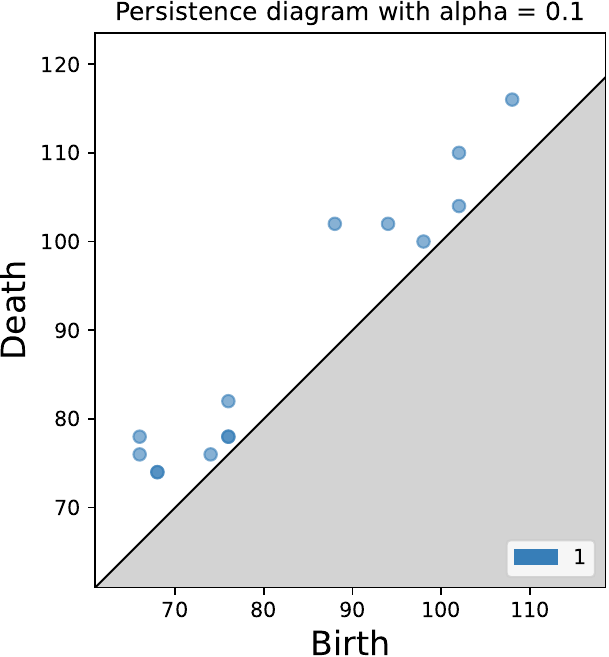}&
  \includegraphics[width=1.8in]{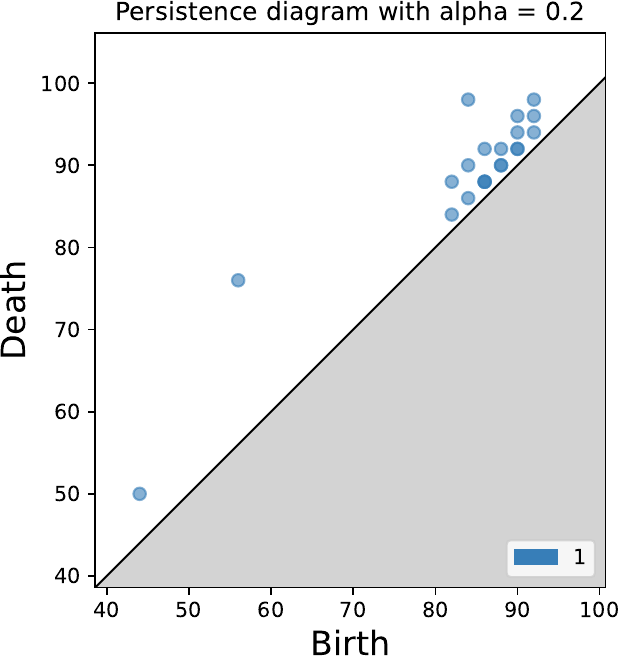}&
  \includegraphics[width=1.8in]{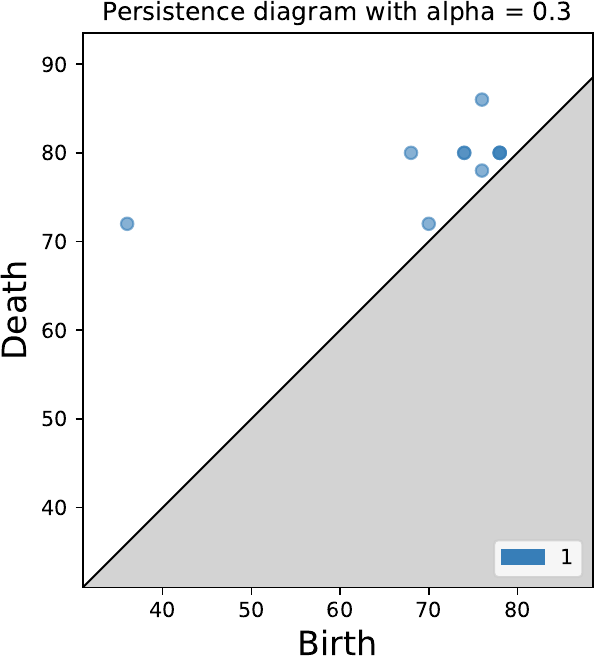}\\
  \includegraphics[width=1.8in]{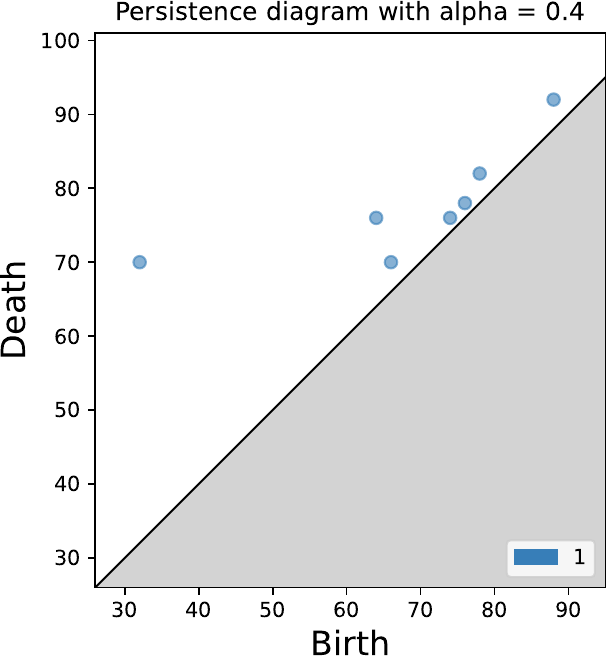}&
  \includegraphics[width=1.8in]{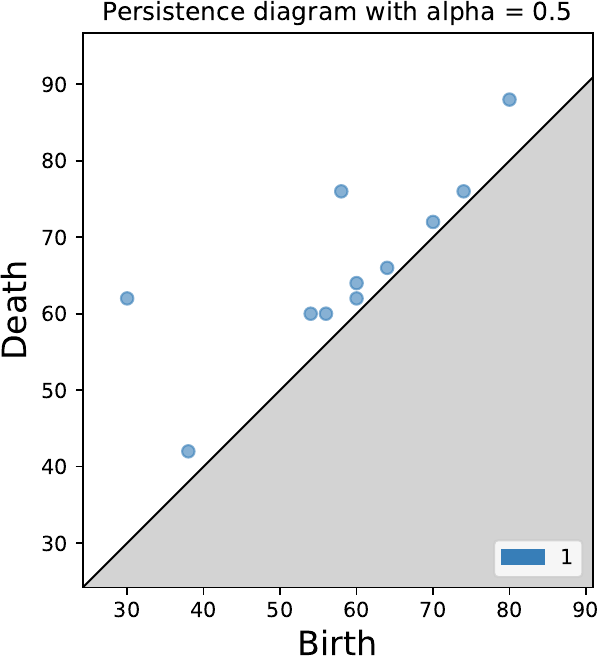}&
  \includegraphics[width=1.8in]{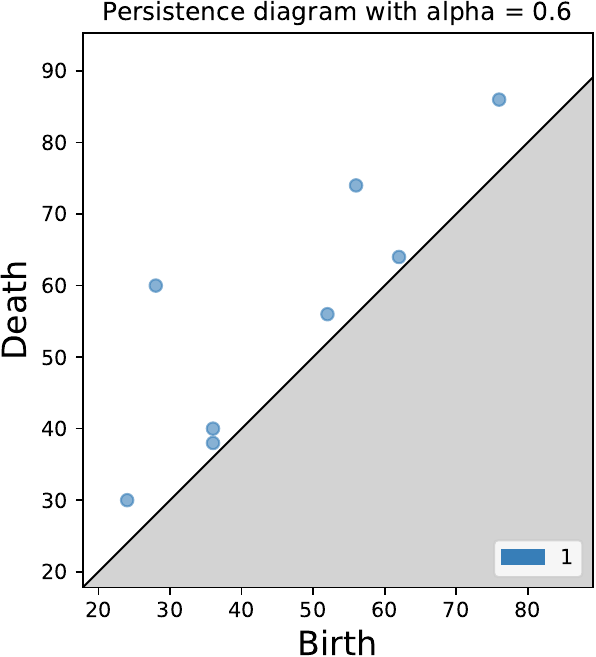}\\
  \includegraphics[width=1.8in]{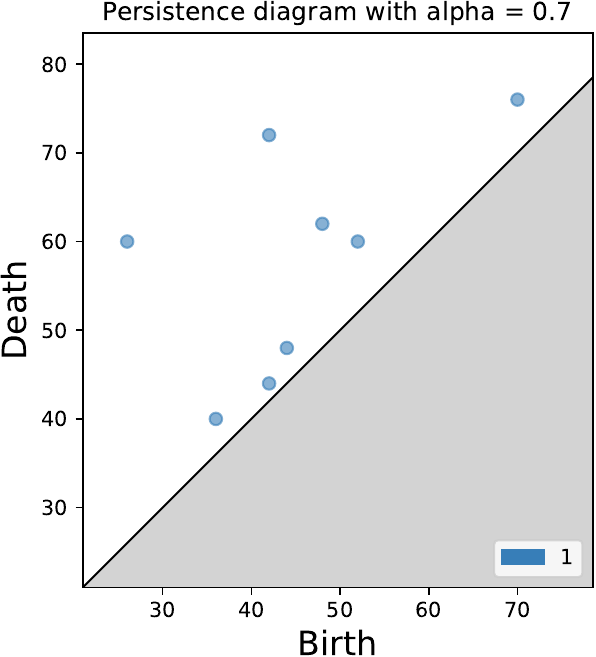}&
  \includegraphics[width=1.8in]{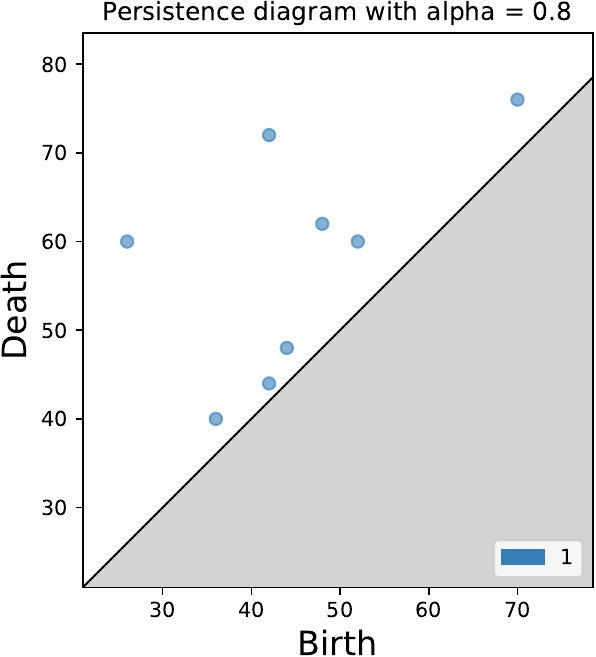}&
  \includegraphics[width=1.8in]{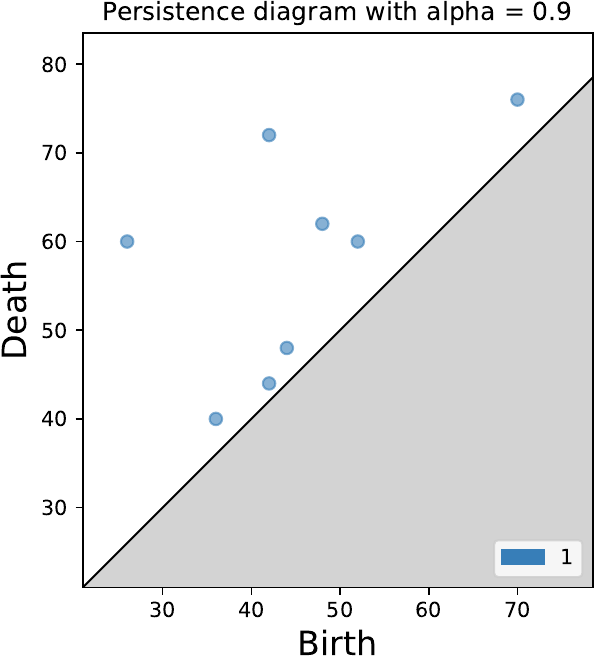}
  \end{tabular}
  \caption{\label{tab:noisecirclewithcentralclusterwithsquareBf} PDs of BF applied to concentric circles with noise PCD for $\alpha=0.1$--$0.9$.}
  \end{center}
\end{figure}

\end{appendices}

 %\clearpage

\end{document}